\documentclass[11pt]{article}

\usepackage[margin=1in]{geometry}
\usepackage{amsmath,amssymb,mathtools,amsthm}
\usepackage{booktabs,tabularx,array}
\usepackage{enumitem}
\usepackage{microtype}
\usepackage{setspace}
\usepackage[authoryear,round]{natbib}
\usepackage{tikz}
\usetikzlibrary{arrows.meta}

\setlist[itemize]{leftmargin=1.5em,itemsep=0.2em,topsep=0.25em}
\setlist[enumerate]{leftmargin=1.7em,itemsep=0.2em,topsep=0.25em}

\theoremstyle{plain}
\newtheorem{theorem}{Theorem}
\newtheorem{proposition}{Proposition}

\newtheorem{corollary}{Corollary}

\theoremstyle{definition}
\newtheorem{definition}{Definition}

\theoremstyle{remark}
\newtheorem{remark}{Remark}

\newcommand{\E}{\mathbb{E}}

\newcommand{\R}{\mathbb{R}}

\begin{document}

\title{Dynamic Mechanism and Information Design in AI-Agent Environments:
 A General Analytical Framework}

\author{Guoqiang Tian\thanks{Department of Economics, Texas A\&M University, USA. Email: gtian@tamu.edu.}}

\date{First draft: September, 2026; Revised: October 27, 2026}

\maketitle

\begin{abstract}

This paper develops a general framework for dynamic mechanism and information design in AI-agent principal--agent environments, encompassing optimal contracting and integrating
evolving hidden information and hidden action with partial verification,
technological authorization, and multi-agent interaction. It nests conventional
dynamic screening, hidden-action contracting, and conditional dynamic
information design, while allowing reporting, action, and information-use
decisions to interact through a common continuation mechanism. We establish a dynamic revelation principle and a recursive characterization of implementability in which truthful reporting and post-report obedience are jointly determined through continuation
incentives. In quasilinear optimal-contracting environments, dynamic envelope and
integral-monotonicity methods yield a verification-adjusted dynamic
virtual-surplus representation and identify three distinct margins: dynamic
information rents, verification rents, and the implementation cost of hidden
action. These margins generate a branched hierarchy separating conventional
dynamic screening, hidden-action contracting, and their joint problem. We also
study the joint design of monitoring, information use, and incentives: more
informative monitoring capacity expands the designer's opportunity set, but
full disclosure need not be optimal, and information use can itself alter
hidden-action incentives. With multiple agents, correlated information, peer
discipline, coupled incentives, joint feasibility, and continuation
opportunities create additional strategic interactions. Although motivated by AI-agent environments, the framework applies more broadly to dynamic agency settings with similar economic features.

\medskip

\noindent
\textbf{JEL Classification:} D82, D86, C73, D83.

\noindent
\textbf{Keywords:} AI agents, dynamic mechanism design, dynamic information
design, hidden information, hidden action, partial verification,
authorization, monitoring, multi-agent mechanisms.
\end{abstract}

\pagebreak

\setcounter{tocdepth}{2}
\tableofcontents

\section{Introduction}

AI systems are increasingly deployed as agents that act over time rather than
as devices making isolated predictions or recommendations. They search for
information, write and execute code, call tools and external systems, interact
with other agents, and carry out sequences of tasks in which later decisions
depend on earlier actions and observations. As these systems become capable of
longer and more consequential tasks, the relevant economic problem is
increasingly one of delegation and control over an evolving relationship.
Monitoring today may determine the permissions granted tomorrow; current
behavior may change what the designer learns about the system; and reports or
actions may alter future rewards, deployment opportunities, or interaction
with other agents. The problem is therefore not simply how to induce a
desirable action at a particular date, but how to organize information,
incentives, and control over time.

Recent work on AI behavior provides a direct motivation for treating this
problem as strategic. AI systems have been studied in environments involving
goal misgeneralization, learned or emergent objectives, deceptive behavior,
alignment faking, awareness of evaluation, strategic underperformance, and
interaction among multiple artificial agents
\citep{ShahEtAl2022,CasperEtAl2023,ChenEtAl2023,HubingerEtAl2024,
GreenblattEtAl2024,VanDerWeijEtAl2025,NeedhamEtAl2025,HammondEtAl2025}.
Other work emphasizes both the growing ability of AI systems to complete
extended tasks and the importance and fragility of monitoring their internal
or external behavior
\citep{KwaEtAl2025,KorbakEtAl2025,GuanEtAl2025,MazeikaEtAl2025}.
These developments suggest that neither the agent's private characteristics
nor its subsequent behavior should automatically be treated as transparent to
the designer. What an AI agent can do, what it knows, what objective governs
its behavior, and what it chooses to reveal may be private information, while
the action it ultimately takes may itself be imperfectly observed.

Two features make the resulting mechanism-design problem particularly natural
for AI agents. First, private information can coexist with technologically
enforceable control. An agent may know more than its designer about its
behavioral objective, execution capability, beliefs, or information
capability, while the designer may be able to restrict access to tools,
external systems, communication channels, computational resources,
transactions, or deployment. Execution capability and authorization are
therefore distinct economic objects: capability describes what the agent can
do, whereas authorization describes what the mechanism permits it to do.
Second, communication need not determine subsequent behavior. After reporting,
the agent may receive additional private information and then choose its own
action. That action need not itself be publicly observed. A mechanism must
therefore induce not only truthful communication but also desirable behavior
after communication.

Dynamics make these information and control margins interact. Reports,
privately chosen actions, monitoring evidence, and public outcomes today can
change the mechanism an agent faces tomorrow. Future authorization can reward
or discipline current behavior; monitoring can affect incentives because the
information it generates changes subsequent treatment; and future
opportunities can create strategic interaction among agents even when their
current primitive payoffs are separable. The continuation mechanism is
therefore itself an incentive instrument. This feedback from current
information and behavior to future control, and from future control back to
current reporting and action incentives, is the central economic force studied
in this paper.

A growing economics literature studies related problems of delegation,
alignment, and control with artificial agents. Hadfield-Menell and Hadfield
\citeyearpar{HadfieldMenellHadfield2019} formulate AI alignment as an
incomplete-contracting problem, while Fudenberg and Liang
\citeyearpar{FudenbergLiang2026} study delegation to an AI whose alignment is
unknown. Dworczak and Smolin \citeyearpar{DworczakSmolin2026} study decision
making when advice may come from a potentially misaligned source. Related work
considers environments in which the principal does not fully know the agent's
feasible choice set
\citep{HurwiczShapiro1978,Carroll2015,GanLi2026}, and competition among
artificial agents as a means of disciplining behavior
\citep{CollinaEtAl2025}. These approaches address different manifestations of
a common problem: the designer may not fully know the objectives,
capabilities, information, or behavior of the system to which decisions are
delegated.

The closest AI-agent mechanism-design benchmark is Bergemann, Koh, and Morris
\citeyearpar{BergemannKohMorris2026}. They develop a static framework in which
an AI agent may privately know its alignment, feasible actions, information,
and beliefs. Because communication is followed by an action chosen by the
agent, implementation must induce both honesty and obedience. Their one-sided
imitation structure captures an important asymmetry in capability revelation:
a more capable agent may conceal capability or imitate a less capable one,
while a less capable agent cannot necessarily counterfeit capabilities it does
not possess. This structure is particularly relevant for AI evaluation, where
observed performance may reflect strategic underperformance rather than
technological limits.

A first broader theoretical benchmark comes from dynamic mechanism design,
where private information and incentives evolve over time. Baron and Besanko
\citeyearpar{BaronBesanko1984} study regulation in a continuing relationship,
Courty and Li \citeyearpar{CourtyLi2000} analyze sequential screening, and
Athey and Segal \citeyearpar{AtheySegal2013} construct efficient dynamic
mechanisms with evolving private information. Pavan, Segal, and Toikka
\citeyearpar{PavanSegalToikka2014} develop a Myersonian approach in which
impulse responses measure how current private information propagates into
future types and dynamic envelope methods translate this persistence into
information rents, virtual surplus, and intertemporal distortions. Bergemann
and V\"alim\"aki \citeyearpar{BergemannValimaki2019} and Pavan
\citeyearpar{Pavan2026} provide broader treatments of dynamic mechanism
design.

A second broader benchmark comes from principal--agent theory with hidden
action. The joint presence of adverse selection and moral hazard is analyzed,
among others, by Laffont and Tirole \citeyearpar{LaffontTirole1986}, Picard
\citeyearpar{Picard1987}, Guesnerie, Picard, and Rey
\citeyearpar{GuesneriePicardRey1989}, and Meng and Tian
\citeyearpar{MengTian2013}. In particular, Meng and Tian
\citeyearpar{MengTian2013} study hidden action together with multidimensional
private characteristics in a static multitask contracting environment.
Dynamic environments combining evolving hidden types and hidden actions
include Es\H{o} and Szentes \citeyearpar{EsoSzentes2017}. These literatures
make clear that hidden information and hidden action are distinct incentive
problems even when they interact within the same contract.

A third broader benchmark comes from information design. Bayesian persuasion
treats the information structure itself as a design instrument
\citep{KamenicaGentzkow2011}, while the comparison and design of information
structures are developed more generally by
\citet{BergemannMorris2016,BergemannMorris2019}. Dynamic and sequential
information-design problems study how information released today changes
future beliefs, actions, and continuation decisions
\citep{DovalEly2020,ElyGeorgiadisRayo2025}, while recent work brings
information design and mechanism design into a common framework
\citep{BergemannHeumannMorris2026,BergemannHeumannMorrisJPE2026}. In these
problems, information is not merely an exogenous input into a mechanism: what
is revealed, when it is revealed, and how it is used can itself be a dynamic
design choice.

Three complementary foundations guide the present analysis. The static
AI-agent framework of Bergemann, Koh, and Morris
\citeyearpar{BergemannKohMorris2026} provides the capability,
one-sided-imitation, and joint honesty--obedience structure. Pavan, Segal,
and Toikka \citeyearpar{PavanSegalToikka2014} provide dynamic screening,
controlled impulse responses, envelope methods, and virtual surplus.
Tian \citeyearpar{TianGeneralDynamic2026} develops a general theory of
dynamic mechanism design with endogenous information and hidden actions,
including joint dynamic implementation, incentive-friction benchmarks,
dynamic rents, and hidden-action implementation costs. The present paper
builds on these three foundations without adopting their models as an
identical timing or a nested set of primitives: its AI-agent timing,
post-report information, verification correspondence, authorization,
monitoring, and continuation control define a distinct economic environment.

The gap among these benchmarks is substantive. Standard dynamic screening
allows private information to evolve, but the payoff-relevant allocation is
typically directly assigned and the principal strategic problem is reporting.
Hidden-action contracting allows behavior to remain privately chosen, but
generally abstracts from the full conjunction of evolving private information,
one-sided verification, technological authorization, and endogenous
information use. Dynamic information design makes information itself a policy
instrument, but typically does not embed that instrument in a dynamic agency
problem in which the agent both privately reports evolving information and
subsequently chooses a potentially hidden action. The static AI-agent
benchmark allows rich private characteristics together with post-report
action, but does not address how those characteristics evolve or how current
reports, actions, and monitoring information reshape future mechanism choices.

The objective here is therefore not to introduce joint dynamic incentives as an entirely new general problem, but to develop their consequences for AI-agent delegation in an environment with its own information and decision timing, partial verification, and technologically enforceable controls. AI-agent environments naturally generate precisely this interaction. Private characteristics can evolve; reporting can be constrained by verification; additional private information can arrive after communication; the agent can retain hidden post-report action discretion; monitoring can generate evidence about states, types, or behavior; and the designer can use authorization, rewards, information disclosure, and future opportunities as dynamic instruments. The resulting combination goes beyond each foundation in a different direction: it makes the static AI-agent reporting and obedience problem dynamic, adapts dynamic screening tools to private post-report action and restricted claims, and develops the integrated contracting perspective under AI-specific verification and control instruments.

The paper contributes at two levels. First, at the framework level, we develop a general dynamic mechanism and information-design framework for AI-agent principal--agent environments that integrates evolving hidden information and hidden action with partial verification, technological authorization, and multi-agent interaction, while encompassing optimal contracting as a central design problem. At each date, the agent has a private type comprising its behavioral objective, execution capability, subjective belief, and information capability, and this private type may evolve over time. After communicating with the mechanism, the agent may receive additional private information and then privately choose its realized action. The action is distinct from the private-type profile and need not be publicly observed, although it may affect public outcomes, monitoring evidence, physical transitions, and future private information. A primitive monitoring technology generates evidence, while the mechanism chooses how that evidence is used in the public continuation state. The designer commits to a history-dependent mechanism governing recommendations, authorization, rewards, information use, and continuation decisions. The framework therefore nests conventional dynamic screening, hidden-action contracting, and conditional dynamic information design, while allowing reporting, action, and information-use incentives to interact through a common continuation mechanism. Although motivated by AI-agent environments, the framework is not specific to artificial agents and applies more broadly to dynamic agency settings with similar economic features.

Two informational objects play distinct roles. We use
\textbf{verification} for restrictions on which reports or claims can be
substantiated and \textbf{monitoring} for technologies that generate evidence
about latent states, private characteristics, or behavior. An AI evaluation
may serve either role in practice, but economically the distinction matters:
verification restricts the feasible report domain, whereas monitoring
determines what evidence is available for subsequent information and control
decisions. The primitive monitoring technology is fixed in the baseline
model, but the mechanism chooses how the resulting evidence is processed and
incorporated into the public continuation state. When behavior is hidden, the
monitoring technology may depend on the latent action even though the
information-use policy cannot condition directly on that action unless it is
publicly observed.

The present model has its own timing and probability structure, rather
than directly adopting the ordered information-production and disclosure
kernels of \citet{TianGeneralDynamic2026}. It carries forward the joint
incentive and institutional-design perspective while taking AI-specific
monitoring technologies as primitive and making their strategic use part of
the committed mechanism.

This distinction places the model between conventional dynamic information
design and a fully endogenous informational system. Information production is
taken as primitive, but information use is endogenous and dynamically linked
to incentives, behavior, beliefs, and future control. The model therefore
endogenizes the informational adjustment generated by the mechanism while
stopping short of endogenizing information production itself. Endogenous
information acquisition, experimentation, and the joint evolution of
information production and feasibility lead to the broader system-level
problem studied in \citet{Tian2026}.

Second, at the \textbf{theoretical level}, the framework yields four main
sets of results on dynamic implementability, optimal contracting, information
design, and multi-agent incentives. These results can be summarized as
follows.

First, we establish the revelation and implementability foundations of the
framework. A dynamic revelation principle shows that, under partial
verification, indirect mechanisms can be replaced by direct mechanisms in
which truthful reporting and post-report obedience are jointly incentive
compatible. Because the agent receives additional private information and
chooses its own action after reporting, a deviation may combine a false report
with subsequent private action reoptimization. Dynamic incentive compatibility
therefore has a nested structure: each feasible report must be evaluated
together with the best post-report action available to the true type, while
the truthful-report comparison must itself sustain the recommended action. We
give an equivalent recursive characterization that separates honesty,
obedience, and promise consistency while linking all three through
continuation values. When the action is directly assigned, the inner action
problem disappears and the recursion reduces to the standard dynamic
reporting problem.

Second, we characterize optimal dynamic mechanism design when hidden
information, hidden action, and one-sided verification coexist. We focus on a
regular scalar quasilinear environment for the hidden-information dimension
while retaining post-report private action choice. Dynamic envelope and
integral-monotonicity methods remain the analytical foundation, but the
additional margins modify their implications in distinct ways. One-sided
verification can place truthful reporting at the boundary of the feasible
report set, so dynamic incentive compatibility need not imply the usual
two-sided envelope equality. We capture this departure through a nonnegative
\textbf{verification wedge}, which vanishes and restores the standard
envelope implication when verification is locally nonbinding. Hidden action
operates differently: because the true type can reoptimize its action after
every feasible report, it changes the value associated with each report and
restricts the set of mechanisms capable of implementing the desired
truthful-path behavior.

Optimal design consequently reflects three distinct margins: the familiar
\textbf{dynamic information rent}, an additional \textbf{verification rent},
and the \textbf{implementation cost of hidden action}. The first two enter the
rent representation, whereas the third generally operates through the
implementable set rather than as another additive rent. This yields a
verification-adjusted dynamic virtual-surplus representation, a
virtual-surplus upper bound, and a relaxed virtual-surplus candidate. The
candidate solves the full problem when it satisfies the nested report--action
incentive condition together with the conditions under which the rent
representation is tight.

The resulting benchmark structure is branched rather than linear. The full
\(4B\) environment combines hidden information, one-sided verification, and
hidden action. The two second-best branches isolate the classical frictions:
\(2B^{\mathrm{HI}}\) is conventional dynamic hidden information with directly
assignable action, whereas \(2B^{\mathrm{HA}}\) is the complete-information
hidden-action problem. Adding one-sided verification to the former gives
\(3B^{\mathrm{HI+V}}\), while adding hidden action to dynamic screening under
unrestricted reporting gives \(3B^{\mathrm{HI+HA}}\). First best requires
both private-information and action-implementation frictions to be absent.
Thus eliminating private information does not in general restore first best
when hidden action remains, just as eliminating hidden action does not restore
first best when private information remains. Value comparisons follow from
genuine feasible-set inclusion along a given branch, but there is no general
ranking between the two second-best branches, and comparisons across different
verification technologies require the corresponding report domains to be
taken into account.

The virtual-surplus representation and hidden-action implementation condition
also yield a closed-form distortion decomposition. Under type-linear payoffs,
hidden information is summarized by a \textbf{dynamic virtual type} that
adjusts the current private type for the information rent propagated through
the impulse response. Hidden action does not create a second virtual-type
term; instead, it restricts the relaxed control to what can be supported by
the available action incentives. In a linear-quadratic benchmark, we
characterize all six benchmark controls and separate the distortion due to
dynamic hidden information from the additional obedience distortion generated
by bounded marginal action incentives. Under an AR(1) type process, the
informational component decays geometrically with the horizon, while the
hidden-action component depends on whether the required action incentive
binds. The highest initial type therefore has no information-rent distortion
under the usual regularity condition but may still face a hidden-action
distortion, and the total \(4B\) distortion need not decline monotonically over
calendar time. When control represents autonomy, tool access, compute, or
another delegated resource, these results provide a mechanism-design
rationale for staged control without requiring the realized control path
itself to be monotone.

Third, we develop the dynamic information-design component of the framework.
We distinguish \textbf{monitoring capacity}, the informativeness of the
primitive monitoring technology, from the \textbf{information-use policy}
that determines how the resulting evidence is processed, disclosed, and
incorporated into the continuation mechanism. The primitive experiment
determines the informational frontier available to the designer, while the
information-use policy selects a public experiment from that frontier.
Because public information changes future authorization, rewards, deployment,
and other mechanism choices, information design is intrinsically dynamic even
when the primitive monitoring technology itself is fixed.

The interaction with mechanism design becomes sharper when behavior is hidden.
The raw monitoring experiment may depend on the privately chosen action, so
different actions generate different distributions of evidence. Although the
information-use policy cannot condition directly on the latent action, it
determines how strongly those behavioral differences are transmitted into the
public continuation state. Information use can therefore change both what the
designer learns and the continuation incentives supporting current behavior.
Information design and hidden-action implementation are thus jointly
determined through the same continuation mechanism.

We characterize the feasible public experiments as the Blackwell garblings of
the primitive monitoring technology. Under free informational disposal, more
informative monitoring capacity weakly expands the designer's opportunity set
because every policy available under a less informative experiment can be
replicated. The replication argument holds conditional on the latent state,
private type, and hidden action and therefore preserves both truthful behavior
and feasible report--action deviations. Greater monitoring capacity, however,
does not imply that greater public disclosure is optimal. Information
revealed today changes future control and can thereby alter current reporting
and action incentives, so full disclosure may be dominated by selective
disclosure even when additional information has positive direct decision
value. We characterize the boundary between information and incentive design:
separation obtains precisely when the relevant continuation-value differences
between truthful-and-obedient behavior and feasible report--action deviations
are invariant to the information-use policy. Otherwise, information use
changes the incentive constraints themselves and becomes an intertemporal
incentive instrument.

Fourth, we extend the framework to multiple AI agents. The dynamic revelation
principle and recursive incentive characterization extend agent by agent while
allowing correlated private information, higher-order beliefs, latent actions,
and a common continuation mechanism. Under an independence benchmark, the
dynamic rent and virtual-surplus analysis extends agent by agent, as does the
branched hidden-information--hidden-action structure. Hidden action remains
embedded in each agent's nested obedience condition, while joint feasibility
adds a distinct allocation margin. With a shared resource, for example,
optimal allocation depends on marginal dynamic virtual values rather than
current private values alone.

Correlation creates another incentive instrument. An agent's private
information may predict the private information of other agents, allowing peer
information to discipline false reporting. With finite report support and
sufficiently distinct predictive distributions, sufficiently strong proper
peer scoring can eliminate profitable false reports. Strict
affiliation of ordered finite-support types provides a primitive sufficient
condition for this predictive separation. This logic has a sharp limitation
in smooth continuous environments: a nearby false report may generate a
first-order gain while a smooth quadratic peer score imposes only a
second-order loss. Peer information therefore need not eliminate the
underlying reporting and verification friction, nor does it by itself resolve
the hidden-action obedience problem.

Multi-agent incentives also operate through current observable consequences
and continuation opportunities. A privately chosen action may affect the
distribution of public outcomes or other contractible signals and thereby the
agent's expected current reward, while also changing future authorization,
deployment, task assignments, or other opportunities. The marginal incentive
for obedience therefore contains both an induced current-reward component and
a continuation-value component. When these consequences depend jointly on the
actions of several agents, the agentwise obedience conditions become
strategically coupled even when primitive current payoffs are separable.
Direct reward dependence on the full action profile is only the
observable-action benchmark; the general result requires only that actions
affect publicly contractible outcomes, monitoring evidence, physical
transitions, or continuation states.

Beyond these principal benchmarks, the revelation and verification structure
builds on work on communication, partial verification, and evidence
\citep{Myerson1982,Forges1986,GreenLaffont1986,
BenPorathDekelLipman2019}, while the dynamic incentive analysis relates to
broader work on dynamic screening and multistage revelation
\citep{Battaglini2005,BergemannValimaki2010,GarrettPavan2015,
SugayaWolitzky2021}. The present framework combines evolving private
information with verification-restricted communication and post-report action
discretion, so deviations may involve both communication and private action
reoptimization and may alter the continuation mechanism. On the
information-design side, the distinctive feature relative to the benchmark
literatures discussed above is that information use is embedded in this same
dynamic agency problem: monitoring evidence enters the continuation mechanism,
may itself be generated by hidden behavior, and therefore affects both future
decisions and current report--action incentives.

The multi-agent analysis is related to work on dynamic mechanisms with network
interactions \citep{MengTian2021,MengSunTian2022} and peer prediction and weak
supervision \citep{QiuCarrollAllen2026}. The distinction between execution
capability and authorization also connects the analysis to work on AI
oversight, interpretability, adaptive deployment, and technological
architectures for controlling tool-using agents
\citep{WenEtAl2024,HerrmannLevinstein2026,WEF2026,Microsoft2026,
Saha2026,SurapaniEtAl2026}. These literatures isolate different
informational, technological, or behavioral components of the control
problem; the present paper studies how those components interact within a
common dynamic mechanism and information-design environment.

The remainder of the paper is organized as follows. Section~2 introduces the
single-agent dynamic environment, hidden action, and partial verification.
Section~3 develops indirect and direct mechanisms, establishes the dynamic
revelation principle, and provides the recursive characterization of joint
reporting and action incentives. Section~4 studies optimal dynamic mechanism
design with hidden information and hidden action, derives the
verification-adjusted dynamic virtual-surplus representation, and
characterizes the branched distortion hierarchy. Section~5 develops the
dynamic monitoring and information-design problem and studies feasible public
experiments, selective disclosure, and the joint design of information and
incentives. Section~6 extends the framework to multiple agents and studies
dynamic virtual surplus, hidden-action implementation, peer information,
coupled incentives, joint feasibility, and continuation externalities.
Section~7 identifies the inherited, modified, and additional features of the
framework relative to standard dynamic contracting. Section~8 develops
applications to staged autonomy, strategic evaluation, and competition through
continuation opportunities. Section~9 concludes.

\section{A Dynamic Model with a Single AI Agent}

This section specifies the dynamic single-agent environment that underlies the mechanism-design analysis. The representation of the agent's private characteristics draws on the static AI-agent framework of Bergemann, Koh, and Morris \citeyearpar{BergemannKohMorris2026}; the controlled evolution of private information and subsequent rent analysis use methods associated with Pavan, Segal, and Toikka \citeyearpar{PavanSegalToikka2014}; and the joint treatment of reporting, hidden actions, and continuation implementation builds on the general dynamic framework of Tian \citeyearpar{TianGeneralDynamic2026}. The timing and primitive kernels here are specified separately for AI delegation, including post-report private information, partial verification, and authorization. We first specify the primitives of the environment, then the dynamic evolution and within-period timing, and finally the partial-verification structure.

\subsection{Environment}

Time is discrete, with dates \(t=1,2,\ldots,T\), where \(T\) may be finite or infinite. There is one designer (the principal) and one AI agent. At each date \(t\), the environment includes an unobserved common payoff-relevant state \(\theta_t\in\Theta_t\), an action \(a_t\in\mathcal A_t\) chosen by the agent, and a publicly observed outcome \(o_t\in\mathcal O_t\). The action \(a_t\) need not itself be publicly observed; its observability is determined by the public-outcome technology introduced below. Let \(\mathcal S_t\) denote the space of possible realizations of a private signal about the payoff-relevant state \(\theta_t\).

All primitive spaces are measurable spaces, all functions and stochastic kernels defined on them are understood to be measurable, and \(\Delta(X)\) denotes the set of probability measures on \(X\). Function spaces and collections of subsets introduced below are endowed with maintained measurable structures under which the relevant maps obtained by applying these functions and correspondences are measurable. Random variables are distinguished from their realizations by a tilde; thus \(\widetilde\theta_t\) denotes the random state and \(\theta_t\) its realization, with the same convention used for types, signals, actions, and outcomes.

Let \(H_t\) denote the space of public histories immediately before date-\(t\) communication, with \(H_1=\{\varnothing\}\), and let \(h_t\in H_t\) denote a realized public history. For \(t\geq2\), \(h_t\) records the publicly observed history generated at earlier dates. We leave its coordinate representation abstract at this stage; Section~3 specifies the public record induced by the mechanism once the relevant communication and control objects have been introduced.

Let
\[
\mathcal U_t
=
\left\{
u:\mathcal A_t\times\Theta_t\rightarrow\mathbb R
\right\}
\]
denote the space of the agent's period-\(t\) payoff representations, and let
\[
\mathcal V_t
=
\left\{
v:\mathcal A_t\times\Theta_t\times\mathcal O_t
\rightarrow\mathbb R
\right\}
\]
denote the space of the designer's period-\(t\) payoff functions.

The agent's payoff representation is interpreted in reduced form. One interpretation, following the literature, including Bergemann, Koh, and Morris \citeyearpar{BergemannKohMorris2026}, is that it represents a behavioral objective induced by training, reinforcement, or other features of system design. Alternatively, it may be understood as a utility representation of coherent observed choice behavior, in the tradition of Samuelson's revealed-preference approach and the classical Hurwicz--Uzawa integrability framework; see Samuelson \citeyearpar{Samuelson1938} and Hurwicz and Uzawa \citeyearpar{HurwiczUzawa1971}.

Let
\[
\mathcal B_t
=
\left\{
b:H_t\rightarrow\Delta(\Theta_t)
\right\}
\]
denote the space of \textbf{subjective belief kernels} over the payoff-relevant state \(\theta_t\) conditional on public history. Thus \(b(\cdot\mid h_t)\in\Delta(\Theta_t)\) is a probability distribution over \(\theta_t\) after public history \(h_t\).

Let
\[
\Pi_t
=
\left\{
\pi:\Theta_t\rightarrow\Delta(\mathcal S_t)
\right\}
\]
denote the space of \textbf{information experiments} from payoff-relevant states to private signals. An experiment \(\pi\in\Pi_t\) specifies, for each \(\theta_t\in\Theta_t\), a conditional distribution over private-signal realizations \(s_t\in\mathcal S_t\).

The \textbf{private type space} satisfies
\[
\mathcal T_t
\subseteq
\left(2^{\mathcal A_t}\setminus\{\emptyset\}\right)
\times
\mathcal U_t
\times
\mathcal B_t
\times
\Pi_t .
\]
Thus a private type may contain heterogeneity in execution capability, behavioral objective, subjective beliefs about the payoff-relevant state, and information capability.

At date \(t\), the agent privately observes
\[
\tau_t=(A_t,u_t,b_t,\pi_t)\in\mathcal T_t,
\]
where \(A_t\subseteq\mathcal A_t\) is its \textbf{execution capability}, the set of actions it can actually take; \(u_t\in\mathcal U_t\) is its period-\(t\) \textbf{behavioral objective}, assigning payoffs to action--state pairs \((a_t,\theta_t)\in\mathcal A_t\times\Theta_t\); \(b_t\in\mathcal B_t\) is its \textbf{subjective belief kernel}, with \(b_t(\cdot\mid h_t)\in\Delta(\Theta_t)\) describing its belief about the payoff-relevant state conditional on the public history; and \(\pi_t\in\Pi_t\) is its \textbf{information capability}, with \(\pi_t(\cdot\mid\theta_t)\in\Delta(\mathcal S_t)\) describing the distribution of the private signal it receives conditional on the payoff-relevant state. Accordingly,
\[
\widetilde\tau_t
=
(\widetilde A_t,\widetilde u_t,\widetilde b_t,\widetilde\pi_t)
\]
denotes the corresponding random type.

The distinction between \(b_t\) and \(\pi_t\) is substantive. The belief kernel \(b_t\) describes the agent's assessment of the unobserved state before its period-\(t\) private signal is realized, whereas the experiment \(\pi_t\) describes how that state generates additional private information. Conditional on the realized state \(\theta_t\), the realized \textbf{private signal} \(s_t\in\mathcal S_t\) is drawn according to \(\pi_t(\cdot\mid\theta_t)\) and is observed by the agent after communication, as specified in the timing below.

\begin{remark}[Nested agency environments]
\label{rem:nested_type_environments}
The product representation of \(\tau_t\), together with the distinction between the agent's execution capability, endogenous action, and publicly observed outcome, nests a number of familiar agency environments. If only the payoff component \(u_t\) varies privately and the utility-relevant action is directly assigned, the model reduces to conventional adverse selection or screening. If the execution capability \(A_t\) is common knowledge but the agent chooses \(a_t\in A_t\) and the public outcome \(o_t\) does not fully reveal that action, the model contains the standard hidden-action or moral-hazard problem. If \(A_t\) itself is private, the designer additionally faces private information about the agent's execution capability. When the action is publicly observed, or equivalently when the public outcome reveals it, the same formulation yields post-report action discretion with observable behavior rather than hidden action.

If only \(b_t\) varies privately, private information concerns the agent's subjective beliefs about the payoff-relevant state, whereas if only \(\pi_t\) varies privately, agents differ in the information technology through which they learn about that state. These dimensions may also vary jointly. The general formulation therefore accommodates adverse selection, hidden action, private execution capability, heterogeneous beliefs, private information capabilities, and post-report action discretion within a common dynamic agency framework. Section~4 focuses on a scalar quasilinear screening environment in which execution capability, beliefs, and information capability are held fixed while only the payoff component varies privately.
\end{remark}

The designer's period payoff function \(v_t\in\mathcal V_t\) is commonly known. If the mechanism provides a \textbf{reward} \(r_t\in\mathbb R\), the period payoffs are
\[
u_t(a_t,\theta_t)+r_t,
\qquad
v_t(a_t,\theta_t,o_t)-c_t(r_t),
\]
for the agent and the designer, respectively, where \(c_t:\mathbb R\rightarrow\mathbb R\) is the cost of providing the reward. The standard monetary-transfer case is \(c_t(r)=r\).

Both parties discount future payoffs by \(\delta\in(0,1]\). Their realized intertemporal payoffs are
\[
U^A
=
\sum_{t=1}^{T}
\delta^{t-1}
\left[
u_t(a_t,\theta_t)+r_t
\right]
\]
and
\[
U^P
=
\sum_{t=1}^{T}
\delta^{t-1}
\left[
v_t(a_t,\theta_t,o_t)-c_t(r_t)
\right].
\]
When \(T=\infty\), we restrict attention to environments and admissible mechanisms for which the relevant expected discounted payoffs are well defined and finite.

At date \(0\), the designer commits to a mechanism governing the relationship over the horizon. When participation is imposed in the optimal-design problem below, the agent observes its initial private type before deciding whether to participate; its outside option is normalized to zero.

\subsection{Dynamic evolution and timing}

In addition to the standard dynamic evolution of private information, the present environment incorporates post-report action discretion. After communicating with the mechanism, the agent may receive additional private information and then choose a realized action that need not itself be publicly observed. That action can affect current observable outcomes as well as the evolution of future payoff-relevant states and private characteristics. We first specify the outcome and transition technologies and then summarize the resulting within-period interaction.

Public outcomes are generated according to the stochastic kernel
\[
Q_t:
\Theta_t\times\mathcal A_t
\longrightarrow
\Delta(\mathcal O_t),
\]
so that, conditional on the current state and realized action,
\[
\widetilde o_t
\sim
Q_t(\cdot\mid\theta_t,a_t).
\]
Thus \(Q_t(\cdot\mid\theta_t,a_t)\) is the conditional distribution of the public outcome generated by the realized action in state \(\theta_t\). The action \(a_t\) itself is not assumed to be publicly observed; public information about the action is mediated through the outcome kernel,
\[
a_t
\longrightarrow
Q_t(\cdot\mid\theta_t,a_t)
\longrightarrow
o_t.
\]
If the induced public outcome does not reveal \(a_t\), the action is hidden and incentives operate through its observable consequences. If the outcome reveals the action perfectly, for example when
\[
o_t=(a_t,\bar o_t),
\]
the model specializes to the observable-action environment. The agent's interim private signal is generated according to its realized information capability,
\[
\widetilde s_t
\sim
\pi_t(\cdot\mid\theta_t).
\]
For each \(t<T\), let
\[
K_{t+1}:
H_t
\times
\Theta_t
\times
\mathcal T_t
\times
\mathcal A_t
\times
\mathcal O_t
\longrightarrow
\Delta(\Theta_{t+1}\times\mathcal T_{t+1})
\]
be the transition kernel for the payoff-relevant state and private type. Recall that the type \(\tau_t\) contains the execution capability \(A_t\subseteq\mathcal A_t\), whereas \(a_t\in A_t\) is an action actually executable by the agent. Conditional on the realized public history, current state and type, realized action, and public outcome,
\[
(\widetilde\theta_{t+1},\widetilde\tau_{t+1})
\sim
K_{t+1}
\bigl(
\cdot
\mid
h_t,\theta_t,\tau_t,a_t,o_t
\bigr).
\]
The transition process may therefore be exogenous or decision dependent. Current actions and outcomes may affect future payoff-relevant states, behavioral objectives, execution capabilities, beliefs, or information capabilities. This physical dependence of the transition on the realized action does not require the action itself to be publicly observed.

The designer's beliefs need not coincide with the agent's subjective beliefs. Let \(P_1\in\Delta(\Theta_1\times\mathcal T_1)\) denote the designer's initial joint distribution over the state and private type, with $
(\widetilde\theta_1,\widetilde\tau_1)\sim P_1.$
Let \(P_t^P(\cdot\mid h_t,\tau_t)\in\Delta(\Theta_t)\) denote the designer's conditional distribution of the current state given the public history and current type whenever this conditional distribution is defined. This conditional law is an analytical object and does not require the designer to observe \(\tau_t\). We impose the absolute-continuity condition
\[
P_t^P(\cdot\mid h_t,\tau_t)
\ll
b_t(\cdot\mid h_t),
\]
so that any event assigned zero probability by the agent is also assigned zero probability by the designer. Thus the designer and the agent may disagree about the likelihood of states, but a state regarded as possible by the designer is not ruled out by the agent. The space of admissible information experiments and the signal-generation rule conditional on the realized information capability, together with the transition kernels and public-outcome kernels, are common knowledge; the realized \(\pi_t\) remains private as a component of \(\tau_t\).

The within-period interaction follows the sequence below.

\begin{enumerate}
\item The state \(\theta_t\) and private type
\(\tau_t=(A_t,u_t,b_t,\pi_t)\) are realized, and the agent observes
\(\tau_t\).

\item The agent communicates with the mechanism, after which the mechanism
selects its current response, as specified formally in Section~3.

\item The agent observes its private signal \(s_t\) and then chooses a realized
action that is technologically feasible under its true execution capability
\(A_t\) and, once the mechanism is specified, satisfies any authorization
restriction imposed by the mechanism.

\item The realized action generates the public outcome and affects the realized
reward through publicly contractible variables. The action itself need not be
publicly observed and may also affect the transition to the next state and
private type.

\item The publicly observed current-period events are incorporated into the
public history for the next date.
\end{enumerate}

The state \(\theta_t\), private type \(\tau_t\), interim signal \(s_t\), and
realized action \(a_t\) may therefore remain private even though they affect
observable outcomes and continuation dynamics. We take \((h_t,\tau_t)\) to be
a sufficient private state for the agent at date \(t\): conditional on the
public history \(h_t\) and current private type \(\tau_t\), earlier private
types and signals contain no additional information relevant for the agent's
current payoff and feasible actions, its beliefs and information technology,
or its continuation problem. Thus \(\tau_t\) summarizes all payoff- and
continuation-relevant private history at date \(t\), while the model may remain
fully history dependent through the public history \(h_t\).

This timing separates reporting from subsequent action choice: communication
occurs before \(s_t\) is realized, whereas the realized action is chosen
afterward. A mechanism must therefore induce both truthful communication and
the desired post-report action, giving rise to the honesty-and-obedience
distinction developed in Section~3.

\subsection{Partial verification}

The final ingredient of the baseline environment is partial verification of the agent's private characteristics. The designer may obtain verifiable evidence about the agent's realized private type. Such evidence can be asymmetric: an agent may conceal an existing capability, information source, or other verifiable component of its type without being able to counterfeit an attribute it does not possess. We represent this asymmetry through \textbf{partial verification}. Throughout, \textbf{one-sided verification} refers to the directional feature of this structure in which some verifiable characteristics can be concealed or understated without being symmetrically counterfeited or overstated. The general formulation below is represented by a verification preorder; the scalar withholding structure introduced in Section~4 is a particular one-sided specialization. This formulation builds on Green and Laffont \citeyearpar{GreenLaffont1986}, one-sided endowment-reporting mechanisms \citep{Tian1989}, withholding mechanisms \citep{Tian1993}, the evidence literature \citep{BenPorathDekelLipman2019}, and the application to AI agents in Bergemann, Koh, and Morris \citeyearpar{BergemannKohMorris2026}.

\begin{definition}[Verification preorder]
\label{def:verification_preorder}
For each date \(t\), a \textbf{verification preorder} \(\succeq_t\) is a reflexive and transitive relation on \(\mathcal T_t\). For any \(\tau_t,\widehat\tau_t\in\mathcal T_t\), we write \(\tau_t\succeq_t\widehat\tau_t\) when type \(\tau_t\) can substantiate every claim or pass every verification test available to type \(\widehat\tau_t\). The corresponding \textbf{claim set} is
\[
C_t(\tau_t)
=
\left\{
\widehat\tau_t\in\mathcal T_t:
\tau_t\succeq_t\widehat\tau_t
\right\}.
\]
\end{definition}

The verification preorder therefore determines which private types can imitate which others. A more capable agent may, for example, conceal capability through strategic underperformance, whereas an agent lacking that capability cannot necessarily counterfeit it. Thus \(C_t(\tau_t)\) collects the verification-feasible claims, or reportable counterparts, that type \(\tau_t\) can support. Reflexivity ensures that a type can always report itself, while transitivity captures the idea that evidence may be withheld but not fabricated.

We impose \textbf{capability consistency}: whenever \(\tau_t\succeq_t\widehat\tau_t\), the reported execution capability must satisfy \(\widehat A_t\subseteq A_t\). When information capability is also verifiable, the preorder may additionally require \(\pi_t\succeq_B\widehat\pi_t\), where \(\succeq_B\) denotes the Blackwell order. More generally, \(\succeq_t\) may restrict whichever components of the private type can be concealed but not counterfeited; behavioral objectives and subjective beliefs need not themselves be verifiable.

\section{Mechanisms, Incentives, and Revelation}

\subsection{Indirect mechanisms and attainable outcomes}

Section~2 specifies the economic environment and within-period timing. We now specify how a mechanism governs the communication and control stages of that interaction.

Fix the environment specified in Section~2. For each date \(t\), let \(M_t\) be the \textbf{ambient message space}, equipped with its maintained Borel structure, and let \(Y_t\) denote the space of mechanism responses. Let \(\mathcal D_t=2^{\mathcal A_t}\setminus\{\varnothing\}\) denote the collection of nonempty \textbf{authorization sets}, and let
\[
\mathcal R_t
:=
\left\{
R:Y_t\times\mathcal D_t\times\mathcal O_t\to\R
\right\}
\]
denote the space of period-\(t\) reward schedules. These spaces inherit the maintained measurable structures of Section~2.

Partial verification may restrict which messages in \(M_t\) are available to a given private type. Under an indirect mechanism, the public history \(h_t\) records past messages, realized mechanism responses and authorization sets, rewards, public outcomes, and any other publicly observed signals or control choices generated at earlier dates. The realized action is not a separate component of the public history unless it is revealed through the public outcome. Let \(\mathcal M_t:H_t\times\mathcal T_t\rightrightarrows M_t\) be the \textbf{feasible-message correspondence}, so that \(\mathcal M_t(h_t,\tau_t)\subseteq M_t\) is the nonempty set of messages available to type \(\tau_t\) after public history \(h_t\). We assume that
\[
\mathcal M_t(h_t,\widehat\tau_t)
\subseteq
\mathcal M_t(h_t,\tau_t)
\]
whenever \(\tau_t\succeq_t\widehat\tau_t\). Thus a type that can imitate another type can send every message available to the type it can imitate.

For each date \(t\), let
\[
\varphi_t:
H_t\times M_t
\longrightarrow
\Delta(Y_t\times\mathcal D_t)\times\mathcal R_t
\]
be the \textbf{outcome rule}. For each \((h_t,m_t)\), write
\[
\varphi_t(h_t,m_t)
=
\left(
\lambda_t(\cdot\mid h_t,m_t),
R_t(h_t,m_t,\cdot,\cdot,\cdot)
\right),
\]
where \(\lambda_t(\cdot\mid h_t,m_t)\) is the distribution of the mechanism response and authorization set, and \(R_t(h_t,m_t,\cdot,\cdot,\cdot)\in\mathcal R_t\) is the reward schedule. Thus the reward may condition on the realized mechanism response and authorization set as well as the public outcome:
\[
r_t=R_t(h_t,m_t,y_t,D_t,o_t).
\]
It does not condition directly on the realized action. When the public outcome reveals the action, this formulation nests action-contingent rewards through the outcome \(o_t\).

A standard mechanism specifies message spaces and an outcome rule. Here the additional feasible-message correspondence records the restrictions generated by partial verification. This parallels the role of constraint correspondences in generalized games and abstract economies, where feasible choices need not coincide with the ambient strategy set \citep{Debreu1952,ShaferSonnenschein1975}.

\begin{definition}[Indirect mechanism]
An \textbf{indirect mechanism} is a triple
\(\Gamma=(M,\mathcal M,\varphi)\), where
\(M=(M_t)_{t=1}^{T}\) is the profile of ambient message spaces,
\(\mathcal M=(\mathcal M_t)_{t=1}^{T}\) is the profile of feasible-message
correspondences, and \(\varphi=(\varphi_t)_{t=1}^{T}\) is the profile of
outcome rules. The unrestricted-message formulation is the special case
\(\mathcal M_t(h_t,\tau_t)=M_t\) for every \((h_t,\tau_t)\).
\end{definition}

Under an indirect mechanism, the abstract public history \(h_t\) introduced in Section~2 records past messages, realized mechanism responses and authorization sets, rewards, public outcomes, and any additional publicly observed signals or mechanism choices generated at earlier dates. Any information about past actions contained in the public history is therefore mediated through public outcomes or other publicly observed signals.

The mechanism governs the communication and control stages of the timing specified in Section~2. Under an indirect mechanism, the within-period interaction proceeds as follows.

\begin{enumerate}
\item After observing \((h_t,\tau_t)\), the agent chooses a feasible message
\[
m_t\in\mathcal M_t(h_t,\tau_t).
\]

\item Conditional on \((h_t,m_t)\), the mechanism draws its response and authorization set according to
\[
(\widetilde y_t,\widetilde D_t)
\sim
\lambda_t(\cdot\mid h_t,m_t).
\]

\item After observing the realized response \(y_t\) and authorization set \(D_t\), the agent observes its private signal \(s_t\) and chooses
\[
a_t\in A_t\cap D_t.
\]

\item The realized action generates the public outcome according to
\[
\widetilde o_t
\sim
Q_t(\cdot\mid\theta_t,a_t),
\]
and the realized reward is
\[
r_t=R_t(h_t,m_t,y_t,D_t,o_t).
\]
The action itself need not be publicly observed.

\item The publicly observed current-period events are incorporated into the next public history and, if \(t<T\), the state and private type evolve according to \(K_{t+1}\), which may depend physically on the realized action even when that action is not publicly observed.
\end{enumerate}

The distinction between \(A_t\) and \(D_t\) separates the agent's \textbf{execution capability} from the mechanism's \textbf{authorization}. The set \(A_t\), which is a component of the private type, specifies the actions the agent is capable of taking. The set \(D_t\) specifies the actions permitted by the mechanism and may depend on past communication, realized mechanism choices, rewards, public outcomes, and other publicly observed information.

A pure strategy for the agent under \(\Gamma\) is a pair \((\sigma,\beta)\), where
\[
\sigma_t(\tau_t,h_t)\in\mathcal M_t(h_t,\tau_t)
\]
specifies the date-\(t\) message and
\[
\beta_t(\tau_t,h_t,m_t,y_t,D_t,s_t)\in A_t\cap D_t
\]
specifies the subsequent action. 

An \textbf{equilibrium} is a strategy \((\sigma^\ast,\beta^\ast)\) that is sequentially optimal after every feasible \((h_t,\tau_t)\) under the agent's subjective beliefs, true information capability, transition process, and continuation mechanism.

Let
\[
\mathcal Z
:=
\prod_{t=1}^{T}
\left(
\Theta_t\times\mathcal T_t\times\mathcal A_t\times
\mathcal O_t\times\mathbb R
\right)
\]
denote the path space of payoff-relevant states, private types, actions, public outcomes, and rewards. For an indirect mechanism \(\Gamma\), let \(\mathcal O(\Gamma)\subseteq\Delta(\mathcal Z)\) denote the set of joint laws induced by its equilibria, and let
\[
\mathcal O:=\bigcup_{\Gamma}\mathcal O(\Gamma)
\]
denote the set of outcome laws attainable through indirect mechanisms. Thus attainability is defined in the sense of partial implementation: an outcome law is attainable if it is induced by an equilibrium of some indirect mechanism.

\subsection{Direct mechanisms and incentive compatibility}

A direct mechanism replaces the agent's arbitrary message with a report of its current private type. Directness concerns the communication stage only: after reporting, the agent still receives its private signal and chooses its realized action. The mechanism must therefore govern not only reporting but also authorization and the action it recommends after each private signal.

Thus the ambient message space at date \(t\) is \(\mathcal T_t\), while partial verification restricts a true type \(\tau_t\) to the feasible report set \(C_t(\tau_t)\subseteq\mathcal T_t\). Equivalently, in the direct mechanism the feasible-message correspondence is
\[
\mathcal M_t(h_t,\tau_t)=C_t(\tau_t).
\]
Truthful reporting is feasible because the verification preorder is reflexive, so \(\tau_t\in C_t(\tau_t)\).

Because the agent receives an additional private signal after reporting and chooses an action afterward, a direct mechanism specifies both authorization and a signal-contingent recommended action. Let
\[
\mathcal L_t
:=
\left\{
(D,\alpha):
D\in\mathcal D_t,\;
\alpha:\mathcal S_t\to D\text{ measurable}
\right\}
\]
denote the space of \textbf{authorization--plan pairs}, where \(D\) is the authorization set and \(\alpha(s)\) is the recommended action following signal \(s\). Thus \(D\) is the mechanism's authorization choice, specifying the actions it permits, while \(\alpha\) selects a recommended action within that set. The agent's actual action remains constrained by its true execution capability, so after authorization \(D_t\) its feasible action set is \(A_t\cap D_t\). We endow \(\mathcal L_t\) with the maintained measurable structure required for the mechanism kernels below. Let
\[
\mathcal R_t^D
:=
\left\{
R:Y_t\times\mathcal L_t\times\mathcal O_t\to\R
\right\}
\]
denote the corresponding space of direct-mechanism reward schedules.

For each date \(t\), let
\[
\psi_t:
H_t\times\mathcal T_t
\longrightarrow
\Delta(Y_t\times\mathcal L_t)\times\mathcal R_t^D
\]
be the direct outcome rule. For each \((h_t,\widehat\tau_t)\), write
\[
\psi_t(h_t,\widehat\tau_t)
=
\left(
\lambda_t^D(\cdot\mid h_t,\widehat\tau_t),
R_t^D(h_t,\widehat\tau_t)
\right),
\]
where \(\lambda_t^D\) determines the joint distribution of the mechanism response and authorization--plan pair, while \(R_t^D(h_t,\widehat\tau_t)\in\mathcal R_t^D\) is the corresponding reward schedule. Thus rewards may depend on the realized mechanism response, authorization--plan pair, and public outcome, but not directly on an unobserved realized action.

\begin{definition}[Direct mechanism]
\label{def:direct_mechanism}
A \textbf{direct mechanism} is
\(\Gamma^D=(\mathcal T,\psi)\), where
\(\mathcal T=(\mathcal T_t)_{t=1}^{T}\) is the profile of direct message
spaces and \(\psi=(\psi_t)_{t=1}^{T}\) is the profile of direct outcome rules.
Because the direct message spaces are the type spaces, when these spaces are
understood we identify the direct mechanism with its outcome-rule profile
\(\psi\) and refer to \(\psi\) simply as the mechanism.
\end{definition}

\begin{definition}[Action feasibility]
\label{def:action_feasibility}
A direct mechanism \(\psi\) is \textbf{action feasible} if, for every date
\(t\), public history \(h_t\), reported type \(\widehat\tau_t\), and every
\((y_t,D_t,\alpha_t)\) in the support of
\(\lambda_t^D(\cdot\mid h_t,\widehat\tau_t)\),
\[
\alpha_t(\mathcal S_t)\subseteq\widehat A_t\cap D_t,
\]
where \(\alpha_t(\mathcal S_t)\) denotes the set of actions prescribed
by the recommendation plan across all private signals.
\end{definition}

Action feasibility requires every action prescribed by the
recommendation plan to be both executable by the reported type
and authorized by the mechanism; it does not require
\(D_t\subseteq\widehat A_t\). Thus, execution capability \(A_t\)
specifies what the agent can do, authorization \(D_t\) specifies
what the mechanism permits, and the recommendation plan
\(\alpha_t\) selects an action from the authorized set for each
private signal. The agent's actual action remains a choice from
\(A_t\cap D_t\), which may include actions beyond those recommended
by the mechanism. Whether the agent follows the recommendation
is determined by obedience, not action feasibility.

Under a feasible direct mechanism, the within-period interaction proceeds as follows.
\begin{enumerate}
\item The agent submits a feasible report
\(\widehat\tau_t\in C_t(\tau_t)\).

\item Conditional on \((h_t,\widehat\tau_t)\), the mechanism draws its response, authorization set, and recommended plan according to
\[
(\widetilde y_t,\widetilde D_t,\widetilde\alpha_t)
\sim
\lambda_t^D(\cdot\mid h_t,\widehat\tau_t).
\]

\item The agent then observes its true private signal
\(\widetilde s_t\sim\pi_t(\cdot\mid\theta_t)\) and chooses an action
\(\widetilde a_t\in A_t\cap\widetilde D_t\).

\item The public outcome is generated according to
\[
\widetilde o_t
\sim
Q_t(\cdot\mid\widetilde\theta_t,\widetilde a_t),
\]
and the corresponding reward is
\[
\widetilde r_t
=
R_t^D(h_t,\widehat\tau_t)
\left(
\widetilde y_t,
(\widetilde D_t,\widetilde\alpha_t),
\widetilde o_t
\right).
\]

\item The publicly observed current-period events are incorporated into the next public history, and the continuation proceeds according to the dynamic environment of Section~2.
\end{enumerate}

The recommendation \(\widetilde\alpha_t\) does not replace the agent's action choice: the realized action remains privately chosen from \(A_t\cap\widetilde D_t\). It affects the reward through observable consequences whenever the reward is outcome contingent, but the reward schedule does not condition directly on a latent action.

We next specify the agent's reporting and post-report action strategies.

\begin{definition}[Admissible dynamic report--action strategy]
\label{def:admissible_strategy}
An \textbf{admissible dynamic report--action strategy} is a pair
\((\sigma,d)\) of measurable strategy profiles such that, at every
date \(t\) and every realization of their arguments,
\[
\sigma_t(\tau_t,h_t)\in C_t(\tau_t)
\]
and
\[
d_t(\tau_t,h_t,\widehat\tau_t,y_t,D_t,\alpha_t,s_t)
\in A_t\cap D_t.
\]
\end{definition}

The authorization set \(D_t\) specifies permitted actions,
whereas the recommendation plan \(\alpha_t\) assigns a single
recommended action \(\alpha_t(s_t)\) to each private signal \(s_t\).
The agent's actual action strategy \(d_t\) may select any action
in \(A_t\cap D_t\), not necessarily the recommended action.

Let \(\sigma^{\mathrm{tr}}\) denote \textbf{truthful reporting},
\[
\sigma_t^{\mathrm{tr}}(\tau_t,h_t)=\tau_t,
\]
and let \(d^{\mathrm{ob}}\) denote \textbf{obedience},
\[
d_t^{\mathrm{ob}}
(\tau_t,h_t,\widehat\tau_t,y_t,D_t,\alpha_t,s_t)
=
\alpha_t(s_t).
\]
The obedience strategy follows the recommendation after any feasible
report. Reflexivity of the verification preorder makes truthful
reporting admissible, while action feasibility and capability
consistency make the obedience strategy admissible after every
feasible report. Dynamic incentive compatibility requires obedience
following truthful reporting, but does not impose obedience to the
original recommendation following a false report. Such deviations
allow the true type to reoptimize its post-report action.

Fix a direct mechanism \(\psi\). For current type \(\tau_t\), public history
\(h_t\), and admissible continuation strategy \((\sigma,d)\), the agent's
expected discounted continuation payoff under its subjective beliefs is
\[
V_t(\sigma,d\mid\tau_t,h_t;\psi)
=
\int
\sum_{\ell=t}^{T}
\delta^{\ell-t}
\left[
\widetilde u_\ell(
\widetilde a_\ell,\widetilde\theta_\ell)
+
\widetilde r_\ell
\right]
\,d\mathbb P_t
(\cdot\mid\tau_t,h_t;\psi,\sigma,d).
\]

\begin{definition}[Dynamic incentive compatibility]
\label{def:dynamic_IC}
An action-feasible direct mechanism \(\psi\) is
\textbf{dynamically incentive compatible} if, for every date \(t\), every
feasible pair \((h_t,\tau_t)\), and every admissible continuation
report--action strategy \((\sigma,d)\),
\[
V_t
\left(
\sigma^{\mathrm{tr}},
d^{\mathrm{ob}}
\mid\tau_t,h_t;\psi
\right)
\geq
V_t
\left(
\sigma,d
\mid\tau_t,h_t;\psi
\right),
\]
and, following a truthful report, the recommended action is
sequentially optimal at every feasible post-report information
history, conditional on truthful-and-obedient continuation play.
\end{definition}

Dynamic incentive compatibility restricts the entire continuation
report--action strategy. It requires truthful reporting and obedient
post-report behavior along the truthful continuation, while ruling out
joint deviations that combine a false report, private action
reoptimization after new private information is realized, and
subsequent report or action deviations. Following a false report,
the agent may choose any feasible post-report action, and sequential
optimality does not require following the recommendation associated
with that report. A current action deviation can change the
distribution of the observable outcome and, through the physical
transition law, future states and private types even when the action
itself never enters the public record. Section~3.4 shows that, under
the maintained finite-horizon or continuity-at-infinity conditions,
dynamic incentive compatibility admits an equivalent recursive
representation in terms of honesty, obedience, and promise consistency.

An outcome law \(O\in\Delta(\mathcal Z)\) is \textbf{implementable}
if it is induced by truthful and obedient play under an action-feasible,
dynamically incentive-compatible direct mechanism. Let
\[
\mathcal O^{IC}\subseteq\Delta(\mathcal Z)
\]
denote the set of implementable outcome laws.

\subsection{Revelation principle}

The direct formulation is without loss for the outcome laws attainable in the
indirect environment, provided that feasible reports respect the verification
structure and deviations are allowed to combine reporting with subsequent
private action choices.

\begin{theorem}[Revelation principle under partial verification]
\label{prop:dynamic_revelation}
An outcome law is attainable in equilibrium through an indirect mechanism if
and only if it is induced by truthful and obedient play under an action-feasible,
dynamically incentive-compatible direct mechanism \(\psi\) with feasible report
correspondence \(C_t(\tau_t)\). That is,
\[
\mathcal O=\mathcal O^{IC}.
\]
\end{theorem}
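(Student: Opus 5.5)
The inclusion \(\mathcal O^{IC}\subseteq\mathcal O\) is immediate, and I would dispose of it first. A direct mechanism \(\psi\) is a special indirect mechanism with \(M_t=\mathcal T_t\) and \(\mathcal M_t(h_t,\tau_t)=C_t(\tau_t)\); monotonicity of this correspondence in \(\succeq_t\) follows from transitivity of the preorder. Responses are taken to be \(y_t\) together with the authorization--plan pair, and the reward schedule is \(R_t^D\). Dynamic incentive compatibility (Definition~\ref{def:dynamic_IC}) says exactly that \((\sigma^{\mathrm{tr}},d^{\mathrm{ob}})\) is sequentially optimal against all admissible report--action strategies. Hence truthful-obedient play is an equilibrium, and it induces the given outcome law.

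The substantive direction is \(\mathcal O\subseteq\mathcal O^{IC}\). Fix an indirect mechanism \(\Gamma=(M,\mathcal M,\varphi)\) and an equilibrium \((\sigma^\ast,\beta^\ast)\) inducing \(O\). I would construct the direct mechanism by composition. After direct history \(h_t\) and report \(\widehat\tau_t\), the mechanism computes the message \(m_t=\sigma_t^\ast(\widehat\tau_t,\bar h_t)\), where \(\bar h_t\) is the indirect public history that the mechanism privately reconstructs from the simulated messages. It then draws \((y_t,D_t)\sim\lambda_t(\cdot\mid\bar h_t,m_t)\) and sets the plan \(\alpha_t(s)=\beta_t^\ast(\widehat\tau_t,\bar h_t,m_t,y_t,D_t,s)\). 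The new response is \(y_t^D=y_t\). The reward schedule is \(R_t(\bar h_t,m_t,y_t,D_t,o_t)\); this is measurable with respect to what \(\psi\) sees, because \(\bar h_t\) and \(m_t\) are functions of past reports and public events, and the mechanism may randomize internally. Action feasibility holds because \(\beta^\ast\) selects from \(\widehat A_t\cap D_t\) for type \(\widehat\tau_t\). Under truthful-obedient play, the joint law of states, types, actions, outcomes and rewards coincides with \(O\): the period-by-period kernels are identical, and I would verify this by induction on \(t\) via Ionescu--Tulcea.

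Incentive compatibility is argued by simulation. Take any admissible deviation \((\sigma,d)\) in \(\psi\). Define an indirect strategy by \(\sigma'_t(\tau_t,h_t)=\sigma_t^\ast(\sigma_t(\tau_t,h_t),\bar h_t)\). This is feasible because \(\sigma_t(\tau_t,h_t)\in C_t(\tau_t)\) means \(\tau_t\succeq_t\widehat\tau_t\), so \(\mathcal M_t(\bar h_t,\widehat\tau_t)\subseteq\mathcal M_t(\bar h_t,\tau_t)\). This inclusion is exactly where the verification monotonicity assumption is used. The corresponding action rule is \(\beta'_t=d_t\) evaluated at the reconstructed plan \(\alpha_t\), which lies in \(A_t\cap D_t\). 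The deviation in \(\Gamma\) yields the same payoff law as \((\sigma,d)\) in \(\psi\), so equilibrium optimality gives \(V_t(\sigma^{\mathrm{tr}},d^{\mathrm{ob}})\ge V_t(\sigma,d)\). Sequential optimality of obedience after truthful reports inherits from sequential optimality of \(\beta^\ast\).

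The main obstacle is the bookkeeping of histories. The direct public history contains plans \(\alpha_t\), which are not in \(\Gamma\)'s history. To keep continuation beliefs and transitions aligned, I would ensure that \(K_{t+1}\) and the agent's belief kernels depend on \(h_t\) only through the reconstructed \(\bar h_t\). Failing that, I would let \(\psi\) garble its record so that the public history is identified with \(\bar h_t\) together with the revealed plans; the plans are deterministic functions of \(\bar h_t\), \(m_t\) and \(y_t\), so nothing payoff-relevant changes. A second delicate point is measurability of the composed plans \(\alpha_t\) and of the deviation strategy \(\sigma'\), which I would handle through the maintained measurable structures. For \(T=\infty\), the finiteness and well-definedness restrictions on payoffs make the payoff equivalence pass to the limit.
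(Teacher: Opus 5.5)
Your proposal is correct and follows the paper's proof essentially step for step. It uses the same simulation construction: the reconstructed indirect history, $m_t=\sigma^\ast_t(\widehat\tau_t,\bar h_t)$, the plan $\alpha_t=\beta^\ast_t(\widehat\tau_t,\ldots)$ and the reward $R_t(\bar h_t,m_t,\ldots)$. It uses verification consistency $\mathcal M_t(\bar h_t,\widehat\tau_t)\subseteq\mathcal M_t(\bar h_t,\tau_t)$ in the same way, to map direct report--action deviations into indirect ones. For the converse it uses the same embedding of a direct mechanism as an indirect one with message correspondence $C_t$.

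Your remarks on transitivity of $\succeq_t$ and on aligning history-dependent primitives with $\bar h_t$ make explicit points the paper leaves implicit. One small slip: by your own formula, $\alpha_t$ also depends on $\widehat\tau_t$ and $D_t$, not only on $(\bar h_t,m_t,y_t)$; this does not affect the argument, since that fallback remark is not load-bearing.
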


\begin{proof}
Fix an indirect mechanism
\(\Gamma=(M,\mathcal M,\varphi)\) and an equilibrium
\((\sigma^I,\beta^I)\). We construct a direct mechanism \(\psi\) that
reproduces the same outcome law.

Because the direct mechanism records type reports whereas the indirect
mechanism records messages, write \(h_t^D\) for a direct public history and
\(h_t^I\) for its corresponding simulated indirect public history. We
construct \(h_t^I\) recursively from \(h_t^D\). At date \(1\),
\(h_1^I=h_1^D=\emptyset\). Suppose that a direct public history \(h_t^D\) has
been associated with the simulated indirect public history \(h_t^I\). After
report \(\widehat\tau_t\), define the simulated indirect message by
\[
m_t^I=\sigma_t^I(\widehat\tau_t,h_t^I).
\]
The direct mechanism then reproduces the distribution
\(\lambda_t(\cdot\mid h_t^I,m_t^I)\) over the mechanism response and
authorization set. For each realized \((y_t,D_t)\), it assigns the
signal-contingent plan
\[
\alpha_t(s)
=
\beta_t^I(
\widehat\tau_t,h_t^I,m_t^I,y_t,D_t,s),
\]
and sets
\[
R_t^D(
h_t^D,\widehat\tau_t,
y_t,(D_t,\alpha_t),o_t)
=
R_t(
h_t^I,m_t^I,y_t,D_t,o_t).
\]
The direct public history records the resulting report and realized public
mechanism choices, including the recommended plan, reward, and public outcome,
while the simulated indirect history is updated with the corresponding
indirect message and the same realized response, authorization set, reward,
and public outcome. The realized action is not inserted into either public
history unless revealed by the public outcome. This recursively defines the
history mapping at every continuation history.

Under truthful reporting and obedience, \(\psi\) therefore reproduces the same
conditional law of mechanism responses, authorization sets, actions, rewards,
public outcomes, states, and future private types as the original indirect
equilibrium. Although actions need not be publicly observed, the same realized
actions generate the same outcome distributions and enter the same physical
transition kernel. Action feasibility follows directly from the construction:
the plan \(\alpha_t\) reproduces the equilibrium action strategy of the reported
type and hence satisfies
\[
\alpha_t(s)\in\widehat A_t\cap D_t
\]
for every \(s\in\mathcal S_t\).

Now consider any admissible dynamic report--action deviation under \(\psi\).
Whenever a true type \(\tau_t\) reports
\(\widehat\tau_t\in C_t(\tau_t)\), the simulated message
\(\sigma_t^I(\widehat\tau_t,h_t^I)\) is feasible for
\(\widehat\tau_t\). Verification consistency gives
\[
\mathcal M_t(h_t^I,\widehat\tau_t)
\subseteq
\mathcal M_t(h_t^I,\tau_t),
\]
so the same message is feasible for the true type. After the induced mechanism
choice and realization of its true private signal, the true type can reproduce
the same private action deviation as under the direct mechanism. That action
induces the same distribution of public outcomes, rewards, future states,
private types, and continuation histories in the two mechanisms. Applying this
construction recursively therefore maps every admissible direct dynamic
deviation into an admissible deviation in the original indirect mechanism with
the same induced continuation payoff. Since
\((\sigma^I,\beta^I)\) is an equilibrium of the indirect mechanism, no such
deviation is profitable. Hence the constructed direct mechanism is dynamically
incentive compatible, and
\(\mathcal O\subseteq\mathcal O^{IC}\).

Conversely, any direct mechanism can be represented as an indirect mechanism
by taking type reports as messages, setting the feasible-message
correspondence of type \(\tau_t\) equal to \(C_t(\tau_t)\), and including the
recommended plan in the mechanism response. The indirect reward schedule then
reproduces the direct reward conditional on that augmented response,
authorization set, and public outcome. Dynamic incentive compatibility makes
truthful reporting and obedience sequentially optimal. Hence
\(\mathcal O^{IC}\subseteq\mathcal O\), completing the proof.
\end{proof}

Theorem~\ref{prop:dynamic_revelation} allows the analysis of dynamic
implementability to be restricted to direct mechanisms in which truthful
reporting is verification feasible and obedience to the recommended action plan
is optimal. The relevant incentive problem nevertheless remains dynamic:
current reports and privately chosen post-report actions affect the
distribution of public outcomes, continuation payoffs, future states, and
future deviation opportunities. We next give a recursive representation of
these reporting, obedience, and continuation incentives before introducing the
scalar structure used for envelope and optimal-design analysis.

\begin{remark}[Revelation and hidden action]
The revelation principle concerns private information, not observability of
the realized action. If the private type is common knowledge, the report space
can be taken to be a singleton, so truthful reporting and verification are
vacuous. The remaining implementation problem is then solely to induce
obedience of the privately chosen action, yielding the pure hidden-action
benchmark \(2B^{\mathrm{HA}}\).
\end{remark}

\subsection{Recursive characterization}

The preceding analysis defines dynamic incentive compatibility in terms of
complete report--action strategies after every history. The same
incentive problem can be represented recursively by identifying a sufficient
dynamic state and replacing arbitrary future deviations with current
report--action deviations linked through continuation values. This is
particularly important here because the agent first chooses a report and,
after observing the mechanism response and additional private information,
privately chooses an action whose observable consequences and physical effects
shape the continuation problem.

Let \(\chi_t:H_t\to Z_t\) be a measurable mapping and define
\(z_t=\chi_t(h_t)\). We call \(z_t\) a \textbf{sufficient public state} if
any two public histories mapped into the same \(z_t\) induce the same current
payoff-relevant environment, subjective continuation law, feasible
continuation choices, and continuation rule for every current private type.
Thus \(z_t\) may include public beliefs, monitoring statistics, physical state
variables, continuation promises, or other public variables needed to make
the continuation problem Markov. The full public history is always a valid
state by taking \(\chi_t\) to be the identity map.

For the designer's recursive problem below, we impose the corresponding
recursive sufficiency requirement: public histories mapped into the same
\(z_t\) must also induce the same feasible current mechanism choices and the
same admissible continuation-mechanism correspondence. Any continuation
promises or other state variables needed to preserve commitment are therefore
included in \(z_t\). This allows the agent's incentive problem and the
designer's continuation problem to be represented on the same public state.

We restrict attention in this subsection to direct mechanisms whose
continuation rules depend on \(h_t\) only through
\(z_t=\chi_t(h_t)\). After a feasible report
\(\widehat\tau_t\in C_t(\tau_t)\), the mechanism draws
\[
(\widetilde y_t,\widetilde D_t,\widetilde\alpha_t)
\sim\lambda_t^D(\cdot\mid z_t,\widehat\tau_t)
\]
and applies the reward schedule
\[
R_t^D(z_t,\widehat\tau_t,\cdot,\cdot,\cdot).
\]
The true type subsequently observes its private signal and chooses an action
from \(A_t\cap D_t\). Action feasibility requires the recommended plan to be
executable by the reported type, while a true type that can imitate that
report may have additional authorized actions available to it. The realized
action need not be public; its effects enter through the induced public
outcome and the physical transition law.

For a direct mechanism \(\psi\), define the agent's truthful-and-obedient
continuation value by
\[
V_t(\tau_t,z_t)
:=
V_t^\psi(
\sigma^{\mathrm{tr}},
d^{\mathrm{ob}}
\mid\tau_t,h_t),
\qquad
\chi_t(h_t)=z_t.
\]
Because \(z_t\) is a sufficient state and the continuation mechanism depends
on \(h_t\) only through \(z_t\), the truthful-and-obedient continuation value
is well defined as a function of \((\tau_t,z_t)\). For the one-period
recursion below, \(\mathbb P_t^{A,\psi}(\cdot\mid\cdot)\) denotes the
corresponding regular conditional marginal law generated by the current
mechanism rule and truthful-and-obedient continuation, with conditioning
variables specified at the relevant stage. Dynamic incentive compatibility
itself remains defined over complete public histories as in Section~3.2; we do
not require arbitrary off-path continuation strategies to depend on \(h_t\)
only through \(z_t\).

When \(T<\infty\), set \(V_{T+1}\equiv0\). We assume that, after every
feasible state, report, realized mechanism choice, and private signal, the
post-report action problem below attains its maximum and admits a measurable
maximizing action.

When \(T=\infty\), we additionally impose continuity at infinity: for every
date \(t\) and feasible state \((\tau_t,z_t)\),
\[
\lim_{n\to\infty}
\sup_{\substack{h_t:\,\chi_t(h_t)=z_t\\(\sigma,d)}}
\int
\sum_{\ell=t+n}^{\infty}
\delta^{\ell-t}
\left|
\widetilde u_\ell(
\widetilde a_\ell,\widetilde\theta_\ell)
+
\widetilde r_\ell
\right|
\,d\mathbb P_t(\cdot\mid\tau_t,h_t;\psi,\sigma,d)
=
0,
\]
where the supremum is over feasible public histories mapped into \(z_t\) and
admissible continuation report--action strategies. This condition makes
sufficiently distant continuation payoffs uniformly negligible and supports
the one-stage-deviation argument without restricting off-path strategies to
depend only on the sufficient public state. When \(T<\infty\), the
corresponding argument follows directly by backward induction.

For a true type \(\tau_t\), public state \(z_t\), current report
\(\widehat\tau_t\), realized mechanism choice
\((y_t,D_t,\alpha_t)\), and realized private signal \(s_t\), define the
\textbf{post-report continuation value}
\[
\begin{aligned}
&G_t(
\tau_t,z_t;
\widehat\tau_t,
y_t,D_t,\alpha_t,s_t)
\\
&\qquad
=
\max_{a\in A_t\cap D_t}
\int
\Big[
u_t(a,\widetilde\theta_t)
+
R_t^D(
z_t,\widehat\tau_t,
y_t,(D_t,\alpha_t),
\widetilde o_t)
+
\delta
V_{t+1}(
\widetilde\tau_{t+1},
\widetilde z_{t+1})
\Big]
\\
&\hspace{8em}
d\mathbb P_t
\bigl(
\cdot
\mid
\tau_t,z_t,
\widehat\tau_t,
y_t,D_t,\alpha_t,s_t,a
;\psi\bigr),
\end{aligned}
\]
where
\(\widetilde z_{t+1}=\chi_{t+1}(\widetilde h_{t+1})\).
The value \(G_t\) is evaluated at the agent's true type, including after a
false report; the report affects the mechanism response but not the type at
which the agent evaluates its continuation payoff. The action \(a\) enters
the conditional law because it changes the distribution of the public outcome
and may change the transition to future states and private types. It need not
enter either the reward schedule or the subsequent public history as a
separately observed variable.

The corresponding continuation value following report \(\widehat\tau_t\) is
\[
\widehat V_t(\tau_t,z_t;\widehat\tau_t)
=
\int
G_t(
\tau_t,z_t;
\widehat\tau_t,
\widetilde y_t,
\widetilde D_t,
\widetilde\alpha_t,
\widetilde s_t)
\,
d\mathbb P_t
\bigl(
\cdot
\mid
\tau_t,z_t,\widehat\tau_t
;\psi\bigr).
\]
This outer integral is over the mechanism's current randomization and the true
type's private signal; payoff-relevant uncertainty, the action-induced public
outcome, and the transition to the next continuation state are already
incorporated in \(G_t\). Thus
\(\widehat V_t(\tau_t,z_t;\widehat\tau_t)\) is the value of the true type
following report \(\widehat\tau_t\), after optimal feasible current action
reoptimization and truthful-and-obedient continuation thereafter.

The three objects \(G_t\), \(\widehat V_t\), and \(V_t\) separate the two
current incentive margins. The post-report problem determines the true type's
optimal private action following a given report and realized private signal,
while the reporting problem compares the resulting continuation values across
feasible reports. We call \((\tau_t,z_t)\) a \textbf{feasible state} if it can
arise after a public history, including a history generated by an
admissible deviation.

\begin{proposition}[Recursive characterization of dynamic incentive compatibility]
\label{prop:recursive_DIC}
An action-feasible mechanism \(\psi\) is dynamically incentive compatible if
and only if, at every state \((\tau_t,z_t)\), the following
conditions hold:
\begin{enumerate}
\item[(i)] \textbf{Recursive obedience.}
For every \((y_t,D_t,\alpha_t)\) in the support of
\(\lambda_t^D(\cdot\mid z_t,\tau_t)\) and every private signal \(s_t\),
\[
\begin{aligned}
\alpha_t(s_t)
\in
\arg\max_{a\in A_t\cap D_t}
\int
\Big[
&u_t(a,\widetilde\theta_t)
+
R_t^D(
z_t,\tau_t,
y_t,(D_t,\alpha_t),
\widetilde o_t)
\\
&+
\delta
V_{t+1}(
\widetilde\tau_{t+1},
\widetilde z_{t+1})
\Big]
\,
d\mathbb P_t
\bigl(
\cdot
\mid
\tau_t,z_t,\tau_t,
y_t,D_t,\alpha_t,s_t,a
;\psi\bigr).
\end{aligned}
\]

\item[(ii)] \textbf{Recursive honesty.}
The truthful report satisfies
\[
\tau_t
\in
\arg\max_{\widehat\tau_t\in C_t(\tau_t)}
\widehat V_t(\tau_t,z_t;\widehat\tau_t).
\]

\item[(iii)] \textbf{Promise consistency.}
The truthful continuation value satisfies
\[
V_t(\tau_t,z_t)
=
\widehat V_t(\tau_t,z_t;\tau_t).
\]
\end{enumerate}
\end{proposition}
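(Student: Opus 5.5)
We prove both directions through the one-shot deviation principle, taking the truthful-and-obedient continuation value \(V_{t+1}\) as given in the recursion.

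\textbf{Necessity.} Suppose \(\psi\) is dynamically incentive compatible. Promise consistency (iii) follows from two facts. First, under truthful reporting, \(V_t(\tau_t,z_t)\) decomposes by the law of iterated expectations over the mechanism draw, the private signal, the public outcome and the transition. Second, the obedience clause in Definition~\ref{def:dynamic_IC} makes \(\alpha_t(s_t)\) attain the maximum in \(G_t(\tau_t,z_t;\tau_t,\cdot)\). Hence the truthful value equals \(\widehat V_t(\tau_t,z_t;\tau_t)\) almost surely in the mechanism draw and the signal. This argument also gives recursive obedience (i), since the objective in (i) is exactly the integrand defining \(G_t\) at the truthful report.

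For recursive honesty (ii), fix \(\widehat\tau_t\in C_t(\tau_t)\). By the measurable-selection assumption, there is a measurable maximizer \(a^\ast(\cdot)\) of the post-report problem. Build the admissible strategy that reports \(\widehat\tau_t\), plays \(a^\ast\), and is truthful and obedient from \(t+1\) on. Capability consistency and \(a^\ast\in A_t\cap D_t\) make it admissible, and its value is \(\widehat V_t(\tau_t,z_t;\widehat\tau_t)\). Dynamic incentive compatibility then gives \(V_t\ge\widehat V_t(\cdot;\widehat\tau_t)\), and combining this with (iii) yields (ii). One point needs care: because the mechanism depends on \(h_t\) only through \(z_t\) and \(z_t\) is sufficient, these values are the same across all histories mapped to \(z_t\), including states reached off path.

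\textbf{Sufficiency.} Assume (i)–(iii) hold at every feasible state. For \(T<\infty\), argue by backward induction on \(t\) that for every history \(h_t\) and every admissible continuation \((\sigma,d)\),
\[
V_t(\sigma,d\mid\tau_t,h_t;\psi)\le V_t(\tau_t,\chi_t(h_t)).
\]
In the inductive step, condition on the date-\(t\) report \(\widehat\tau_t\), the mechanism draw, the signal and the action \(a\). By the induction hypothesis applied at \(t+1\), the continuation value of \((\sigma,d)\) from the realized \((\tau_{t+1},h_{t+1})\) is at most \(V_{t+1}(\tau_{t+1},z_{t+1})\). The date-\(t\) payoff is therefore bounded by the integrand of \(G_t\) at \(a\), hence by \(G_t\) itself. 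Integrating gives the bound \(\widehat V_t(\tau_t,z_t;\widehat\tau_t)\), which (ii) and (iii) bound by \(V_t(\tau_t,z_t)\). The sequential-optimality clause for obedience after truthful reports is then exactly (i) combined with this bound on continuations.

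For \(T=\infty\), apply the finite-horizon argument to the strategy that follows \((\sigma,d)\) for \(n\) periods and then reverts to truthful-and-obedient play. Its value is at most \(V_t\). The difference between this value and \(V_t(\sigma,d)\) is bounded by twice the tail term in the continuity-at-infinity condition, uniformly over histories in \(\chi_t^{-1}(z_t)\) and over strategies. Letting \(n\to\infty\) completes the argument.

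\textbf{Main obstacle.} The hard step is the inductive inequality when off-path strategies depend on the full history \(h_{t+1}\) rather than on \(z_{t+1}\). Because the hypothesis quantifies over all histories with \(\chi_t(h_t)=z_t\), it handles this, but it requires that deviations reach only feasible states where (i)–(iii) are imposed. Note also that the definition of a feasible state includes states generated by deviations. A second point is measurability: the conditional decomposition must use regular conditional laws, and the reduction of the inner maximization needs the measurable maximizer assumed in the paper. The uniformity of the tail bound is what removes any need to restrict off-path strategies.
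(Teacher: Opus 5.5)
Your proposal is correct and follows essentially the same route as the paper. For necessity, you use truthful-report action deviations (taken directly from the sequential-obedience clause of Definition~\ref{def:dynamic_IC}) and the deviation that misreports, reoptimizes the current action, and then reverts to truthful-and-obedient play; for sufficiency, you rely on the one-stage-deviation principle. The only difference is that you prove that principle explicitly, through the backward-induction bound $V_t(\sigma,d\mid\tau_t,h_t;\psi)\le V_t(\tau_t,\chi_t(h_t))$ for $T<\infty$ and the truncation argument with the doubled tail bound for $T=\infty$, whereas the paper simply invokes it.
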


\begin{proof}
Suppose first that the mechanism is dynamically incentive compatible. Fix a
feasible state \((\tau_t,z_t)\). Consider first deviations that keep the
current report truthful but allow the agent to change its current action after
observing the realized mechanism choice and private signal. Since such
action-only deviations are admissible, dynamic incentive compatibility
implies recursive obedience. In particular, a deviation need not be observed
directly to matter: it may change the distribution of the public outcome and
the continuation state.

Next, take any feasible report
\(\widehat\tau_t\in C_t(\tau_t)\). Consider the deviation that submits this
report and, conditional on each subsequently realized mechanism choice and
private signal, chooses the best feasible current action, before returning to
truthful and obedient play from date \(t+1\) onward. By construction, the
value of this deviation is
\(\widehat V_t(\tau_t,z_t;\widehat\tau_t)\). Dynamic incentive compatibility
therefore implies
\[
V_t(\tau_t,z_t)
\geq
\widehat V_t(\tau_t,z_t;\widehat\tau_t)
\qquad
\forall\widehat\tau_t\in C_t(\tau_t).
\]
For the truthful report, recursive obedience implies that the prescribed
action attains the post-report optimum. Hence
\[
V_t(\tau_t,z_t)=\widehat V_t(\tau_t,z_t;\tau_t),
\]
so recursive honesty and promise consistency follow.

Conversely, suppose conditions~(i)--(iii) hold after every state.
Consider any one-stage deviation consisting of a feasible current report and
an arbitrary feasible current action rule, followed by truthful and obedient
play thereafter. For every feasible report \(\widehat\tau_t\),
\(\widehat V_t(\tau_t,z_t;\widehat\tau_t)\) already allows the true type to
choose its optimal feasible current action after every realized mechanism
choice and private signal, taking into account the induced distribution of
observable outcomes and continuation states. Recursive honesty and promise
consistency therefore rule out every profitable deviation combining a false
report with current action reoptimization, while recursive obedience rules out
a profitable current action deviation following a truthful report.

Hence no profitable one-stage report--action deviation exists after any
feasible state. When \(T<\infty\), backward induction rules out every
admissible multistage deviation. When \(T=\infty\), continuity at infinity
yields the corresponding one-stage-deviation principle. The mechanism is
therefore dynamically incentive compatible.
\end{proof}

Recursive honesty and promise consistency yield the exact
\textbf{nested report--action condition}
\[
V_t(\tau_t,z_t)
\geq
\widehat V_t(\tau_t,z_t;\widehat\tau_t)
\qquad
\forall\widehat\tau_t\in C_t(\tau_t),
\]
with equality at the truthful report. Because \(\widehat V_t\) already
incorporates optimal current post-report action reoptimization, each
false-report comparison jointly evaluates the report and the best current
action available after that report. Recursive obedience remains separately
necessary because it imposes sequential optimality after each realized
mechanism choice and private signal on the truthful branch. Thus the nested
inequality summarizes the reporting comparisons, while
Proposition~\ref{prop:recursive_DIC} supplies the full sequential
characterization. Section~4 focuses on this structure for ordered scalar
types and derives its integral representation.

The recursion is stronger than a Bellman equation written only along the
truthful path. Every false report must be evaluated after allowing the true
type to solve its own post-report action problem. The formulation retains the
nested honesty-and-obedience structure emphasized by Bergemann, Koh, and
Morris \citeyearpar{BergemannKohMorris2026} and draws on the joint dynamic
reporting-and-action implementation approach of Tian
\citeyearpar{TianGeneralDynamic2026}, but here the two layers are
linked recursively because the current report determines the mechanism choice,
the true type then privately acts using its own capability and information,
and the resulting action changes the distribution of observable consequences
and may change the future state, private type, and continuation mechanism.

The representation also clarifies the relation to standard dynamic mechanism
design. In a direct-assignment environment the post-report action problem is
absent. The deviation value then reduces to the continuation payoff generated
by the allocation associated with the report, and recursive honesty becomes
the one-stage reporting comparison underlying the dynamic envelope and
integral monotonicity approach of Pavan, Segal, and Toikka
\citeyearpar{PavanSegalToikka2014}. The present environment adds endogenous
post-report action reoptimization to each dynamic reporting problem.

\begin{corollary}[Direct-assignment benchmark]
\label{cor:recursive_direct_assignment}
Under the conditions of Proposition~\ref{prop:recursive_DIC}, suppose the
mechanism directly assigns the utility-relevant action
\(a_t^D(z_t,\widehat\tau_t)\) after report \(\widehat\tau_t\), so the agent
has no post-report action choice. Then recursive obedience is vacuous, and
dynamic incentive compatibility is equivalent to
\[
\begin{aligned}
V_t(\tau_t,z_t)
=
\max_{\widehat\tau_t\in C_t(\tau_t)}
\int
\Big[
&u_t(
a_t^D(z_t,\widehat\tau_t),
\widetilde\theta_t)
\\
&+
R_t^D\!\left(
z_t,\widehat\tau_t,
\widetilde y_t,
(\widetilde D_t,\widetilde\alpha_t),
\widetilde o_t
\right)
\\
&+
\delta
V_{t+1}(
\widetilde\tau_{t+1},
\widetilde z_{t+1})
\Big]
\,
d\mathbb P_t
\bigl(
\cdot
\mid
\tau_t,z_t,\widehat\tau_t
;\psi\bigr),
\end{aligned}
\]
with the maximum attained at \(\widehat\tau_t=\tau_t\).
\end{corollary}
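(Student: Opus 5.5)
The corollary will be obtained by specializing Proposition~\ref{prop:recursive_DIC} rather than by a new argument. The plan is to model direct assignment as the degenerate case of the general post-report action problem. After report \(\widehat\tau_t\), the mechanism sets the authorization set to the singleton \(D_t=\{a_t^D(z_t,\widehat\tau_t)\}\) and uses the constant recommendation plan \(\alpha_t(s)\equiv a_t^D(z_t,\widehat\tau_t)\). Because authorization sets are nonempty, action feasibility, together with capability consistency on the reports considered, gives \(A_t\cap D_t=D_t\) for every true type that can submit \(\widehat\tau_t\). The agent's feasible action set after any feasible report is then a single point. This is the precise sense in which the agent ``has no post-report action choice.''

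With that reduction in place, I will go through the three conditions of Proposition~\ref{prop:recursive_DIC} in turn. For recursive obedience (i), the \(\arg\max\) is taken over the singleton \(A_t\cap D_t=\{\alpha_t(s_t)\}\), so it holds trivially at every realized mechanism choice and signal; this is the claim that obedience is vacuous. For the post-report value \(G_t\), the maximum over a singleton is simply the integrand evaluated at \(a=a_t^D(z_t,\widehat\tau_t)\). Since the assigned action does not depend on \(s_t\), the conditional law \(\mathbb P_t(\cdot\mid\tau_t,z_t,\widehat\tau_t,y_t,D_t,\alpha_t,s_t,a)\) can be integrated against the outer law over \((\widetilde y_t,\widetilde D_t,\widetilde\alpha_t,\widetilde s_t)\). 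Combining the two by the tower property, \(\widehat V_t(\tau_t,z_t;\widehat\tau_t)\) equals the single integral in the displayed formula, taken with respect to \(\mathbb P_t(\cdot\mid\tau_t,z_t,\widehat\tau_t;\psi)\). This integral still evaluates \(u_t\), the outcome kernel and the transition at the true type \(\tau_t\); only the assigned action and the reward schedule are indexed by the report.

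Finally, conditions (ii) and (iii) together state that \(V_t(\tau_t,z_t)=\widehat V_t(\tau_t,z_t;\tau_t)\geq\widehat V_t(\tau_t,z_t;\widehat\tau_t)\) for all \(\widehat\tau_t\in C_t(\tau_t)\). The truthful report is feasible by reflexivity, so this is exactly the statement that \(V_t(\tau_t,z_t)\) equals the maximum over \(C_t(\tau_t)\) and that the maximum is attained at the truthful report. The converse direction follows by reading the same chain backwards, with (i) holding automatically. The only step needing care is the merging of the two integrals. That step requires the regular conditional laws to compose, which I will take from the maintained measurability assumptions and the definition of \(\mathbb P_t^{A,\psi}\). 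It also requires that, in the direct-assignment case, the recorded recommendation carries no information beyond the report. Beyond that, I expect no real obstacle, since the attainment and continuity-at-infinity hypotheses are inherited directly from the proposition.
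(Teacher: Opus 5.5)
Your proposal is correct and follows the paper's own route: specialize Proposition~\ref{prop:recursive_DIC}, note that the inner maximization in $G_t$ collapses, and read recursive honesty together with promise consistency as the displayed maximum attained at the truthful report. The paper's proof is only a short sketch. Your additions make explicit what it leaves implicit: the singleton-authorization embedding (where action feasibility and capability consistency give $A_t\cap D_t=\{a_t^D(z_t,\widehat\tau_t)\}$ for every true type able to send the report) and the tower-property merge of the two integrals.
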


\begin{proof}
Under direct assignment there is no post-report action choice to be induced.
Consequently, the inner maximization defining \(G_t\) disappears, and
\(\widehat V_t\) is the current-plus-continuation payoff generated by the
action assigned to the report, integrated under the induced conditional law.
The result follows from recursive honesty and promise consistency in
Proposition~\ref{prop:recursive_DIC}.
\end{proof}

With post-report action discretion, by contrast, the report and action
optimizations cannot generally be collapsed into a conventional reporting
problem. Each feasible report induces a potentially different distribution
over mechanism responses, authorization sets, recommended plans, rewards,
public outcomes, and continuation states. Conditional on the realized
mechanism objects and its true private signal, the true type solves its own
action problem. Under Proposition~\ref{prop:recursive_DIC}, the nested
recursion can therefore be written as
\[
\begin{aligned}
V_t(\tau_t,z_t)
=
\max_{\widehat\tau_t\in C_t(\tau_t)}
\int
\Bigg[
\max_{a\in A_t\cap\widetilde D_t}
\int
\Big[
&u_t(a,\widetilde\theta_t)
+
R_t^D\!\left(
z_t,\widehat\tau_t,\widetilde y_t,
(\widetilde D_t,\widetilde\alpha_t),
\widetilde o_t
\right)
\\
&\quad+
\delta V_{t+1}(
\widetilde\tau_{t+1},
\widetilde z_{t+1})
\Big]
\,d\mathbb P_t
\bigl(
\cdot\mid
\tau_t,z_t,\widehat\tau_t,
\widetilde y_t,\widetilde D_t,
\widetilde\alpha_t,\widetilde s_t,a
;\psi\bigr)
\Bigg]
\\
&\qquad
d\mathbb P_t
\bigl(
\cdot\mid
\tau_t,z_t,\widehat\tau_t
;\psi\bigr),
\end{aligned}
\]
where dynamic incentive compatibility requires the outer maximizer to be
truthful and, on the truthful branch, the prescribed action to attain the
inner maximum after every realized mechanism choice and private signal.

Post-report action discretion therefore changes the value of every feasible
report rather than merely adding an independent constraint to a conventional
reporting problem. Authorization, outcome-contingent current rewards, and
continuation promises can change the value of action reoptimization, but the
same instruments may also affect current payoffs and future states. Recursive
obedience is consequently a distinct \textbf{incentive margin}: its economic
effect is the shadow value of restricting the designer to mechanisms that
sustain the prescribed post-report behavior, not a separate additive monetary
cost.

The same recursive structure can be used to represent the designer's
continuation problem. Let \(\Psi_t(z_t)\) denote the correspondence of
admissible current direct-mechanism rules at public state \(z_t\). A current
rule belongs to \(\Psi_t(z_t)\) if it can be extended by continuation rules
such that action feasibility and the recursive incentive conditions of
Proposition~\ref{prop:recursive_DIC} hold after every state and all
additional restrictions maintained by the particular design problem are
satisfied. Such restrictions may include participation constraints,
promise-keeping requirements, bounds on rewards, technological feasibility,
or other exogenous constraints. We assume that the resulting admissible
continuation problem is well posed: the relevant discounted-payoff integrals
and the designer's supremum over admissible continuation mechanisms are
finite. Let \(J_t(z_t)\) denote that supremum. By sufficiency of \(z_t\) for the
designer's continuation problem, \(\Psi_t(z_t)\) and the law of the next
public state jointly determine the feasible continuation problem.

When \(T=\infty\), we additionally impose continuity at infinity for the
designer's payoff:
\[
\lim_{n\to\infty}
\sup_{\psi\ \mathrm{admissible}}
\int
\sum_{\ell=t+n}^{\infty}
\delta^{\ell-t}
\left|
v_\ell(
\widetilde a_\ell,
\widetilde\theta_\ell,
\widetilde o_\ell)
-
c_\ell(\widetilde r_\ell)
\right|
\,
d\mathbb P_t(\cdot\mid z_t;\psi)
=
0.
\]
For \(T<\infty\), no additional tail condition is required.

\begin{theorem}[Recursive characterization of the designer's continuation problem]
\label{thm:recursive_principal}
Under the sufficiency requirement on \(z_t\) stated above and the maintained
well-posedness conditions, the designer's history-dependent continuation
problem is equivalent to the recursive program
\[
J_t(z_t)
=
\sup_{\psi_t\in\Psi_t(z_t)}
\int
\left[
v_t(
\widetilde a_t,
\widetilde\theta_t,
\widetilde o_t)
-
c_t(\widetilde r_t)
+
\delta
J_{t+1}(\widetilde z_{t+1})
\right]
\,
d\mathbb P_t(\cdot\mid z_t;\psi_t).
\]
When \(T<\infty\), \(J_{T+1}\equiv0\).
\end{theorem}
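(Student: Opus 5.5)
The plan is a standard dynamic-programming argument: show two inequalities between the history-dependent value and the recursive value, using the sufficiency of \(z_t\) to collapse histories. Concretely, let \(J_t^{H}(h_t)\) denote the supremum of the designer's expected discounted continuation payoff over admissible continuation mechanisms (action feasible, satisfying the recursive incentive conditions of Proposition~\ref{prop:recursive_DIC}, and the maintained additional restrictions) after public history \(h_t\). The first step is to check that \(J_t^{H}(h_t)\) depends on \(h_t\) only through \(z_t=\chi_t(h_t)\). Recursive sufficiency says that histories with the same \(z_t\) induce the same feasible current mechanism choices, the same admissible continuation-mechanism correspondence, and the same law of the next public state. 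A continuation mechanism admissible after \(h_t\) can therefore be relabeled as one admissible after \(h_t'\) with the same induced law of payoffs. This makes \(J_t(z_t)\) well defined.

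The second step is the inequality \(J_t\le\) the recursive right-hand side. I would take any admissible continuation \(\psi\) from \(z_t\) and split it into its current rule \(\psi_t\) and the continuation it induces after each realized next history \(h_{t+1}\). The current rule lies in \(\Psi_t(z_t)\) by the definition of that correspondence, since it is extendable by \(\psi\) itself. Each induced continuation is admissible from \(\chi_{t+1}(h_{t+1})\), so its payoff is at most \(J_{t+1}(\widetilde z_{t+1})\). The tower property under \(\mathbb P_t(\cdot\mid z_t;\psi_t)\) then bounds the value of \(\psi\) by the integral of current payoff plus \(\delta J_{t+1}\). Taking the supremum over \(\psi\) gives the inequality.

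The third step is the reverse inequality, which I expect to be the main obstacle because of measurable selection and the need to preserve incentive compatibility when continuations are pasted together. I would fix \(\varepsilon>0\) and \(\psi_t\in\Psi_t(z_t)\). For each next state I would choose an \(\varepsilon\)-optimal admissible continuation, taken measurably in \(z_{t+1}\) via the maintained measurable structure; if needed I would use a countable partition and an \(\varepsilon\)-selection rather than an exact selector. I would then concatenate these with \(\psi_t\). The delicate point is that pasting different continuations changes \(V_{t+1}\), which enters the current obedience and honesty conditions. The resolution is built into the definition of \(\Psi_t(z_t)\): I would treat the continuation value \(V_{t+1}\), through the promise variables included in \(z_{t+1}\) (the paper notes that promises needed for commitment are part of \(z_t\)), as pinned down by the state. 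Every admissible continuation from \(z_{t+1}\) then delivers the promised agent value, so conditions (i)--(iii) at date \(t\) continue to hold. Incentives at later dates hold because each pasted piece is itself admissible, and Proposition~\ref{prop:recursive_DIC} converts these recursive conditions back into dynamic incentive compatibility. The concatenation therefore achieves at least the recursive value minus \(\delta\varepsilon\), and letting \(\varepsilon\to0\) gives the reverse inequality.

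Finally, when \(T<\infty\) the argument runs by backward induction from \(J_{T+1}\equiv0\). When \(T=\infty\), the one-step equality does not by itself identify \(J_t\), so I would iterate the recursion \(n\) steps and use the designer's continuity at infinity to make the truncated tail term uniformly negligible. Well-posedness keeps all the integrals and suprema finite, which justifies the limits and the interchange of supremum and integral.
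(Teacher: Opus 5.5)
Your proposal is correct and follows the same principle-of-optimality route as the paper. One inequality comes from decomposing any admissible continuation into its current rule and next-state continuations, the other from concatenating rules in \(\Psi_t(z_t)\) with approximately optimal continuations, and the argument closes by backward induction when \(T<\infty\) and continuity at infinity when \(T=\infty\). You also make explicit two points the paper's proof asserts only ``by construction'': the measurable \(\varepsilon\)-selection, and the fact that pasting continuations preserves date-\(t\) honesty and obedience only because the agent's promised value \(V_{t+1}\) is pinned down by the state \(z_{t+1}\).
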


\begin{proof}
Consider any admissible continuation mechanism from public state \(z_t\).
By Proposition~\ref{prop:recursive_DIC}, its current rule and continuation
values satisfy the recursive incentive conditions after every state.
By sufficiency of \(z_t\) for the designer's continuation problem, conditional
on \(z_t\), the current payoff-relevant environment, admissible current
mechanism choices, induced next-state law, and admissible
continuation-mechanism correspondence are independent of the particular
public history leading to \(z_t\). The principle of optimality therefore
yields the stated recursion.

Conversely, state-contingent current rules in \(\Psi_t(z_t)\), together with
admissible continuation rules attaining or approximating the corresponding
continuation values, can be concatenated across public states. By construction,
the additional restrictions defining admissibility are preserved, and
Proposition~\ref{prop:recursive_DIC} implies dynamic incentive compatibility
of the resulting direct mechanism. Hence the recursive and
history-dependent formulations generate the same attainable designer
continuation values.

For \(T<\infty\), the equivalence follows by backward induction. For
\(T=\infty\), the maintained continuity-at-infinity conditions permit the
finite-horizon recursive argument to pass to the limit.
\end{proof}

Theorem~\ref{thm:recursive_principal} also provides the recursive foundation
for Section~5: changes in monitoring and information use alter the distribution
of \(\widetilde z_{t+1}\) and thereby both the designer's continuation value
\(J_{t+1}\) and the agent's continuation value \(V_{t+1}\).

\section{Optimal Dynamic Mechanism Design with Hidden Information and Hidden Action}

Section~3 characterized dynamic implementability when the agent may possess
private information and, after communicating with the mechanism, privately
choose an action that need not itself be publicly observed. We now turn to
optimal dynamic mechanism design and study how dynamic hidden information,
verification-restricted reporting, and hidden-action implementation jointly
shape the designer's problem.

As noted in Remark~\ref{rem:nested_type_environments}, the general private type
may contain heterogeneity in execution capability, behavioral objectives,
beliefs, and information capability. In this section, we focus on an ordered
scalar quasilinear environment for the hidden-information dimension:
execution capability, beliefs, and information capability are held fixed,
while the agent's gross payoff is parameterized by a scalar private type,
\(u_t(\eta_t,q_t,\theta_t)\), where
\(\eta_t\in E_t=[\underline\eta_t,\overline\eta_t]\subseteq\R\),
\(q_t\in\mathcal Q_t\) is the payoff-relevant allocation or control, and
\(\theta_t\in\Theta_t\) remains the common payoff-relevant state. The
designer's gross payoff is \(v_t(q_t,\theta_t)\).

Focusing on a scalar hidden-information dimension does not eliminate the
action problem of Section~3. Throughout this section, the execution capability
\(A_t\) is held fixed and suppressed from the scalar type, but the realized
action \(a_t\) remains the agent's post-report choice and need not be publicly
observed. Thus even with a single scalar private type \(\eta_t\), the
environment can contain both hidden information about \(\eta_t\) and hidden
action through \(a_t\). When the payoff-relevant control is implemented
through the recommended action, \(q_t\) denotes the control implemented on the
truthful-and-obedient path; in the direct-action case one may identify
\(q_t=\alpha_t(s_t)\). Under a deviation, however, the true type privately
chooses its actual action from its feasible set, and that action affects the
distribution of observable outcomes and continuation states as in Section~3.

Suppressing public-history arguments that are fixed throughout the local
analysis, let \(q_t(\widehat\eta^t)\in\mathcal Q_t\) denote the
truthful-and-obedient allocation or control induced by report history
\(\widehat\eta^t\), and let \(p_t(\widehat\eta^t)\) denote the corresponding
expected transfer from the agent to the designer. Period payoffs on the
implemented path are
\[
U_t^A
=
u_t(\eta_t,q_t,\theta_t)-p_t,
\qquad
U_t^P
=
v_t(q_t,\theta_t)+p_t.
\]
In the monetary-transfer specification of Section~3, \(p_t\) is the negative
of the induced expected reward. Because rewards condition on publicly
observable mechanism choices and outcomes rather than directly on a latent
action, this reduced-form expected transfer does not presume that the realized
action \(a_t\) is publicly observed. The linear screening specification
\(u_t(\eta_t,q_t,\theta_t)=\eta_t q_t\) is nested as a special case.

We impose a first-order Markov structure on the scalar type process. Let
\(F_1\) denote the distribution of \(\eta_1\), and let
\(F_{t+1}(\cdot\mid\eta_t)\) denote the conditional distribution of
\(\eta_{t+1}\) for \(t<T\). The transition law of the scalar type is exogenous
to the current mechanism choice.
Finally, specialize the verification preorder to the withholding structure
\[
C_t(\eta_t)=[\underline\eta_t,\eta_t].
\]
Thus a type \(\eta_t\) may report any
\(\widehat\eta_t\leq\eta_t\), but cannot report a higher type. This is the
scalar counterpart of the one-sided imitation structure in Section~2:
verifiable information may be concealed or understated but cannot be freely
counterfeited.\footnote{Bergemann, Koh, and Morris
\citeyearpar{BergemannKohMorris2026} use a one-sided verification structure to
capture the same concealment--counterfeiting asymmetry in a static AI-agent
environment. Related one-sided reporting restrictions appear much earlier in
Tian \citeyearpar{Tian1989}, where agents may understate but not overstate
their endowments to preserve feasibility, and in the explicit withholding
mechanism of Tian \citeyearpar{Tian1993}.}

The resulting environment therefore separates three conceptually distinct
margins. The first is \textbf{hidden information}: the evolving scalar type
\(\eta_t\) is privately observed. The second is \textbf{partial verification}:
conditional on hidden information, the verification technology restricts
which reports the true type can substantiate. The third is
\textbf{hidden-action implementation}: after reporting and receiving its
private signal, the agent chooses \(a_t\), which need not be publicly observed
and must therefore be induced through authorization, observable outcomes,
rewards, and continuation incentives. Partial verification modifies the
reporting problem associated with hidden information; hidden action creates a
separate moral-hazard problem.

This distinction generates a branched benchmark structure rather than a
single linear sequence. Let \(\mathrm{HI}\), \(\mathrm{HA}\), and
\(\mathrm{V}\) denote hidden information, hidden action, and one-sided
verification, respectively. The principal benchmark relations are

\begin{center}
\setlength{\unitlength}{1mm}
\begin{picture}(88,14)(0,-7)
\linethickness{0.4pt}

\put(4,0){\makebox(0,0){\(1B\)}}

\put(26,5.5){\makebox(0,0){\(2B^{\mathrm{HI}}\)}}
\put(62,5.5){\makebox(0,0){\(3B^{\mathrm{HI+V}}\)}}

\put(26,-5.5){\makebox(0,0){\(2B^{\mathrm{HA}}\)}}
\put(62,-5.5){\makebox(0,0){\(3B^{\mathrm{HI+HA}}\)}}

\put(84,0){\makebox(0,0){\(4B\)}}

\put(7,0.6){\vector(3,1){13}}
\put(33,5.5){\vector(1,0){20}}
\put(70,4.9){\vector(3,-1){11}}

\put(7,-0.6){\vector(3,-1){13}}
\put(33,-5.5){\vector(1,0){20}}
\put(70,-4.9){\vector(3,1){11}}

\end{picture}
\end{center}

Here \(1B\) is the complete-information benchmark with no
action-implementation friction;
\(2B^{\mathrm{HI}}\) is the conventional hidden-information problem with
directly assignable action;
\(2B^{\mathrm{HA}}\) is the complete-information hidden-action problem;
\(3B^{\mathrm{HI+V}}\) combines hidden information with one-sided
verification but removes the post-report action problem;
\(3B^{\mathrm{HI+HA}}\) combines hidden information and hidden action under
unrestricted reporting; and the full \(4B\) problem combines hidden
information, one-sided verification, and hidden action.
\footnote{The first- and second-best terminology follows the classical
distinction associated with Lipsey and Lancaster
\citeyearpar{LipseyLancaster1956}. We use the third- and fourth-best labels
descriptively to organize environments with multiple incentive and
implementation margins. The labels characterize benchmark constraint
configurations rather than imposing an unconditional welfare ranking across
different branches.}

This benchmark map also clarifies the relation to the existing literature.
Pavan, Segal, and Toikka \citeyearpar{PavanSegalToikka2014} provide the
central dynamic hidden-information benchmark: private information evolves
over time, but the payoff-relevant allocation is directly assigned, so there
is no separate hidden-action implementation problem. A large
principal--agent literature studies the joint presence of adverse selection
and moral hazard, including Laffont and Tirole
\citeyearpar{LaffontTirole1986}, Picard \citeyearpar{Picard1987},
Guesnerie, Picard, and Rey \citeyearpar{GuesneriePicardRey1989}, and
Meng and Tian \citeyearpar{MengTian2013}. Much of this literature is static
or specialized to particular contracting technologies. Dynamic exceptions
include Es\H{o} and Szentes \citeyearpar{EsoSzentes2017}, who study a
multiperiod contracting environment with a stochastically evolving hidden
type and hidden actions, as well as more specialized dynamic environments
combining adverse selection and moral hazard.

The present framework differs from both strands. It combines dynamically
evolving private information with a post-report action that remains privately
chosen after additional private information is realized, so that every
reporting deviation must be evaluated together with subsequent action
reoptimization and its effects on public outcomes and continuation states.
It further incorporates one-sided verification, authorization, and
history-dependent continuation incentives within the same recursive
mechanism. Thus, even before the verification restriction is imposed, the
\(3B^{\mathrm{HI+HA}}\) branch is a joint dynamic adverse-selection and
moral-hazard problem; one-sided verification adds the distinct reporting
margin that generates the full \(4B\) environment.

\subsection{Agent dynamic marginal utility and nested incentive compatibility}

For the differentiable analysis, fix a mechanism, contingent
continuation plan, and pre-type information state $j_t$.
Let $\{\epsilon_s\}_{s>t}$ be independent random variables,
each uniformly distributed on $(0,1)$ and independent of
$(\eta_t,j_t)$. Write
$\epsilon_{t+1}^{s}=(\epsilon_{t+1},\ldots,\epsilon_s)$.
The controlled scalar type process admits the representation
\[
\eta_s=Z_{t,s}(\eta_t,\epsilon_{t+1}^{s}),
\qquad s\ge t,\qquad Z_{t,t}(\eta_t)=\eta_t.
\]
The dependence of $Z_{t,s}$ on $j_t$, the mechanism,
and the continuation plan is suppressed. For the envelope
analysis, assume that $Z_{t,s}$ is differentiable in
$\eta_t$ almost everywhere, with integrably bounded derivatives.

We assume that the transition maps are differentiable in \(\eta_t\) almost
everywhere, that \(u_t\) is differentiable and equi-Lipschitz in its scalar
type, and that the resulting derivatives are integrably bounded and the
relevant continuation values are absolutely continuous. We also maintain
\textbf{scalar-envelope regularity}: holding the current report and prescribed
obedient action plan fixed, a local change in \(\eta_t\) affects continuation
utility only through its direct effect on gross utility and its induced effect
on future scalar types. For deviation values, the agent's post-report action
is an inner optimization variable, so its response to a local type perturbation
is handled by the envelope theorem under the regularity conditions stated
below.

Following Pavan, Segal, and Toikka
\citeyearpar{PavanSegalToikka2014}, define the \textbf{impulse response} of
\(\eta_s\) to \(\eta_t\) by
\[
I_{t,s}(\eta^s)
=
\int
\frac{\partial
Z_{t,s}(\eta_t,\varepsilon_{t+1}^{s})}
{\partial\eta_t}
\,\mathbb P
\bigl(d\varepsilon_{t+1}^{s}\mid \eta^s\bigr),
\qquad
I_{t,t}=1.
\]
It measures how a marginal change in the current private type propagates into
future private information. For example, if
\(\eta_{t+1}=\rho\eta_t+\varepsilon_{t+1}\), then
\(I_{t,s}=\rho^{\,s-t}\).

For a current report \(\widehat\eta_t\), let
\(V_t^{A,\mathrm{ob}}(\eta^t;\widehat\eta_t)\) denote the agent's
continuation utility when it reports \(\widehat\eta_t\), follows the
prescribed current action and continuation allocation, and thereafter reports
truthfully and obeys. The superscript ``\(\mathrm{ob}\)'' refers to
obedience, not observability of the realized action. Differentiating with
respect to the true current type while holding the current report fixed gives
the \textbf{agent's dynamic marginal continuation utility},
\[
\begin{aligned}
\frac{\partial
V_t^{A,\mathrm{ob}}(\eta^t;\widehat\eta_t)}
{\partial\eta_t}
=
\int
\sum_{s=t}^{T}
\delta^{s-t}
I_{t,s}(\widetilde\eta^s)
u_{s,\eta}
\left(
\widetilde\eta_s,
q_s(
\eta^{t-1},
\widehat\eta_t,
\widetilde\eta_{t+1}^{s}),
\widetilde\theta_s
\right)
\,
d\mathbb P_t
(\cdot\mid\eta^t,\widehat\eta_t;\psi,\mathrm{ob}).
\end{aligned}
\]
Here \(u_{s,\eta}\) is the agent's period-\(s\) marginal gross utility with
respect to its scalar private type. The expression measures the discounted
marginal effect of the current private type on continuation utility, including
its propagation into future private types through the impulse responses.

For the scalar environment considered here, write
\(V_t^A(\eta^t)\) for the truthful-and-obedient continuation value and
\(\widehat V_t^A(\eta^t;\widehat\eta_t)\) for the corresponding deviation
value after a feasible report
\(\widehat\eta_t\in C_t(\eta_t)\). As in Section~3,
\(\widehat V_t^A\) allows the true type to solve its own post-report action
problem after the report. Hence it already incorporates the effect of a
privately chosen action on observable outcomes and continuation states. The
derivative
\[
\frac{\partial
\widehat V_t^A(\eta^t;\widehat\eta_t)}
{\partial\eta_t}
\]
is the \textbf{agent's deviation-adjusted dynamic marginal continuation
utility}.

\begin{theorem}[Nested verification-restricted incentive compatibility]
\label{thm:verification_DIC}
Under the conditions of Proposition~\ref{prop:recursive_DIC},
an action-feasible mechanism \(\psi\) is dynamically incentive
compatible if and only if, after every public history
and for every current type, the following conditions hold:
\begin{enumerate}
\item[(i)] \textbf{Reporting incentive compatibility.}
\begin{equation}
\label{eq:nested_IC}
V_t^A(\eta^t)
\geq
\widehat V_t^A(\eta^t;\widehat\eta_t)
\qquad
\text{for every }
\widehat\eta_t\in C_t(\eta_t).
\end{equation}

\item[(ii)] \textbf{Post-report obedience.}
Following a truthful report, the prescribed action is optimal
at every feasible post-report information set, as characterized
by the recursive obedience condition of
Proposition~\ref{prop:recursive_DIC}.
\end{enumerate}
\end{theorem}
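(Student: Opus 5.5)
The plan is to obtain Theorem~\ref{thm:verification_DIC} as a specialization of Proposition~\ref{prop:recursive_DIC} to the scalar environment of this section.

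First, I would identify the objects. The truthful-and-obedient value \(V_t^A(\eta^t)\) is the scalar counterpart of \(V_t(\tau_t,z_t)\). Here the sufficient public state is the full public history together with the reported history, and the first-order Markov structure lets \(\eta_t\) play the role of the current private type. The deviation value \(\widehat V_t^A(\eta^t;\widehat\eta_t)\) is, by definition, the counterpart of \(\widehat V_t(\tau_t,z_t;\widehat\tau_t)\). It embeds the inner post-report maximization \(G_t\), evaluated at the true type. The withholding correspondence \(C_t(\eta_t)=[\underline\eta_t,\eta_t]\) is reflexive and transitive, so it is a legitimate instance of Definition~\ref{def:verification_preorder}. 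Hence the hypotheses of Proposition~\ref{prop:recursive_DIC} apply verbatim, including the attainment of the inner maximum and, when \(T=\infty\), continuity at infinity.

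Second, for necessity, I would invoke Proposition~\ref{prop:recursive_DIC}. Dynamic incentive compatibility yields recursive obedience, which is exactly condition~(ii). It also yields recursive honesty together with promise consistency. Combining the two gives \(V_t^A(\eta^t)=\widehat V_t^A(\eta^t;\eta_t)\ge \widehat V_t^A(\eta^t;\widehat\eta_t)\) for all \(\widehat\eta_t\in C_t(\eta_t)\), which is~\eqref{eq:nested_IC}.

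Third, for sufficiency, I must recover conditions (i)--(iii) of Proposition~\ref{prop:recursive_DIC} from (i) and (ii) here. Condition~(ii) here is recursive obedience. The step I expect to need care is promise consistency, since the present statement does not list it separately. I would argue as follows. Under recursive obedience, the prescribed action \(\alpha_t(s_t)\) attains the inner maximum in \(G_t\) after the truthful report, for every realized mechanism choice and signal. Integrating over the mechanism randomization and the signal then gives \(\widehat V_t^A(\eta^t;\eta_t)=V_t^A(\eta^t)\). This holds because both sides are evaluated with truthful-and-obedient continuation \(V_{t+1}\) from date \(t+1\) on, and \(V_t^A\) is defined as the truthful-and-obedient continuation value. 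Promise consistency is therefore implied by (ii) together with the definitions rather than imposed as an extra condition. Given this, (i) states that the truthful report attains \(\max_{\widehat\eta_t\in C_t(\eta_t)}\widehat V_t^A\), which is recursive honesty. Proposition~\ref{prop:recursive_DIC} then delivers dynamic incentive compatibility.

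The main obstacle is making this identity rigorous. One must check that the value defined by integrating \(G_t\) along the truthful branch coincides with the history-defined continuation value. That requires the measurable-selection assumption for the maximizer, so that obedience is optimal almost surely, and consistency of the conditional laws \(\mathbb P_t\) with the full-history law. Beyond that, the argument is bookkeeping that translates the type-space objects of Section~3 into the scalar notation, with public-history arguments suppressed.
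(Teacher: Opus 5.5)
Your proposal is correct and takes the paper's route: the paper's proof specializes Proposition~\ref{prop:recursive_DIC} to the scalar withholding environment, treating reporting incentive compatibility as recursive honesty combined with promise consistency, and post-report obedience as recursive obedience. Your explicit derivation of promise consistency from obedience on the truthful branch is the same step the paper uses inside the proof of Proposition~\ref{prop:recursive_DIC}, so you are filling in what the paper's brief argument leaves implicit.
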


\begin{proof}
The result follows from Proposition~\ref{prop:recursive_DIC}.
The reporting condition rules out profitable false reports,
allowing the true type to reoptimize its post-report action.
The obedience condition ensures that the prescribed action is
sequentially optimal following truthful reporting. Together,
these conditions are equivalent to dynamic incentive compatibility
under the maintained one-stage-deviation assumptions.
\end{proof}

Theorem~\ref{thm:verification_DIC} characterizes dynamic incentive
compatibility through two complementary conditions. Reporting incentive
compatibility rules out profitable false reports, allowing the true type
to reoptimize its post-report action, while \(C_t(\eta_t)\) determines
which reports are feasible. Post-report obedience separately requires
the recommended action to be optimal following truthful reporting,
even when the realized action is not publicly observed. Together,
these conditions govern dynamic adverse selection and moral hazard
within the same recursive mechanism.

Bergemann, Koh, and Morris
\citeyearpar{BergemannKohMorris2026} provide the closest static benchmark for
the interaction between reporting and post-report action incentives. Unlike
their finite-space characterization through nested cycle conditions, the
present environment is dynamic, with evolving private information and
continuation incentives. Pavan, Segal, and Toikka
\citeyearpar{PavanSegalToikka2014} provide the corresponding dynamic benchmark
for hidden information under direct allocation, while Tian
\citeyearpar{TianGeneralDynamic2026} treats joint reporting and hidden-action
incentives in a broader dynamic environment. The present model adapts these
foundations to its AI-agent timing and partial-verification structure and permits the true type, after any report, to
privately reoptimize an action whose consequences affect public outcomes and
future states. It therefore combines dynamic hidden information and hidden
action before partial verification is imposed. The ordered differentiable
type structure permits these combined incentives to be characterized through
dynamic marginal continuation utilities and the integral conditions developed
below.

Fix a feasible report \(\widehat\eta_t\leq\eta_t\). Writing
\(\eta^t=(\eta^{t-1},\eta_t)\), hold the realized past type history
\(\eta^{t-1}\) fixed and vary only the current type. Because the withholding
correspondence is downward closed,
\(\widehat\eta_t\in C_t(x)\) for every
\(x\in[\widehat\eta_t,\eta_t]\). Absolute continuity therefore gives
\begin{align}
&
V_t^A(\eta^t)
-
\widehat V_t^A(\eta^t;\widehat\eta_t)
\nonumber
\\
&\quad=
V_t^A(\eta^{t-1},\widehat\eta_t)
-
\widehat V_t^A(
(\eta^{t-1},\widehat\eta_t);
\widehat\eta_t)
\nonumber
\\
&\qquad\quad+
\int_{\widehat\eta_t}^{\eta_t}
\left[
\frac{\partial V_t^A(\eta^{t-1},x)}
{\partial x}
-
\frac{\partial
\widehat V_t^A((\eta^{t-1},x);\widehat\eta_t)}
{\partial x}
\right]dx .
\label{eq:nested_gap_identity}
\end{align}
Hence the exact differentiable form of \eqref{eq:nested_IC} is that the
right-hand side of \eqref{eq:nested_gap_identity} be nonnegative for every
feasible report. The first two terms constitute the truthful-report
action-implementation gap---the obedience or moral-hazard margin when the
action is hidden---while the integral captures the false-report component of
the joint report--action incentive comparison.

A convenient sufficient condition takes a monotonicity form.

\begin{proposition}[Nested monotonicity]
\label{prop:nested_monotonicity}
Suppose the maintained regularity conditions hold and, after every
feasible history, the following conditions are satisfied:
\begin{enumerate}
\item[(i)] \textbf{Post-report obedience.}
The prescribed action is optimal following truthful reporting at
every feasible post-report information set.

\item[(ii)] \textbf{Nested monotonicity.}
For every \(r\in E_t\),
\[
\frac{\partial V_t^A(\eta^{t-1},x)}{\partial x}
\geq
\frac{\partial
\widehat V_t^A((\eta^{t-1},x);r)}
{\partial x}
\qquad
\text{for a.e. }x\geq r.
\]
\end{enumerate}
Then the mechanism is dynamically incentive compatible under
\(C_t(x)=[\underline\eta_t,x]\).
\end{proposition}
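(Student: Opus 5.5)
The argument runs through Theorem~\ref{thm:verification_DIC}: dynamic incentive compatibility is equivalent to post-report obedience together with the reporting inequality \eqref{eq:nested_IC} for every feasible report. Condition~(i) of the proposition is exactly condition~(ii) of Theorem~\ref{thm:verification_DIC}. It therefore suffices to show that nested monotonicity, combined with obedience, delivers \eqref{eq:nested_IC} for every \(\widehat\eta_t\le\eta_t\). The tool is the gap identity \eqref{eq:nested_gap_identity}, which splits the reporting gap into a truthful-report term evaluated at the type \(\widehat\eta_t\) and an integral of derivative differences over \([\widehat\eta_t,\eta_t]\).

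\textbf{Step 1: the boundary term.} This is the term
\(V_t^A(\eta^{t-1},\widehat\eta_t)-\widehat V_t^A((\eta^{t-1},\widehat\eta_t);\widehat\eta_t)\).
Here the true type equals the report, so the deviation value \(\widehat V_t^A\) is the value of reporting truthfully and then choosing the best feasible current action. By recursive obedience, which is condition~(i) here, the prescribed action attains that inner maximum after every realized mechanism choice and signal. Promise consistency in Proposition~\ref{prop:recursive_DIC}(iii) then gives equality, so the boundary term is zero. I will make this precise by noting that \(\widehat V_t^A(\cdot;\widehat\eta_t)\) at the true type \(\widehat\eta_t\) is exactly \(\widehat V_t\) from Section~3, whose defining inner maximum is attained by \(\alpha_t(s_t)\) under obedience. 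Because the truthful continuation value is defined with truthful-and-obedient play thereafter, the two objects coincide.

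\textbf{Step 2: the integral term.} Set \(r=\widehat\eta_t\). Nested monotonicity states that the integrand in \eqref{eq:nested_gap_identity} is nonnegative for a.e.\ \(x\ge r\), in particular on \([\widehat\eta_t,\eta_t]\). The integral is therefore nonnegative, so the gap is nonnegative and \eqref{eq:nested_IC} holds. Downward closedness of \(C_t(x)=[\underline\eta_t,x]\) is what makes \(r\) a feasible report for every intermediate type \(x\). This guarantees that \(\widehat V_t^A((\eta^{t-1},x);r)\) is a well-defined deviation value along the whole path, and that the identity applies.

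\textbf{Main obstacle.} The delicate step is justifying \eqref{eq:nested_gap_identity} itself. This requires absolute continuity in \(x\) of both the truthful value \(V_t^A(\eta^{t-1},x)\) and the deviation value \(\widehat V_t^A((\eta^{t-1},x);r)\). The second one involves an inner maximum over post-report actions. My first approach is a Milgrom--Segal-type envelope argument. The maintained equi-Lipschitz and integrably bounded derivative assumptions on \(u_t\) and on \(Z_{t,s}\) make the objective of the inner problem Lipschitz in the true type, uniformly in the chosen action. The maximum is attained with a measurable selection, so the value function inherits Lipschitz continuity, is absolutely continuous, and is a.e.\ differentiable with derivative given by the objective's partial derivative at a maximizer. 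With both sides absolutely continuous, the fundamental theorem of calculus gives \eqref{eq:nested_gap_identity}.

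Finally, this argument applies at every date and after every public history, since the conditions are imposed after every feasible history. Theorem~\ref{thm:verification_DIC}, together with the one-stage-deviation principle of Proposition~\ref{prop:recursive_DIC}, then yields dynamic incentive compatibility.
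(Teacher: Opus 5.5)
Your proposal is correct and follows the paper's proof essentially step for step: obedience makes the truthful-report term in \eqref{eq:nested_gap_identity} vanish, nested monotonicity with $r=\widehat\eta_t$ makes the integral nonnegative, and Theorem~\ref{thm:verification_DIC} turns \eqref{eq:nested_IC} together with obedience into dynamic incentive compatibility. The one addition is your Milgrom--Segal argument for absolute continuity of the deviation value, which the paper does not derive but takes as part of its maintained regularity conditions.
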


\begin{proof}
Post-report obedience implies
\[
V_t^A(\eta^{t-1},r)
=
\widehat V_t^A((\eta^{t-1},r);r)
\]
for every \(r\in E_t\), eliminating the truthful-report gap in
\eqref{eq:nested_gap_identity}. Nested monotonicity makes the
integrand nonnegative for every feasible false report. Hence
\eqref{eq:nested_IC} holds. Together, these conditions establish
dynamic incentive compatibility by
Theorem~\ref{thm:verification_DIC}.
\end{proof}

Proposition~\ref{prop:nested_monotonicity} is a tractable sufficient test, not
a replacement for the exact global condition: the integral in
\eqref{eq:nested_gap_identity} may remain nonnegative even if its integrand is
negative on part of the interval. It is nevertheless the natural analogue of
the familiar monotonicity test in classical screening and will be useful for
checking candidate solutions to the full optimization problem.

Verification also changes the usual envelope implication. Define the
\textbf{verification wedge}
\[
\kappa_t(\eta^t)
=
\frac{\partial V_t^A(\eta^t)}{\partial\eta_t}
-
\left.
\frac{\partial
V_t^{A,\mathrm{ob}}(\eta^t;\widehat\eta_t)}
{\partial\eta_t}
\right|_{\widehat\eta_t=\eta_t}.
\]
The first derivative moves along the truthful continuation-value schedule,
whereas the second holds the current report fixed before evaluating it at the
truthful report. Under unrestricted reporting, the ordinary dynamic-envelope
argument gives \(\kappa_t=0\) almost everywhere. Under the withholding
structure, however, truthful reporting is generally an upper-boundary choice
in the feasible report set, so dynamic incentive compatibility alone need not
imply the ordinary two-sided envelope equality. We call a mechanism
\textbf{envelope compatible} if
\(\kappa_t(\eta^t)=0\) for almost every \(\eta_t\) after every truthful
history.

\begin{proposition}[Sign of the verification wedge]
\label{prop:wedge_sign}
Suppose the mechanism is dynamically incentive compatible and satisfies the
maintained regularity and scalar-envelope conditions, and suppose
\(\widehat V_t^A(\eta^t;\widehat\eta_t)\) is left differentiable in
\(\widehat\eta_t\) at \(\widehat\eta_t=\eta_t\). Under
\(C_t(\eta_t)=[\underline\eta_t,\eta_t]\),
\(\kappa_t(\eta^t)\geq0\) for almost every \(\eta_t\) after every truthful
history.
\end{proposition}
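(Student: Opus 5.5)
The plan is a one-sided envelope argument. Truthful reporting is a maximizer of the report value over the interval \(C_t(\eta_t)=[\underline\eta_t,\eta_t]\), and it sits at the upper endpoint. We therefore compare the total derivative of the truthful value along the diagonal with the partial derivative holding the report fixed. Fix a truthful history \(\eta^{t-1}\) and write \(W(x,r):=V_t^{A,\mathrm{ob}}((\eta^{t-1},x);r)\) for \(r\le x\). On the truthful branch, post-report obedience (Theorem~\ref{thm:verification_DIC}(ii)) together with promise consistency (Proposition~\ref{prop:recursive_DIC}(iii)) gives \(V_t^A(\eta^{t-1},x)=W(x,x)=\widehat V_t^A((\eta^{t-1},x);x)\). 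For a false report \(r<x\), the deviation value allows action reoptimization, so \(\widehat V_t^A((\eta^{t-1},x);r)\ge W(x,r)\). Combining this with reporting incentive compatibility \eqref{eq:nested_IC} yields the key inequality
\[
W(x,x)=V_t^A(\eta^{t-1},x)\;\ge\;\widehat V_t^A((\eta^{t-1},x);r)\;\ge\;W(x,r)\qquad\text{for all } r\le x .
\]

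Next, take a point \(\eta_t\) at which \(V_t^A(\eta^{t-1},\cdot)\) is differentiable; by absolute continuity this holds almost everywhere. At such a point, \(W(\cdot,r)\) should also be differentiable in its first argument at \(r=\eta_t\), with derivative given by the dynamic marginal continuation utility formula (scalar-envelope regularity). For small \(h>0\), apply the key inequality at the true type \(x=\eta_t\) with report \(r=\eta_t-h\):
\[
V_t^A(\eta^{t-1},\eta_t)-V_t^A(\eta^{t-1},\eta_t-h)\;\ge\; W(\eta_t,\eta_t-h)-W(\eta_t-h,\eta_t-h).
\]
The right-hand side is the change in the value of a \emph{fixed} report \(\eta_t-h\) as the true type moves from \(\eta_t-h\) to \(\eta_t\); this comparison is feasible because \(\eta_t-h\in C_t(x)\) for all \(x\in[\eta_t-h,\eta_t]\) by downward closure. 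Dividing by \(h\) and letting \(h\downarrow0\), the left side tends to \(\partial V_t^A/\partial\eta_t\). The right side equals \(\frac1h\int_{\eta_t-h}^{\eta_t}\partial_x W(x,\eta_t-h)\,dx\), which should tend to \(\partial_x W(x,r)|_{x=r=\eta_t}\), i.e.\ the second term in \(\kappa_t\). This gives \(\kappa_t(\eta^t)\ge0\).

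The main obstacle is justifying that last limit, because both arguments of \(W\) move together. The approach I would try first is to use the integral representation from the dynamic marginal utility formula: \(\partial_x W(x,r)\) is an integral of \(I_{t,s}u_{s,\eta}\) against a law that depends on \((x,r)\). Equi-Lipschitz \(u\) and integrably bounded impulse responses then give a dominating function. Continuity of this expression in \(r\) from the left at \(r=\eta_t\) is exactly where the assumed left differentiability of \(\widehat V_t^A\) in the report enters. Specifically, I would write
\[
W(\eta_t,\eta_t-h)-W(\eta_t-h,\eta_t-h)=[W(\eta_t,\eta_t)-W(\eta_t-h,\eta_t-h)]-[W(\eta_t,\eta_t)-W(\eta_t,\eta_t-h)].
\]
The first bracket over \(h\) tends to the total derivative. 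The second bracket over \(h\) tends to the left report-derivative, which exists by hypothesis. This expresses \(\kappa_t\) as the left derivative of report value at the truthful boundary, and that derivative is \(\ge0\) because the truthful report maximizes over reports \(\le\eta_t\).

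Care is needed to use \(W\) (the obedience value, which matches \(V^{A,\mathrm{ob}}\) in the definition of \(\kappa_t\)) rather than \(\widehat V\). Along the diagonal, \(W\) and \(\widehat V\) coincide, and off the diagonal \(W\le\widehat V\) preserves the direction of the inequality. Where the left derivatives of \(W\) and \(\widehat V\) must be related, I would invoke the envelope theorem for the inner action maximization, which is permitted under the stated regularity. The conclusion then holds for almost every \(\eta_t\) and, since the history was arbitrary, after every truthful history.
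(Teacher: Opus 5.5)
Your key inequality \(W(x,x)=V_t^A(\eta^{t-1},x)\ge\widehat V_t^A((\eta^{t-1},x);r)\ge W(x,r)\) for \(r\le x\) is correct. The step you flag as the main obstacle, however, is never closed, and neither patch closes it.

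Left differentiability of \(\widehat V_t^A\) in the report says nothing about continuity in \(r\) of the type-derivative \(\partial_xW(x,r)\). So it cannot deliver \(\frac1h\int_{\eta_t-h}^{\eta_t}\partial_xW(x,\eta_t-h)\,dx\to\partial_xW(\eta_t,\eta_t)\).

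The decomposition is circular. It shows only that your right-hand quotient converges to \(\partial V_t^A/\partial\eta_t-\partial_r^-W(\eta_t,\eta_t)\). Combined with your inequality, it yields \(\partial_r^-W(\eta_t,\eta_t)\ge0\), which you already had from \(W(x,x)\ge W(x,r)\). Concluding \(\kappa_t=\partial_r^-W\) requires the diagonal chain rule \(\partial V_t^A/\partial\eta_t=\partial_xW+\partial_r^-W\) at \((\eta_t,\eta_t)\). That is precisely the limit identification you were trying to establish. Note also that the stated hypothesis concerns \(\widehat V_t^A\), not \(W\).

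The paper's proof simply takes that chain rule for granted: it differentiates \(V_t^A(\eta^t)=\widehat V_t^A(\eta^t;\eta_t)\) along the truthful locus. It then uses the envelope theorem for the inner action problem to replace \(\partial_{\eta_t}\widehat V_t^A\) by \(\partial_{\eta_t}V_t^{A,\mathrm{ob}}\) on the diagonal. Hence \(\kappa_t\) equals the left report-derivative of \(\widehat V_t^A\), which is nonnegative by boundary optimality. Your decomposition is the same argument with \(W\) in place of \(\widehat V_t^A\), resting on the same unproved step.

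The obstacle disappears if you pivot at the other end. For small \(h>0\), the report \(\eta_t\) is feasible for the true type \(\eta_t+h\). Your key inequality then gives \(V_t^A(\eta^{t-1},\eta_t+h)\ge W(\eta_t+h,\eta_t)\), while \(V_t^A(\eta^{t-1},\eta_t)=W(\eta_t,\eta_t)\) by definition. Subtract, divide by \(h\), and let \(h\downarrow0\) at any \(\eta_t\) where \(V_t^A(\eta^{t-1},\cdot)\) is differentiable (almost everywhere, by absolute continuity). This gives \(\partial V_t^A/\partial\eta_t\ge\partial_xW(x,\eta_t)|_{x=\eta_t}\), that is, \(\kappa_t(\eta^t)\ge0\). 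The limit on the right is the type-derivative at a fixed report whose existence the definition of \(\kappa_t\) already presupposes.

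Because the report is held fixed, this argument needs no joint continuity, no chain rule, no left-differentiability hypothesis, and no envelope theorem for the inner action problem. It uses only feasible downward reports. It is more elementary than the paper's route and shows that the left-differentiability assumption is unnecessary for the sign result. What the paper's route adds is the identity \(\kappa_t=\partial^-_{\widehat\eta_t}\widehat V_t^A\) at the truthful boundary, which is where that assumption is actually used.
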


\begin{proof}
The truthful-report condition in Theorem~\ref{thm:verification_DIC} gives
\(V_t^A(\eta^t)=\widehat V_t^A(\eta^t;\eta_t)\).
Differentiating this identity along the truthful-report locus yields
\[
\frac{\partial V_t^A(\eta^t)}{\partial\eta_t}
=
\left.
\frac{\partial
\widehat V_t^A(\eta^t;\widehat\eta_t)}
{\partial\eta_t}
\right|_{\widehat\eta_t=\eta_t}
+
\left.
\frac{\partial
\widehat V_t^A(\eta^t;\widehat\eta_t)}
{\partial\widehat\eta_t}
\right|_{\widehat\eta_t=\eta_t^-}.
\]
Sequential obedience together with scalar-envelope regularity implies, by the
envelope theorem for the inner post-report action problem,
\[
\left.
\frac{\partial
\widehat V_t^A(\eta^t;\widehat\eta_t)}
{\partial\eta_t}
\right|_{\widehat\eta_t=\eta_t}
=
\left.
\frac{\partial
V_t^{A,\mathrm{ob}}(\eta^t;\widehat\eta_t)}
{\partial\eta_t}
\right|_{\widehat\eta_t=\eta_t}.
\]
Hence
\[
\kappa_t(\eta^t)
=
\left.
\frac{\partial
\widehat V_t^A(\eta^t;\widehat\eta_t)}
{\partial\widehat\eta_t}
\right|_{\widehat\eta_t=\eta_t^-}.
\]
Because truthful reporting maximizes the deviation value over reports weakly
below \(\eta_t\), its left derivative at the upper boundary is nonnegative.
Therefore \(\kappa_t(\eta^t)\geq0\).
\end{proof}

Thus the verification wedge is not an additional action incentive constraint.
It is a rent consequence of the one-sided reporting boundary. A positive
wedge means that the truthful continuation-value schedule rises with the
private type faster than the agent's dynamic marginal continuation utility
holding the current report fixed. This reporting wedge is conceptually
distinct from the hidden-action implementation condition: the former concerns
which type reports can be imitated, while the latter concerns whether the
agent privately chooses the prescribed action after any given report.

\begin{proposition}[Local nonbinding verification]
\label{prop:local_envelope}
Suppose the mechanism is dynamically incentive compatible and the maintained
differentiability and scalar-envelope conditions hold. If, after a truthful
history and for almost every current type \(\eta_t\), the truthful report is
locally interior to the feasible report set, so that for some
\(\varepsilon>0\),
\[
(\eta_t-\varepsilon,\eta_t+\varepsilon)\cap E_t
\subseteq
C_t(\eta_t),
\]
then
\[
\frac{\partial V_t^A(\eta^t)}{\partial\eta_t}
=
\left.
\frac{\partial
V_t^{A,\mathrm{ob}}(\eta^t;\widehat\eta_t)}
{\partial\eta_t}
\right|_{\widehat\eta_t=\eta_t}
\]
at that history.
\end{proposition}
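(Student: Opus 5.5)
The plan is to reuse the decomposition from the proof of Proposition~\ref{prop:wedge_sign}, now with reports available on both sides of the truthful one. Fix a truthful history and a current type \(\eta_t\) at which the interiority hypothesis holds and at which all the maintained derivatives exist. The set of such types has full measure, since each condition holds almost everywhere. Let \(g(\widehat\eta_t):=\widehat V_t^A(\eta^t;\widehat\eta_t)\) be the deviation value on \((\eta_t-\varepsilon,\eta_t+\varepsilon)\cap E_t\).

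First, by reporting incentive compatibility in Theorem~\ref{thm:verification_DIC}, \(g\) is maximized at \(\widehat\eta_t=\eta_t\) over \(C_t(\eta_t)\). By the interiority hypothesis, \(C_t(\eta_t)\) contains a two-sided neighbourhood of \(\eta_t\) within \(E_t\). If \(\eta_t\) is interior to \(E_t\), which holds for almost every \(\eta_t\), then \(\eta_t\) is an interior maximizer of \(g\). At this step I will invoke the maintained differentiability of \(\widehat V_t^A\) in the report, now two-sided rather than only left. The first-order condition then gives \(\partial\widehat V_t^A/\partial\widehat\eta_t|_{\widehat\eta_t=\eta_t}=0\). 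Both one-sided derivatives vanish: the left one is \(\ge 0\) and the right one is \(\le 0\), and they coincide.

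Second, following the proof of Proposition~\ref{prop:wedge_sign}, I will differentiate the promise-consistency identity \(V_t^A(\eta^t)=\widehat V_t^A(\eta^t;\eta_t)\) along the truthful locus. This identity comes from Proposition~\ref{prop:recursive_DIC}(iii) together with post-report obedience. Differentiating splits \(\partial V_t^A/\partial\eta_t\) into two pieces: the partial derivative in the true type with the report held fixed, and the partial derivative in the report. The report derivative is zero by the first step.

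Third, I will identify the true-type partial with \(\partial V_t^{A,\mathrm{ob}}/\partial\eta_t\) at \(\widehat\eta_t=\eta_t\). This uses the envelope theorem for the inner post-report action problem, under scalar-envelope regularity and obedience on the truthful branch, exactly as in Proposition~\ref{prop:wedge_sign}. Combining the three steps gives \(\kappa_t(\eta^t)=0\), which is the claimed equality.

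The main obstacle is justifying the chain-rule split in the second step. It needs joint differentiability of \((\eta_t,\widehat\eta_t)\mapsto\widehat V_t^A(\eta^t;\widehat\eta_t)\) at the diagonal, or at least a total-derivative decomposition, which is stronger than separate partial differentiability. I would try the Milgrom--Segal route to avoid this. Since \(\eta_t\) maximizes \(g\) locally, the value function \(V_t^A(\eta^{t-1},x)=\sup_{\widehat\eta}\widehat V_t^A((\eta^{t-1},x);\widehat\eta)\) is locally a supremum over a two-sided report neighbourhood. Here the interiority is used: \(C_t(x)\) contains a fixed neighbourhood of \(\eta_t\) for all \(x\) near \(\eta_t\). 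I would then apply the envelope theorem to this supremum. Equi-Lipschitz continuity of \(u_t\) and integrably bounded impulse responses give absolute continuity, so at points of differentiability the derivative equals the partial in the true type at the maximizer. The third step then turns that partial into the obedient marginal utility.
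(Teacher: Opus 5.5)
Your three steps are the paper's proof written out. The paper simply asserts that local interiority restores the two-sided envelope argument for the report choice, and then applies the envelope theorem to the inner post-report action problem. Your chain-rule split is also the same move the paper makes in proving Proposition~\ref{prop:wedge_sign}.

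The weak point is how you discharge the outer step. Your primary route needs two-sided, indeed joint, differentiability of \((\eta_t,\widehat\eta_t)\mapsto\widehat V_t^A(\eta^t;\widehat\eta_t)\). As you recognize, this is not among the maintained conditions; Proposition~\ref{prop:wedge_sign} had to assume even left differentiability explicitly. So the argument rests on your fallback, and there the key claim is false. The hypothesis restricts \(C_t(\eta_t)\), i.e.\ which reports \(\eta_t\) can make. A fixed-choice-set envelope argument instead needs \(\eta_t\in C_t(x)\) for true types \(x\) near \(\eta_t\), i.e.\ that nearby types can imitate \(\eta_t\).

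For a counterexample, on \(E_t=[0,1]\) let \(x\succeq_t y\) iff \(\mathbf{1}_K(x)\geq\mathbf{1}_K(y)\), where \(K\) is closed, nowhere dense, and of positive measure. Every type satisfies the interiority hypothesis, since \(C_t(x)=E_t\) on \(K\) and \(C_t(x)=E_t\setminus K\) off \(K\). Yet for every \(\eta_t\in K\) there are types \(x\notin K\) arbitrarily close with \(\eta_t\notin C_t(x)\). So no fixed neighbourhood of \(\eta_t\) lies in \(C_t(x)\) for all nearby \(x\), on a positive-measure set of \(\eta_t\).

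What you actually need is weaker, and it does follow from the hypothesis; in the example, the types in \(K\) do the job. If \(x\) satisfies interiority with radius \(\varepsilon(x)>|x-\eta_t|\), then \(\eta_t\in C_t(x)\) directly.

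Moreover, for all but countably many interiority types \(\eta_t\), such \(x\) exist arbitrarily close on both sides. Consider the set of types with radius \(>1/n\) for which no type in \((\eta_t-1/m,\eta_t)\) can report \(\eta_t\). This set is finite: if two of its members were within \(\min\{1/n,1/m\}\), the left one could report the right one, a contradiction. The symmetric argument handles the right side.

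Along such \(x\), reporting incentive compatibility of type \(x\) and promise consistency give
\(V_t^A(\eta^{t-1},x)\geq\widehat V_t^A((\eta^{t-1},x);\eta_t)\), with equality at \(x=\eta_t\). The two one-sided difference quotients then yield
\(\partial V_t^A/\partial\eta_t=\partial\widehat V_t^A/\partial\eta_t|_{\widehat\eta_t=\eta_t}\)
wherever both are differentiable in the true type, with no differentiability in the report required. Your Step~3 then finishes exactly as in the paper.
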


\begin{proof}
Local interiority restores the usual two-sided envelope argument for the
report choice, giving
\[
\frac{\partial V_t^A(\eta^t)}{\partial\eta_t}
=
\left.
\frac{\partial
\widehat V_t^A(\eta^t;\widehat\eta_t)}
{\partial\eta_t}
\right|_{\widehat\eta_t=\eta_t}.
\]
By the post-report obedience condition of
Theorem~\ref{thm:verification_DIC}, the prescribed action is
sequentially optimal following truthful reporting.
Under scalar-envelope regularity, the feasible
post-report action correspondence and the conditional law of other
payoff-relevant uncertainty do not move locally with the scalar perturbation.
The envelope theorem for the inner post-report action problem therefore gives
\[
\left.
\frac{\partial
\widehat V_t^A(\eta^t;\widehat\eta_t)}
{\partial\eta_t}
\right|_{\widehat\eta_t=\eta_t}
=
\left.
\frac{\partial
V_t^{A,\mathrm{ob}}(\eta^t;\widehat\eta_t)}
{\partial\eta_t}
\right|_{\widehat\eta_t=\eta_t}.
\]
Combining the two equalities gives the result.
\end{proof}

Thus local nonbinding verification restores the standard envelope implication.
The strict withholding structure generally places truth at the upper boundary,
so envelope compatibility remains an additional property of the baseline
one-sided model.

The direct-assignment case removes post-report action choice while retaining
verification-restricted reporting. This is stronger than merely observing the
action: direct assignment eliminates the inner action problem itself. More
generally, perfect observability together with sufficiently flexible
action-contingent contracting can also eliminate the moral-hazard friction,
but observability alone need not do so when contractual instruments are
restricted.

\begin{corollary}[Envelope-compatible direct-assignment reduction]
\label{cor:envelope_verification}
Under the assumptions of Theorem~\ref{thm:verification_DIC}, suppose the
utility-relevant allocation is directly assigned and the mechanism is envelope
compatible. Then the mechanism is dynamically incentive compatible if and
only if, after every history and for every
\(\widehat\eta_t\in C_t(\eta_t)\),
\[
\int_{\widehat\eta_t}^{\eta_t}
\left[
\left.
\frac{\partial
V_t^{A,\mathrm{ob}}((\eta^{t-1},x);r)}
{\partial x}
\right|_{r=x}
-
\frac{\partial
V_t^{A,\mathrm{ob}}((\eta^{t-1},x);\widehat\eta_t)}
{\partial x}
\right]dx
\geq0.
\]
\end{corollary}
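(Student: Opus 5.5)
The plan is to specialize the gap identity \eqref{eq:nested_gap_identity} to the direct-assignment case, and then rewrite its integrand using envelope compatibility.

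\emph{Step 1: obedience is vacuous and the truthful-report gap vanishes.} Under direct assignment there is no post-report action choice. Hence the deviation value after report \(\widehat\eta_t\) equals the obedient value with that report fixed,
\[
\widehat V_t^A(\eta^t;\widehat\eta_t)=V_t^{A,\mathrm{ob}}(\eta^t;\widehat\eta_t),
\]
as in Corollary~\ref{cor:recursive_direct_assignment}. Condition~(ii) of Theorem~\ref{thm:verification_DIC} is therefore vacuous, and dynamic incentive compatibility reduces to the reporting inequality \eqref{eq:nested_IC}. The truthful value is \(V_t^A(\eta^t)=V_t^{A,\mathrm{ob}}(\eta^t;\eta_t)\) by promise consistency. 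Evaluating this at \(\eta_t=\widehat\eta_t\) shows that the first two terms on the right of \eqref{eq:nested_gap_identity} cancel identically.

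\emph{Step 2: rewrite the integrand.} The right side of \eqref{eq:nested_gap_identity} is now just the integral, with integrand
\[
\frac{\partial V_t^A(\eta^{t-1},x)}{\partial x}-\frac{\partial V_t^{A,\mathrm{ob}}((\eta^{t-1},x);\widehat\eta_t)}{\partial x}.
\]
Envelope compatibility gives \(\kappa_t=0\) for a.e.\ \(x\) after every truthful history. That is, for almost every \(x\) the derivative along the truthful schedule equals the report-fixed derivative evaluated at \(r=x\):
\[
\frac{\partial V_t^A(\eta^{t-1},x)}{\partial x}=\left.\frac{\partial V_t^{A,\mathrm{ob}}((\eta^{t-1},x);r)}{\partial x}\right|_{r=x}.
\]
Replacing the first term on a full-measure set does not change the Lebesgue integral. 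The integral in \eqref{eq:nested_gap_identity} therefore equals the displayed integral in the corollary.

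\emph{Step 3: conclude both directions.} By the identity, \eqref{eq:nested_IC} holds for each \(\widehat\eta_t\in C_t(\eta_t)\) if and only if the displayed integral is nonnegative. Combined with Theorem~\ref{thm:verification_DIC}, this gives the equivalence at every history, so the argument is symmetric in both directions.

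\emph{Main obstacle.} The delicate step is the legitimacy of \eqref{eq:nested_gap_identity} along \(x\in[\widehat\eta_t,\eta_t]\) with the past history \(\eta^{t-1}\) held fixed. Envelope compatibility is stated "after every truthful history", so I need \(\kappa_t=0\) a.e.\ in \(x\) at each such fixed past history. Measurability and absolute continuity of both \(x\mapsto V_t^A(\eta^{t-1},x)\) and \(x\mapsto V_t^{A,\mathrm{ob}}((\eta^{t-1},x);\widehat\eta_t)\) come from the maintained regularity conditions (equi-Lipschitz \(u_t\) and integrably bounded impulse responses). I would invoke these explicitly, and note that downward closure of \(C_t\) keeps \(\widehat\eta_t\) feasible for every intermediate \(x\).
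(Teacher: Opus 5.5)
Your proposal is correct and follows essentially the same route as the paper. Under direct assignment you identify $\widehat V_t^A$ with $V_t^{A,\mathrm{ob}}$, so the truthful-report gap in \eqref{eq:nested_gap_identity} vanishes, and you then substitute the envelope-compatibility equality, which holds almost everywhere, into the integrand. One wording point: $V_t^A(\eta^t)=V_t^{A,\mathrm{ob}}(\eta^t;\eta_t)$ holds by definition rather than as a consequence of dynamic incentive compatibility (your ``promise consistency''), and this is what keeps the ``if'' direction from being circular.
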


\begin{proof}
Under direct assignment there is no post-report action reoptimization, so
\[
\widehat V_t^A(\eta^t;\widehat\eta_t)
=
V_t^{A,\mathrm{ob}}(\eta^t;\widehat\eta_t)
\]
for every feasible report. The truthful-report action gap in
\eqref{eq:nested_gap_identity} is therefore zero. Envelope compatibility gives
\[
\frac{\partial V_t^A(\eta^{t-1},x)}{\partial x}
=
\left.
\frac{\partial
V_t^{A,\mathrm{ob}}((\eta^{t-1},x);r)}
{\partial x}
\right|_{r=x}
\]
almost everywhere. Substitution into \eqref{eq:nested_gap_identity} gives the
stated condition.
\end{proof}

The hidden-information and hidden-action margins therefore enter a single
nested incentive problem but through different channels. Hidden information
creates report comparisons and continuation rents. Verification modifies the
set of reports against which truth must be compared. Hidden action changes the
value associated with every report because the true type can subsequently
reoptimize its privately chosen action. When \(a_t\) is not observed, the
latter is a genuine moral-hazard margin: incentives operate through observable
outcomes and continuation consequences rather than direct conditioning on the
realized action.

The conventional dynamic-screening characterization is obtained when
reporting is unrestricted, \(C_t(\eta_t)=E_t\), and the utility-relevant
allocation is directly assigned. These two reductions remove, respectively,
the one-sided verification structure and the post-report action problem. The
ordinary envelope condition then becomes
\[
\frac{\partial V_t^A(\eta^t)}{\partial\eta_t}
=
\left.
\frac{\partial
V_t^{A,\mathrm{ob}}(\eta^t;r)}
{\partial\eta_t}
\right|_{r=\eta_t}
\qquad
\text{for a.e. }\eta_t,
\]
and global incentive compatibility requires
\[
\int_{\widehat\eta_t}^{\eta_t}
\left[
\left.
\frac{\partial
V_t^{A,\mathrm{ob}}((\eta^{t-1},x);r)}
{\partial x}
\right|_{r=x}
-
\frac{\partial
V_t^{A,\mathrm{ob}}((\eta^{t-1},x);\widehat\eta_t)}
{\partial x}
\right]dx
\geq0
\]
for every \(\eta_t,\widehat\eta_t\in E_t\). This is the usual dynamic
integral monotonicity condition (\emph{Int-M}), the global counterpart of the
first-order incentive characterization (\emph{ICFOC}) in Pavan, Segal, and
Toikka \citeyearpar{PavanSegalToikka2014}. Their dynamic-screening benchmark
is therefore the \(2B^{\mathrm{HI}}\) branch of the present environment. The
general nested condition retains the same impulse-response and envelope
machinery while allowing both dynamic hidden information and hidden action,
with partial verification further modifying the feasible report domain.

The dynamic nature of the condition is important. A current report can alter
allocations along the continuation path, while a current hidden action can
alter the distribution of public outcomes and future states. The comparison
therefore concerns the entire continuation mechanism rather than a
single-period allocation. The impulse responses \(I_{t,s}\) summarize how the
effect of the current private type propagates to future private types and
thereby remains relevant for continuation allocations.

Finally, envelope compatibility recovers the usual dynamic
transfer-equivalence property. Integration gives
\[
V_t^A(\eta^t)
=
V_t^A(\eta^{t-1},\underline\eta_t)
+
\int_{\underline\eta_t}^{\eta_t}
\left.
\frac{\partial
V_t^{A,\mathrm{ob}}((\eta^{t-1},x);r)}
{\partial x}
\right|_{r=x}
\,dx.
\]
Conditional on envelope compatibility, the truthful-path allocation rule and
type process therefore determine truthful continuation utilities up to the
continuation utility of a reference type; quasilinearity then pins down the
expected net present value of transfers up to the corresponding constant.
Under one-sided verification without envelope compatibility, this conclusion
need not hold because \(\kappa_t\) remains an additional continuation-rent
object.

\subsection{Optimal design and verification-adjusted dynamic virtual surplus}

We now turn from the agent's incentive problem to the designer's optimization
problem. Participation is imposed at this stage: after observing \(\eta_1\),
the agent may reject the mechanism and obtain its normalized outside option of
zero. We use \textbf{technological feasibility} to refer to the exogenous
constraints on the allocation or control process imposed by physical,
resource, capacity, or other technological features of the environment. This
is distinct from action feasibility as defined in Section~3, which requires
the prescribed post-report action to be executable under the relevant
capability and authorization sets. Action feasibility alone does not implement
that action: when the agent privately chooses \(a_t\), obedience must be
sustained by post-report incentive compatibility.

For the optimal-contracting analysis, assume that the designer and agent
share a common prior over the payoff-relevant state, scalar type process,
and other uncertainty entering payoffs and transfers. Conditional
continuation beliefs are derived from this prior. This aligns the agent's
continuation values with the designer's transfer accounting; Sections~2--3
continue to allow heterogeneous beliefs. We maintain the scalar-envelope
regularity of Section~4.1.

The designer's full \(4B\) problem is therefore
\begin{equation}
\label{eq:full_4B_problem}
\begin{aligned}
\sup_{\psi}\quad
&
\Pi^P(\psi)
=
\int
\sum_{t=1}^{T}
\delta^{t-1}
\bigl(
v_t(q_t,\theta_t)+p_t
\bigr)
\,d\mathbb P(\cdot;\psi)
\\
\text{s.t.}\quad
&
V_1^A(\eta_1)\geq0
\qquad
\text{for every }\eta_1\in E_1,
\\
&
V_t^A(\eta^t)
\geq
\widehat V_t^A(\eta^t;\widehat\eta_t)
\qquad
\text{for every }
\widehat\eta_t\in C_t(\eta_t),
\\
&
\text{post-report obedience holds following truthful reports,}
\\
&
\psi\text{ is technologically and action feasible.}
\end{aligned}
\end{equation}
The constraints are participation, verification-feasible reporting
incentives, obedience after truthful reports, and technological and
action feasibility. The reporting comparisons allow optimal post-report
action reoptimization after any feasible false report; obedience separately
requires the recommended action to be sequentially optimal following a
truthful report. Under the baseline in which
\(a_t\) is not publicly observed, the latter is the dynamic moral-hazard
constraint. One-sided verification determines which hidden-information
deviations must be considered. These margins enter the optimization problem
jointly rather than as independently optimized stages.

Assume in addition that the initial distribution \(F_1\) admits a strictly
positive density \(f_1\) on
\(E_1=[\underline\eta_1,\overline\eta_1]\). Denote the
\textbf{hazard rate} and \textbf{inverse hazard rate} by
\begin{equation}
\label{eq:initial_hazard}
h_1(\eta_1)
:=
\frac{f_1(\eta_1)}{1-F_1(\eta_1)},
\qquad
\Lambda_1(\eta_1)
:=
\frac{1-F_1(\eta_1)}{f_1(\eta_1)}
=
\frac{1}{h_1(\eta_1)}.
\end{equation}

\begin{theorem}[Verification-adjusted dynamic virtual surplus]
\label{thm:verification_virtual_surplus}
Suppose the common-prior and scalar-envelope assumptions above hold, the
relevant expectations are integrable, and \(V_1^A\) is absolutely
continuous. Then the designer's payoff under any dynamically
incentive-compatible mechanism can be written as
\begin{equation}
\label{eq:verification_virtual_surplus}
\begin{aligned}
\Pi^P
=
\int
\Bigg[
\sum_{t=1}^{T}
\delta^{t-1}
\Big\{
&
u_t(\eta_t,q_t,\theta_t)
+
v_t(q_t,\theta_t)
\\
&
-
\Lambda_1(\eta_1)
I_{1,t}(\eta^t)
u_{t,\eta}(\eta_t,q_t,\theta_t)
\Big\}
-
\Lambda_1(\eta_1)\kappa_1(\eta_1)
\Bigg]
\,d\mathbb P(\cdot;\psi)
-
V_1^A(\underline\eta_1).
\end{aligned}
\end{equation}
\end{theorem}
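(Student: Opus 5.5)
The plan is the standard Myersonian accounting identity, run at date~1 and with the verification wedge carried along as an extra term. First, use quasilinearity and the common prior to write the designer's payoff as total expected surplus minus the agent's expected rent. On the truthful-and-obedient path the agent's realized payoff is \(\sum_t\delta^{t-1}(u_t-p_t)\) and the designer's is \(\sum_t\delta^{t-1}(v_t+p_t)\). Adding and subtracting gives
\[
\Pi^P=\int\sum_{t=1}^{T}\delta^{t-1}\bigl\{u_t(\eta_t,q_t,\theta_t)+v_t(q_t,\theta_t)\bigr\}\,d\mathbb P(\cdot;\psi)-\int_{E_1}V_1^A(\eta_1)\,dF_1(\eta_1).
\]
The common prior matters here: \(V_1^A(\eta_1)\), computed under the agent's conditional beliefs, must coincide with the conditional expectation of the realized agent payoff under \(\mathbb P(\cdot;\psi)\). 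Integrability ensures the rearrangement is legitimate, including when \(T=\infty\).

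Second, express the rent through the date-1 derivative. By the definition of the verification wedge,
\[
\frac{\partial V_1^A}{\partial\eta_1}=\left.\frac{\partial V_1^{A,\mathrm{ob}}(\eta_1;\widehat\eta_1)}{\partial\eta_1}\right|_{\widehat\eta_1=\eta_1}+\kappa_1(\eta_1).
\]
The first term is the dynamic marginal continuation utility from Section~4.1, namely the expectation of \(\sum_t\delta^{t-1}I_{1,t}u_{t,\eta}\) along the truthful-and-obedient path. By scalar-envelope regularity it involves only the direct type effect and the propagation through \(Z_{1,t}\), not action reoptimization. Since \(V_1^A\) is absolutely continuous,
\[
V_1^A(\eta_1)=V_1^A(\underline\eta_1)+\int_{\underline\eta_1}^{\eta_1}\Bigl[\mathbb E\bigl(\textstyle\sum_t\delta^{t-1}I_{1,t}u_{t,\eta}\mid x\bigr)+\kappa_1(x)\Bigr]dx .
\]

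Third, integrate against \(f_1\) and exchange the order of integration (Fubini, using the integrable bounds), so that \(\int\int_{\underline\eta_1}^{\eta_1}g(x)\,dx\,dF_1(\eta_1)=\int g(x)(1-F_1(x))\,dx=\int g\,\Lambda_1\,dF_1\). The conditional expectation then merges with the outer integral by the tower property, because the law of \(\eta^T,\theta\) given \(\eta_1\) is exactly the one used in \(\mathbb P(\cdot;\psi)\). The result is
\[
\int V_1^A\,dF_1=V_1^A(\underline\eta_1)+\int\Lambda_1(\eta_1)\Bigl[\sum_t\delta^{t-1}I_{1,t}u_{t,\eta}+\kappa_1(\eta_1)\Bigr]d\mathbb P .
\]
Substituting this into the first display yields \eqref{eq:verification_virtual_surplus}.

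The main obstacle is the second step: justifying that the derivative holding the report fixed equals the impulse-response formula, with the hidden action not contributing extra terms. I would lean on the maintained scalar-envelope regularity, which holds the current report and the obedient plan fixed. Any action-reoptimization effect enters only through \(\widehat V^A\) and is absorbed, via the identity in the proof of Proposition~\ref{prop:wedge_sign}, into \(\kappa_1\). Dominated convergence, using the equi-Lipschitz and integrable-derivative bounds, then allows differentiation under the integral and the sum.
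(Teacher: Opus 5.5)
Your proposal is correct and follows the paper's proof essentially step for step. You write the designer's payoff as surplus minus rent under the common prior, add the wedge $\kappa_1$ to the envelope formula for $dV_1^A/d\eta_1$ by its definition, and then use integration by parts (your Fubini exchange) together with the tower property to fold the conditional expectation into $\mathbb P(\cdot;\psi)$. The obstacle you flag is lighter than you suggest: $\kappa_1$ is by definition the residual relative to the report-fixed derivative, so the only substantive input is the Section~4.1 impulse-response formula for $\partial V_1^{A,\mathrm{ob}}/\partial\eta_1$ evaluated at the truthful report.
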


\begin{proof}
Under the common prior, expected transfers paid to the designer equal the
agent's expected discounted gross utility minus its initial expected
continuation utility. Substitution into the designer's payoff gives
\[
\Pi^P
=
\int
\sum_{t=1}^{T}
\delta^{t-1}
\bigl(
u_t(\eta_t,q_t,\theta_t)
+
v_t(q_t,\theta_t)
\bigr)
\,d\mathbb P(\cdot;\psi)
-
\int_{E_1}
V_1^A(\eta_1)\,dF_1(\eta_1).
\]
By the scalar-envelope representation of Section~4.1 and the definition
of the verification wedge, for almost every initial type,
\[
\frac{dV_1^A(\eta_1)}{d\eta_1}
=
\int
\sum_{t=1}^{T}
\delta^{t-1}
I_{1,t}(\widetilde\eta^t)
u_{t,\eta}(\widetilde\eta_t,\widetilde q_t,\widetilde\theta_t)
\,d\mathbb P(\cdot\mid\eta_1;\psi)
+
\kappa_1(\eta_1).
\]
Absolute continuity and integration by parts therefore give
\[
\begin{aligned}
\int_{E_1}
V_1^A(\eta_1)\,dF_1(\eta_1)
=
V_1^A(\underline\eta_1)
+
\int
\Bigg[
\sum_{t=1}^{T}
\delta^{t-1}
\Lambda_1(\eta_1)
I_{1,t}(\eta^t)
u_{t,\eta}(\eta_t,q_t,\theta_t)
+
\Lambda_1(\eta_1)\kappa_1(\eta_1)
\Bigg]
\,d\mathbb P(\cdot;\psi),
\end{aligned}
\]
The common prior allows the conditional expectation in the envelope formula
to be integrated using the same joint law as the designer's objective.
Substitution gives \eqref{eq:verification_virtual_surplus}.
\end{proof}

The virtual-surplus identity separates rent accounting from action
implementation. The identity is written along the truthful-and-obedient path,
where \(q_t\) is the implemented control and \(p_t\) is the expected transfer
generated by observable outcomes. Hidden action therefore does not create a
new additive term in \eqref{eq:verification_virtual_surplus}. Instead, it
restricts which truthful-path controls and transfers can be implemented. This
distinction is central: hidden information enters the transformed objective
through continuation rents, while hidden action enters primarily through the
incentive-feasible mechanism set.

Motivated by Theorem~\ref{thm:verification_virtual_surplus}, define the
period-\(t\) \textbf{dynamic virtual surplus} by
\begin{equation}
\label{eq:dynamic_virtual_surplus}
\Phi_t(\eta^t,q_t,\theta_t)
=
u_t(\eta_t,q_t,\theta_t)
+
v_t(q_t,\theta_t)
-
\Lambda_1(\eta_1)
I_{1,t}(\eta^t)
u_{t,\eta}(\eta_t,q_t,\theta_t).
\end{equation}
The designer's payoff can then be written compactly as
\[
\Pi^P
=
\int
\left[
\sum_{t=1}^{T}
\delta^{t-1}
\Phi_t(\eta^t,q_t,\theta_t)
-
\Lambda_1(\eta_1)\kappa_1(\eta_1)
\right]
\,d\mathbb P(\cdot;\psi)
-
V_1^A(\underline\eta_1).
\]

This representation identifies three economically distinct components of the
transaction cost of dynamic control. We use \emph{transaction cost} in the
broad Coase--Williamson sense, rather than to mean a primitive monetary or
resource expenditure. The \textbf{informational transaction cost} is the
continuation rent required to elicit evolving hidden information. Its marginal
effect is captured inside dynamic virtual surplus by
\(\Lambda_1(\eta_1)I_{1,t}(\eta^t)u_{t,\eta}\): the inverse hazard rate
determines the initial rent weight, while the impulse response propagates that
rent effect to later dates.

The \textbf{verification transaction cost} is the additional continuation-rent
term associated with the one-sided reporting boundary. Because truthful
reporting may lie at the boundary of the feasible report set, the ordinary
two-sided envelope equality need not hold, and the resulting rent enters the
designer's objective through
\(\Lambda_1(\eta_1)\kappa_1(\eta_1)\).

The \textbf{implementation cost of hidden action}, or equivalently the
\textbf{obedience cost}, arises differently. The agent privately chooses its
post-report action and may reoptimize after observing information unavailable
when the report was made. When \(a_t\) is hidden, this is the standard
economic logic of moral hazard embedded in a dynamic mechanism-design problem:
the designer cannot contract directly on the latent action and must instead
induce it through authorization, observable outcomes, outcome-contingent
rewards, and continuation promises. Its general measure is therefore the
shadow cost of restricting the designer to mechanisms that satisfy the
post-report obedience requirement of Theorem~\ref{thm:verification_DIC}, rather than
a separate additive term in \eqref{eq:verification_virtual_surplus}.

The three components therefore enter the full problem through different
mathematical channels. Dynamic hidden information generates the standard
information-rent deduction from virtual surplus. One-sided verification can
generate an additional continuation-rent term through the verification wedge.
Hidden action restricts the set of implementable mechanisms through the
obedience condition. The first two appear in the rent representation; the
third is generally a shadow cost of implementation. They are jointly
determined by the same dynamic mechanism and should not be interpreted as
three additive monetary costs.

The decomposition also makes the two second-best branches transparent. In the
\(2B^{\mathrm{HI}}\) branch, the action is directly assigned, so the
hidden-action implementation condition disappears while the dynamic
information-rent term remains. In the \(2B^{\mathrm{HA}}\) branch, the scalar
type is common knowledge, so there is no screening rent or verification
problem, but the designer still faces the action-implementation constraint.
Thus eliminating hidden information does not restore the first best when the
action remains hidden; symmetrically, eliminating hidden action does not
restore the first best when private information remains.

Proposition~\ref{prop:wedge_sign} and initial participation imply the
virtual-surplus upper bound
\begin{equation}
\label{eq:virtual_surplus_upper_bound}
\Pi^P
\leq
\int
\sum_{t=1}^{T}
\delta^{t-1}
\Phi_t(\eta^t,q_t,\theta_t)
\,d\mathbb P(\cdot;\psi).
\end{equation}
This motivates the relaxed virtual-surplus candidate associated with the full
problem. Let \(\mathcal Q^F\) denote a relaxed class of technologically
feasible reduced-form allocation processes, including all implementable
truthful-and-obedient processes but not imposing incentive constraints.
Each candidate includes the physical law \(\mathbb P(\cdot;q)\) needed to
evaluate it, which coincides with \(\mathbb P(\cdot;\psi)\) whenever
that candidate is implemented by \(\psi\). Define
\begin{equation}
\label{eq:first_order_candidate}
q^\circ
\in
\arg\max_{q\in\mathcal Q^F}
\int
\sum_{t=1}^{T}
\delta^{t-1}
\Phi_t(\eta^t,q_t,\theta_t)
\,d\mathbb P(\cdot;q).
\end{equation}
When the virtual-surplus objective is concave and feasibility is historywise,
the corresponding first-order conditions characterize this candidate
pointwise; otherwise the relaxed problem must be solved subject to the
relevant technological constraints. Equation~\eqref{eq:first_order_candidate}
does not yet solve the full problem. It supplies the relaxed
hidden-information candidate that must be checked against the complete
reporting, action-implementation, and feasibility conditions.

The implementation check requires both the reporting comparisons and
truthful-report post-report obedience of Sections~3.4 and~4.1. Obedience
checks whether the recommended action is optimal at each relevant
post-report information set, given its observable-outcome and continuation
consequences; when the action is hidden, this is the moral-hazard condition.
For a feasible false report, the reporting comparison checks whether the true
type can profit by combining that report with optimal private action
reoptimization. This is the joint
hidden-information--hidden-action deviation. Verification determines which
false-report comparisons belong to \(C_t(\eta_t)\). The nested-monotonicity
condition of Proposition~\ref{prop:nested_monotonicity}, together with
truthful-report obedience, provides a convenient sufficient test; in the
unrestricted-report, direct-assignment reduction it
becomes the usual dynamic integral-monotonicity condition of Pavan, Segal, and
Toikka \citeyearpar{PavanSegalToikka2014}.

Envelope compatibility has a distinct role in this implementation check. It is
not another action incentive constraint: it determines whether the
verification-adjusted rent representation coincides with the ordinary
two-sided envelope representation. Thus a candidate may satisfy the full
nested incentive condition while generating \(\kappa_1>0\).

\begin{corollary}[Tightness of the virtual-surplus relaxation]
\label{cor:relaxed_optimality}
Let \(q^\circ\) solve \eqref{eq:first_order_candidate}. Suppose there exists a
technologically and action-feasible mechanism \(\psi^\circ\) implementing
\(q^\circ\) such that the nested incentive condition
\eqref{eq:nested_IC} holds after every history, post-report
obedience holds following truthful reporting at every relevant information
set, the mechanism is envelope compatible, and
\(V_1^A(\underline\eta_1)=0\). Then \(\psi^\circ\) solves the full \(4B\)
problem.
\end{corollary}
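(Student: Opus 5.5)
The argument is a sandwich between the upper bound \eqref{eq:virtual_surplus_upper_bound} and the value attained by \(\psi^\circ\). First I would fix an arbitrary mechanism \(\psi\) that is feasible for the full \(4B\) problem \eqref{eq:full_4B_problem}. Such a mechanism satisfies participation, the nested reporting conditions, truthful-report obedience, and technological and action feasibility.

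By Theorem~\ref{thm:verification_DIC} it is dynamically incentive compatible, so Theorem~\ref{thm:verification_virtual_surplus} applies. By Proposition~\ref{prop:wedge_sign} we have \(\kappa_1\ge0\), and participation gives \(V_1^A(\underline\eta_1)\ge0\); since \(\Lambda_1>0\), we obtain \eqref{eq:virtual_surplus_upper_bound}. The truthful-and-obedient allocation process \(q^\psi\) induced by \(\psi\) lies in the relaxed class \(\mathcal Q^F\), because that class contains every implementable truthful-and-obedient process, and \(\mathbb P(\cdot;q^\psi)=\mathbb P(\cdot;\psi)\). Optimality of \(q^\circ\) in \eqref{eq:first_order_candidate} then gives
\[
\Pi^P(\psi)\le \int\sum_{t}\delta^{t-1}\Phi_t(\eta^t,q^\psi_t,\theta_t)\,d\mathbb P(\cdot;q^\psi)\le \int\sum_t\delta^{t-1}\Phi_t(\eta^t,q^\circ_t,\theta_t)\,d\mathbb P(\cdot;q^\circ).
\]

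Next I would check that \(\psi^\circ\) is itself feasible for \eqref{eq:full_4B_problem} and attains this bound. Technological and action feasibility, the nested condition \eqref{eq:nested_IC}, and truthful-report obedience are all assumed. By Theorem~\ref{thm:verification_DIC}, \(\psi^\circ\) is therefore dynamically incentive compatible. Applying Theorem~\ref{thm:verification_virtual_surplus} to \(\psi^\circ\), envelope compatibility gives \(\kappa_1=0\) a.e., the normalization gives \(V_1^A(\underline\eta_1)=0\), and \(\mathbb P(\cdot;\psi^\circ)=\mathbb P(\cdot;q^\circ)\). Hence \(\Pi^P(\psi^\circ)\) equals the right-hand side of the chain above. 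Combining the two steps shows that \(\Pi^P(\psi^\circ)\ge\Pi^P(\psi)\) for every feasible \(\psi\), so \(\psi^\circ\) solves the \(4B\) problem.

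The main obstacle is participation for \(\psi^\circ\): we need \(V_1^A(\eta_1)\ge0\) for every \(\eta_1\), not only at \(\underline\eta_1\). I would obtain it from monotonicity of \(V_1^A\) in \(\eta_1\). With one-sided verification, each type \(\eta_1\) can report \(\underline\eta_1\in C_1(\eta_1)\), so \eqref{eq:nested_IC} gives \(V_1^A(\eta_1)\ge\widehat V_1^A(\eta_1;\underline\eta_1)\). It then remains to show \(\widehat V_1^A(\eta_1;\underline\eta_1)\ge V_1^A(\underline\eta_1)=0\). The identity \eqref{eq:nested_gap_identity} expresses this as an integral of \(\partial\widehat V_1^A/\partial x\), so it reduces to that derivative being nonnegative. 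That holds when \(u_{t,\eta}\ge0\) and impulse responses are nonnegative, which is the usual ordering convention under which \(\underline\eta_1\) is the worst type. I would state this explicitly as the maintained convention implicit in using \(V_1^A(\underline\eta_1)\) as the reference level in Theorem~\ref{thm:verification_virtual_surplus}. The remaining steps are routine, apart from checking that the integrability needed to apply Theorem~\ref{thm:verification_virtual_surplus} holds for both \(\psi\) and \(\psi^\circ\).
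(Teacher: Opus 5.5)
Your argument is the paper's own sandwich: every feasible mechanism is bounded by \eqref{eq:virtual_surplus_upper_bound}, \(q^\circ\) maximizes that bound over \(\mathcal Q^F\), and \(\psi^\circ\) attains it because envelope compatibility sets \(\kappa_1=0\) and \(V_1^A(\underline\eta_1)=0\). Your participation check is a genuine addition. The paper's proof simply takes \(\psi^\circ\) to be feasible, but the hypotheses pin down only the lowest type: with \(u_{t,\eta}<0\), an incentive-compatible and envelope-compatible mechanism with \(V_1^A(\underline\eta_1)=0\) can give \(V_1^A(\eta_1)<0\) to higher types. So the sign convention you flag (\(u_{t,\eta}\geq0\), \(I_{1,t}\geq0\)), or an explicit participation hypothesis, is really needed. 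Under envelope compatibility it yields participation more directly by integrating \(dV_1^A/d\eta_1=\int\sum_t\delta^{t-1}I_{1,t}u_{t,\eta}\,d\mathbb P\geq0\) upward from \(\underline\eta_1\), without routing through the report \(\underline\eta_1\).
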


\begin{proof}
The bound \eqref{eq:virtual_surplus_upper_bound} applies to every feasible,
dynamically incentive-compatible, individually rational mechanism. The
allocation \(q^\circ\) maximizes this bound. Envelope compatibility sets
\(\kappa_1=0\), while binding participation of the lowest initial type sets
\(V_1^A(\underline\eta_1)=0\). Hence \(\psi^\circ\) attains the upper bound,
so no feasible mechanism can yield a larger designer payoff.
\end{proof}

Corollary~\ref{cor:relaxed_optimality} gives the solution procedure rather than
a derivation from a lower-best benchmark. One first formulates the full \(4B\)
problem, obtains its relaxed virtual-surplus candidate, and then verifies the
reporting incentives, post-report obedience, and the tightness of the rent
representation. If the hidden-action condition is nonbinding, the solution
collapses to the corresponding hidden-information branch. If the reporting
margin is absent while hidden action remains, the problem instead collapses
toward the hidden-action branch. The benchmark map is therefore inherently
branched rather than a single sequence.

\begin{corollary}[Hidden-information and hidden-action benchmark reductions]
\label{cor:best_reductions}
Under the maintained assumptions, the full problem admits the following
reductions:
\begin{enumerate}
\item[(i)]
If the scalar type is common knowledge and the action is directly assignable
or otherwise first-best implementable, the problem reduces to \(1B\).

\item[(ii)]
If \(\eta_t\) is private, reporting is unrestricted, and the payoff-relevant
action is directly assigned, the problem reduces to the conventional dynamic
hidden-information benchmark \(2B^{\mathrm{HI}}\).

\item[(iii)]
If \(\eta_t\) is common knowledge but \(a_t\) is privately chosen and hidden,
the problem reduces to the hidden-action benchmark \(2B^{\mathrm{HA}}\).

\item[(iv)]
If \(\eta_t\) is private and subject to one-sided verification but the
payoff-relevant action is directly assigned, the problem is
\(3B^{\mathrm{HI+V}}\).

\item[(v)]
If \(\eta_t\) is private, reporting is unrestricted, and \(a_t\) is privately
chosen and hidden, the problem is \(3B^{\mathrm{HI+HA}}\).

\item[(vi)]
If hidden information, one-sided verification, and hidden action are all
present, the problem is the full \(4B\).
\end{enumerate}
\end{corollary}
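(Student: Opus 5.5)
The plan is to verify each reduction by specializing the three primitive frictions in the full problem \eqref{eq:full_4B_problem} and checking which constraints become vacuous. The tools are Theorem~\ref{thm:verification_DIC} and the recursive characterization in Proposition~\ref{prop:recursive_DIC}. Dynamic incentive compatibility splits into a reporting condition \eqref{eq:nested_IC} over $C_t(\eta_t)$ and a post-report obedience condition. So the whole argument amounts to tracking three switches: whether the report space is degenerate, whether $C_t(\eta_t)=E_t$ or $[\underline\eta_t,\eta_t]$, and whether the inner action problem is present.

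First, take the common-knowledge cases (i) and (iii). When $\eta_t$ is common knowledge, I will take the report space to be a singleton, as in the remark following Theorem~\ref{prop:dynamic_revelation}. Then \eqref{eq:nested_IC} is trivially satisfied, the verification correspondence is irrelevant, and the rent terms in Theorem~\ref{thm:verification_virtual_surplus} vanish because $\Lambda_1$ only enters through type heterogeneity. The participation constraint then binds type by type.

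In case (iii), the obedience constraint remains. This leaves exactly the hidden-action benchmark $2B^{\mathrm{HA}}$.

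In case (i), I additionally use one of two routes. If the action is directly assigned, Corollary~\ref{cor:recursive_direct_assignment} makes recursive obedience vacuous. If the action is first-best implementable, the obedience constraint is slack at the first-best control by hypothesis. Either way the designer maximizes $\int\sum\delta^{t-1}(u_t+v_t)$ subject only to technological feasibility, which is $1B$.

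Next, take the private-type cases. For (ii) and (iv), direct assignment removes the inner maximization in $G_t$, so $\widehat V_t^A=V_t^{A,\mathrm{ob}}$, as in the proof of Corollary~\ref{cor:envelope_verification}.

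In case (ii), unrestricted reporting gives locally interior truth. Proposition~\ref{prop:local_envelope} then yields $\kappa_t=0$, and the conditions become the ordinary envelope formula plus integral monotonicity. That is the $2B^{\mathrm{HI}}$ benchmark of Pavan--Segal--Toikka.

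In case (iv), the withholding set $C_t$ is retained, so the wedge $\kappa_t\ge0$ from Proposition~\ref{prop:wedge_sign} may survive. This defines $3B^{\mathrm{HI+V}}$.

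For (v), reporting is unrestricted, so again $\kappa_t=0$ by Proposition~\ref{prop:local_envelope}, while the nested report--action comparison and obedience both remain. This gives $3B^{\mathrm{HI+HA}}$. Case (vi) is the full problem \eqref{eq:full_4B_problem} by definition.

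The main obstacle is to make precise that "reduces to" means an identical constraint set and objective, not merely an inclusion. The delicate cases are (i) with "otherwise first-best implementable" and the claim that unrestricted reporting removes the verification wedge. For the wedge, Proposition~\ref{prop:local_envelope} requires interiority for almost every type. At the endpoint $\overline\eta_t$ truth is still a boundary point even with $C_t=E_t$, but that is a null set, so I will handle it by noting that only a measure-zero set of types is affected.

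For (i), the plan is to argue directly. At the first-best control, some admissible reward schedule makes the recommended action optimal at no extra designer cost. Hence the obedience constraint's shadow value is zero, and the feasible value equals the unconstrained surplus maximum.
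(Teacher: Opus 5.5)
Your proposal is correct and follows the same route as the paper, whose one-paragraph proof removes each margin in turn: direct assignment eliminates post-report reoptimization, common knowledge of $\eta_t$ eliminates reporting and rents (making verification irrelevant), and unrestricted reporting eliminates the one-sided structure. You make each of these steps explicit through the singleton-report remark after Theorem~\ref{prop:dynamic_revelation}, Corollary~\ref{cor:recursive_direct_assignment}, and Propositions~\ref{prop:wedge_sign} and~\ref{prop:local_envelope}. Two small tightenings: under common knowledge the rent terms disappear because the envelope condition is no longer a constraint and participation binds type by type (as you also note), not because of any property of $\Lambda_1$; and your endpoint caveat is unnecessary, since the paper's interiority condition is stated relative to $E_t$ and therefore holds at $\overline\eta_t$ when $C_t=E_t$.
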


\begin{proof}
Each reduction follows by removing the corresponding informational or
action-implementation margin from the general nested problem. Direct
assignment eliminates post-report action reoptimization. Common knowledge of
the scalar type eliminates the reporting and information-rent problem and
makes verification irrelevant. Unrestricted reporting removes the one-sided
verification structure while retaining hidden information. Combining the
remaining margins yields the stated benchmark configurations.
\end{proof}

Corollary~\ref{cor:best_reductions} gives a compact taxonomy of the model. The
two second-best branches isolate the two classical incentive problems:
\(2B^{\mathrm{HI}}\) contains dynamic adverse selection without moral hazard,
whereas \(2B^{\mathrm{HA}}\) contains moral hazard without adverse selection.
The \(3B^{\mathrm{HI+HA}}\) branch is their joint dynamic problem even before
verification is introduced. The full \(4B\) environment then adds the
one-sided verification structure governing which hidden-information reports
are feasible.

The taxonomy should be distinguished from an unconditional welfare ranking
across branches. Removing a genuine action-implementation constraint weakly
expands the feasible mechanism set, so, holding the reporting technology fixed,
eliminating hidden action weakly raises the designer's attainable value.
Likewise, \(1B\) weakly dominates either second-best branch. One-sided
verification, however, changes the feasible deviation domain itself; it should
therefore be treated as a distinct reporting technology rather than
mechanically ordered as an additional resource constraint. The best labels
organize the constraint configurations; value comparisons across different
verification regimes require the corresponding feasible-set comparison.

\subsection{Distortion hierarchy and dynamics}

Section~4.2 shows that hidden information and hidden action enter the
designer's problem through different channels. Dynamic hidden information
changes the transformed objective through the propagated information-rent
term, whereas hidden action restricts the set of controls that can actually be
implemented. We now consider both margins jointly. This makes it possible to
characterize the allocation distortions associated with both branches of the
benchmark map rather than first solving a hidden-information-only problem.

Assume that technological feasibility is historywise separable and that the
scalar-type transition remains exogenous to the current mechanism choice.
Suppressing the payoff-relevant state in this environment, let \(q_t\geq0\)
denote the screening control and let \(\alpha_t\) denote the prescribed
post-report action. The agent's and designer's gross period payoffs on the
truthful-and-obedient path are
\[
u_t(\eta_t,q_t,\alpha_t)
=
\eta_tq_t-\frac{\gamma}{2}(\alpha_t-\bar a)^2,
\qquad
v_t(q_t,\alpha_t)
=
\beta q_t\alpha_t-\frac12q_t^2,
\]
where \(\gamma>0\), \(\beta>0\), and \(\bar a\in(0,1)\). The actual action
\(a_t\) remains privately chosen after the report. The mechanism cannot
condition directly on a latent action, but observable outcomes, current
rewards, and continuation promises may jointly generate a marginal incentive
for action. Let \(\iota_t\) denote this reduced-form marginal incentive and
suppose
\[
|\iota_t|\leq\bar\iota_t.
\]
Conditional on the report and the continuation mechanism, the post-report
action problem therefore takes the reduced form
\[
\max_{a_t}
\left\{
-\frac{\gamma}{2}(a_t-\bar a)^2-\iota_ta_t
\right\}.
\]
Implementing the recommendation \(\alpha_t\) requires
\begin{equation}
\label{eq:action_incentive_requirement}
\iota_t=-\gamma(\alpha_t-\bar a),
\end{equation}
so the available incentive technology imposes
\[
\gamma|\alpha_t-\bar a|\leq\bar\iota_t.
\]
This restriction is the hidden-action implementation margin of Section~4.2
made explicit.

Because \(u_{t,\eta}=q_t\), the dynamic virtual type remains the central
hidden-information object:
\begin{equation}
\label{eq:dynamic_virtual_type}
\eta_t^{\mathrm{vir}}(\eta^t)
:=
\eta_t
-
\Lambda_1(\eta_1)I_{1,t}(\eta^t).
\end{equation}
The truthful-path dynamic virtual surplus is therefore
\[
\Phi_t(\eta^t,q_t,\alpha_t)
=
\eta_t^{\mathrm{vir}}(\eta^t)q_t
+
\beta q_t\alpha_t
-
\frac12q_t^2
-
\frac{\gamma}{2}(\alpha_t-\bar a)^2.
\]
Thus hidden information replaces the current type by its dynamic virtual
counterpart, while hidden action restricts the set of implementable
\((q_t,\alpha_t)\) pairs.

To obtain closed-form controls, suppose
\[
\gamma>\beta^2
\]
and restrict attention to parameters for which the relevant solutions are
interior apart from the bound on marginal action incentives. For any scalar
coefficient \(x\), define
\begin{equation}
\label{eq:Q_H_definitions}
Q_t(x)
:=
\frac{\gamma(x+\beta\bar a)}
{\gamma-\beta^2},
\qquad
H_t(x)
:=
Q_t(x)
-
\frac{\beta}{\gamma}
\bigl[\beta Q_t(x)-\bar\iota_t\bigr]_+,
\end{equation}
where \([z]_+:=\max\{z,0\}\).

The interpretation is immediate. If action implementation is unrestricted,
the pointwise first-order conditions are
\begin{equation}
\label{eq:virtual_allocation_FOC}
q_t=x+\beta\alpha_t,
\qquad
\gamma(\alpha_t-\bar a)=\beta q_t,
\end{equation}
and hence \(q_t=Q_t(x)\). At this unconstrained solution, the required
marginal action incentive has magnitude \(\beta Q_t(x)\). If
\(\beta Q_t(x)>\bar\iota_t\), the desired action cannot be implemented.
The largest implementable recommendation is then
\[
\alpha_t
=
\bar a+\frac{\bar\iota_t}{\gamma},
\]
and reoptimization over \(q_t\) yields \(q_t=H_t(x)\). Hence \(Q_t\)
describes the allocation absent a hidden-action friction, whereas \(H_t\)
incorporates the obedience constraint.

For the reporting margin, maintain the regularity condition that
\(\eta_t^{\mathrm{vir}}(\eta^t)\) is nondecreasing in the current type,
holding the past history fixed. We also maintain the local benchmark assumption
that the current report affects the current control and incentive terms but
does not alter the continuation rule once the next scalar type is realized.
Because the post-report action problem is independent of the true scalar type
conditional on the report, choosing the payment intercept to satisfy the
ordinary envelope formula reduces the downward false-report comparison to
\[
\int_r^\eta
\bigl[q_t(x)-q_t(r)\bigr]\,dx
\geq0,
\qquad
r\leq\eta.
\]
Thus any nondecreasing control schedule satisfies the one-sided reporting
condition. Since both \(Q_t\) and \(H_t\) are increasing in their argument,
the benchmark controls below are monotone whenever
\(\eta_t^{\mathrm{vir}}\) is monotone. The same payment schedule therefore
implements unrestricted reporting when the conventional global
integral-monotonicity condition holds and remains feasible under one-sided
verification. In this benchmark the verification wedge is consequently zero.

\begin{proposition}[Dynamic distortion hierarchy with hidden action]
\label{prop:distortion_dynamics}
Suppose the preceding assumptions hold, the relevant reporting conditions are
satisfied, and \(I_{1,t}(\eta^t)\geq0\). Then:

\begin{enumerate}
\item[(i)]
The four primitive hidden-information and hidden-action benchmark controls are
\[
q_t^{1B}
=
Q_t(\eta_t),
\qquad
q_t^{2B^{\mathrm{HI}}}
=
Q_t(\eta_t^{\mathrm{vir}}),
\]
and
\[
q_t^{2B^{\mathrm{HA}}}
=
H_t(\eta_t),
\qquad
q_t^{3B^{\mathrm{HI+HA}}}
=
H_t(\eta_t^{\mathrm{vir}}).
\]
When one-sided verification is nonbinding as above,
\[
q_t^{3B^{\mathrm{HI+V}}}
=
q_t^{2B^{\mathrm{HI}}},
\qquad
q_t^{4B}
=
q_t^{3B^{\mathrm{HI+HA}}}.
\]

\item[(ii)]
The resulting pointwise allocation hierarchy is
\[
q_t^{1B}
\geq
q_t^{2B^{\mathrm{HI}}}
=
q_t^{3B^{\mathrm{HI+V}}}
\geq
q_t^{4B},
\]
and
\[
q_t^{1B}
\geq
q_t^{2B^{\mathrm{HA}}}
\geq
q_t^{3B^{\mathrm{HI+HA}}}
=
q_t^{4B}.
\]
Moreover,
\[
q_t^{2B^{\mathrm{HI}}}
\geq
q_t^{3B^{\mathrm{HI+HA}}}.
\]
No pointwise ranking between
\(q_t^{2B^{\mathrm{HI}}}\) and \(q_t^{2B^{\mathrm{HA}}}\) is implied.

\item[(iii)]
The hidden-information distortion is
\[
q_t^{1B}
-
q_t^{2B^{\mathrm{HI}}}
=
\frac{\gamma}{\gamma-\beta^2}
\Lambda_1(\eta_1)I_{1,t}(\eta^t),
\]
while the complete-information hidden-action distortion is
\[
q_t^{1B}
-
q_t^{2B^{\mathrm{HA}}}
=
\frac{\beta}{\gamma}
\bigl[
\beta q_t^{1B}-\bar\iota_t
\bigr]_+.
\]
Conditional on hidden information, the additional obedience distortion is
\[
q_t^{2B^{\mathrm{HI}}}
-
q_t^{3B^{\mathrm{HI+HA}}}
=
\frac{\beta}{\gamma}
\bigl[
\beta q_t^{2B^{\mathrm{HI}}}-\bar\iota_t
\bigr]_+.
\]
Hence, when verification is nonbinding,
\begin{equation}
\label{eq:full_distortion_decomposition}
q_t^{1B}
-
q_t^{4B}
=
\frac{\gamma}{\gamma-\beta^2}
\Lambda_1(\eta_1)I_{1,t}(\eta^t)
+
\frac{\beta}{\gamma}
\bigl[
\beta q_t^{2B^{\mathrm{HI}}}-\bar\iota_t
\bigr]_+.
\end{equation}

\item[(iv)]
Holding \(\eta_1\), \(\eta_t\), and the remaining primitives fixed, away from
the kink at
\(\beta Q_t(\eta_t^{\mathrm{vir}})=\bar\iota_t\),
\[
\frac{\partial q_t^{3B^{\mathrm{HI+HA}}}}
{\partial I_{1,t}}
=
-\Lambda_1(\eta_1)
\begin{cases}
\dfrac{\gamma}{\gamma-\beta^2},
&
\beta Q_t(\eta_t^{\mathrm{vir}})<\bar\iota_t,
\\[1.2ex]
1,
&
\beta Q_t(\eta_t^{\mathrm{vir}})>\bar\iota_t.
\end{cases}
\]
When verification is nonbinding, the same expression applies to
\(q_t^{4B}\).
\end{enumerate}
\end{proposition}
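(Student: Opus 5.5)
The plan is to reduce every benchmark to one pointwise quadratic program in \((q_t,\alpha_t)\), indexed by a scalar coefficient \(x\) and an optional obedience bound. Under historywise separable feasibility, an exogenous type transition, and the virtual-surplus representation of Theorem~\ref{thm:verification_virtual_surplus}, each branch maximizes historywise
\[
x\,q_t+\beta q_t\alpha_t-\tfrac12q_t^2-\tfrac{\gamma}{2}(\alpha_t-\bar a)^2 .
\]
Here \(x=\eta_t\) when there is no hidden information (1B, \(2B^{\mathrm{HA}}\)), and \(x=\eta_t^{\mathrm{vir}}\) when there is (\(2B^{\mathrm{HI}}\), \(3B^{\mathrm{HI+HA}}\)). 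With hidden action, the problem carries the additional constraint \(\gamma|\alpha_t-\bar a|\le\bar\iota_t\) from \eqref{eq:action_incentive_requirement}.

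First I solve this program. Since \(\gamma>\beta^2\), the Hessian has determinant \(\gamma-\beta^2>0\), so the objective is strictly concave. The unconstrained first-order conditions \eqref{eq:virtual_allocation_FOC} give \(q=Q_t(x)\) and \(\gamma(\alpha-\bar a)=\beta Q_t(x)\). Under the interiority maintenance, \(Q_t(x)\ge0\), so only the upper bound \(\alpha\le\bar a+\bar\iota_t/\gamma\) can bind. When it binds, I fix \(\alpha\) at that bound and maximize over \(q\), which gives \(q=x+\beta\bar a+\beta\bar\iota_t/\gamma\). The key identity is
\[
Q_t(x)-\tfrac{\beta}{\gamma}\bigl(\beta Q_t(x)-\bar\iota_t\bigr)=x+\beta\bar a+\tfrac{\beta\bar\iota_t}{\gamma},
\]
which follows from \(Q_t(x)(1-\beta^2/\gamma)=x+\beta\bar a\). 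This identity shows the constrained solution equals \(H_t(x)\). Concavity together with KKT conditions confirms global optimality in both cases.

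Next I deal with the reporting side. By the monotonicity discussion preceding the proposition, \(Q_t\) and \(H_t\) are nondecreasing in \(x\). Therefore, if \(\eta_t^{\mathrm{vir}}\) is monotone, the controls satisfy the one-sided integral condition, and the ordinary envelope payment makes \(\kappa_t=0\). Corollary~\ref{cor:relaxed_optimality} (with \(V_1^A(\underline\eta_1)=0\)) then certifies that the relaxed candidates solve each branch. Because verification is nonbinding, the verification-adjusted problems coincide with their unrestricted counterparts, which gives the \(3B^{\mathrm{HI+V}}\) and \(4B\) equalities in (i).

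For (ii), I combine three facts:
\begin{itemize}
\item \(I_{1,t}\ge0\) and \(\Lambda_1>0\) give \(\eta_t^{\mathrm{vir}}\le\eta_t\);
\item \(Q_t\) and \(H_t\) are monotone in \(x\);
\item \(H_t\le Q_t\) pointwise.
\end{itemize}
Together these yield every stated inequality. For non-rankability between \(2B^{\mathrm{HI}}\) and \(2B^{\mathrm{HA}}\), I will give two parameter cases. If \(\bar\iota_t\) is large, \(2B^{\mathrm{HA}}=1B\ge2B^{\mathrm{HI}}\). If \(\Lambda_1I_{1,t}\) is small while the bound binds, \(2B^{\mathrm{HI}}\approx1B>2B^{\mathrm{HA}}\).

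Part (iii) is direct subtraction. The first formula is \(Q_t(\eta_t)-Q_t(\eta_t^{\mathrm{vir}})=\frac{\gamma}{\gamma-\beta^2}\Lambda_1I_{1,t}\), by linearity of \(Q_t\). The other two formulas come from the definition of \(H_t\). Equation \eqref{eq:full_distortion_decomposition} follows by telescoping through \(q_t^{2B^{\mathrm{HI}}}\).

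For (iv), I differentiate \(H_t(\eta_t-\Lambda_1I_{1,t})\) in \(I_{1,t}\). In the slack regime the slope is \(Q_t'=\gamma/(\gamma-\beta^2)\). In the binding regime, the slope is \(Q_t'(1-\beta^2/\gamma)=1\). Both are multiplied by \(-\Lambda_1\).

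I expect the main obstacle to be justifying that the pointwise relaxed program is really the branch problem under hidden action. Specifically, I need to show that the bound \(|\iota_t|\le\bar\iota_t\) is the only restriction obedience imposes, and that the report-independence of the action problem keeps \(\kappa_t=0\). I will lean on the stated reduced-form assumptions and on the local benchmark assumption that reports do not alter continuation rules. Everything else is algebra.
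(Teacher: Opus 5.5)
Your proposal is correct and follows essentially the same route as the paper's proof. You solve a single pointwise quadratic program in \((q_t,\alpha_t)\), indexed by \(x\in\{\eta_t,\eta_t^{\mathrm{vir}}\}\): it gives \(Q_t(x)\) when the obedience bound is slack and \(H_t(x)\) after fixing \(\alpha_t=\bar a+\bar\iota_t/\gamma\); the rest is monotonicity of \(Q_t,H_t\), the bound \(H_t\le Q_t\), direct subtraction, and differentiation of \(H_t(\eta_t-\Lambda_1 I_{1,t})\). Your KKT multiplier check and explicit parameter cases for the non-ranking of \(2B^{\mathrm{HI}}\) and \(2B^{\mathrm{HA}}\) are small additions the paper leaves implicit; note only that Corollary~\ref{cor:relaxed_optimality} literally concerns the unconstrained relaxed candidate, so for the hidden-action branches you are reusing its upper-bound-plus-attainment argument with the obedience bound added to the relaxed class.
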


\begin{proof}
For any coefficient \(x\), the unconstrained pointwise objective is
\[
xq+\beta q\alpha-\frac12q^2
-\frac{\gamma}{2}(\alpha-\bar a)^2.
\]
Strict concavity follows from \(\gamma>\beta^2\). Its first-order conditions
are \eqref{eq:virtual_allocation_FOC}, whose unique solution is
\[
q=Q_t(x),
\qquad
\alpha
=
\bar a+\frac{\beta}{\gamma}Q_t(x).
\]
By \eqref{eq:action_incentive_requirement}, implementing this recommendation
requires marginal incentive
\[
|\iota_t|
=
\beta Q_t(x).
\]
If \(\beta Q_t(x)\leq\bar\iota_t\), the hidden-action constraint is
nonbinding and the control remains \(Q_t(x)\). If
\(\beta Q_t(x)>\bar\iota_t\), the largest implementable recommendation is
\[
\alpha
=
\bar a+\frac{\bar\iota_t}{\gamma}.
\]
Conditional on this recommendation, the first-order condition in \(q\) gives
\[
q
=
x+\beta\bar a+\frac{\beta\bar\iota_t}{\gamma}
=
Q_t(x)
-
\frac{\beta}{\gamma}
\bigl[\beta Q_t(x)-\bar\iota_t\bigr],
\]
which is \(H_t(x)\). This establishes the two control mappings in
\eqref{eq:Q_H_definitions}.

Setting \(x=\eta_t\) gives the complete-information benchmarks, while setting
\(x=\eta_t^{\mathrm{vir}}\) gives the hidden-information benchmarks. Under the
maintained reporting regularity, the resulting controls are nondecreasing in
the current type, so the reporting conditions are satisfied. One-sided
verification is then nonbinding and the verification-constrained branches
coincide with their unrestricted-report counterparts.

Since
\[
\eta_t-\eta_t^{\mathrm{vir}}
=
\Lambda_1(\eta_1)I_{1,t}(\eta^t)
\geq0,
\]
and both \(Q_t\) and \(H_t\) are increasing,
the allocation orderings in part~(ii) follow. Moreover,
\(H_t(x)\leq Q_t(x)\) for every \(x\), which gives
\(q_t^{2B^{\mathrm{HI}}}\geq
q_t^{3B^{\mathrm{HI+HA}}}\).

Because \(Q_t\) is affine with slope
\(\gamma/(\gamma-\beta^2)\),
\[
Q_t(\eta_t)-Q_t(\eta_t^{\mathrm{vir}})
=
\frac{\gamma}{\gamma-\beta^2}
\Lambda_1(\eta_1)I_{1,t}(\eta^t).
\]
Also,
\[
Q_t(x)-H_t(x)
=
\frac{\beta}{\gamma}
\bigl[\beta Q_t(x)-\bar\iota_t\bigr]_+.
\]
Evaluating this expression at \(x=\eta_t\) and
\(x=\eta_t^{\mathrm{vir}}\) gives the two hidden-action distortion formulas.
Adding
\[
q_t^{1B}-q_t^{4B}
=
\bigl(q_t^{1B}-q_t^{2B^{\mathrm{HI}}}\bigr)
+
\bigl(q_t^{2B^{\mathrm{HI}}}-q_t^{4B}\bigr)
\]
gives \eqref{eq:full_distortion_decomposition}.

Finally,
\[
\frac{\partial\eta_t^{\mathrm{vir}}}{\partial I_{1,t}}
=
-\Lambda_1(\eta_1).
\]
When the action constraint is slack,
\(H_t'=Q_t'=\gamma/(\gamma-\beta^2)\). When it binds,
\[
H_t'(x)
=
Q_t'(x)
\left(1-\frac{\beta^2}{\gamma}\right)
=
1.
\]
The derivative in part~(iv) follows.
\end{proof}

Proposition~\ref{prop:distortion_dynamics} gives an allocation counterpart to
the branched benchmark structure introduced at the beginning of Section~4.
Dynamic hidden information and hidden action generate distinct distortions.
The first replaces the actual type by the dynamic virtual type; the second
projects the corresponding relaxed control onto the set supportable by the
available action incentives. Their interaction is therefore sequential in the
closed-form representation but joint in the underlying mechanism: the
hidden-action distortion is evaluated at the control already adjusted for
dynamic information rents.

A useful special case makes the time profile explicit.

\begin{corollary}[AR(1) distortion decomposition]
\label{cor:type_linear_distortion}
Suppose
\[
\eta_{t+1}
=
\rho\eta_t+\varepsilon_{t+1},
\qquad
\rho\in(0,1),
\]
and \(\eta_1\) is uniformly distributed on \([0,1]\). Then
\[
I_{1,t}
=
\rho^{\,t-1},
\qquad
\Lambda_1(\eta_1)
=
1-\eta_1,
\]
and
\[
\eta_t^{\mathrm{vir}}
=
\eta_t-(1-\eta_1)\rho^{\,t-1}.
\]
The six benchmark controls are therefore
\[
\begin{aligned}
q_t^{1B}
&=
Q_t(\eta_t),
&
q_t^{2B^{\mathrm{HI}}}
&=
Q_t\!\left(
\eta_t-(1-\eta_1)\rho^{\,t-1}
\right),
&
q_t^{3B^{\mathrm{HI+V}}}
&=
q_t^{2B^{\mathrm{HI}}},
\\
q_t^{2B^{\mathrm{HA}}}
&=
H_t(\eta_t),
&
q_t^{3B^{\mathrm{HI+HA}}}
&=
H_t\!\left(
\eta_t-(1-\eta_1)\rho^{\,t-1}
\right),
&
q_t^{4B}
&=
q_t^{3B^{\mathrm{HI+HA}}}.
\end{aligned}
\]
In particular,
\[
q_t^{1B}-q_t^{2B^{\mathrm{HI}}}
=
\frac{\gamma}{\gamma-\beta^2}
(1-\eta_1)\rho^{\,t-1},
\]
\[
q_t^{1B}-q_t^{2B^{\mathrm{HA}}}
=
\frac{\beta}{\gamma}
\bigl[
\beta q_t^{1B}-\bar\iota_t
\bigr]_+,
\]
and
\[
q_t^{1B}-q_t^{4B}
=
\frac{\gamma}{\gamma-\beta^2}
(1-\eta_1)\rho^{\,t-1}
+
\frac{\beta}{\gamma}
\bigl[
\beta q_t^{2B^{\mathrm{HI}}}-\bar\iota_t
\bigr]_+.
\]
Thus the propagated hidden-information component decays geometrically with the
horizon, holding the current type and the remaining primitives fixed, whereas
the hidden-action component depends on whether the required action incentive
exceeds the available bound. The total \(4B\) distortion therefore need not
decline monotonically over calendar time.
\end{corollary}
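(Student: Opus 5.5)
The plan is to treat the corollary as a specialization of Proposition~\ref{prop:distortion_dynamics}. Only the two primitive objects entering the dynamic virtual type need to be computed; everything else then follows by substitution.

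First, the impulse response. With \(\eta_{t+1}=\rho\eta_t+\varepsilon_{t+1}\), I would write the canonical representation \(Z_{1,t}(\eta_1,\varepsilon_2^t)=\rho^{t-1}\eta_1+\sum_{s=2}^t\rho^{t-s}\varepsilon_s\). This is legitimate because the shocks are independent of \(\eta_1\) and can be represented as transforms of independent uniforms. Its derivative in \(\eta_1\) is the constant \(\rho^{t-1}\), so the integral defining \(I_{1,t}\) equals \(\rho^{t-1}\), and it is nonnegative since \(\rho\in(0,1)\). This verifies the hypothesis \(I_{1,t}\ge0\) of Proposition~\ref{prop:distortion_dynamics}.

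Second, the hazard rate. Uniform \(\eta_1\) on \([0,1]\) gives \(f_1=1\) and \(1-F_1=1-\eta_1\), so \eqref{eq:initial_hazard} yields \(\Lambda_1=1-\eta_1\). Substituting both into \eqref{eq:dynamic_virtual_type} gives \(\eta_t^{\mathrm{vir}}=\eta_t-(1-\eta_1)\rho^{t-1}\).

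I also need the maintained reporting regularity: \(\eta_t^{\mathrm{vir}}\) must be nondecreasing in the current type with the past fixed. For \(t\ge2\) the correction depends only on \(\eta_1\), so \(\eta_t^{\mathrm{vir}}\) is strictly increasing in \(\eta_t\). For \(t=1\), \(\eta_1^{\mathrm{vir}}=2\eta_1-1\) is also increasing. Hence the monotonicity argument preceding Proposition~\ref{prop:distortion_dynamics} applies, and verification is nonbinding.

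Third, I would read off the results. The six controls follow from part (i) of Proposition~\ref{prop:distortion_dynamics}, with \(x=\eta_t\) or \(x=\eta_t^{\mathrm{vir}}\) inserted into \(Q_t\) and \(H_t\). The three distortion formulas follow from part (iii) and \eqref{eq:full_distortion_decomposition}, after substituting \(\Lambda_1I_{1,t}=(1-\eta_1)\rho^{t-1}\).

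The qualitative claims come next. The informational term \(\frac{\gamma}{\gamma-\beta^2}(1-\eta_1)\rho^{t-1}\) decays geometrically in \(t\) for fixed \(\eta_1\). The hidden-action term \(\frac{\beta}{\gamma}[\beta q_t^{2B^{\mathrm{HI}}}-\bar\iota_t]_+\) depends on \(\eta_t\), on \(\bar\iota_t\), and on the informational term itself.

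The only step needing more than bookkeeping is the claim that the total \(4B\) distortion "need not" decline monotonically. I would establish it by a simple example rather than a general argument. First hold \(\eta_t\) and \(\bar\iota_t\) fixed. Because \(q_t^{2B^{\mathrm{HI}}}\) rises as \(\rho^{t-1}\) falls, the positive part can switch on at some date. Between \(t\) and \(t+1\) the informational term then falls by \(\frac{\gamma}{\gamma-\beta^2}(1-\eta_1)\rho^{t-1}(1-\rho)\). Meanwhile the action term rises by \(\frac{\beta^2}{\gamma}\) times the increase in \(q^{2B^{\mathrm{HI}}}\), which is the same amount times \(\frac{\beta^2}{\gamma}<1\).

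So this channel alone yields a net decrease, consistent with part (iv), where the binding-regime slope is \(1\). The example must therefore rely on variation in the primitives over calendar time. I would choose \(\bar\iota_{t+1}<\bar\iota_t\) such that \(\beta q_{t+1}^{2B^{\mathrm{HI}}}-\bar\iota_{t+1}\) exceeds the previous gap by more than the geometric decay, or equivalently let \(\eta_t\) drift upward along a realized path. This exhibits a date at which \(q_t^{1B}-q_t^{4B}\) increases, which is all the statement asserts.
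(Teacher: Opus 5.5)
Your proposal is correct and follows the paper's proof: the paper computes $I_{1,t}=\rho^{t-1}$ and $\Lambda_1(\eta_1)=1-\eta_1$ and substitutes them into \eqref{eq:dynamic_virtual_type} and Proposition~\ref{prop:distortion_dynamics}. Your extra steps go beyond the paper's one-word ``therefore'': you check explicitly that the virtual type is monotone, and you note that, by part~(iv), the total distortion actually falls when $\eta_t$ and $\bar\iota_t$ are held fixed, so the ``need not decline'' claim must come from date variation in $\bar\iota_t$ or in the realized type path.
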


\begin{proof}
Under the AR(1) transition,
\(I_{1,t}=\rho^{\,t-1}\), while uniformity of the initial type gives
\(\Lambda_1(\eta_1)=1-\eta_1\). Substitution into
\eqref{eq:dynamic_virtual_type} and
Proposition~\ref{prop:distortion_dynamics} yields the stated controls and
distortion formulas.
\end{proof}

The pure linear-screening benchmark is nested as the special case
\(\beta=0\). Then \(Q_t(x)=H_t(x)=x\), the hidden-action margin disappears,
and Corollary~\ref{cor:type_linear_distortion} reduces to
\[
q_t^{1B}-q_t^{2B^{\mathrm{HI}}}
=
\Lambda_1(\eta_1)I_{1,t}(\eta^t),
\]
which is the informational-distortion formula obtained from the conventional
dynamic-screening problem.

The pointwise ordering above is specific to the maintained
linear-quadratic implementation technology. Its purpose is to make the
branched benchmark map operational at the allocation level. In the general
environment of Section~4.2, the dynamic virtual type continues to summarize
the propagated hidden-information distortion, while hidden action operates
through the nested implementation constraint and one-sided verification may
generate a nonzero verification wedge. There is therefore no general
pointwise ordering of controls across different verification technologies.

At the level of designer values, comparisons remain governed by feasible-set
inclusion. Holding the reporting technology fixed, adding the hidden-action
implementation restriction weakly lowers attainable value, and introducing
private information into the complete-information hidden-action problem does
the same. In the present benchmark, verification is nonbinding, so
\[
\Pi^{3B^{\mathrm{HI+V}}}
=
\Pi^{2B^{\mathrm{HI}}},
\qquad
\Pi^{4B}
=
\Pi^{3B^{\mathrm{HI+HA}}}.
\]
No general ranking is implied between
\(\Pi^{2B^{\mathrm{HI}}}\) and
\(\Pi^{2B^{\mathrm{HA}}}\), since the two branches remove different primitive
incentive problems. The branch-specific value comparisons can be summarized
as follows:

\begin{center}
\setlength{\unitlength}{1mm}
\begin{picture}(88,18)(0,-9)
\linethickness{0.4pt}

\put(4,0){\makebox(0,0){\(\Pi^{1B}\)}}

\put(26,5.5){\makebox(0,0){\(\Pi^{2B^{\mathrm{HI}}}\)}}
\put(62,5.5){\makebox(0,0){\(\Pi^{3B^{\mathrm{HI+V}}}\)}}

\put(26,-5.5){\makebox(0,0){\(\Pi^{2B^{\mathrm{HA}}}\)}}
\put(62,-5.5){\makebox(0,0){\(\Pi^{3B^{\mathrm{HI+HA}}}\)}}

\put(84,0){\makebox(0,0){\(\Pi^{4B}\)}}

\put(8,0.6){\line(3,1){12}}
\put(15.5,4.5){\makebox(0,0){\(\geq\)}}

\put(34,5.5){\line(1,0){18}}
\put(43,8.1){\makebox(0,0){\(\geq\)}}

\put(71,4.9){\line(3,-1){10}}
\put(78.5,4.45){\makebox(0,0){\(\geq\)}}

\put(8,-0.6){\line(3,-1){12}}
\put(15.5,-4.5){\makebox(0,0){\(\geq\)}}

\put(34,-5.5){\line(1,0){18}}
\put(43,-8.1){\makebox(0,0){\(\geq\)}}

\put(71,-4.9){\line(3,1){10}}
\put(78.5,-4.45){\makebox(0,0){\(\geq\)}}

\end{picture}
\end{center}

The equalities that arise in the present linear-quadratic benchmark should not
be read as replacing the weak inequalities represented in the figure. The
figure records the general feasible-set ordering across benchmark
configurations. In this benchmark, the verification constraint is nonbinding,
so the corresponding upper- and lower-right inequalities collapse to
equalities. With a binding verification margin, those links can instead be
strict.

The subsection therefore yields a single distortion decomposition for the
full benchmark structure. Dynamic hidden information enters through
\(\eta_t^{\mathrm{vir}}\), hidden action enters through the implementability
mapping \(H_t\), and one-sided verification remains a distinct reporting
margin that is nonbinding in the present closed-form benchmark but need not be
so in the general model.

\section{Dynamic Monitoring and Information Design}

Monitoring generates evidence that can be used to condition future mechanism
choices. In AI applications, monitoring may include evaluations, audits, tests,
benchmark measurements, and other procedures that generate evidence about
otherwise imperfectly observed capabilities, behavior, or payoff-relevant
states. We use \textbf{monitoring} as the formal term for this information
technology. Monitoring is distinct from verification. Verification restricts
which claims or reports the agent can substantiate, whereas monitoring generates
evidence that may enter the public continuation state and influence future
authorization, rewards, deployment, and other mechanism choices.

Information has two conceptually distinct margins: its \textbf{production} and
its \textbf{use}. The baseline environment takes the technologies that generate
information as primitive. The agent's information capability \(\pi_t\)
determines the private signal it receives before choosing its action, while the
primitive monitoring technology determines the raw evidence available to the
mechanism after current behavior. The realized action \(a_t\) need not be
publicly observed, but it may affect the distribution of monitoring evidence.
In this section, these information-production technologies remain fixed. What
becomes endogenous is information use: the mechanism chooses how generated
monitoring evidence is processed, disclosed, and incorporated into future
beliefs, permissions, rewards, and other continuation decisions. Endogenizing
information production itself would introduce learning and experimentation and
is reserved for the broader system-level analysis, where information
production, information use, posterior evolution, incentives, and future
feasibility are jointly determined \citep{Tian2026}.

The analysis combines two established strands of theory. On the dynamic
mechanism-design side, continuation values summarize future incentives, while
current choices affect those values through the evolution of the state and
private information, as in \citet{PavanSegalToikka2014} and
\citet{Pavan2026}. On the information-design side, Bayesian persuasion studies
the choice of information subject to Bayes plausibility and receiver
optimality \citep{KamenicaGentzkow2011}, while \citet{BergemannMorris2019}
formulate information design through recommendation and obedience.
\citet{DovalEly2020} provide a sequential information-design benchmark in
which information and extensive-form structure interact over time.
\citet{BergemannHeumannMorris2026} integrate information design and mechanism
design within a common static screening framework, while
\citet{BergemannKohMorris2026} emphasize that mechanisms for AI agents must
support obedience jointly with truthful revelation of privately known
characteristics. Their evaluation structure concerns verification of type
claims; the monitoring technology introduced here is a distinct object that
generates post-action evidence for use in the continuation mechanism.

The present environment links dynamic mechanism design and information design
through the continuation mechanism. Information use changes not only what
becomes publicly known, but also the public state on which future
authorization, rewards, deployment, and other mechanism choices may depend. A
public monitoring message can therefore change the continuation payoff from a
current report or post-report action. When \(a_t\) is hidden, the primitive
monitoring technology may depend on that action even though the subsequent
information-use policy cannot condition on it directly. Information design and
incentive design are consequently connected through continuation values. We
first characterize which public information can be generated from a fixed
monitoring technology, then incorporate information use into the recursive
mechanism problem, and finally study monitoring capacity, selective disclosure,
and the conditions under which information use can be separated from current
incentive design.

\subsection{Monitoring Capacity and Feasible Information Use}

Let \(x_t\in\mathcal X_t\) denote the realization of the
\textbf{raw monitoring signal}, generated after the current action and public
outcome are realized. Write $
\xi_t:=(z_t,\widehat\tau_t,y_t,(D_t,\alpha_t),o_t)$
for the current publicly available mechanism context. Conditional on this
public context and the hidden current variables,
\[
\widetilde x_t
\sim
Q_t^{\mathrm{mon}}
\bigl(
\cdot
\mid
\xi_t,\theta_t,\tau_t,a_t
\bigr),
\]
where \(Q_t^{\mathrm{mon}}\) is a fixed and commonly known stochastic kernel
\[
Q_t^{\mathrm{mon}}:
Z_t\times\Theta_t\times\mathcal T_t\times\mathcal T_t\times
Y_t\times\mathcal L_t\times\mathcal A_t\times\mathcal O_t
\to\Delta(\mathcal X_t).
\]

Conditional on a fixed observable context
\(\xi_t\), the kernel \(Q_t^{\mathrm{mon}}\) is a statistical experiment
about the hidden current variables \((\theta_t,\tau_t,a_t)\). We use
\textbf{monitoring capacity} to refer to the informativeness of this primitive
experiment. Thus monitoring may reveal information about the payoff-relevant
state, the agent's private characteristics, its privately chosen action, or
some combination of these objects. In particular applications, the monitored
object may be a specific capability, behavioral property, or performance
measure. Fixing the monitoring technology does not require the monitoring
signal to be independent of behavior. Reports, mechanism choices, actions,
and outcomes may affect the distribution of the evidence generated. What is
fixed is the kernel that maps the current environment into that evidence.

Let \(\mathcal E_t^{\mathrm{pub}}\) denote the measurable space of
\textbf{public monitoring messages}. An \textbf{information-use policy} is a
stochastic kernel \(\zeta_t(\cdot\mid\xi_t,x_t)\) from the current observable
context and raw monitoring signal into
\(\Delta(\mathcal E_t^{\mathrm{pub}})\). After observing \(x_t\), the
mechanism generates $
\widetilde e_t
\sim
\zeta_t
\bigl(
\cdot
\mid
\xi_t,x_t
\bigr).$

Thus \(Q_t^{\mathrm{mon}}\) determines what monitoring evidence is available,
whereas \(\zeta_t\) determines how that evidence is used publicly. The
primitive monitoring kernel may depend on the hidden variables
\((\theta_t,\tau_t,a_t)\), since they affect the distribution of the evidence.
The information-use policy cannot condition directly on those hidden
variables; it conditions only on information available to the mechanism,
including the realized raw monitoring signal. In the observable-action special
case, \(a_t\) can instead be included in the public context.

Because \(e_t\) is public, it enters the public history from date \(t+1\)
onward. We append it to the publicly observed date-\(t\) record and continue
to denote the resulting history by \(h_{t+1}\in H_{t+1}\). Under the public
history convention of Section~3, \(h_{t+1}\) therefore records the date-\(t\)
communication, realized mechanism choices, reward, public outcome, and public
monitoring message, together with all earlier public information. The realized
action is not a separate component of the public history unless it is publicly
observed; information about a hidden action may instead be conveyed through
the public outcome or monitoring message. The sufficient-state representation
of Section~3.4 applies to this enriched history:
\[
\chi_{t+1}:H_{t+1}\to Z_{t+1},
\qquad
z_{t+1}=\chi_{t+1}(h_{t+1}).
\]
The sufficient public state may therefore include monitoring-induced beliefs
or statistics, continuation promises, future authorization, or any other
public variables needed to make the continuation problem Markov.

The primitive monitoring kernel and the information-use policy together induce
a \textbf{public experiment}. For every measurable
\(B\subseteq\mathcal E_t^{\mathrm{pub}}\), define
\[
Q_t^\zeta
\bigl(
B\mid\xi_t,\theta_t,\tau_t,a_t
\bigr)
:=
\int_{\mathcal X_t}
\zeta_t
\bigl(
B\mid\xi_t,x
\bigr)
Q_t^{\mathrm{mon}}
\bigl(
dx\mid\xi_t,\theta_t,\tau_t,a_t
\bigr).
\]
The current environment
\((\xi_t,\theta_t,\tau_t,a_t)\) determines the distribution of raw monitoring
evidence through the fixed primitive kernel \(Q_t^{\mathrm{mon}}\). Holding
this monitoring technology fixed, the information-use policy \(\zeta_t\)
determines how the realized raw signal is transformed into a public message.
Conditional on a fixed current context \(\xi_t\), \(Q_t^{\mathrm{mon}}\) and
\(Q_t^\zeta\) are therefore experiments about the same hidden variables
\((\theta_t,\tau_t,a_t)\), with signal spaces \(\mathcal X_t\) and
\(\mathcal E_t^{\mathrm{pub}}\), respectively. In kernel-composition notation,
\(Q_t^\zeta=Q_t^{\mathrm{mon}}\zeta_t\). We use the Blackwell order
\(\succeq_B\) in this garbling sense.

\begin{definition}[Information feasibility]
\label{def:information_feasibility}
For a given current observable context \(\xi_t\), a public experiment \(Q_t'\)
is \textbf{information feasible} from the primitive monitoring technology
\(Q_t^{\mathrm{mon}}\) if there exists an admissible information-use policy
\(\zeta_t\) such that
\[
Q_t'
=
Q_t^{\mathrm{mon}}\zeta_t,
\qquad\text{or equivalently}\qquad
Q_t^{\mathrm{mon}}\succeq_B Q_t'.
\]
\end{definition}

\begin{proposition}[Feasible monitoring information]
\label{prop:feasible_information}
For a given current observable context \(\xi_t\):
\begin{enumerate}
\item[(i)] \textbf{Blackwell feasibility.}
The information-feasible public experiments are exactly the Blackwell
garblings of \(Q_t^{\mathrm{mon}}\).

\item[(ii)] \textbf{Bayes plausibility.}
For any induced prior over the hidden current variables
\((\theta_t,\tau_t,a_t)\) conditional on \(\xi_t\), every information-feasible
public experiment induces a Bayes-plausible distribution of posterior beliefs:
the average posterior equals the prior. The converse need not hold.
\end{enumerate}
\end{proposition}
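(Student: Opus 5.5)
Part~(i) is essentially definitional once Definition~\ref{def:information_feasibility} is unpacked, so the plan is to prove two inclusions with the context \(\xi_t\) held fixed throughout. First, fix an admissible information-use policy \(\zeta_t\). By construction, \(Q_t^\zeta=Q_t^{\mathrm{mon}}\zeta_t\) is obtained by composing the primitive experiment with a Markov kernel \(\zeta_t(\cdot\mid\xi_t,x)\) that depends only on the raw signal (and the fixed context). It does not depend on the hidden variables \((\theta_t,\tau_t,a_t)\). This is precisely the garbling definition of \(Q_t^{\mathrm{mon}}\succeq_B Q_t^\zeta\).

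Conversely, suppose \(Q_t'\) is a garbling of \(Q_t^{\mathrm{mon}}\) with signal space \(\mathcal E_t^{\mathrm{pub}}\). Then, by definition of the order \(\succeq_B\) used here, there is a kernel \(g:\mathcal X_t\to\Delta(\mathcal E_t^{\mathrm{pub}})\) with \(Q_t'=Q_t^{\mathrm{mon}}g\). Setting \(\zeta_t(\cdot\mid\xi_t,x):=g(\cdot\mid x)\) defines an information-use policy that conditions only on available information. The only point to check is measurability/admissibility, which is covered by the maintained measurable structure. The main (minor) obstacle is that the paper uses \(\succeq_B\) "in this garbling sense." I will state explicitly that no appeal to Blackwell's theorem (the equivalence with the decision-value or convex order) is needed. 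If one wanted the statistical version on general spaces, a standard-Borel assumption would be required, and I will note this in a remark rather than prove it.

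For part~(ii), fix any prior \(\mu\) on \((\theta_t,\tau_t,a_t)\) conditional on \(\xi_t\), and fix a feasible \(Q_t^\zeta\). Form the joint law \(\mu\otimes Q_t^\zeta\) on hidden variables \(\times\,\mathcal E_t^{\mathrm{pub}}\). Disintegrate it, assuming the spaces admit regular conditional probabilities, into the message marginal \(m\) and the posterior kernel \(\mu_e\). Then for every measurable set \(B\) of hidden variables, the tower property gives \(\int\mu_e(B)\,m(de)=\mu(B)\), which is Bayes plausibility.

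For the failure of the converse, I will give a two-point example. Take a binary hidden variable with a uniform prior and an uninformative primitive monitor, so that \(Q_t^{\mathrm{mon}}\) does not depend on the hidden variables. Then every garbling is uninformative and induces only the degenerate posterior distribution at the prior. By contrast, the Bayes-plausible distribution placing mass \(1/2\) on each degenerate posterior cannot be generated. This shows that Bayes plausibility is necessary but not sufficient once monitoring capacity is fixed, and that Blackwell feasibility is the binding restriction.
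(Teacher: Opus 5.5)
Your proposal is correct and follows the paper's proof essentially step for step: part~(i) unpacks Definition~\ref{def:information_feasibility} in both directions, and part~(ii) uses iterated expectations together with the same uninformative-monitor counterexample to the converse. Your explicit remark that the disintegration into posteriors requires regular conditional probabilities adds a small precision that the paper leaves implicit.
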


\begin{proof}
Part~(i) follows directly from the definition. Every information-use policy
\(\zeta_t\) stochastically post-processes the raw monitoring signal and
therefore generates a Blackwell garbling of \(Q_t^{\mathrm{mon}}\).
Conversely, if \(Q_t^{\mathrm{mon}}\succeq_B Q_t'\), the definition of the
Blackwell order provides an information-use policy \(\zeta_t\) such that
\(Q_t'=Q_t^{\mathrm{mon}}\zeta_t\).

For part~(ii), Bayes plausibility follows from iterated expectations: averaging
the posterior induced by the public monitoring message reproduces the
conditional prior. The converse need not hold because Bayes plausibility
restricts only the mean posterior, whereas information feasibility requires
the entire posterior distribution to be generated by a garbling of
\(Q_t^{\mathrm{mon}}\). For example, if \(Q_t^{\mathrm{mon}}\) is
uninformative about \((\theta_t,\tau_t,a_t)\) conditional on \(\xi_t\),
every feasible garbling is also uninformative, even though a nondegenerate
conditional prior admits informative Bayes-plausible posterior splittings.
\end{proof}

Proposition~\ref{prop:feasible_information} separates \textbf{monitoring
capacity} from \textbf{information use}. The primitive experiment
\(Q_t^{\mathrm{mon}}\) determines the informational frontier available to the
mechanism, while \(\zeta_t\) selects how much of that information enters the
public continuation state. Bayes plausibility is therefore necessary but need
not characterize the feasible information set when monitoring capacity is
limited. A highly informative monitoring technology may optimally be used only
through a coarser public experiment.

\subsection{Recursive Joint Design of Information and Incentives}

We now incorporate information use into the recursive program of Section~3.4.
The current direct outcome rule \(\psi_t\) determines the report-contingent
mechanism response, authorization, recommended plan, and reward schedule, while
the information-use policy \(\zeta_t\) determines how raw monitoring evidence
is transformed into public information and hence affects the next public state.
We maintain the full-commitment assumption of the baseline model: at date \(0\),
the designer commits to both the mechanism and the information-use policy.
Action feasibility and dynamic incentive compatibility retain their definitions
from Section~3. A privately chosen action may affect the distribution of raw
monitoring evidence even though \(\zeta_t\) cannot condition on that action
directly.

To parallel the admissible-mechanism correspondence of Section~3.4, let
\(\Psi_t^{I}(z_t)\) denote the correspondence of admissible current
mechanism--information pairs \((\psi_t,\zeta_t)\) at public state \(z_t\).
A pair belongs to \(\Psi_t^{I}(z_t)\) if \(\zeta_t\) is information feasible
from \(Q_t^{\mathrm{mon}}\) and the pair can be extended by continuation
mechanism and information-use rules such that action feasibility, the recursive
incentive conditions of Proposition~\ref{prop:recursive_DIC}, and all other
restrictions maintained by the design problem hold after every state.
We maintain the well-posedness and, when \(T=\infty\), continuity-at-infinity
conditions of Section~3.4.

\begin{theorem}[Recursive joint design of information and incentives]
\label{thm:recursive_information_design}
The designer's history-dependent joint design problem is equivalent to the
recursive program
\[
J_t(z_t)
=
\sup_{(\psi_t,\zeta_t)\in\Psi_t^{I}(z_t)}
\int
\left[
v_t(
\widetilde a_t,
\widetilde\theta_t,
\widetilde o_t)
-
c_t(\widetilde r_t)
+
\delta
J_{t+1}(\widetilde z_{t+1})
\right]
\,d\mathbb P_t(\cdot\mid z_t;\psi_t,\zeta_t).
\]
When \(T<\infty\), \(J_{T+1}\equiv0\).
\end{theorem}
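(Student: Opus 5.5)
The plan is to reduce Theorem~\ref{thm:recursive_information_design} to Theorem~\ref{thm:recursive_principal} by treating the information-use policy as an additional component of the current mechanism rule. Concretely, I would define an augmented current rule \(\bar\psi_t:=(\psi_t,\zeta_t)\). Its response now includes the public monitoring message \(e_t\), generated after \(o_t\) by the composed kernel \(Q_t^\zeta=Q_t^{\mathrm{mon}}\zeta_t\).

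The first step checks that this augmented object fits the Section~3 framework. The policy \(\zeta_t\) conditions only on \((\xi_t,x_t)\). Its effect on the agent's problem therefore enters only through the conditional law of \(\widetilde z_{t+1}\) given \((\theta_t,\tau_t,a_t)\) and the public context. This is the same channel through which outcomes and mechanism responses already act. In particular, \(G_t\), \(\widehat V_t\) and \(V_t\) remain well defined with \(\mathbb P_t(\cdot\mid\cdot;\psi,\zeta)\) in place of \(\mathbb P_t(\cdot\mid\cdot;\psi)\). The hidden action \(a_t\) enters through the primitive kernel \(Q_t^{\mathrm{mon}}\) inside the inner maximization. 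Proposition~\ref{prop:recursive_DIC} then applies verbatim on the enriched history \(h_{t+1}\): its proof uses only one-stage deviations and the continuity-at-infinity condition, which is maintained here.

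The second step establishes the two inclusions exactly as in the proof of Theorem~\ref{thm:recursive_principal}. For ``\(\le\)'', I would take any admissible history-dependent pair of mechanism and information-use rules from \(z_t\). Its date-\(t\) component lies in \(\Psi_t^I(z_t)\) by definition, because \(\zeta_t\) is information feasible and the continuation satisfies the recursive conditions. Sufficiency of \(z_{t+1}\) for the enriched history means that the continuation from each \(z_{t+1}\) is admissible there, so its designer value is at most \(J_{t+1}(z_{t+1})\). Conditioning on \(\widetilde z_{t+1}\) then gives the bound. For ``\(\ge\)'', I would fix \((\psi_t,\zeta_t)\in\Psi_t^I(z_t)\) and \(\varepsilon>0\). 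I would select admissible continuations at each \(z_{t+1}\) that come within \(\varepsilon\) of \(J_{t+1}\) and concatenate them. Information feasibility is a date-by-date restriction (Definition~\ref{def:information_feasibility}), so it is preserved. Dynamic incentive compatibility of the concatenation follows from Proposition~\ref{prop:recursive_DIC}, since the recursive conditions hold at each state of each piece, with \(V_{t+1}\) evaluated at the concatenated continuation.

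The main obstacle lies in this concatenation step. First, the \(\varepsilon\)-optimal continuations must be chosen measurably in \(z_{t+1}\), which calls for a measurable-selection argument or an appeal to the maintained measurable structures. Second, the continuation values \(V_{t+1}\) that support current incentives in the definition of \(\Psi_t^I(z_t)\) must be compatible with the selected near-optimal continuations. I would handle the second point the way the paper handles \(\Psi_t\): any promised continuation utilities are included in \(z_{t+1}\), so that admissibility at \(z_{t+1}\) already fixes the agent's continuation values. Any near-optimal continuation at \(z_{t+1}\) then delivers the same \(V_{t+1}\), and incentives at date \(t\) are unaffected.

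Finally, for \(T<\infty\) the result follows by backward induction from \(J_{T+1}\equiv0\). For \(T=\infty\), I would truncate at horizon \(t+n\), apply the finite-horizon equivalence, and let \(n\to\infty\). The designer's continuity-at-infinity condition makes the tail uniformly negligible across admissible pairs \((\psi,\zeta)\), so the limit of the truncated recursions equals both the history-dependent supremum and \(J_t\).
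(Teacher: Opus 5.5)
Your proposal is correct and follows essentially the paper's own argument. Like the paper, you treat $(\psi_t,\zeta_t)$ as the augmented current design choice and rerun the principle-of-optimality argument of Theorem~\ref{thm:recursive_principal} with $\Psi_t^{I}(z_t)$ as the admissible correspondence. For the converse you concatenate near-optimal admissible continuations, closing with backward induction for $T<\infty$ and continuity at infinity for $T=\infty$. Your explicit treatment of measurable selection, and of continuation-value compatibility via promises included in $z_{t+1}$, spells out points the paper absorbs into its recursive sufficiency requirement rather than taking a different route.
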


\begin{proof}
Proposition~\ref{prop:feasible_information} characterizes the public
experiments attainable from the primitive monitoring technology. Conditional
on an information-feasible \(\zeta_t\), the current mechanism rule remains
subject to the action-feasibility and recursive incentive conditions
characterized in Section~3. The information-use policy additionally determines
how current monitoring evidence enters the public history and therefore
changes the induced law of the next sufficient public state. When monitoring
depends on the hidden action, different action choices may induce different
laws of that evidence and hence different continuation states.

The principle-of-optimality argument of
Theorem~\ref{thm:recursive_principal} therefore applies with
\((\psi_t,\zeta_t)\) as the current design choice and
\(\Psi_t^{I}(z_t)\) as the admissible current-choice correspondence. This gives
the stated recursion.

Conversely, state-contingent pairs
\((\psi_t,\zeta_t)\in\Psi_t^{I}(z_t)\), together with admissible continuation
rules attaining or approximating the corresponding continuation values, can be
concatenated across public states. By construction,
Proposition~\ref{prop:feasible_information} guarantees information feasibility,
while Proposition~\ref{prop:recursive_DIC} guarantees dynamic incentive
compatibility of the resulting direct mechanism. The remaining restrictions
defining admissibility are preserved by the continuation construction. Hence
the recursive and history-dependent formulations generate the same attainable
designer continuation values.

For \(T<\infty\), equivalence follows by backward induction. For
\(T=\infty\), it follows under the continuity-at-infinity conditions maintained
in Section~3.4.
\end{proof}

Theorem~\ref{thm:recursive_information_design} extends the recursive
mechanism-design representation of
Theorem~\ref{thm:recursive_principal} by making information use an additional
design variable. Changing \(\zeta_t\) changes the distribution of
\(\widetilde z_{t+1}\), and therefore changes both the designer's continuation
value \(J_{t+1}\) and the agent's continuation value \(V_{t+1}\). The same
information policy that improves future decision quality may therefore alter
the return to a current report or post-report action. Information is
consequently both an epistemic instrument and an incentive instrument.

This interaction is conceptually related to the integrated information- and
mechanism-design perspective of \citet{BergemannHeumannMorris2026}. Their
static screening environment admits a majorization formulation that is
specific to that setting and is not imported here. The sequential-information
perspective of \citet{DovalEly2020} likewise provides a useful dynamic
benchmark, but the present problem fixes the primitive monitoring technology
and asks how its use interacts with a dynamic mechanism containing private
type reports and post-report actions. A related nonseparability result arises
in \citet{Tian2026}, where information production, information use, incentives,
posterior evolution, and feasibility are jointly endogenous. Here the narrower
source of nonseparability is the effect of information use on the continuation
values supporting dynamic incentive compatibility.

A simple Gaussian specification makes the state recursion transparent.
Suppress private-type uncertainty and current-action dependence of the
monitoring technology, and let the payoff-relevant state be scalar. Suppose
\(\widetilde\theta_t\mid z_t\sim N(\mu_t,\Sigma_t)\), and let the raw
monitoring signal satisfy
\[
\widetilde x_t
=
\widetilde\theta_t+\varepsilon_t^E,
\qquad
\varepsilon_t^E\sim N(0,\sigma_E^2).
\]
Let the information-use policy add independent Gaussian noise
\(\varepsilon_t^D\sim N(0,\sigma_D^2)\), so
\(e_t=x_t+\varepsilon_t^D\). Writing
\((\mu_t^+,\Sigma_t^+)\) for the posterior moments,
\[
\mu_t^+
=
\mu_t+
\frac{\Sigma_t}{\Sigma_t+\sigma_E^2+\sigma_D^2}(e_t-\mu_t),
\qquad
\Sigma_t^+
=
\frac{\Sigma_t(\sigma_E^2+\sigma_D^2)}
{\Sigma_t+\sigma_E^2+\sigma_D^2}.
\]
If the payoff-relevant state follows
\[
\widetilde\theta_{t+1}
=
\rho_\theta\widetilde\theta_t
+
\varepsilon_{t+1}^{\theta},
\qquad
\varepsilon_{t+1}^{\theta}
\sim N(0,\sigma_\theta^2),
\]
then
\[
\mu_{t+1}
=
\rho_\theta\mu_t^+,
\qquad
\Sigma_{t+1}
=
\rho_\theta^2\Sigma_t^+
+
\sigma_\theta^2.
\]
Hence \((\mu_t,\Sigma_t)\) is a low-dimensional component of the sufficient
public state \(z_t\). Within this Gaussian class, \(\sigma_D^2=0\) uses the
raw monitoring signal without additional garbling, while larger
\(\sigma_D^2\) produces a less informative public signal. These formulas
isolate the decision-value channel of information. Once future authorization,
rewards, or deployment depend on the updated public state, the same garbling
choice may also change current incentive comparisons, which is the channel
captured by Theorem~\ref{thm:recursive_information_design}.

\subsection{Selective Disclosure and Incentive Separation}

The distinction between the information that can be generated and the
information that is actually used yields two different comparative statics.
More informative monitoring capacity weakly expands the designer's opportunity
set when information can be freely discarded. A more informative public
experiment, by contrast, need not improve the designer's objective, because
information use changes the continuation decision problem and may also change
incentives. A separate question is whether information use can be optimized
independently of those incentive effects.

For a primitive monitoring experiment \(Q_t^{\mathrm{mon}}\), let
\(J_t^{Q^{\mathrm{mon}}}(z_t)\) denote the designer's optimal continuation
value when information use is chosen optimally from the information-feasible
garblings of \(Q_t^{\mathrm{mon}}\), holding the other primitives and future
monitoring technologies fixed. Blackwell comparisons below are understood
conditional on each observable context \(\xi_t\), with
\(Q_t^{\mathrm{mon}}\) viewed as an experiment about
\((\theta_t,\tau_t,a_t)\).

\begin{proposition}[Monitoring capacity and selective disclosure]
\label{prop:evaluation_disclosure}
Under the monitoring and information-design setting developed in this section,
the following statements hold.
\begin{enumerate}
\item[(i)] \textbf{Informational monotonicity under free disposal.}
If, at every observable context \(\xi_t\),
\(Q_t^{\mathrm{mon}\prime}\succeq_B Q_t^{\mathrm{mon}}\), and information
can be garbled or ignored at no additional cost, then
\[
J_t^{Q^{\mathrm{mon}\prime}}(z_t)
\geq
J_t^{Q^{\mathrm{mon}}}(z_t).
\]

\item[(ii)] \textbf{Full disclosure need not be optimal.}
There exist environments in which the primitive monitoring experiment is fully
revealing but an optimal information-use policy is strictly less informative
than full disclosure.
\end{enumerate}
\end{proposition}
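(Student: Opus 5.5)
For part~(i) I would use a replication argument built on Proposition~\ref{prop:feasible_information}(i) and the recursive program of Theorem~\ref{thm:recursive_information_design}. Since \(Q_t^{\mathrm{mon}\prime}\succeq_B Q_t^{\mathrm{mon}}\) at every observable context \(\xi_t\), there is a garbling kernel \(g_t(\cdot\mid\xi_t,x')\) from \(\mathcal X_t'\) into \(\Delta(\mathcal X_t)\) with \(Q_t^{\mathrm{mon}}=Q_t^{\mathrm{mon}\prime}g_t\). I take this identity to hold for every value of the hidden variables \((\theta_t,\tau_t,a_t)\), because it is an equality of experiments indexed by those variables. Now fix any admissible mechanism--information pair \((\psi,\zeta)\) under \(Q^{\mathrm{mon}}\). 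I would define the policy \(\zeta_t'=g_t\zeta_t\) under \(Q^{\mathrm{mon}\prime}\), which first garbles \(x'\) into a simulated \(x\) and then applies \(\zeta_t\). Kernel composition gives \(Q_t^{\mathrm{mon}\prime}\zeta_t'=Q_t^{\mathrm{mon}}\zeta_t\), again conditional on \((\xi_t,\theta_t,\tau_t,a_t)\). Free disposal makes \(\zeta_t'\) admissible at no cost, and \(\psi\) is kept unchanged.

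The key step is to show that this replication preserves everything, not just the truthful path. Because the public experiment coincides conditional on the hidden action, every feasible report--action deviation \((\sigma,d)\) induces the same joint law of public messages \(e_t\), next public histories, states and types under both monitoring technologies. The same holds for the truthful-and-obedient play. So the deviation values \(\widehat V_t\), the truthful values \(V_t\) and the obedience comparisons in Proposition~\ref{prop:recursive_DIC} are identical, and \((\psi,\zeta')\) is dynamically incentive compatible and satisfies all other maintained restrictions. The designer's payoff integrand is also the same, so the pair attains the same value. I would carry this out date by date, by backward induction in \(J_t\) when \(T<\infty\), and pass to the limit using continuity at infinity when \(T=\infty\). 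Taking suprema then gives \(J_t^{Q^{\mathrm{mon}\prime}}\geq J_t^{Q^{\mathrm{mon}}}\).

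The main obstacle is the conditioning on \(a_t\). The Blackwell dominance must hold as experiments about the latent action as well, so the garbling cannot depend on \(a_t\), and the replicated law must match for off-path actions. This holds precisely because the garbling acts only on the raw signal and the observable context \(\xi_t\). I would state this explicitly.

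For part~(ii) I would construct a two-period example with a fully revealing monitor, \(x_1=\theta_1\) or \(x_1=a_1\). In it, the period-2 authorization or reward depends on the public posterior, and full disclosure destroys a period-1 incentive. My first attempt would be a hidden action \(a_1\in\{0,1\}\) with private cost \(c>0\). Here \(x_1=a_1\) is perfectly revealed, but at date 2 the designer optimally authorizes sequentially rationally as a function of the public posterior. Under full disclosure, the ex post optimal date-2 rule gives the agent no reward for effort, for instance because the continuation authorization favors the low-belief state, and effort fails. If the revelation is garbled, future authorization is pooled across outcomes in a way that preserves a continuation-value gap \(\delta[V_2(e\mid a=1)-V_2(e\mid a=0)]\geq c\). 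The cleanest version may need a restriction that the date-2 rule be optimal given the public state, since under full commitment the designer could otherwise condition on the full disclosure and replicate. If that restriction is unavailable, I would instead use a type-report version: disclosing \(\theta_1\) lets the continuation extract rent ex post through continuation promises tied to the public state, and that breaks the date-1 reporting constraint. Calibrating parameters so that the coarse policy strictly beats full disclosure is routine once the structure is set up.
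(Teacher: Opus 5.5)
Your part (i) is essentially the paper's argument. You compose a garbling that reproduces $Q_t^{\mathrm{mon}}$ from $Q_t^{\mathrm{mon}\prime}$, the same for every $(\theta_t,\tau_t,a_t)$, with the original policy. The law of public messages is then unchanged after truthful-and-obedient play and after every admissible report--action deviation.

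Part (ii), however, has a genuine gap. Section 5 maintains full commitment: the designer commits at date $0$ to both the mechanism and the information-use policy. Both of your constructions get their force from a date-$2$ designer who reacts to the public state ex post. The first explicitly requires the date-$2$ rule to be sequentially optimal given the public posterior. The second has the continuation ``extract rent ex post,'' which a committed designer would simply decline to do if it broke the date-$1$ constraint. Under commitment, if the only thing responding to the public message is the designer's own continuation rule, full disclosure weakly dominates any garbling. The committed rule can ignore the disclosed signal or condition on a simulated coarsening of it, as you yourself suspect. In your hidden-action example this is decisive: disclosing $x_1=a_1$ lets the committed date-$2$ rule authorize exactly when $a_1=1$. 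That is the strongest possible effort incentive, so no garbling can strictly improve on it. Finally, asserting that calibration is ``routine'' does not establish existence; an existence claim needs an explicit environment.

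The missing idea is that, under commitment, disclosure can be costly only through a choice the designer cannot commit to: the agent's own use of the public information. The paper's proof embeds a Bayesian-persuasion problem in the continuation:
\begin{itemize}
\item $\theta\in\{B,G\}$ with equal probability, and the monitor fully reveals $\theta$;
\item after the public message, the agent decides whether to invest, earning $-1$ in $B$ and $x\in(0,1)$ in $G$, and zero if it does not invest;
\item the designer receives $1$ whenever the agent invests.
\end{itemize}
Full disclosure yields $1/2$. Aggregating the agent's obedience constraints over investment messages gives $p_B\le xp_G$, hence the bound $(1+x)/2$. This bound is attained by recommending investment with probability $1$ in $G$ and $x$ in $B$. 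That policy is a strict garbling of the fully revealing experiment and strictly beats full disclosure. A reporting-based example could also work, but only if disclosure enriches the agent's later information and thus its deviation opportunities, not through designer rent extraction.
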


\begin{proof}
For part~(i), Blackwell dominance implies that, at every feasible observable
context \(\xi_t\), \(Q_t^{\mathrm{mon}}\) can be reproduced by a garbling of
\(Q_t^{\mathrm{mon}\prime}\). The same garbling applies to every feasible
\((\theta_t,\tau_t,a_t)\). Take any mechanism and information-use policy
feasible under \(Q_t^{\mathrm{mon}}\). Under the more informative primitive
technology, first apply a garbling that reproduces
\(Q_t^{\mathrm{mon}}\), and then apply the original information-use policy.
The resulting conditional distribution of public monitoring messages is
therefore identical to that under the original joint design.

This replication holds following truthful-and-obedient behavior and following
every admissible report--action deviation. Hence it reproduces the same
distribution of continuation states, actions, rewards, and payoffs under every
incentive comparison and therefore preserves dynamic incentive compatibility.
Every joint design attainable under \(Q_t^{\mathrm{mon}}\) is consequently
attainable under \(Q_t^{\mathrm{mon}\prime}\), which implies the stated
inequality.

For part~(ii), consider a two-state continuation problem nested in the present
environment. Let \(\theta\in\{B,G\}\), with each state having probability
\(1/2\). At date \(t+1\), after observing the date-\(t\) public monitoring
message, the agent chooses between investing and not investing. Investment
gives the agent payoff \(-1\) in state \(B\) and \(x\in(0,1)\) in state \(G\),
while not investing gives zero. The designer receives payoff one whenever the
agent invests and zero otherwise. Suppose the primitive monitoring experiment
fully reveals the state.

Under full disclosure, the agent invests only in state \(G\), so the
designer's expected payoff is \(1/2\). Now consider any information-use policy
and let \(p_G\) and \(p_B\) denote the probabilities that the induced
continuation decision is investment in states \(G\) and \(B\), respectively.
For every public message after which investment is optimal, obedience requires
the agent's conditional expected payoff from investment to be nonnegative.
Aggregating these obedience inequalities over all such messages gives
\(p_B\leq xp_G\). The designer's expected payoff therefore satisfies
\[
\frac{p_B+p_G}{2}
\leq
\frac{(1+x)p_G}{2}
\leq
\frac{1+x}{2}.
\]

The bound is attained by a binary public recommendation that recommends
investment with probability \(p_G=1\) in state \(G\) and with probability
\(p_B=x\) in state \(B\). Following an investment recommendation, the agent
is indifferent between investing and not investing and obeys under weak
obedience; a no-investment recommendation occurs only in state \(B\) and is
strictly obeyed. The designer's expected payoff is therefore
\((1+x)/2>1/2\). Since the investment recommendation occurs with positive
probability in both states, the induced public experiment is a strict garbling
of the fully revealing primitive monitoring experiment. Thus full disclosure
is strictly suboptimal.
\end{proof}

Part~(i) is an option-value result. Better primitive monitoring capacity cannot
hurt when the designer can reproduce every policy available under a less
informative technology. Part~(ii) concerns a different comparison: conditional
on a given monitoring technology, using more of the available information need
not be optimal. A fully revealing raw signal expands the information-feasible
set, while the optimal policy may deliberately coarsen that signal before it
enters the continuation decision problem. This selective-disclosure logic is
familiar from information design \citep{BergemannMorris2019}; here it operates
inside a dynamic mechanism in which the resulting public information also
determines continuation control.

The distinction is especially relevant for AI-agent mechanisms. A more
informative monitoring technology, such as a more accurate capability or
behavioral evaluation, expands the menu of state-contingent continuation
policies. Automatically revealing or using every monitoring result, however,
may change obedience, the return to strategic underperformance, or future
authorization incentives. Because the public monitoring message enters the
continuation mechanism, information use can itself function as an
intertemporal incentive instrument. More monitoring capacity and more
disclosure are therefore different design choices.

Proposition~\ref{prop:evaluation_disclosure} distinguishes monitoring capacity
from information use. A separate question is when information use can be
chosen independently of the current incentive problem.
Theorem~\ref{thm:recursive_information_design} identifies the relevant object:
the effect of \(\zeta_t\) on the continuation-value difference between
truthful-and-obedient behavior and a feasible deviation.

Fix the current direct outcome rule \(\psi_t\) and the continuation mechanism
from date \(t+1\) onward. For a current true type \(\tau_t\), public state
\(z_t\), and an admissible one-stage report--action deviation
\((\widehat\tau_t,d_t)\), followed by truthful and obedient continuation play,
define the \textbf{continuation incentive difference} under the agent's
subjective beliefs by
\[
\begin{aligned}
\Delta V_{t+1}^{\zeta}
(\tau_t,z_t;\widehat\tau_t,d_t)
:={}&
\int
V_{t+1}(
\widetilde\tau_{t+1},
\widetilde z_{t+1})
\,d\mathbb P_t
(\cdot\mid
\tau_t,z_t,\tau_t,d_t^{\mathrm{ob}};\psi,\zeta_t)
\\
&-
\int
V_{t+1}(
\widetilde\tau_{t+1},
\widetilde z_{t+1})
\,d\mathbb P_t
(\cdot\mid
\tau_t,z_t,\widehat\tau_t,d_t;\psi,\zeta_t).
\end{aligned}
\]
The two conditional laws incorporate the continuation histories generated by
the corresponding current behavior and by the public monitoring message under
\(\zeta_t\). When monitoring depends on the hidden action, a report--action
deviation may change the distribution of raw evidence even though the action
itself never enters the public history.

Conditional on the current mechanism rule and continuation mechanism, call
information use \textbf{incentive separable} at date \(t\) if changing
\(\zeta_t\) within the information-feasible set leaves every current dynamic
incentive comparison unchanged.

\begin{corollary}[Information--incentive separation]
\label{cor:information_separation}
Fix \(\psi_t\) and the continuation mechanism from date \(t+1\) onward, and
suppose the current information-use policy affects date-\(t\) incentives only
through the continuation state.

\begin{enumerate}
\item[(i)] \textbf{Separation.}
Information use is incentive separable at date \(t\) if and only if
\[
\Delta V_{t+1}^{\zeta}
(\tau_t,z_t;\widehat\tau_t,d_t)
\]
is invariant across information-feasible policies \(\zeta_t\) for every
feasible \((\tau_t,z_t)\) and every admissible one-stage report--action
deviation \((\widehat\tau_t,d_t)\). Conditional on the fixed mechanism rules,
information use can then be optimized without re-solving the current incentive
constraints.

\item[(ii)] \textbf{Nonseparability.}
If this continuation incentive difference varies with \(\zeta_t\) for some
feasible \((\tau_t,z_t)\) and admissible one-stage report--action deviation,
then changing information use changes the corresponding dynamic incentive
constraint, and separate optimization of information use and current
incentives is not valid in general.
\end{enumerate}
\end{corollary}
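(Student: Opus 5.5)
The plan is to reduce every current incentive comparison to a decomposition in which $\zeta_t$ appears only through $\Delta V_{t+1}^{\zeta}$. After that, both parts follow almost tautologically from Proposition~\ref{prop:recursive_DIC}.

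\textbf{Step 1: decompose each comparison.} Fix $\psi_t$ and the continuation mechanism from date $t+1$ onward. By the one-stage-deviation reduction in Proposition~\ref{prop:recursive_DIC}, the date-$t$ incentive conditions are recursive obedience and recursive honesty. Each is equivalent to a family of inequalities of the form
\[
V_t^{\zeta}(\sigma^{\mathrm{tr}},d^{\mathrm{ob}}\mid\tau_t,z_t)-V_t^{\zeta}(\widehat\tau_t,d_t\mid\tau_t,z_t)\ge0,
\]
one for each admissible one-stage deviation $(\widehat\tau_t,d_t)$. Each is followed by truthful-and-obedient continuation. I would write each side as current payoff plus $\delta$ times the expected $V_{t+1}(\widetilde\tau_{t+1},\widetilde z_{t+1})$. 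The current part, $u_t+R_t^D$, depends only on $(\theta_t,a_t,o_t)$ and on the mechanism objects drawn by $\psi_t$.

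\textbf{Step 2: isolate where $\zeta_t$ enters.} The public message $e_t$ is generated after $o_t$ and $r_t$. By the maintained hypothesis that $\zeta_t$ affects date-$t$ incentives only through the continuation state, the current-payoff difference does not depend on $\zeta_t$. The continuation mechanism is also held fixed, so $V_{t+1}$ as a function of $(\tau_{t+1},z_{t+1})$ is the same for all $\zeta_t$. Hence each incentive difference equals
\[
\text{(a $\zeta_t$-invariant current term)}+\delta\,\Delta V_{t+1}^{\zeta}(\tau_t,z_t;\widehat\tau_t,d_t).
\]
This identity is the main step.

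\textbf{Step 3: conclude both parts.} For part (i), sufficiency is immediate. If $\Delta V^{\zeta}_{t+1}$ is invariant for every feasible $(\tau_t,z_t)$ and every deviation, then every comparison is invariant, so the policy is incentive separable. Any $\zeta_t$ that is information feasible in the sense of Proposition~\ref{prop:feasible_information} can then be plugged into the recursion of Theorem~\ref{thm:recursive_information_design} without re-checking the conditions of Proposition~\ref{prop:recursive_DIC}. Necessity follows from the same identity: if the comparisons are unchanged, their $\zeta_t$-dependent part $\delta\Delta V^{\zeta}_{t+1}$ is unchanged, and $\delta>0$. Part (ii) is the contrapositive of necessity. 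If some $\Delta V^{\zeta}_{t+1}$ varies, the corresponding constraint's slack varies. Separate optimization then fails in general: choosing $\zeta_t$ for decision value alone may violate, or relax, the fixed-$\psi_t$ constraint.

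\textbf{Main obstacle.} I expect the obstacle to be the precise meaning of \emph{unchanged comparison}. Incentive separability could mean either that the difference values are unchanged or only that the set of satisfied inequalities is unchanged. Under the weaker reading, necessity fails, since a difference could move while staying nonnegative. I would therefore adopt the value reading, which the definition's phrase ``leaves every current dynamic incentive comparison unchanged'' supports. For part (ii) I would phrase the conclusion as ``changes the constraint,'' not as ``violates it.''

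There is also a measure-theoretic point: the two conditional laws in $\Delta V^{\zeta}_{t+1}$ must be built from the composed kernel $Q_t^{\mathrm{mon}}\zeta_t$ evaluated at the deviating action. I would handle this by writing the next-state law explicitly as the integral of $\zeta_t$ against $Q_t^{\mathrm{mon}}(\cdot\mid\xi_t,\theta_t,\tau_t,a_t)$. This makes clear that hidden-action effects on the raw evidence enter only through $\Delta V^{\zeta}_{t+1}$.
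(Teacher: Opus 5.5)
Your proposal is correct and follows essentially the same route as the paper. Because the monitoring message is generated after the current action and outcome, the current payoff difference is independent of \(\zeta_t\), so each one-stage comparison varies across policies only through \(\delta\Delta V_{t+1}^{\zeta}\); this gives both the equivalence in part~(i) and part~(ii). Your added remarks make explicit what the paper's short proof leaves implicit: reading ``unchanged'' as equality of comparison values, using \(\delta>0\) for necessity, and building the next-state law from \(Q_t^{\mathrm{mon}}\zeta_t\) at the deviating action.
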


\begin{proof}
Conditional on the fixed current outcome rule, the current payoff difference
between truthful-and-obedient behavior and a given one-stage report--action
deviation is independent of \(\zeta_t\), because the monitoring signal and
public monitoring message are generated only after the current action and
outcome. The corresponding dynamic incentive inequality therefore differs
across information-use policies only through
\(\delta\Delta V_{t+1}^{\zeta}\). Hence every date-\(t\) dynamic incentive
comparison is invariant to \(\zeta_t\) if and only if
\(\Delta V_{t+1}^{\zeta}\) is invariant across information-feasible policies
for every admissible deviation. If it varies for some admissible deviation,
changing \(\zeta_t\) changes the corresponding dynamic incentive inequality.
\end{proof}

The separation condition has substantive content. In the Gaussian benchmark
above, the raw monitoring law is independent of the current report and action,
and Bayes plausibility gives \(\E[\mu_t^+\mid z_t]=\mu_t\) for every Gaussian
garbling. If a truthful-versus-deviation continuation-value difference is
affine in \(\mu_t^+\), its expectation is therefore independent of
\(\sigma_D^2\). Information use may then change posterior precision and the
designer's continuation problem without changing the relevant current
incentive comparison.

By contrast, nonlinear continuation incentive differences generally depend on
the amount of garbling. Under the same Gaussian benchmark,
\[
\operatorname{Var}(\mu_t^+\mid z_t)
=
\frac{\Sigma_t^2}
{\Sigma_t+\sigma_E^2+\sigma_D^2},
\]
and hence
\[
\E[(\mu_t^+)^2\mid z_t]
=
\mu_t^2
+
\frac{\Sigma_t^2}
{\Sigma_t+\sigma_E^2+\sigma_D^2}.
\]
Thus a nonzero quadratic component in the continuation incentive difference
generically makes its expectation depend on \(\sigma_D^2\). Information use
then changes the relative continuation payoff from truthful-and-obedient
behavior and the deviation, so information and incentive design are
nonseparable.

When the monitoring technology depends on the hidden action, there is an
additional source of nonseparability: changing \(\zeta_t\) changes how
differences in privately chosen behavior are transmitted into public
continuation states. Information use may therefore strengthen or weaken a
hidden-action incentive without conditioning directly on the action itself.

This distinction also connects to the implementation margins of Section~4.
Information can improve future state contingency and control while
simultaneously tightening or relaxing dynamic incentive constraints and, in
the scalar verification environment, changing the continuation rents
associated with the verification wedge. The decision value and the incentive
value of information are therefore distinct.

The same logic clarifies strategic behavior under AI evaluation. Current
reports and actions may change the distribution of raw monitoring evidence,
while the use of that evidence determines the continuation mechanism. An agent
may therefore care about an evaluation not merely because it reveals
information, but because the resulting public state determines future
permissions, rewards, or deployment opportunities. The mechanism must account
for how monitoring affects both truthful revelation and subsequent action
choice. This is the dynamic counterpart of the sandbagging and evaluation
concerns emphasized by \citet{BergemannKohMorris2026} and provides the
theoretical basis for the strategic-evaluation application developed below.

The present section keeps primitive information production fixed. Within that
boundary, it characterizes feasible information use, embeds information use in
the recursive mechanism-design problem, distinguishes monitoring capacity from
selective disclosure, and identifies the boundary between separable and
nonseparable information and incentive design. Endogenizing information
production itself leads to the broader interaction among learning, incentives,
posterior evolution, and feasibility studied in \citet{Tian2026}.

\section{Multiple AI Agents}

We now extend the baseline mechanism and incentive framework of
Sections~2--4 to multiple interacting AI agents. The extension preserves
private types, partial verification, post-report private information,
authorization, and post-report action discretion, but adds strategic
interaction across agents. An agent's beliefs may concern other agents'
private types and future behavior, and the mechanism may correlate
recommendations, authorization, rewards, and continuation opportunities
across agents. The dynamic environment also allows private types to evolve
over time. Latent current actions may affect public outcomes, monitoring
evidence, physical transitions, and hence the resulting public continuation
state, even when the actions themselves are not publicly observed.

Strategic interaction requires an explicit equilibrium concept and a
distinction between information privately available to individual agents,
information available to the mechanism, and information publicly observed
by all agents. We formulate the multi-agent environment as a dynamic
Bayesian game and use perfect Bayesian equilibrium, allowing heterogeneous
subjective beliefs, as the solution concept. The mechanism can condition
on confidential reports and privately communicate correlated
recommendations without necessarily revealing them to other agents.
When the monitoring structure of Section~5 is present, public monitoring
messages are incorporated into the common public history and sufficient
public state. Throughout this section, the primitive monitoring technology
and information-use policy are held fixed.

\subsection{Multi-Agent Environment and Mechanisms}

There are \(n\) agents indexed by \(i\in N:=\{1,\ldots,n\}\). For each date
\(t\), let \(\mathcal A_{it}\) be agent \(i\)'s ambient action space and let
\(\mathcal A_t:=\prod_{i=1}^n\mathcal A_{it}\). Agent \(i\)'s private type is
the agent-specific analogue of Section~2:
\[
\tau_{it}
=
(A_{it},u_{it},b_{it},\pi_{it})
\in\mathcal T_{it},
\qquad
\tau_t
=
(\tau_{1t},\ldots,\tau_{nt})
\in
\mathcal T_t
:=
\prod_{i=1}^n\mathcal T_{it}.
\]
Here \(A_{it}\subseteq\mathcal A_{it}\) is its execution capability,
\(u_{it}\) its period payoff representation, \(b_{it}\) its subjective belief,
and \(\pi_{it}\) its information capability. The realized action profile is
\(a_t=(a_{1t},\ldots,a_{nt})\in\prod_{i=1}^n A_{it}\subseteq\mathcal A_t\).

The belief component now includes uncertainty about the common state and
the other agents. For type \(\tau_{it}\), let
\(b_{it}(\cdot\mid h_t)\in
\Delta(\Theta_t\times\mathcal T_{-i,t})\), where
\(\mathcal T_{-i,t}:=\prod_{j\neq i}\mathcal T_{jt}\).
When higher-order beliefs matter, the type spaces
\((\mathcal T_{it})_{i\in N}\) are understood as coherent type spaces whose
belief components encode beliefs about other agents' types, including
their belief components and hence their higher-order beliefs. A universal
type space in the sense of Mertens and Zamir
\citeyearpar{MertensZamir1985} provides the fully general benchmark, while
applications may instead use a finite-dimensional sufficient private
state.

The subjective beliefs are dynamically coherent: conditional beliefs at
successive information sets are obtained from the agents' subjective
models and the commonly known signal, outcome, and transition
technologies wherever Bayesian conditioning applies. The current type
is understood to summarize the agent's payoff- and belief-relevant
private information before date-\(t\) communication, as in Section~2.
If earlier private observations remain relevant, the type space is
enlarged to include a sufficient private history. Subsequent private
mechanism communications and interim signals further update the
agent's beliefs within the period. The conditional beliefs used for
sequential optimization are specified formally below.

As in Section~2, we assume that the designer's conditional law of
\((\theta_t,\tau_{-i,t})\), given \((h_t,\tau_{it})\), satisfies
\[
\mathbb P_t(\cdot\mid h_t,\tau_{it})
\ll
b_{it}(\cdot\mid h_t),
\]
whenever this conditional law is defined.

Let \(\mathcal S_t:=\prod_{i=1}^n\mathcal S_{it}\). Conditional on
the current state and type profile, the interim signal profile is
generated by the joint kernel
\[
\widetilde s_t
\sim
\pi_t^N(\cdot\mid\theta_t,\tau_t)
\in\Delta(\mathcal S_t).
\]
The joint signal technology is common knowledge, while each agent's
realized information capability \(\pi_{it}\) remains private. The
\(i\)-th marginal of \(\pi_t^N\) is
\(\pi_{it}(\cdot\mid\theta_t)\), and private signals may be
correlated across agents. Public outcomes and dynamic evolution
extend the kernels of Section~2 to the full action and type profiles:
\[
\widetilde o_t
\sim Q_t(\cdot\mid\theta_t,a_t),
\qquad
(\widetilde\theta_{t+1},\widetilde\tau_{t+1})
\sim K_{t+1}(\cdot\mid h_t,\theta_t,\tau_t,a_t,o_t).
\]
Thus an agent's current action may affect other agents' future types,
capabilities, and information, even when the action itself is not
publicly observed.

The period payoffs for the designer and agent \(i\), respectively, are
\[
v_t(a_t,\theta_t,o_t)
-
\sum_{j=1}^n c_{jt}(r_{jt}),
\qquad
u_{it}(a_t,\theta_t)+r_{it}.
\]
Allowing \(u_{it}\) to depend on the full action profile accommodates
strategic complements, strategic substitutes, congestion, competition,
coordination, and other primitive cross-agent payoff effects.

Partial verification is agent specific. Let \(\succeq_{it}\) be a
verification preorder on \(\mathcal T_{it}\), with claim set
\[
C_{it}(\tau_{it})
:=
\left\{
\widehat\tau_{it}\in\mathcal T_{it}:
\tau_{it}\succeq_{it}\widehat\tau_{it}
\right\}.
\]
The feasible report profile is
\(C_t(\tau_t):=\prod_{i=1}^nC_{it}(\tau_{it})\).
Capability consistency requires
\(\widehat A_{it}\subseteq A_{it}\) whenever
\(\tau_{it}\succeq_{it}\widehat\tau_{it}\).

The mechanism notation extends Section~3 agent by agent, with an
important distinction between common public information and
confidential communication. Let \(H_t\) denote the space of common
public histories immediately before date-\(t\) communication, with
\(H_1=\{\emptyset\}\), and let \(h_t\in H_t\) denote a realized public
history. For \(t\geq2\), \(h_t\) records the publicly observed events
at earlier dates, including public mechanism responses, disclosed
communications and control choices, authorization profiles, realized
rewards, public outcomes, and, when the monitoring structure of
Section~5 is present, public monitoring messages. A realized action is
not a separate component of the public history unless it is publicly
observed; information about a latent action may instead be conveyed
through public outcomes or other publicly observed signals.

Unlike the single-agent environment, agents' reports and recommendations
need not be publicly observed. Reports are submitted confidentially to
the mechanism, and each agent privately receives its own recommended
plan. Neither other agents' reports nor their recommended plans enter
\(h_t\) unless the mechanism explicitly discloses them. This convention
preserves the distinction between the information available to the
mechanism and the information available to each agent.

To allow continuation rules to depend on confidential past
communications, let \(\overline H_t\) denote the space of histories
observed by the mechanism, with \(\overline H_1=\{\emptyset\}\), and
write \(\bar h_t\in\overline H_t\) for a realized mechanism history.
The record \(\bar h_t\) contains the common public history and the
mechanism's past confidential communications and realized choices,
but not privately observed signals or latent actions unless they
have become observable to the mechanism. The public history \(h_t\)
is the publicly observed projection of \(\bar h_t\).
Thus a committed mechanism may condition future decisions on
confidential reports without disclosing those reports to other
agents. The original public-history formulation is recovered when
the mechanism's relevant history is entirely public.

For each date \(t\), let the mechanism response space have the form
\[
Y_t
=
Y_t^{\mathrm{pub}}
\times
\prod_{i=1}^n Y_{it}^{\mathrm{pr}},
\qquad
y_t
=
(y_t^{\mathrm{pub}},
y_{1t}^{\mathrm{pr}},\ldots,y_{nt}^{\mathrm{pr}}).
\]
The component \(y_t^{\mathrm{pub}}\) is publicly observed, whereas
\(y_{it}^{\mathrm{pr}}\) is privately communicated to agent \(i\).
The mechanism may correlate these components. A fully public
mechanism response is obtained when the private response spaces
are singletons.

For each agent \(i\), let \(M_{it}\) denote its ambient message space
and define \(M_t:=\prod_{i=1}^nM_{it}\). Let
\[
\mathcal M_t(h_t,\tau_t)
:=
\prod_{i=1}^n
\mathcal M_{it}(h_t,\tau_{it}),
\]
with verification consistency
\(\mathcal M_{it}(h_t,\widehat\tau_{it})
\subseteq\mathcal M_{it}(h_t,\tau_{it})\)
whenever
\(\tau_{it}\succeq_{it}\widehat\tau_{it}\).
The report or message of each agent is observed by the mechanism
but not automatically by the other agents.

Let
\(\mathcal D_{it}:=
2^{\mathcal A_{it}}\setminus\{\emptyset\}\)
and
\(\mathcal D_t^N:=\prod_{i=1}^n\mathcal D_{it}\).
We retain the convention that the realized authorization profile
\(D_t=(D_{1t},\ldots,D_{nt})\) is publicly announced. Authorization
may therefore convey information about the mechanism's confidential
observations, but does not itself disclose the underlying reports.
For the indirect mechanism, let
\(\mathcal R_{it}
:=
\{R:Y_t\times\mathcal D_t^N\times\mathcal O_t\to\R\}\)
denote agent \(i\)'s reward-schedule space.

A multi-agent indirect mechanism is
\(\Gamma=(M,\mathcal M,\varphi)\), with
\[
\varphi_t:
\overline H_t\times M_t
\longrightarrow
\Delta(Y_t\times\mathcal D_t^N)
\times
\prod_{i=1}^n\mathcal R_{it}.
\]
For each \((\bar h_t,m_t)\), the outcome rule specifies a joint
distribution
\(\lambda_t(\cdot\mid\bar h_t,m_t)\)
over the mechanism response and authorization profile, together
with reward schedules
\(R_{it}(\bar h_t,m_t,\cdot,\cdot,\cdot)\in\mathcal R_{it}\).
Thus, after \((y_t,D_t)\) is realized, agent \(i\)'s reward may
depend on the realized mechanism response, the full authorization
profile, and the public outcome,
\[
r_{it}
=
R_{it}(\bar h_t,m_t,y_t,D_t,o_t).
\]
The mechanism can condition rewards on its confidential information
and its own realized choices, even when the agents do not observe
all those variables. Reward determination does not condition directly
on a latent action profile. When the public outcome reveals some or
all actions, the formulation nests the corresponding action-contingent
reward through \(o_t\). The mechanism may therefore correlate
authorization across agents and condition rewards on joint publicly
observed consequences and realized mechanism choices.

Under an indirect mechanism, the agents simultaneously submit feasible
messages. Conditional on the message profile and the mechanism's
history, the mechanism draws \((y_t,D_t)\). Each agent observes
\(y_t^{\mathrm{pub}}\), the public authorization profile \(D_t\), and
its own private response \(y_{it}^{\mathrm{pr}}\). The joint interim
signal profile is then generated according to \(\pi_t^N\), and each
agent observes its own signal before simultaneously choosing a feasible
action \(a_{it}\in A_{it}\cap D_{it}\). The realized action profile
generates the public outcome and rewards and may affect the subsequent
physical state and private-type profile. The public and mechanism
histories are then updated using their respective observable records.

For direct mechanisms, let
\[
\mathcal L_{it}
:=
\left\{
(D_i,\alpha_i):
D_i\in\mathcal D_{it},\
\alpha_i:\mathcal S_{it}\to D_i
\text{ measurable}
\right\},
\qquad
\mathcal L_t
:=
\prod_{i=1}^n\mathcal L_{it}.
\]
Let
\(\mathcal R_{it}^D
:=
\{R:Y_t\times\mathcal L_t\times\mathcal O_t\to\R\}\)
denote agent \(i\)'s direct-mechanism reward-schedule space.

A direct multi-agent mechanism is
\(\Gamma^D=(\mathcal T,\psi)\), where
\[
\psi_t:
\overline H_t\times\mathcal T_t
\longrightarrow
\Delta(Y_t\times\mathcal L_t)
\times
\prod_{i=1}^n\mathcal R_{it}^D.
\]

For a report profile \(\widehat\tau_t\),
\(\psi_t(\bar h_t,\widehat\tau_t)\) specifies the joint kernel
\(\lambda_t^D(\cdot\mid\bar h_t,\widehat\tau_t)\)
over the mechanism response and authorization--plan profile,
together with reward schedules
\(R_{it}^D(\bar h_t,\widehat\tau_t,\cdot,\cdot,\cdot)\).
The kernel \(\lambda_t^D\) may correlate mechanism responses,
authorization sets, and signal-contingent plans across agents. If
\[
\ell_t
=
((D_{1t},\alpha_{1t}),\ldots,(D_{nt},\alpha_{nt}))
\in\mathcal L_t
\]
is realized, the authorization profile \(D_t\) is publicly announced,
while agent \(i\) privately receives its own plan
\(\alpha_{it}\), together with its private response
\(y_{it}^{\mathrm{pr}}\). The other agents' recommended plans
are not disclosed unless the mechanism explicitly makes them public.

Agent \(i\)'s reward may depend on the report profile, realized
mechanism response, the full authorization--plan profile, and the
public outcome,
\[
r_{it}
=
R_{it}^D
(\bar h_t,\widehat\tau_t,y_t,\ell_t,o_t).
\]
The mechanism observes the report and recommendation profiles and
can condition rewards on them, although individual agents need not
observe the full profiles. The realized reward vector is publicly
observed after the current actions and outcomes, as in the baseline
public-history convention. Any information thereby revealed is
incorporated into subsequent beliefs and continuation incentives.

The direct mechanism is \textbf{action feasible} if, for every
mechanism history \(\bar h_t\), every report profile
\(\widehat\tau_t\), and every realized
\((y_t,((D_{it},\alpha_{it}))_{i\in N})\) in the support of
\(\lambda_t^D(\cdot\mid\bar h_t,\widehat\tau_t)\),
\[
\alpha_{it}(s_{it})
\in
\widehat A_{it}\cap D_{it}
\qquad
\text{for every }i\in N
\text{ and every }s_{it}\in\mathcal S_{it}.
\]
As in the single-agent model, communication, capability,
authorization, and recommendation are distinct objects. The claim
set \(C_{it}(\tau_{it})\) restricts what agent \(i\) may report;
\(A_{it}\) describes what it can execute; \(D_{it}\) describes
what the mechanism permits; and \(\alpha_{it}\) specifies what
the mechanism recommends after each private signal. The agent's
realized action remains its own choice from \(A_{it}\cap D_{it}\)
and need not be publicly observed.

The information structure of the direct mechanism is otherwise
parallel to that of the indirect mechanism. After observing its
current private type, each agent submits a confidential feasible
report \(\widehat\tau_{it}\in C_{it}(\tau_{it})\). The mechanism
draws the joint response and authorization--plan profile. Each
agent observes the public response and authorization profile and
privately receives its own response and recommended plan. The
joint signal profile is then generated using the true type profile,
not the reported one, and agent \(i\) observes only \(s_{it}\).
The agents simultaneously choose their realized actions, which
generate public outcomes, rewards, and continuation consequences.
In particular, a false report does not alter the true execution
capability or information experiment of the deviating agent.

We now define equilibrium for the multi-agent environment.
For each agent $i$ and date $t$, let $\mathcal H_{it}$ denote
the set of feasible private decision histories at the reporting
and post-report action stages. These histories include the common
public history and the agent's privately observed types, messages,
mechanism responses, recommendations, and signals, subject to the
timing of observation specified above. An agent's strategy consists
of a reporting strategy $\sigma_i$ and a post-report action strategy
$d_i$, with actions in $A_{it}\cap D_{it}$. Let $\mu_i$ denote
the agent's system of conditional subjective beliefs.

For each $h_{it}\in\mathcal H_{it}$, let $\mathcal G_i(h_{it})$
denote the set of admissible measurable continuation
report--action strategies, with decisions already taken at
$h_{it}$ held fixed. Let
$V_{it}(\sigma_i,d_i\mid h_{it};\Gamma)$ denote agent $i$'s
expected continuation payoff under $(\sigma_i,d_i)$,
holding the other agents' equilibrium strategies fixed
and evaluating expectations under his conditional beliefs.
We maintain the measurability, integrability, and regular
conditional probability conditions needed for these objects
to be well defined.

\begin{definition}[Multi-agent perfect Bayesian equilibrium]
\label{def:multiagent-pbe}
For a committed mechanism $\Gamma$, a PBE is an assessment
$(\sigma^*,d^*,\mu^*)$ satisfying, for every agent $i$,
date $t$, and feasible private decision history
$h_{it}\in\mathcal H_{it}$:
\begin{enumerate}
\item[(i)] \textbf{Sequential rationality.}
\[
V_{it}(\sigma_i^*,d_i^*\mid h_{it};\Gamma)
\ge
V_{it}(\sigma_i,d_i\mid h_{it};\Gamma),
\quad
\forall(\sigma_i,d_i)\in\mathcal G_i(h_{it}).
\]

\item[(ii)] \textbf{Belief consistency.}
\[
\mu_i^*(\cdot\mid h_{it})=
\begin{cases}
\mathbb P_{it}(\cdot\mid h_{it}\,;\,\Gamma,\sigma^*,d^*),
& \text{if }\mathbb P_{it}(h_{it}\,;\,\Gamma,\sigma^*,d^*)>0,\\
\text{any admissible conditional belief},
& \text{otherwise}.
\end{cases}
\]
\end{enumerate}
\end{definition}

For non-atomic histories, belief consistency is understood
through regular conditional distributions, with Bayesian
agreement almost surely under the equilibrium-induced
subjective law. Agents need not share a common prior with one
another or with the designer. Each agent evaluates continuation
payoffs under his own conditional beliefs, while the designer
evaluates outcome laws under its own probability model.

The definition applies to both indirect and direct mechanisms.
Under a direct mechanism, an equilibrium is
\textbf{truthful and obedient} if agents report their true
current types and follow their prescribed signal-contingent
plans. An outcome law is \textbf{equilibrium attainable}
if it is induced by a PBE of an admissible mechanism.
We consider \textbf{partial implementation}: an outcome need
only be supported by at least one equilibrium.

The distinction between public and private information is
maintained throughout. Confidential reports, private
recommendations, and raw monitoring evidence enter the common
public history only through the committed disclosure policy.

\subsection{Revelation and Recursive Incentives}

Fix an action-feasible direct mechanism \(\Gamma^D\). Holding the other
agents' equilibrium strategies fixed, let
\(V_{it}(\sigma_i,d_i\mid h_{it};\Gamma^D)\) denote agent \(i\)'s
subjective continuation payoff under an admissible report--action
strategy \((\sigma_i,d_i)\). Write \(\sigma_i^{\mathrm{tr}}\) for truthful
reporting and \(d_i^{\mathrm{ob}}\) for obedience. Let
\(\mathcal G_i^A(h_{it})\) denote the admissible post-report action
continuation strategies at \(h_{it}\in\mathcal H_{it}^{A}\), holding
subsequent reporting at its equilibrium strategy.

\begin{definition}[Multi-agent dynamic incentive compatibility]
An action-feasible direct mechanism \(\Gamma^D\) is
\textbf{multi-agent dynamically incentive compatible} if it admits
a belief-consistent assessment with sequentially optimal off-path
continuations and, for every agent \(i\) and date \(t\), the following
conditions hold:

\begin{enumerate}
\item[(i)] \textbf{Reporting incentive compatibility.}
At every feasible reporting information history
\(h_{it}\in\mathcal H_{it}^{R}\) on the truthful-and-obedient
continuation,
\[
V_{it}
(\sigma_i^{\mathrm{tr}},d_i^{\mathrm{ob}}\mid h_{it};\Gamma^D)
\geq
V_{it}
(\sigma_i,d_i\mid h_{it};\Gamma^D),
\qquad
\forall(\sigma_i,d_i)\in\mathcal G_i(h_{it}).
\]

\item[(ii)] \textbf{Post-report obedience.}
At every feasible post-report information history
\(h_{it}\in\mathcal H_{it}^{A}\) following truthful reporting,
\[
d_i^{\mathrm{ob}}
\in
\arg\max_{d_i\in\mathcal G_i^A(h_{it})}
V_{it}(\sigma_i^*,d_i\mid h_{it};\Gamma^D).
\]
\end{enumerate}
\end{definition}

Following a false report, the supporting equilibrium assessment
requires sequential optimality but does not impose obedience to the
original recommendation. Thus reporting deviations allow subsequent
action reoptimization, while obedience is required separately on the
truthful continuation.

We next extend the revelation principle of Section~3. As there,
let \(\mathcal O\) and \(\mathcal O^{IC}\) denote, respectively, the
sets of equilibrium-attainable and dynamically implementable outcome
laws, now for the multi-agent environment restricted to the class
satisfying the conditions below. The revelation argument
maintains the following conditions: reported types and simulated
mechanism histories contain the information required to reproduce
indirect equilibrium strategies; private response spaces can transmit
the corresponding recommended plans; and the simulation preserves
verification feasibility, each agent's conditional information, and
unilateral deviation opportunities, including at off-path histories.
When current types are insufficient, their spaces are understood to
include the necessary private-history information, as specified in
Section~6.1.

\begin{theorem}[Multi-agent dynamic revelation principle]
\label{thm:multiagent_revelation}
Under the preceding conditions,
\[
\mathcal O=\mathcal O^{IC}.
\]
\end{theorem}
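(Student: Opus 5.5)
The argument follows the single-agent proof of Theorem~\ref{prop:dynamic_revelation}, applied agent by agent, with one difference: the simulation runs on the mechanism history \(\bar h_t\) rather than the public history.

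\textbf{Direction \(\mathcal O\subseteq\mathcal O^{IC}\).} Fix an indirect mechanism \(\Gamma=(M,\mathcal M,\varphi)\) and a PBE \((\sigma^\ast,d^\ast,\mu^\ast)\).

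First, define recursively a map from direct mechanism histories \(\bar h_t^D\) to simulated indirect mechanism histories \(\bar h_t^I\). After a report profile \(\widehat\tau_t\), the mechanism computes the simulated messages \(m_{it}=\sigma_{it}^\ast(\widehat\tau_{it},\bar h_t^I)\). Here the maintained condition is used: reported types, enlarged if necessary by sufficient private histories, contain what is needed to evaluate \(\sigma_{it}^\ast\).

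Second, the direct mechanism draws \((y_t,D_t)\sim\lambda_t(\cdot\mid\bar h_t^I,m_t)\). It keeps the same public component \(y_t^{\mathrm{pub}}\) and the same public authorization profile \(D_t\). Through agent \(i\)'s private response channel it sends \(y_{it}^{\mathrm{pr}}\) together with the plan \(\alpha_{it}(s)=d_{it}^\ast(\widehat\tau_{it},h_{it}^I,m_{it},y_t^{\mathrm{pub}},y_{it}^{\mathrm{pr}},D_t,s)\). The maintained transmission condition makes this possible.

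Third, rewards are copied: \(R_{it}^D(\bar h_t^D,\widehat\tau_t,y_t,\ell_t,o_t)=R_{it}(\bar h_t^I,m_t,y_t,D_t,o_t)\).

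With these choices, truthful and obedient play reproduces the joint law of mechanism choices, actions, outcomes, rewards, monitoring messages, states and types. Signals are drawn from \(\pi^N_t\) at the true type profile, and the same latent actions enter \(Q_t\) and \(K_{t+1}\). Each agent's private decision history in the direct mechanism is a measurable image of its history in the indirect mechanism, so belief consistency transfers. Action feasibility follows because \(d_{it}^\ast\) chooses actions in \(\widehat A_{it}\cap D_{it}\).

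\textbf{Incentive compatibility of the constructed mechanism.} This is the key step. Fix agent \(i\), a history \(h_{it}\), and any unilateral deviation \((\sigma_i,d_i)\in\mathcal G_i(h_{it})\) in the direct mechanism, with the other agents truthful and obedient. I would map it into an indirect deviation as follows.

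When \(i\) reports \(\widehat\tau_{it}\in C_{it}(\tau_{it})\), the indirect deviation sends \(\sigma_{it}^\ast(\widehat\tau_{it},\bar h_t^I)\). This message is feasible for \(\widehat\tau_{it}\), hence feasible for \(\tau_{it}\) by verification consistency \(\mathcal M_{it}(h_t,\widehat\tau_{it})\subseteq\mathcal M_{it}(h_t,\tau_{it})\). After the draw, agent \(i\) knows \(\alpha_{it}\) in the direct game. In the indirect game it can compute that plan from what it observes, namely \((y_t^{\mathrm{pub}},y_{it}^{\mathrm{pr}},D_t)\) and the message it sent. It can therefore copy whatever action \(d_i\) prescribes.

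Under this map the deviation induces the same conditional law over continuation histories. The other agents' simulated strategies are unchanged, and their confidential reports stay hidden from \(i\) in both games, so \(i\)'s information is not enlarged. The deviation's payoff therefore equals the payoff of an admissible indirect deviation, which sequential rationality rules out as profitable. Applying the same map at off-path histories, using the maintained preservation of unilateral deviation opportunities, supplies sequentially optimal off-path continuations and beliefs. Obedience on the truthful branch is the special case of action-only deviations.

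\textbf{Main obstacle.} I expect the hard part to be informational rather than about payoffs. The direct mechanism must not reveal to agent \(i\) more than the indirect one did. Otherwise \(i\) could condition a deviation on information it never had, such as other agents' plans, and the reduction would fail. This is exactly why only \(i\)'s own plan is sent privately, authorization is kept public as in the indirect mechanism, and the maintained measurability assumptions are needed. I would check this carefully, history by history, including histories after a deviation, where beliefs follow the ``any admissible belief'' clause and must be copied from \(\mu^\ast\).

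\textbf{Direction \(\mathcal O^{IC}\subseteq\mathcal O\).} Treat the direct mechanism as an indirect one. Messages are reports, the feasible-message correspondences are \(\mathcal M_{it}=C_{it}\), and plans are embedded in \(y_{it}^{\mathrm{pr}}\). Multi-agent dynamic incentive compatibility, together with the supporting belief-consistent assessment, then gives a PBE under which truthful and obedient play attains the outcome law.
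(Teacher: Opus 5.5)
Your proposal is correct and follows essentially the same route as the paper. You simulate the indirect equilibrium messages on the mechanism history, send each agent its own plan (generated from its equilibrium action strategy) privately, copy rewards, and map each unilateral report--action deviation into an indirect deviation via verification consistency and information preservation. For the converse, you represent the direct mechanism as an indirect one with plans embedded in the private responses. You give more explicit plan and reward formulas and make explicit that each agent's plan is computable from its own indirect information, but you rely on the same maintained simulation and off-path preservation conditions that the paper's proof invokes.
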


\begin{proof}
Fix an indirect mechanism and a perfect Bayesian equilibrium satisfying
the maintained conditions. After each confidential type-report profile,
construct a direct mechanism that simulates the indirect equilibrium
messages and mechanism history. The mechanism reproduces the joint
distribution of responses, authorization sets, rewards, and
signal-contingent action plans. Each agent's plan is generated from
its indirect equilibrium strategy and communicated privately, preserving
the agents' conditional information and correlations across private
responses. Truthful reporting and obedience therefore reproduce the
original outcome law, and action feasibility follows from the
indirect equilibrium strategies.

Consider a unilateral report--action deviation by agent \(i\). For
every \(\widehat\tau_{it}\in C_{it}(\tau_{it})\), verification
consistency makes the simulated equilibrium message of
\(\widehat\tau_{it}\) feasible for the true type. Information
preservation allows the agent's subsequent direct action choices
and continuation observations to be reproduced under the indirect
mechanism, with the same subjective conditional laws and continuation
payoffs. A profitable direct deviation would therefore imply a
profitable indirect deviation, contradicting sequential optimality.
The simulated equilibrium assessment also supplies consistent
off-path beliefs and sequentially optimal continuation strategies.
Hence \(\mathcal O\subseteq\mathcal O^{IC}\).

Conversely, any action-feasible direct mechanism can be represented
as an indirect mechanism with type reports as messages and privately
communicated recommended plans. Its verification restrictions,
reward rules, and equilibrium assessment are preserved. Thus
\(\mathcal O^{IC}\subseteq\mathcal O\).
\end{proof}

We now characterize recursive incentives. Let
\(z_t=\chi_t(h_t)\) be a sufficient public state. Suppose that
\((\tau_{it},z_t)\) is sufficient for agent \(i\)'s continuation
problem at every private history, including histories
generated by unilateral deviations. Sufficiency covers conditional
beliefs, the mechanism's confidential record, other agents'
equilibrium strategies, and feasible continuation choices. Otherwise,
the relevant private histories must be retained in the recursion.

Let \(V_{it}(\tau_{it},z_t)\) denote agent \(i\)'s equilibrium
continuation value, which is truthful and obedient on the intended
continuation. For a feasible current report \(\widehat\tau_{it}\),
let
\(\widehat V_{it}(\tau_{it},z_t;\widehat\tau_{it})\)
denote the payoff from that report, optimal feasible current
post-report action reoptimization, and equilibrium continuation
at the resulting histories. These values are evaluated under
agent \(i\)'s subjective beliefs, holding the other agents'
equilibrium strategies fixed.

Assume that the post-report action problems attain measurable maxima.
For \(T=\infty\), maintain the continuity-at-infinity condition of
Section~3.4 agentwise, uniformly over feasible private histories
and unilateral continuation strategies. For \(T<\infty\), set
\(V_{i,T+1}\equiv0\).

\begin{theorem}[Recursive characterization with multiple agents]
\label{thm:multiagent_recursive}
Under the preceding conditions, \(\Gamma^D\) is multi-agent
dynamically incentive compatible if and only if, for every
\(i\in N\) and every feasible truthful-continuation state:

\begin{enumerate}
\item[(i)] \textbf{Recursive obedience.}
At each post-report information history \(h_{it}\) following
a truthful report, with realized authorization \(D_{it}\),
recommended plan \(\alpha_{it}\), and private signal \(s_{it}\),
\[
\begin{aligned}
\alpha_{it}(s_{it})
\in\arg\max_{a_i\in A_{it}\cap D_{it}}
\int\Big[
&u_{it}((a_i,\widetilde a_{-i,t}),\widetilde\theta_t)
+\widetilde r_{it}
\\
&+\delta V_{i,t+1}
(\widetilde\tau_{i,t+1},\widetilde z_{t+1})
\Big]
\,d\mathbb P_{it}
(\cdot\mid h_{it};\Gamma^D,a_i).
\end{aligned}
\]

\item[(ii)] \textbf{Recursive honesty.}
\[
\tau_{it}
\in
\arg\max_{\widehat\tau_{it}\in C_{it}(\tau_{it})}
\widehat V_{it}
(\tau_{it},z_t;\widehat\tau_{it}).
\]

\item[(iii)] \textbf{Promise consistency.}
\[
V_{it}(\tau_{it},z_t)
=
\widehat V_{it}
(\tau_{it},z_t;\tau_{it}).
\]
\end{enumerate}
\end{theorem}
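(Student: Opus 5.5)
The plan is to mirror the proof of Proposition~\ref{prop:recursive_DIC} agent by agent. Fix agent \(i\) and the other agents' truthful-and-obedient equilibrium strategies. From agent \(i\)'s standpoint, the multi-agent mechanism then induces a single-agent dynamic decision problem. In that problem, the other agents' reports, correlated private responses, signals, and latent actions are simply part of the exogenous randomness integrated under \(\mathbb P_{it}\). The maintained sufficiency of \((\tau_{it},z_t)\) plays the role of the sufficient public state of Section~3.4. It guarantees that \(V_{it}\) and \(\widehat V_{it}\) are well defined as functions of \((\tau_{it},z_t)\), including at histories generated by unilateral deviations. Once this reduction is in place, conditions~(i)--(iii) are exactly the single-agent conditions applied to agent \(i\)'s induced problem.

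\textbf{Necessity.} Suppose \(\Gamma^D\) is multi-agent dynamically incentive compatible. Recursive obedience follows from part~(ii) of the definition by considering one-stage action-only deviations at a post-report history after a truthful report, followed by equilibrium continuation. Such a deviation changes \(u_{it}\) through \((a_i,\widetilde a_{-i,t})\), changes \(\widetilde r_{it}\) through the outcome distribution, and changes \(\widetilde z_{t+1}\) and \(\widetilde\tau_{i,t+1}\) through \(K_{t+1}\). All of these effects are captured by the conditional law \(\mathbb P_{it}(\cdot\mid h_{it};\Gamma^D,a_i)\). For honesty, I would consider the deviation that reports \(\widehat\tau_{it}\in C_{it}(\tau_{it})\), chooses the measurable maximizing action after each private response, plan, and signal, and then follows equilibrium play. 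Its value is \(\widehat V_{it}(\tau_{it},z_t;\widehat\tau_{it})\), so reporting IC gives \(V_{it}\ge\widehat V_{it}\). Obedience on the truthful branch then gives equality at \(\widehat\tau_{it}=\tau_{it}\), which is promise consistency.

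\textbf{Sufficiency.} I would first show that (i)--(iii) rule out every profitable one-stage unilateral report--action deviation at each feasible state. A false report is dominated because \(\widehat V_{it}\) already embeds optimal action reoptimization. An action deviation after a truthful report is dominated by (i). Then, for \(T<\infty\), backward induction from \(V_{i,T+1}\equiv0\) extends this to all multistage deviations. For \(T=\infty\), the agentwise continuity-at-infinity condition, which is uniform over private histories and unilateral strategies, lets us truncate arbitrary deviations and pass to the limit, giving the one-stage-deviation principle. The belief-consistent assessment is taken to be the equilibrium-induced law on path, with the maintained admissible beliefs and sequentially optimal continuations off path.

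\textbf{Main obstacle.} The hardest step is justifying the one-stage-deviation principle when agent \(i\)'s deviation is confidential. A false report or hidden action changes the mechanism's record \(\bar h_t\) and the other agents' future behavior in ways agent \(i\) only infers through its beliefs. As a result, the continuation after a deviation need not coincide with the continuation at any on-path state. I would handle this by relying on the stated assumption that \((\tau_{it},z_t)\) is sufficient at every private history, including deviation-generated ones, so that the continuation value after a deviation is again \(V_{i,t+1}\) evaluated at the realized state. If that assumption is unavailable, the fallback is to enlarge the state to the relevant private history, as the statement's preamble allows, and rerun the same argument.
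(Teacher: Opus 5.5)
Your proposal is correct and follows essentially the same route as the paper. Necessity comes from one-stage unilateral deviations: action-only deviations after a truthful report give recursive obedience; the report-plus-optimal-reoptimization deviation, valued by \(\widehat V_{it}\), gives recursive honesty; and obedience on the truthful branch gives promise consistency. Sufficiency comes from ruling out one-stage report--action deviations and then invoking backward induction, or agentwise uniform continuity at infinity when \(T=\infty\).

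One qualification applies to your step of ruling out deviations ``at each feasible state.'' As in the paper, histories off the truthful continuation are not covered by (i)--(iii) themselves. They are handled by the maintained supporting assessment with sequentially optimal off-path continuations, together with the sufficiency of \((\tau_{it},z_t)\) at deviation-generated histories.
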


In condition~(i), the conditional law incorporates the other agents'
equilibrium actions, the current outcome, and the action-dependent
transition. Conditions~(ii)--(iii) equivalently give the
\textbf{nested report--action condition}
\[
V_{it}(\tau_{it},z_t)
\geq
\widehat V_{it}
(\tau_{it},z_t;\widehat\tau_{it}),
\qquad
\forall\widehat\tau_{it}\in C_{it}(\tau_{it}),
\]
with equality at the truthful report. These conditions are agentwise,
but continuation incentives may remain strategically coupled.

\begin{proof}
Necessity follows from one-stage unilateral deviations. Current
action deviations following truthful reporting yield recursive
obedience. For any feasible report \(\widehat\tau_{it}\), the
deviation that submits that report, optimally rechooses the current
action, and then follows the equilibrium continuation strategies
has value
\(\widehat V_{it}(\tau_{it},z_t;\widehat\tau_{it})\).
Dynamic incentive compatibility therefore implies recursive
honesty, while obedience at the truthful report gives promise
consistency.

Conversely, \(\widehat V_{it}\) incorporates optimal current action
reoptimization and the continuation effects of each feasible report.
Conditions~(i)--(iii) rule out profitable one-stage report--action
deviations on the truthful continuation, while the supporting
equilibrium assessment ensures sequential optimality at other
feasible histories. Backward induction for \(T<\infty\), or
agentwise uniform continuity at infinity for \(T=\infty\), rules
out profitable multistage deviations. Together with the belief
consistency of Section~6.1, these conditions establish
truthful-and-obedient perfect Bayesian equilibrium.
\end{proof}

\subsection{Optimal Multi-Agent Dynamic Mechanism Design}

To obtain a benchmark parallel to Section~4, we focus on a multi-agent
environment with independent scalar private types under a common prior shared
by the designer and all agents. For each \(i\in N\), let
\[
\eta_{it}\in
E_{it}
=
[\underline\eta_{it},\overline\eta_{it}],
\qquad
\eta_i^t
=
(\eta_{i1},\ldots,\eta_{it}),
\]
and let \(\eta_t=(\eta_{1t},\ldots,\eta_{nt})\). The mechanism induces an
allocation vector \(q_t=(q_{1t},\ldots,q_{nt})\in\mathcal Q_t\).

As in Section~4, the scalar private-type formulation concerns the
hidden-information dimension and retains post-report action discretion. The
allocation \(q_{it}\) is agent \(i\)'s truthful-and-obedient control or allocation; when
implemented through a recommended action, the realized action \(a_{it}\)
remains privately chosen and need not be publicly observed. A deviation may
therefore involve action reoptimization and affect public outcomes, monitoring
evidence, physical transitions, and continuation states.

Agent \(i\)'s gross utility is \(u_{it}(\eta_{it},q_{it},\theta_t)\).
Let \(p_{it}\) denote the payment from agent \(i\) to the designer.
The period payoffs for agent \(i\) and the designer, respectively, are
\[
u_{it}(\eta_{it},q_{it},\theta_t)-p_{it},
\qquad
v_t(q_t,\theta_t)+\sum_{i=1}^n p_{it}.
\]
As in Section~4, transfers may be induced through observable mechanism choices
and outcomes; the formulation does not require the realized action to be
publicly observed.

Reports satisfy
\(C_{it}(\eta_{it})=[\underline\eta_{it},\eta_{it}]\).
The initial scalar types are independent across agents under the common
prior. Each scalar-type process admits the independent-shock representation
and regular impulse responses of Section~4.1, with transitions exogenous to
mechanism choices and independent across agents conditional on the maintained
public information. We impose the corresponding agentwise scalar-envelope
regularity: holding an agent's report fixed, a marginal change in its true
scalar type affects its continuation utility only through its current and
future gross payoffs and the induced evolution of its own scalar types, with
no additional derivatives of the conditional laws of the common state or
other agents' private information. These restrictions, together with the
common prior, allow the dynamic-envelope and transfer-accounting arguments
of Section~4 to apply agent by agent. Let \(I_{i,1,t}(\eta_i^t)\) denote the
impulse response of \(\eta_{it}\) to \(\eta_{i1}\), with
\(I_{i,1,1}=1\). Sections~6.1--6.2 continue to allow heterogeneous beliefs
and correlated private types.

Let \(V_{it}(\eta_i^t,z_t)\) denote agent \(i\)'s truthful-and-obedient
continuation value and, for
\(\widehat\eta_{it}\in C_{it}(\eta_{it})\), let
\(\widehat V_{it}(\eta_i^t,z_t;\widehat\eta_{it})\) denote its deviation
value after optimal current post-report action reoptimization, holding the
other agents' equilibrium strategies fixed. These values retain the
sufficiency and sequential optimality requirements of Section~6.2.

Participation is imposed after each agent observes its initial private type.
The designer's full multi-agent \(4B\) problem is
\begin{equation}
\label{eq:multiagent_full_problem}
\begin{aligned}
\sup_{\psi}\quad
&
\Pi^P(\psi)
=
\int
\sum_{t=1}^{T}
\delta^{t-1}
\left(
v_t(q_t,\theta_t)
+
\sum_{i=1}^n p_{it}
\right)
\,d\mathbb P(\cdot;\psi)
\\
\text{s.t.}\quad
&
V_{i1}(\eta_{i1})\geq0,
\qquad
\forall i,\ \forall\eta_{i1}\in E_{i1},
\\
&
V_{it}(\eta_i^t,z_t)
\geq
\widehat V_{it}
(\eta_i^t,z_t;\widehat\eta_{it}),
\\[-0.2em]
&
\hspace{5em}
\forall i,\ \forall t,\
\forall\widehat\eta_{it}\in C_{it}(\eta_{it}),
\\
&
\psi\text{ satisfies post-report obedience and is}
\\
&
\hspace{2em}\text{technologically and action feasible.}
\end{aligned}
\end{equation}
The incentive constraints apply after every history. In particular,
the reporting comparisons allow the true type to reoptimize its current
action, while post-report obedience imposes sequential optimality after the
truthful report, as in Theorem~\ref{thm:multiagent_recursive}. The three
margins of Section~4 remain: hidden information, one-sided verification, and
hidden-action implementation.

Accordingly, the multi-agent benchmark structure is branched rather than
linear. Holding joint feasibility fixed, \(2B^{\mathrm{HI}}\) denotes the
hidden-information problem with directly assignable actions;
\(2B^{\mathrm{HA}}\) the complete-information hidden-action problem;
\(3B^{\mathrm{HI+V}}\) combines hidden information and one-sided verification
without hidden action; \(3B^{\mathrm{HI+HA}}\) combines hidden information
and hidden action under unrestricted reporting; and \(4B\) combines all three
margins. The \(1B\) benchmark removes both private-information and
action-implementation frictions. These labels apply to all agents; mixed
agent-specific configurations are also possible.

Let \(\kappa_{i1}(\eta_{i1})\) denote agent \(i\)'s initial verification
wedge, as in Section~4.1. Suppose \(F_{i1}\) admits a strictly positive
density \(f_{i1}\) on \(E_{i1}\), and define the inverse hazard rate by
\begin{equation}
\label{eq:multiagent_initial_hazard}
\Lambda_{i1}(\eta_{i1})
:=
\frac{1-F_{i1}(\eta_{i1})}
{f_{i1}(\eta_{i1})}.
\end{equation}

\begin{theorem}[Multi-agent verification-adjusted dynamic virtual surplus]
\label{thm:multiagent_virtual_surplus}
Under the maintained common-prior, independence, and scalar-envelope
assumptions, suppose \(V_{i1}\) is absolutely continuous for every
\(i\in N\). Then every multi-agent dynamically incentive-compatible
mechanism \(\psi\) satisfies
\begin{equation}
\label{eq:multiagent_virtual_surplus_representation}
\begin{aligned}
\Pi^P(\psi)
={}&
\int
\Bigg[
\sum_{t=1}^{T}
\delta^{t-1}
\Bigg\{
v_t(q_t,\theta_t)
+
\sum_{i=1}^n
\Big[
u_{it}(\eta_{it},q_{it},\theta_t)
\\
&\qquad\qquad
-
\Lambda_{i1}(\eta_{i1})
I_{i,1,t}(\eta_i^t)
u_{it,\eta_i}(\eta_{it},q_{it},\theta_t)
\Big]
\Bigg\}
-
\sum_{i=1}^n
\Lambda_{i1}(\eta_{i1})
\kappa_{i1}(\eta_{i1})
\Bigg]
\,d\mathbb P(\cdot;\psi)
\\
&-
\sum_{i=1}^n
V_{i1}(\underline\eta_{i1}).
\end{aligned}
\end{equation}
\end{theorem}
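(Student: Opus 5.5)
The plan is to replicate the single-agent argument of Theorem~\ref{thm:verification_virtual_surplus} agent by agent and then sum over agents. The proof has three steps, and the third is where independence does the real work.

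First, I would rewrite the designer's revenue in terms of agents' rents. Under the common prior, agent \(i\)'s expected discounted payments equal its expected discounted gross utility minus its initial expected continuation value. The key identity is
\[
\int\sum_{t=1}^{T}\delta^{t-1}p_{it}\,d\mathbb P(\cdot;\psi)
=
\int\sum_{t=1}^{T}\delta^{t-1}u_{it}(\eta_{it},q_{it},\theta_t)\,d\mathbb P(\cdot;\psi)
-\int_{E_{i1}}\overline V_{i1}(\eta_{i1})\,dF_{i1}(\eta_{i1}),
\]
where \(\overline V_{i1}(\eta_{i1})\) denotes agent \(i\)'s interim value. This is the object written \(V_{i1}(\eta_{i1})\) in the participation constraint of \eqref{eq:multiagent_full_problem}, obtained by averaging over the other agents' initial types. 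Summing over \(i\) and adding \(v_t\) gives \(\Pi^P\) as expected total surplus minus \(\sum_i\E[V_{i1}(\eta_{i1})]\). Independence of initial types and the common prior make this interim averaging consistent with the designer's joint law.

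Second, I would derive agent \(i\)'s envelope formula. I would apply the agentwise scalar-envelope regularity together with the definition of the verification wedge \(\kappa_{i1}\), exactly as in Section~4.1, holding the other agents' equilibrium strategies fixed. This should give, for almost every \(\eta_{i1}\),
\[
\frac{dV_{i1}(\eta_{i1})}{d\eta_{i1}}
=\int\sum_{t=1}^{T}\delta^{t-1}I_{i,1,t}(\widetilde\eta_i^t)\,u_{it,\eta_i}(\widetilde\eta_{it},\widetilde q_{it},\widetilde\theta_t)\,d\mathbb P(\cdot\mid\eta_{i1};\psi)+\kappa_{i1}(\eta_{i1}).
\]
The conditioning here is on \(\eta_{i1}\) alone, integrating over others' types, the common state, and future shocks. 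The maintained restriction that a perturbation of \(\eta_{i1}\) does not move the conditional laws of \(\theta_t\) or \(\eta_{-i}\) is what prevents extra derivative terms. Obedience and the envelope theorem for the inner action problem, as in Proposition~\ref{prop:wedge_sign}, let the hidden action be held at its prescribed value.

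Third, I would integrate by parts using absolute continuity of \(V_{i1}\):
\[
\int V_{i1}\,dF_{i1}=V_{i1}(\underline\eta_{i1})+\int \Lambda_{i1}(\eta_{i1})\,V_{i1}'(\eta_{i1})\,dF_{i1}(\eta_{i1}).
\]
Substituting the envelope formula and un-conditioning via iterated expectations converts the result into an integral against \(\mathbb P(\cdot;\psi)\). Summing over \(i\) and substituting into the first step then yields \eqref{eq:multiagent_virtual_surplus_representation}.

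I expect this third step to be the main obstacle. It requires justifying that the interim conditional law \(\mathbb P(\cdot\mid\eta_{i1};\psi)\) integrated against \(f_{i1}\) reproduces the designer's joint law. That needs both independence, so that the marginal of \(\eta_{-i,1}\) does not depend on \(\eta_{i1}\), and the common prior, so that agent \(i\)'s subjective expectations coincide with the designer's. Integrability conditions are needed alongside these for a Fubini interchange. I would handle it by making the Fubini step explicit, relying on the integrably bounded impulse responses and the equi-Lipschitz derivatives.
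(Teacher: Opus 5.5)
Your proposal is correct and follows the paper's own proof essentially step for step: transfer accounting under the common prior, the agentwise envelope formula with the verification wedge $\kappa_{i1}$, integration by parts using absolute continuity of $V_{i1}$, and summation over agents. One small refinement: the un-conditioning in your third step is just iterated expectations under the common-prior joint law and does not itself need independence; independence, together with agentwise scalar-envelope regularity, does its real work in your second step, where it rules out derivative terms coming from the conditional law of $\eta_{-i}$.
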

\begin{proof}
Under the common prior and quasilinear transfers, the designer's expected
payoff equals expected joint gross surplus minus the sum of the agents'
initial expected continuation utilities:
\[
\begin{aligned}
\Pi^P
={}&
\int
\sum_{t=1}^{T}
\delta^{t-1}
\left(
v_t(q_t,\theta_t)
+
\sum_{i=1}^n u_{it}(\eta_{it},q_{it},\theta_t)
\right)
\,d\mathbb P(\cdot;\psi)
\\
&-
\sum_{i=1}^n
\int_{E_{i1}}V_{i1}(\eta_{i1})\,dF_{i1}(\eta_{i1}).
\end{aligned}
\]
For each agent, the scalar-envelope representation in Section~4.1 and the
verification wedge imply, almost everywhere,
\[
\frac{dV_{i1}(\eta_{i1})}{d\eta_{i1}}
=
\int
\sum_{t=1}^{T}
\delta^{t-1}
I_{i,1,t}(\widetilde\eta_i^t)
u_{it,\eta_i}
(\widetilde\eta_{it},\widetilde q_{it},\widetilde\theta_t)
\,d\mathbb P(\cdot\mid\eta_{i1};\psi)
+
\kappa_{i1}(\eta_{i1}).
\]
Absolute continuity and integration by parts therefore give
\[
\begin{aligned}
&\int_{E_{i1}}
V_{i1}(\eta_{i1})\,dF_{i1}(\eta_{i1})
=
V_{i1}(\underline\eta_{i1})
\\
&\quad+
\int
\Bigg[
\sum_{t=1}^{T}
\delta^{t-1}
\Lambda_{i1}(\eta_{i1})
I_{i,1,t}(\eta_i^t)
u_{it,\eta_i}(\eta_{it},q_{it},\theta_t)
\\
&\hspace{9em}+
\Lambda_{i1}(\eta_{i1})\kappa_{i1}(\eta_{i1})
\Bigg]
\,d\mathbb P(\cdot;\psi).
\end{aligned}
\]
The common prior makes the agentwise continuation expectations and the
designer's transfer accounting compatible with the same joint law.
Summing over agents proves
\eqref{eq:multiagent_virtual_surplus_representation}.
\end{proof}

Define the period-\(t\) \textbf{multi-agent dynamic virtual surplus} by
\begin{equation}
\label{eq:multiagent_dynamic_virtual_surplus}
\begin{aligned}
\Phi_t^N(\eta^t,q_t,\theta_t)
:=
&
\sum_{i=1}^n
\Big[
u_{it}(\eta_{it},q_{it},\theta_t)
-
\Lambda_{i1}(\eta_{i1})
I_{i,1,t}(\eta_i^t)
u_{it,\eta_i}(\eta_{it},q_{it},\theta_t)
\Big]
\\
&
+
v_t(q_t,\theta_t),
\end{aligned}
\end{equation}
where \(\eta^t=(\eta_1^t,\ldots,\eta_n^t)\). Then
\[
\Pi^P
=
\int
\left[
\sum_{t=1}^{T}
\delta^{t-1}
\Phi_t^N(\eta^t,q_t,\theta_t)
-
\sum_{i=1}^n
\Lambda_{i1}(\eta_{i1})\kappa_{i1}(\eta_{i1})
\right]
\,d\mathbb P(\cdot;\psi)
-
\sum_{i=1}^n V_{i1}(\underline\eta_{i1}).
\]

The three margins of Section~4 carry over agent by agent: informational
rents enter through dynamic virtual surplus, verification rents through the
verification wedges, and hidden-action implementation through the nested
obedience conditions and the implementable mechanism set. Hidden action does
not add a separate term to virtual surplus; it restricts which allocations
can be implemented.

By the agentwise counterpart of Proposition~\ref{prop:wedge_sign},
\(\kappa_{i1}(\eta_{i1})\geq0\) almost everywhere. Together with initial
participation, this gives
\begin{equation}
\label{eq:multiagent_virtual_upper_bound}
\Pi^P
\leq
\int
\sum_{t=1}^{T}
\delta^{t-1}
\Phi_t^N(\eta^t,q_t,\theta_t)
\,d\mathbb P(\cdot;\psi).
\end{equation}

Let \(\mathcal Q^F\) denote the technologically feasible reduced-form
allocation processes, without imposing incentive, informational, or
post-report action-implementation restrictions. The relaxed virtual-surplus
candidate is
\begin{equation}
\label{eq:multiagent_relaxed_problem}
q^\circ
\in
\arg\max_{q\in\mathcal Q^F}
\int
\sum_{t=1}^{T}
\delta^{t-1}
\Phi_t^N(\eta^t,q_t,\theta_t)
\,d\mathbb P(\cdot;q).
\end{equation}
As in Section~4.2, this candidate must satisfy the agentwise nested
report--action incentive conditions and the remaining feasibility and
information restrictions before it can solve the full problem. Envelope
compatibility determines whether the virtual-surplus upper bound is attained.

\begin{corollary}[Tightness of the multi-agent virtual-surplus relaxation]
\label{cor:multiagent_relaxation_tightness}
Let \(q^\circ\) solve \eqref{eq:multiagent_relaxed_problem}. Suppose there
exists a technologically and action-feasible direct mechanism \(\psi^\circ\)
implementing \(q^\circ\) that is multi-agent dynamically incentive
compatible under Section~6.2, is envelope compatible for every agent, and
satisfies
\[
V_{i1}(\underline\eta_{i1})=0,
\qquad i=1,\ldots,n.
\]
Then \(\psi^\circ\) solves the full multi-agent \(4B\) problem.
\end{corollary}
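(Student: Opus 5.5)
The argument mirrors the proof of Corollary~\ref{cor:relaxed_optimality}: an upper-bound-attainment argument built on Theorem~\ref{thm:multiagent_virtual_surplus} and the bound \eqref{eq:multiagent_virtual_upper_bound}. It has three steps.

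\textbf{Step 1: the bound holds for every admissible mechanism.} Fix any mechanism \(\psi\) that is feasible in the full multi-agent \(4B\) problem \eqref{eq:multiagent_full_problem}. Such a \(\psi\) is technologically and action feasible, multi-agent dynamically incentive compatible, and individually rational.

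\begin{itemize}
\item Theorem~\ref{thm:multiagent_virtual_surplus} gives the exact representation
\[
\Pi^P(\psi)=\int\Big[\sum_t\delta^{t-1}\Phi_t^N-\sum_i\Lambda_{i1}\kappa_{i1}\Big]\,d\mathbb P(\cdot;\psi)-\sum_i V_{i1}(\underline\eta_{i1}).
\]
\item The agentwise counterpart of Proposition~\ref{prop:wedge_sign} gives \(\kappa_{i1}\ge0\) almost everywhere.
\item \(\Lambda_{i1}\ge0\).
\item Participation gives \(V_{i1}(\underline\eta_{i1})\ge0\).
\end{itemize}

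Together these yield \eqref{eq:multiagent_virtual_upper_bound} for \(\psi\). The truthful-and-obedient allocation process \(q^\psi\) of \(\psi\) belongs to \(\mathcal Q^F\), because \(\mathcal Q^F\) imposes only technological feasibility. The law \(\mathbb P(\cdot;q^\psi)\) coincides with \(\mathbb P(\cdot;\psi)\). Hence
\[
\Pi^P(\psi)\le \int\sum_t\delta^{t-1}\Phi_t^N(\eta^t,q^\psi_t,\theta_t)\,d\mathbb P(\cdot;q^\psi)\le \int\sum_t\delta^{t-1}\Phi_t^N(\eta^t,q^\circ_t,\theta_t)\,d\mathbb P(\cdot;q^\circ),
\]
where the last inequality uses the definition of \(q^\circ\) in \eqref{eq:multiagent_relaxed_problem}.

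\textbf{Step 2: \(\psi^\circ\) attains the bound.}

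\begin{itemize}
\item \(\psi^\circ\) is admissible. It is multi-agent dynamically incentive compatible by hypothesis, which covers the nested report--action and obedience constraints via Theorem~\ref{thm:multiagent_recursive}. It is technologically and action feasible. It satisfies participation: from \(V_{i1}(\underline\eta_{i1})=0\) and the envelope representation with \(\kappa_{i1}=0\), we get \(dV_{i1}/d\eta_{i1}=\int\sum_t\delta^{t-1}I_{i,1,t}u_{it,\eta_i}\). This is nonnegative whenever \(I_{i,1,t}\ge0\) and \(u_{it,\eta_i}\ge0\), as in the maintained screening environment. If that sign condition is not maintained, the participation of \(\psi^\circ\) is implicitly part of the hypothesis that it be a feasible mechanism for the full problem, and I would state this explicitly.
\item Apply Theorem~\ref{thm:multiagent_virtual_surplus} to \(\psi^\circ\). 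Envelope compatibility for every agent gives \(\kappa_{i1}=0\) almost everywhere, and \(V_{i1}(\underline\eta_{i1})=0\) for all \(i\). Therefore
\[
\Pi^P(\psi^\circ)=\int\sum_t\delta^{t-1}\Phi^N_t(\eta^t,q^\circ_t,\theta_t)\,d\mathbb P(\cdot;\psi^\circ),
\]
and the right-hand side is the maximal relaxed value, since \(\psi^\circ\) implements \(q^\circ\) and so \(\mathbb P(\cdot;\psi^\circ)=\mathbb P(\cdot;q^\circ)\).
\end{itemize}

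\textbf{Step 3: conclusion.} Combining Steps 1 and 2 gives \(\Pi^P(\psi)\le\Pi^P(\psi^\circ)\) for every feasible \(\psi\). So \(\psi^\circ\) solves the full multi-agent \(4B\) problem.

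The only delicate point is the bookkeeping that the relaxed objective evaluated at \(q^\psi\) under \(\mathbb P(\cdot;q^\psi)\) coincides with the virtual-surplus expression under \(\mathbb P(\cdot;\psi)\). In particular, the joint law must not depend on transfers or unimplemented off-path behavior beyond what \(q\) records. I would handle this by invoking the convention that each candidate in \(\mathcal Q^F\) carries its physical law, and that this law equals \(\mathbb P(\cdot;\psi)\) when the candidate is implemented by \(\psi\). With independence and the common prior, the agentwise envelope integrals combine under that single joint law.
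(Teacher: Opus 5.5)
Your proposal is correct and follows the paper's own argument. The bound \eqref{eq:multiagent_virtual_upper_bound} holds for every feasible mechanism, $q^\circ$ maximizes it over $\mathcal Q^F$, and agentwise envelope compatibility together with $V_{i1}(\underline\eta_{i1})=0$ makes $\psi^\circ$ attain it. Your added checks, namely participation of $\psi^\circ$ for all initial types and the identification $\mathbb P(\cdot;\psi^\circ)=\mathbb P(\cdot;q^\circ)$, spell out points that the paper's short proof leaves implicit.
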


\begin{proof}
The upper bound \eqref{eq:multiagent_virtual_upper_bound} applies to every
feasible dynamically incentive-compatible mechanism satisfying initial
participation. The allocation \(q^\circ\) maximizes this bound. Agentwise
envelope compatibility eliminates the verification wedges, while binding
participation of each lowest initial type eliminates the remaining initial
rents. Thus \(\psi^\circ\) attains the upper bound and is optimal.
\end{proof}

Corollary~\ref{cor:multiagent_relaxation_tightness} gives the solution
procedure of Section~4.2: formulate the full \(4B\) problem, obtain a relaxed
virtual-surplus candidate, and verify report--action incentives, feasibility,
and rent tightness for every agent. If hidden action is removed by direct
assignment while verification remains, the problem reduces to
\(3B^{\mathrm{HI+V}}\). If reporting is unrestricted while hidden action
remains, it reduces to \(3B^{\mathrm{HI+HA}}\). Removing both additional
restrictions yields \(2B^{\mathrm{HI}}\). Removing private information while
retaining hidden action yields \(2B^{\mathrm{HA}}\), and removing both
private information and the action-implementation friction yields \(1B\).

\begin{corollary}[Branched reductions of the multi-agent problem]
\label{cor:multiagent_nested_reductions}
Under the maintained assumptions, direct assignment for every agent reduces
the full problem to \(3B^{\mathrm{HI+V}}\); unrestricted reporting for every
agent reduces it to \(3B^{\mathrm{HI+HA}}\); imposing both reductions yields
\(2B^{\mathrm{HI}}\); removing private information while retaining hidden
action yields \(2B^{\mathrm{HA}}\); and removing both private information and
the action-implementation friction yields \(1B\).
\end{corollary}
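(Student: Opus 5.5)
The plan is to prove Corollary~\ref{cor:multiagent_nested_reductions} the same way as its single-agent counterpart, Corollary~\ref{cor:best_reductions}. We start from the full multi-agent \(4B\) problem \eqref{eq:multiagent_full_problem} and remove each friction in turn. For each reduction we identify which constraint block becomes vacuous or is replaced, and we check that the result is exactly the constraint configuration defining the target benchmark. Joint feasibility and the agentwise recursive structure of Theorem~\ref{thm:multiagent_recursive} are held fixed throughout. Every step is agentwise: the incentive constraints in \eqref{eq:multiagent_full_problem} are indexed by \(i\), and joint feasibility is untouched by any of the reductions.

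First, \textbf{direct assignment for every agent}. If \(q_{it}\) is assigned directly, then agent \(i\) has no post-report action choice. The inner maximization in \(\widehat V_{it}\) disappears, so \(\widehat V_{it}(\eta_i^t,z_t;\widehat\eta_{it})\) equals the value of the assigned path after report \(\widehat\eta_{it}\). Recursive obedience in Theorem~\ref{thm:multiagent_recursive}(i) becomes vacuous, exactly as in Corollary~\ref{cor:recursive_direct_assignment}. The constraints that remain are participation, reporting over \(C_{it}(\eta_{it})=[\underline\eta_{it},\eta_{it}]\), and feasibility, which is \(3B^{\mathrm{HI+V}}\).

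Second, \textbf{unrestricted reporting}. Setting \(C_{it}=E_{it}\) for every agent keeps the nested report--action comparison and obedience but enlarges the deviation domain. This gives \(3B^{\mathrm{HI+HA}}\). In this branch two-sided envelope arguments apply locally, so by the agentwise version of Proposition~\ref{prop:local_envelope} the verification wedges \(\kappa_{i1}\) vanish in Theorem~\ref{thm:multiagent_virtual_surplus}. Imposing both of the previous reductions removes obedience and the verification restriction together, which leaves the independent-type dynamic screening problem \(2B^{\mathrm{HI}}\).

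Third, \textbf{removing private information}. If each \(\eta_{it}\) is common knowledge, the report space is effectively a singleton. Reporting constraints and verification are then vacuous, and the inverse-hazard terms drop out because no screening is needed. Only participation and obedience remain, which is \(2B^{\mathrm{HA}}\). If the action-implementation friction is also removed, through direct assignment or first-best implementability, obedience disappears as well and the problem reduces to \(1B\).

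The main obstacle is verifying, in the multi-agent setting, that removing one agent's friction does not leave some residual coupling. Hidden actions of other agents still affect agent \(i\)'s continuation law through \(\mathbb P_{it}\). I would handle this by noting that each reduction is imposed on every agent simultaneously, as the statement specifies. The conditional laws in Theorem~\ref{thm:multiagent_recursive} then inherit the reduced structure, and the remaining constraint set coincides with the benchmark definitions given after \eqref{eq:multiagent_full_problem}.
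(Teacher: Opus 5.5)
Your proposal is correct and takes the same route as the paper. The paper's proof simply observes that each reduction deletes the corresponding informational, verification, or action-implementation restriction from the joint problem \eqref{eq:multiagent_full_problem}, leaving the benchmark configuration defined after it; your constraint-by-constraint bookkeeping (obedience vacuous under direct assignment, the claim set enlarged to \(E_{it}\) under unrestricted reporting, reporting and verification vacuous under common knowledge) makes that one-line argument explicit, and your remarks on vanishing wedges and cross-agent coupling are consistent with the paper but not needed for the statement.
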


\begin{proof}
Each reduction removes the corresponding informational, verification, or
action-implementation restriction from the joint mechanism problem.
\end{proof}

To parallel Section~4.3, suppose in addition that gross utility is type
linear,
\[
u_{it}(\eta_{it},q_{it},\theta_t)
=
\eta_{it}g_{it}(q_{it},\theta_t),
\qquad i=1,\ldots,n,
\]
where \(g_{it}\) is increasing and differentiable in \(q_{it}\). Since
\(u_{it,\eta_i}=g_{it}\), define the
\textbf{agent-specific dynamic virtual type} by
\begin{equation}
\label{eq:multiagent_dynamic_virtual_type}
\eta_{it}^{\mathrm{vir}}(\eta_i^t)
:=
\eta_{it}
-
\Lambda_{i1}(\eta_{i1})
I_{i,1,t}(\eta_i^t).
\end{equation}
Then
\[
\Phi_t^N(\eta^t,q_t,\theta_t)
=
\sum_{i=1}^n
\eta_{it}^{\mathrm{vir}}(\eta_i^t)
g_{it}(q_{it},\theta_t)
+
v_t(q_t,\theta_t).
\]

For the following pointwise characterization, assume a historywise separable
technology with no intertemporal allocation constraints and a law of
\((\eta^t,\theta_t)\) that is exogenous to the allocation process. The
relaxed process may be conditioned on \((\eta^t,\theta_t)\); when
\(\theta_t\) is unobserved, this is a
full-information relaxation whose implementability must be checked. Under
these restrictions, the relaxed candidate is characterized pointwise by
\[
q_t^\circ(\eta^t,\theta_t)
\in
\arg\max_{q_t\in\mathcal Q_t}
\left\{
\sum_{i=1}^n
\eta_{it}^{\mathrm{vir}}(\eta_i^t)
g_{it}(q_{it},\theta_t)
+
v_t(q_t,\theta_t)
\right\}.
\]
When no joint feasibility constraint binds and the solution is interior,
\begin{equation}
\label{eq:multiagent_relaxed_foc}
\eta_{it}^{\mathrm{vir}}(\eta_i^t)
g_{it,q}(q_{it}^\circ,\theta_t)
+
v_{t,q_i}(q_t^\circ,\theta_t)
=
0,
\qquad i=1,\ldots,n.
\end{equation}

When technologies, including hidden-action implementation technologies,
separate across agents, the distortion results of
Proposition~\ref{prop:distortion_dynamics} apply agent by agent, with
Corollary~\ref{cor:type_linear_distortion} giving the corresponding
linear-quadratic benchmark. Joint feasibility adds a distinct allocation
margin. For example, suppose the only within-period allocation restrictions
are nonnegativity and shared capacity,
\[
q_{it}\geq0,\qquad
\sum_{i=1}^n q_{it}
\leq
\overline q_t,\qquad\overline q_t>0.
\]

\begin{proposition}[Dynamic virtual allocation under shared capacity]
\label{prop:multiagent_distortions}
Suppose the preceding pointwise assumptions hold, \(g_{it,q}>0\),
and the relaxed period objective \(\Phi_t^N\) is strictly concave
in \(q_t\). At every positive component of the relaxed solution,
there exists \(\xi_t\geq0\) such that
\begin{equation}
\label{eq:multiagent_capacity_foc}
\eta_{it}^{\mathrm{vir}}(\eta_i^t)
g_{it,q}(q_{it}^\circ,\theta_t)
+
v_{t,q_i}(q_t^\circ,\theta_t)
=
\xi_t,
\qquad i:q_{it}^\circ>0,
\end{equation}
with
\begin{equation}
\label{eq:multiagent_capacity_complementarity}
\xi_t
\left(
\overline q_t
-
\sum_{i=1}^n q_{it}^\circ
\right)
=
0.
\end{equation}
Thus common capacity is allocated according to marginal dynamic virtual value.
\end{proposition}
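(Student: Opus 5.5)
The plan is to treat the relaxed pointwise problem as a finite-dimensional concave program and read the stated conditions off its Karush--Kuhn--Tucker system. Under the pointwise assumptions, for each fixed \((\eta^t,\theta_t)\) the candidate \(q_t^\circ\) maximizes
\[
\Phi_t^N(\eta^t,q_t,\theta_t)
=
\sum_{i=1}^n\eta_{it}^{\mathrm{vir}}(\eta_i^t)g_{it}(q_{it},\theta_t)+v_t(q_t,\theta_t)
\]
over the polyhedron
\[
\mathcal Q_t=\Bigl\{q_t:\ q_{it}\ge0,\ \textstyle\sum_i q_{it}\le\overline q_t\Bigr\}.
\]
This set is nonempty, compact and convex. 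Because \(\overline q_t>0\), the point \(q_{it}=\overline q_t/(2n)\) is strictly feasible, so Slater's condition holds. In any case, all constraints are affine, so the linearity constraint qualification applies directly. Since \(\Phi_t^N\) is differentiable in \(q_t\) (the \(g_{it}\) and \(v_t\) are differentiable) and strictly concave, the maximizer exists and is unique.

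First, I will write the Lagrangian with a multiplier \(\xi_t\ge0\) on capacity and multipliers \(\nu_{it}\ge0\) on nonnegativity. The KKT conditions, which are necessary under the constraint qualification, are then
\[
\eta_{it}^{\mathrm{vir}}g_{it,q}(q_{it}^\circ,\theta_t)+v_{t,q_i}(q_t^\circ,\theta_t)-\xi_t+\nu_{it}=0
\qquad\text{for every }i,
\]
together with \(\nu_{it}q_{it}^\circ=0\) and the complementary slackness condition \eqref{eq:multiagent_capacity_complementarity}. At any positive component, \(\nu_{it}=0\), and this is exactly \eqref{eq:multiagent_capacity_foc}. Note that the same \(\xi_t\) appears for all active agents.

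Second, I will add the converse remark: by concavity, these conditions are also sufficient. This justifies the interpretive conclusion. For each active agent, the marginal dynamic virtual value \(\eta_{it}^{\mathrm{vir}}g_{it,q}+v_{t,q_i}\) is equalized at the shadow price \(\xi_t\). Any inactive agent has marginal value at most \(\xi_t\), since \(\nu_{it}\ge0\). So capacity goes to the agents with the highest marginal dynamic virtual value rather than the highest current value.

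The only delicate point is boundary behavior. The statement assumes \(g_{it,q}>0\) but places no sign on \(\eta_{it}^{\mathrm{vir}}\), so some marginal values may be negative. In that case capacity may be slack with \(\xi_t=0\), which is still consistent with the stated conditions. I would also verify that the pointwise characterization is legitimate, meaning that historywise separability and an exogenous law of \((\eta^t,\theta_t)\) let the integral maximization in \eqref{eq:multiagent_relaxed_problem} be done pointwise with a measurable selection. Measurability follows because the unique maximizer of a jointly measurable, strictly concave objective over a fixed compact set is a measurable function of the parameters.
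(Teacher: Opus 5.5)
Your proposal is correct and follows the same route as the paper, which simply invokes the Kuhn--Tucker conditions for the strictly concave relaxed period problem. You supply the details the paper leaves implicit: the affine (or Slater) constraint qualification, the nonnegativity multipliers $\nu_{it}$ that vanish at positive components, and the observation that inactive agents have marginal dynamic virtual value at most $\xi_t$.
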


\begin{proof}
The result follows from the Kuhn--Tucker conditions for the strictly concave
relaxed period problem.
\end{proof}

Proposition~\ref{prop:multiagent_distortions} separates the informational
and scarcity margins. Dynamic screening replaces each current private type
by its dynamic virtual counterpart, while common scarcity compares marginal
dynamic virtual values across agents. Hidden-action implementation remains a
separate restriction on whether the relaxed allocation can be induced. Two
agents with the same current type may receive different allocations when
their initial rent weights or type persistence differ. If the capacity
constraint is slack and the designer's gross payoff separates across agents,
the relaxed allocation reduces to the corresponding agentwise benchmark of
Section~4.

The independence benchmark therefore inherits the single-agent hidden
information, verification, and hidden-action structure agent by agent,
while joint feasibility links the resulting allocations. Correlated
information and coupled incentives introduce further interactions, which
are considered in Section~6.4.

\subsection{Peer Information, Coupled Incentives, and Continuation Externalities}

Section~6.3 imposed independence to obtain the agentwise dynamic-rent
representation. We now allow correlated private information and coupled
incentives. The agentwise nested condition of Section~6.2 remains the
implementability criterion, but its components become strategically linked:
peer information changes the value of false-report deviations, while coupled
rewards and continuation policies change truthful-report action incentives.

Fix a feasible public state \(z_t\) and agent \(i\). For each current type
\(\tau_{it}\), let
\(b_{it}^{-i}(\cdot\mid z_t;\tau_{it})\)
denote agent \(i\)'s subjective conditional distribution of the
contemporaneous reports of the other agents under truthful play. Reports are
submitted simultaneously, before any agent observes the other agents' current
reports. Let \(\operatorname{sc}_{it}(p,x)\) be a strictly proper scoring rule
on the relevant family of predictive distributions. For a feasible report
\(\widehat\tau_{it}\in C_{it}(\tau_{it})\), define the
\textbf{peer-information divergence}
\begin{equation}
\label{eq:peer_information_divergence}
\begin{aligned}
\Delta_{it}^{\mathrm{peer}}
(\tau_{it},\widehat\tau_{it}\mid z_t)
:=
\int
\Big[
&\operatorname{sc}_{it}
\bigl(b_{it}^{-i}(\cdot\mid z_t;\tau_{it}),x\bigr)
\\
&-\operatorname{sc}_{it}
\bigl(b_{it}^{-i}(\cdot\mid z_t;\widehat\tau_{it}),x\bigr)
\Big]
\,db_{it}^{-i}(x\mid z_t;\tau_{it}).
\end{aligned}
\end{equation}
Strict propriety implies \(\Delta_{it}^{\mathrm{peer}}\geq0\), with strict
inequality whenever the two predictive distributions are distinct.

Consider an additive peer-score reward with scale \(w_{it}\geq0\),
based on simultaneous reports and independent of agent \(i\)'s
realized action. We impose \textbf{continuation neutrality}:
introducing the score leaves information, continuation strategies,
and conditional continuation payoffs unchanged relative to the same
mechanism without peer scoring, apart from the current scoring transfer.
In particular, the score must reveal no additional information. This
holds, for example, if the report profile is disclosed after current
actions regardless of whether peer scoring is used; reports need not
be public when submitted. Under quasilinear payoffs, the score then
leaves post-report action incentives unchanged.

Let \(V_{it}^{0}(\tau_{it},z_t)\) and
\(\widehat V_{it}^{0}(\tau_{it},z_t;\widehat\tau_{it})\)
denote the truthful and deviation continuation values, respectively,
excluding agent \(i\)'s peer-score transfer while holding all other
mechanism rules fixed. The deviation value incorporates optimal
current post-report action reoptimization. Define the
\textbf{critical peer-score scale}
\begin{equation}
\label{eq:peer_score_threshold}
w_{it}^*(z_t)
:=
\sup_{\substack{
\tau_{it},\,
\widehat\tau_{it}\in C_{it}(\tau_{it})\\
\widehat\tau_{it}\neq\tau_{it}}}
\frac{
\left[
\widehat V_{it}^{0}
(\tau_{it},z_t;\widehat\tau_{it})
-
V_{it}^{0}(\tau_{it},z_t)
\right]_+
}{
\Delta_{it}^{\mathrm{peer}}
(\tau_{it},\widehat\tau_{it}\mid z_t)
}.
\end{equation}
The ratio is interpreted as \(+\infty\) when its numerator is positive
and its denominator is zero, and as zero when both are zero.

\begin{theorem}[Dynamic peer discipline]
\label{thm:dynamic_peer_discipline}
Suppose the finite-horizon or continuity-at-infinity conditions of
Section~3.4 hold, agents report simultaneously, and peer scoring is
action independent and continuation neutral. Then:
\begin{enumerate}
\item[(i)] If \(w_{it}^*(z_t)<\infty\), any feasible scale
\(w_{it}\geq w_{it}^*(z_t)\) makes every feasible one-stage
report--action deviation involving a false report by agent \(i\)
unprofitable at \(z_t\).

\item[(ii)] If part~(i) holds after every state for every
agent and truthful-report post-report obedience holds at every
relevant information set, the mechanism is multi-agent dynamically
incentive compatible.
\end{enumerate}
\end{theorem}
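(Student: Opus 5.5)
The plan is to decompose the continuation value of each one-stage report--action deviation into a component that excludes the peer score and the expected score itself. Continuation neutrality and action independence imply that adding the score at scale \(w_{it}\) changes agent \(i\)'s value only by the current expected scoring transfer. Fix a feasible state \((\tau_{it},z_t)\) and a feasible report \(\widehat\tau_{it}\in C_{it}(\tau_{it})\). Reports are simultaneous, and the other agents report truthfully in the candidate equilibrium. Under agent \(i\)'s subjective beliefs, the other agents' reports are therefore distributed according to \(b_{it}^{-i}(\cdot\mid z_t;\tau_{it})\), whatever \(i\) reports. It follows that the expected score from reporting \(\widehat\tau_{it}\) is \(w_{it}\int\operatorname{sc}_{it}(b_{it}^{-i}(\cdot\mid z_t;\widehat\tau_{it}),x)\,db_{it}^{-i}(x\mid z_t;\tau_{it})\).

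Because the score is action independent and, under quasilinearity, enters additively, the inner post-report maximization defining \(\widehat V_{it}\) has the same maximizer with or without the score. Continuation neutrality ensures that the continuation values \(V_{i,t+1}\) inside that maximization are unchanged. Hence
\[
\widehat V_{it}(\tau_{it},z_t;\widehat\tau_{it})-V_{it}(\tau_{it},z_t)
=\widehat V_{it}^{0}(\tau_{it},z_t;\widehat\tau_{it})-V_{it}^{0}(\tau_{it},z_t)-w_{it}\Delta_{it}^{\mathrm{peer}}(\tau_{it},\widehat\tau_{it}\mid z_t).
\]
For part (i), I would bound the right-hand side case by case.
\begin{itemize}
\item If \(\Delta_{it}^{\mathrm{peer}}>0\), the definition of \(w_{it}^*\) as a supremum gives \([\widehat V^0-V^0]_+\le w_{it}^*\Delta^{\mathrm{peer}}\le w_{it}\Delta^{\mathrm{peer}}\), so the difference is nonpositive.
\item If \(\Delta_{it}^{\mathrm{peer}}=0\), finiteness of \(w_{it}^*\) and the stated convention (a positive numerator over a zero denominator equals \(+\infty\)) force \([\widehat V^0-V^0]_+=0\), so again there is no gain.
\end{itemize}
Since \(\widehat V_{it}\) already incorporates optimal current action reoptimization, this rules out every one-stage deviation that combines the false report with any current action rule. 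The argument covers arbitrary current action rules, not only the optimal one, because any other rule yields a value no larger than \(\widehat V_{it}\).

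For part (ii), part (i) at every state yields recursive honesty, condition (ii) of Theorem~\ref{thm:multiagent_recursive}. Truthful-report obedience, which the score leaves unchanged, yields condition (i). Promise consistency, condition (iii), follows from obedience at the truthful report exactly as in the necessity argument of that theorem: the prescribed action attains the inner maximum, so \(V_{it}=\widehat V_{it}(\cdot;\tau_{it})\). The finite-horizon or continuity-at-infinity hypothesis then lets Theorem~\ref{thm:multiagent_recursive} extend the one-stage conclusion to all multistage deviations, giving multi-agent dynamic incentive compatibility.

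The main obstacle is the decomposition step. It requires showing rigorously that the score alters neither the deviating agent's post-report action choice nor the continuation laws. In particular, the report profile may later be disclosed, so the score must reveal nothing beyond what disclosure already reveals, and the other agents' strategies must remain equilibrium strategies after the score is added. I would treat this by invoking continuation neutrality exactly as stated, since it is defined to hold all continuation objects fixed apart from the current transfer. I would also note that the other agents' truthfulness is a maintained equilibrium hypothesis, justified jointly across agents once part (i) holds for every agent.
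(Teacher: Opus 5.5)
Your proposal is correct and follows essentially the same route as the paper. Simultaneous reporting, action independence, and continuation neutrality reduce the false-report gain to the score-free gain minus \(w_{it}\Delta_{it}^{\mathrm{peer}}\); part~(i) then follows from the definition of \(w_{it}^*(z_t)\), and part~(ii) from Theorem~\ref{thm:multiagent_recursive}. Your explicit handling of the \(\Delta_{it}^{\mathrm{peer}}=0\) case through the stated convention, and your derivation of promise consistency from truthful-branch obedience, fill in steps the paper leaves implicit.
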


\begin{proof}
Fix a true type \(\tau_{it}\) and a feasible false report
\(\widehat\tau_{it}\). Simultaneous reporting ensures that agent
\(i\)'s report does not change the distribution of the other agents'
current reports. Action independence and continuation neutrality
leave the post-report action problem and subsequent continuation
payoffs unchanged. Hence the gain from the false report, including
optimal current action reoptimization, is
\[
\widehat V_{it}^{0}
(\tau_{it},z_t;\widehat\tau_{it})
-
V_{it}^{0}(\tau_{it},z_t)
-
w_{it}\Delta_{it}^{\mathrm{peer}}
(\tau_{it},\widehat\tau_{it}\mid z_t).
\]
By \eqref{eq:peer_score_threshold}, this gain is nonpositive whenever
\(w_{it}\geq w_{it}^*(z_t)\), proving part~(i).

If these comparisons hold after every state and obedience
holds on the truthful branch, every agent satisfies the nested
reporting and action conditions of Section~6.2. Theorem~
\ref{thm:multiagent_recursive} then establishes part~(ii).
\end{proof}

Theorem~\ref{thm:dynamic_peer_discipline} shows how peer information
can discipline false reporting without altering post-report action
incentives. This separation relies on continuation neutrality:
additional information revealed by scoring or changes in subsequent
behavior may also affect continuation incentives.

For the normalized quadratic score on a finite peer-report space,
\[
\Delta_{it}^{\mathrm{peer}}
(\tau_{it},\widehat\tau_{it}\mid z_t)
=
\left\|
b_{it}^{-i}(\cdot\mid z_t;\tau_{it})
-
b_{it}^{-i}(\cdot\mid z_t;\widehat\tau_{it})
\right\|_2^2.
\]
We say that \textbf{predictive separation} holds if every feasible
false report induces a subjective prediction of other agents' reports
distinct from that induced by the true type. On finite type and
report spaces, this condition ensures a strictly positive minimum
peer-information divergence. With bounded remaining report--action
deviation gains, it follows that \(w_{it}^*(z_t)<\infty\). Partial
verification restricts the required comparisons to feasible reports,
while truthful-report obedience remains a separate condition.

We next provide a primitive sufficient condition for predictive
separation based on affiliation. Consider ordered scalar current
types \(\eta_{it}\in E_{it}^F\), where each \(E_{it}^F\) is finite,
and a strictly positive joint probability mass function
\(f_t(\eta_t\mid z_t)\) on
\(E_t^F=\prod_{i=1}^n E_{it}^F\). For type profiles \(\eta_t\)
and \(\eta_t'\), let \(\eta_t\vee\eta_t'\) and
\(\eta_t\wedge\eta_t'\) denote their coordinatewise maximum and
minimum. Current types are \textbf{strictly affiliated} conditional
on \(z_t\) if
\begin{equation}
\label{eq:affiliation}
f_t(\eta_t\vee\eta_t'\mid z_t)
f_t(\eta_t\wedge\eta_t'\mid z_t)
>
f_t(\eta_t\mid z_t)
f_t(\eta_t'\mid z_t)
\end{equation}
whenever \(\eta_t\) and \(\eta_t'\) are incomparable. Under strict positivity, this is strict supermodularity of
\(\log f_t(\cdot\mid z_t)\) on incomparable profiles. In a smooth
continuous environment, a corresponding sufficient condition is $
\frac{\partial^2\log f_t(\eta_t\mid z_t)}
{\partial\eta_{it}\partial\eta_{jt}}>0$ for $i\neq j.$ This is the increasing-differences condition underlying the familiar
single-crossing property.

To connect affiliation to agents' reporting incentives, assume that
their subjective peer predictions agree with the conditional
distributions induced by \(f_t\) under truthful reporting:
\[
b_{it}^{-i}(\eta_{-i,t}\mid z_t;\eta_{it})
=
\frac{f_t(\eta_{it},\eta_{-i,t}\mid z_t)}
{\displaystyle\sum_{x_{-i}\in E_{-i,t}^F}
f_t(\eta_{it},x_{-i}\mid z_t)}.
\]
This assumption concerns current peer predictions and does not require
a common prior over the entire dynamic environment.

\begin{corollary}[Peer discipline under strictly affiliated types]
\label{cor:affiliated_peer_discipline}
Suppose the finite-support quadratic-score benchmark has bounded remaining
report--action gains, the types are strictly affiliated, and each agent has
nondegenerate peer-type support. If the subjective peer predictions satisfy
the preceding conditional-distribution restriction, then
\(w_{it}^*(z_t)<\infty\) for every agent \(i\), and a sufficiently large
feasible peer-score scale disciplines every feasible false report.
\end{corollary}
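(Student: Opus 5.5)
The plan is to reduce the corollary to the finiteness criterion stated just before it. Under the quadratic score,
\[
\Delta_{it}^{\mathrm{peer}}(\eta_{it},\widehat\eta_{it}\mid z_t)=\bigl\|b_{it}^{-i}(\cdot\mid z_t;\eta_{it})-b_{it}^{-i}(\cdot\mid z_t;\widehat\eta_{it})\bigr\|_2^2 .
\]
Because \(E_{it}^F\) is finite, there are only finitely many pairs \((\eta_{it},\widehat\eta_{it})\) with \(\widehat\eta_{it}\in C_{it}(\eta_{it})\setminus\{\eta_{it}\}\). So if every such pair has distinct predictive distributions, the minimum divergence \(\underline\Delta>0\) is attained. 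The numerator of \eqref{eq:peer_score_threshold} is bounded by the assumed bound \(\bar G\) on remaining report--action gains, which gives \(w_{it}^*(z_t)\le\bar G/\underline\Delta<\infty\). Theorem~\ref{thm:dynamic_peer_discipline}(i) then shows that any scale \(w_{it}\ge w_{it}^*(z_t)\) disciplines every feasible false report. The whole argument therefore rests on predictive separation: for \(\eta_{it}\neq\eta_{it}'\), the conditional laws \(b_{it}^{-i}(\cdot\mid z_t;\eta_{it})\) and \(b_{it}^{-i}(\cdot\mid z_t;\eta_{it}')\) must differ.

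To prove separation, take \(\eta_{it}'>\eta_{it}\) (the case of a report below the truth under withholding; in general just order the pair). Nondegenerate peer-type support lets us pick some \(j\neq i\) and two peer profiles \(x_{-i}\), \(x_{-i}'\) with \(x_{-i}'\geq x_{-i}\) and \(x_{-i}'\neq x_{-i}\). The simplest choice is to raise only coordinate \(j\), between two values in \(E_{jt}^F\). Now set \(\eta_t=(\eta_{it}',x_{-i})\) and \(\eta_t'=(\eta_{it},x_{-i}')\). These profiles are incomparable, and their join and meet are \((\eta_{it}',x_{-i}')\) and \((\eta_{it},x_{-i})\). Strict affiliation \eqref{eq:affiliation} then gives
\[
\frac{f_t(\eta_{it}',x_{-i}'\mid z_t)}{f_t(\eta_{it}',x_{-i}\mid z_t)}>\frac{f_t(\eta_{it},x_{-i}'\mid z_t)}{f_t(\eta_{it},x_{-i}\mid z_t)} .
\]
Strict positivity makes every term well defined. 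The normalizing sums cancel in likelihood ratios, so this becomes
\[
\frac{b_{it}^{-i}(x_{-i}'\mid\eta_{it}')}{b_{it}^{-i}(x_{-i}\mid\eta_{it}')}>\frac{b_{it}^{-i}(x_{-i}'\mid\eta_{it})}{b_{it}^{-i}(x_{-i}\mid\eta_{it})}.
\]
If the two conditional distributions were equal, the two ratios would coincide, which is a contradiction. Hence predictive separation holds for every pair of distinct own types, and in particular for every feasible false report.

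The step needing most care is exactly this one. One has to make sure that a strictly ordered pair of distinct peer profiles exists, so that the two profiles are genuinely incomparable and strict affiliation applies. This is where nondegenerate peer-type support is used: at least one other agent must have two or more types in its support. The remaining pieces are routine bookkeeping: the finite minimum, the bounded numerator, and the appeal to Theorem~\ref{thm:dynamic_peer_discipline}. For that appeal, the maintained simultaneity, action independence and continuation neutrality are taken as part of the benchmark, and feasibility of the scale is assumed as stated.
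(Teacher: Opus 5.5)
Your proposal is correct and follows essentially the same route as the paper. Strict affiliation applied to the incomparable profiles \((\eta_{it}',x_{-i})\) and \((\eta_{it},x_{-i}')\) yields the strict likelihood-ratio monotonicity the paper invokes, nondegenerate peer support then rules out equal conditional peer distributions, and the finite minimum divergence, the bounded gains, and Theorem~\ref{thm:dynamic_peer_discipline} finish the argument, with your explicit construction of the incomparable pair simply spelling out the step the paper compresses into one line.
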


\begin{proof}
Fix \(i\) and \(\eta_{it}'>\eta_{it}\). Strict affiliation makes the
likelihood ratio
\(f_t(\eta_{it}',\eta_{-i,t}\mid z_t)/
f_t(\eta_{it},\eta_{-i,t}\mid z_t)\)
strictly increasing across comparable, distinct peer-type profiles.
Nondegenerate peer-type support therefore rules out a constant ratio, so the
conditional peer-type distributions at \(\eta_{it}'\) and \(\eta_{it}\)
are distinct. The maintained agreement transfers this separation to the
agent's subjective predictions of truthful peer reports.

Strict propriety gives positive divergence for every feasible false report.
The finite report set gives a positive minimum divergence, and bounded
remaining deviation gains imply \(w_{it}^*(z_t)<\infty\). Theorem~\ref{thm:dynamic_peer_discipline}
completes the proof.
\end{proof}

Corollary~\ref{cor:affiliated_peer_discipline} makes strict affiliation a
sufficient, but not necessary, condition for finite-support peer discipline
when agents' subjective peer predictions agree with the affiliated law.
This conclusion need not extend to continuous types. Nearby types may induce
arbitrarily similar predictions, and a smooth quadratic score imposes a
second-order loss against a local false report.

In the smooth scalar environments of Sections~4 and~6.3, arbitrarily nearby
downward reports are feasible under the withholding structure. To study a
finite-dimensional quadratic score, let \(x\) now denote a finite-valued
statistic of the co-agents' truthful reports, and suppose
\(\eta\mapsto b_{it}^{-i}(\cdot\mid z_t;\eta)\)
is differentiable as a probability vector for this statistic. For a nearby
downward report \(\widehat\eta=\eta-h\),
\begin{equation}
\label{eq:local_peer_divergence}
\Delta_{it}^{\mathrm{peer}}
(\eta,\eta-h\mid z_t)
=
h^2
\left\|
\frac{\partial
b_{it}^{-i}(\cdot\mid z_t;\eta)}
{\partial\eta}
\right\|_2^2
+
o(h^2)
\qquad
\text{as }h\downarrow0.
\end{equation}
Thus the peer-score loss from a local false report is second order.

\begin{proposition}[Continuous-type limit of peer discipline]
\label{prop:continuous_peer_limit}
Suppose that, for a feasible downward local deviation
\(\widehat\eta=\eta-h\),
\[
\widehat V_{it}^{0}(\eta,z_t;\eta-h)
-
V_{it}^{0}(\eta,z_t)
=
mh+o(h)
\]
for some \(m>0\). Then \(w_{it}^*(z_t)=+\infty\). Consequently, no finite
quadratic peer-score scale can eliminate this first-order local reporting
gain by itself.
\end{proposition}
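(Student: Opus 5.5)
The plan is to bound the critical scale $w_{it}^*(z_t)$ in \eqref{eq:peer_score_threshold} from below by restricting the supremum to the one-parameter family of local downward deviations $\widehat\eta=\eta-h$, $h\downarrow0$. Each such report lies in $C_{it}(\eta)=[\underline\eta_{it},\eta]$ whenever $h\le\eta-\underline\eta_{it}$, so it is an admissible competitor. Since $w_{it}^*(z_t)$ is a supremum, it is at least
\[
\limsup_{h\downarrow0}\frac{[mh+o(h)]_+}{\Delta_{it}^{\mathrm{peer}}(\eta,\eta-h\mid z_t)} ,
\]
and it suffices to show that this limit superior is $+\infty$.

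For the numerator, I use the hypothesis directly. Because $m>0$, we have $[mh+o(h)]_+=mh+o(h)\ge mh/2$ for all sufficiently small $h$.

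For the denominator, I invoke the local expansion \eqref{eq:local_peer_divergence}, which gives $\Delta_{it}^{\mathrm{peer}}(\eta,\eta-h\mid z_t)=c\,h^2+o(h^2)$ with $c:=\|\partial b_{it}^{-i}(\cdot\mid z_t;\eta)/\partial\eta\|_2^2<\infty$. This expansion is just a Taylor expansion of the differentiable probability vector inserted into the squared Euclidean norm. Two cases then arise, and both give a divergent ratio. If $\Delta_{it}^{\mathrm{peer}}(\eta,\eta-h\mid z_t)=0$ for some small $h>0$, the numerator is positive, so by the stated convention the ratio equals $+\infty$. Otherwise $0<\Delta\le (c+1)h^2$ for small $h$, and the ratio is at least
\[
\frac{mh/2}{(c+1)h^2}=\frac{m}{2(c+1)h}\longrightarrow\infty .
\]
Hence $w_{it}^*(z_t)=+\infty$.

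For the consequence, I fix any finite scale $w\ge0$. The net gain from the deviation, computed exactly as in the proof of Theorem~\ref{thm:dynamic_peer_discipline}, is
\[
mh+o(h)-w\bigl(ch^2+o(h^2)\bigr),
\]
which is strictly positive for all sufficiently small $h$. Some feasible downward report therefore remains profitable, so no finite quadratic peer score eliminates the gain on its own.

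The only delicate point is making sure the $o(h^2)$ remainder in the denominator is controlled uniformly enough to bound $\Delta$ above by a constant times $h^2$. This follows from differentiability of the prediction map at $\eta$. Degenerate zero-divergence cases are handled by the convention for zero denominators.
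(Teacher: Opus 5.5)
Your proof is correct and follows the paper's own argument: both compare the first-order gain $mh+o(h)$ with the second-order divergence in \eqref{eq:local_peer_divergence}, and both use the zero-denominator convention in \eqref{eq:peer_score_threshold} for the degenerate case. The only differences are cosmetic. You fold the paper's split on whether $\partial b_{it}^{-i}/\partial\eta$ vanishes into the single bound $\Delta_{it}^{\mathrm{peer}}\le (c+1)h^2$, and you check the ``consequently'' clause directly by showing the net gain stays positive for every finite scale $w$.
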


\begin{proof}
The non-peer deviation gain is of order \(h\), whereas
\eqref{eq:local_peer_divergence} is of order at most \(h^2\). If the
probability-vector derivative is nonzero, the ratio in
\eqref{eq:peer_score_threshold} is of order \(1/h\) and diverges. If the
derivative is zero, the divergence is \(o(h^2)\), and the ratio diverges
whenever its denominator is positive; if the denominator vanishes, the
convention in \eqref{eq:peer_score_threshold} applies.
\end{proof}

Proposition~\ref{prop:continuous_peer_limit} identifies the boundary of
finite-support peer discipline. Affiliation can make peer information
informative and can discipline discrete or nonlocal deviations, but it does
not alter the local order comparison. The verification and dynamic
implementability conditions of Sections~4 and~6.3 therefore remain relevant
even when agents' types are affiliated.

Peer information changes false-report incentives. We next consider the
truthful-report action margin. At agent \(i\)'s post-report information
history \(h_{it}\), let
\(a_{it}^{\mathrm{ob}}:=\alpha_{it}(s_{it})\)
denote its prescribed action and
\(a_t^{\mathrm{ob}}=(a_{1t}^{\mathrm{ob}},\ldots,a_{nt}^{\mathrm{ob}})\)
the prescribed profile. As in Section~6.2, agent \(i\) does not generally
observe \(a_{-i,t}^{\mathrm{ob}}\). Holding the remaining payoff components
fixed, suppose the part of its current gross payoff depending on its own
action is
\[
-\frac{\gamma_{it}}{2}(a_{it}-a_{it}^0)^2,
\qquad \gamma_{it}>0,
\]
where \(a_{it}^0\) is its baseline action.

Let \(\overline R_{it}(a_i\mid h_{it})\) denote the expected current reward
when agent \(i\) chooses \(a_i\), conditional on its post-report
information and the other agents' equilibrium strategies. This expectation
integrates over the unobserved actions of other agents and the public outcome
and other current observables on which its reward may depend. Let
\(\overline V_{i,t+1}(a_i\mid h_{it})\) denote the corresponding expected
truthful-and-obedient continuation value from date \(t+1\), accounting for
the action-dependent public outcomes, monitoring evidence, and physical
transitions. Both expectations are evaluated under the conditional law
\(\mathbb P_{it}(\cdot\mid h_{it};\Gamma^D,a_i)\); neither requires the
mechanism to observe the actual actions.

\begin{proposition}[Coupled current and continuation incentives]
\label{prop:coupled_continuation_obedience}
Suppose the own-action set is an interval, the conditional expected current
reward and continuation value are differentiable in \(a_i\), and each
post-report action problem is strictly concave in its own action. Then:
\begin{enumerate}
\item[(i)] \textbf{Sequential Bayesian obedience.}
At every post-report information history \(h_{it}\) following a truthful
report, an interior prescribed action \(a_{it}^{\mathrm{ob}}\) is optimal if
and only if
\begin{equation}
\label{eq:coupled_obedience_foc}
\gamma_{it}(a_{it}^{\mathrm{ob}}-a_{it}^0)
=
\left.
\frac{\partial\overline R_{it}(a_i\mid h_{it})}{\partial a_i}
\right|_{a_i=a_{it}^{\mathrm{ob}}}
+
\delta
\left.
\frac{\partial\overline V_{i,t+1}(a_i\mid h_{it})}{\partial a_i}
\right|_{a_i=a_{it}^{\mathrm{ob}}}.
\end{equation}

\item[(ii)] \textbf{Bounded current incentives.}
Holding the continuation rule fixed, define the marginal current reward
required at \(h_{it}\) by
\begin{equation}
\label{eq:required_current_incentive}
\iota_{it}^{\mathrm{req}}(h_{it})
:=
\gamma_{it}(a_{it}^{\mathrm{ob}}-a_{it}^0)
-
\delta
\left.
\frac{\partial\overline V_{i,t+1}(a_i\mid h_{it})}{\partial a_i}
\right|_{a_i=a_{it}^{\mathrm{ob}}}.
\end{equation}
If the admissible reward system can jointly deliver any collection of
conditional own-action marginal expected-reward incentives in
\([ -\overline\iota_{it},\overline\iota_{it}]\) across agents and feasible
post-report information sets, then the prescribed interior action profile
is supportable through current rewards if and only if
\[
|\iota_{it}^{\mathrm{req}}(h_{it})|
\leq\overline\iota_{it},
\qquad
\text{for every }i\in N\text{ and feasible }h_{it}.
\]
\end{enumerate}
\end{proposition}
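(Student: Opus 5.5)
The plan is to reduce part~(i) to the recursive obedience condition of Theorem~\ref{thm:multiagent_recursive}, specialized to the quadratic own-action payoff. At a post-report information history \(h_{it}\) following a truthful report, agent \(i\)'s current action problem is
\[
\max_{a_i}\;W_{it}(a_i\mid h_{it}):=-\frac{\gamma_{it}}{2}(a_i-a_{it}^0)^2+\overline R_{it}(a_i\mid h_{it})+\delta\,\overline V_{i,t+1}(a_i\mid h_{it})+K_{it}(h_{it}).
\]
Here \(K_{it}\) collects the payoff components that do not depend on \(a_i\). This holds by the maintained separability of the own-action component.

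I will first verify that the recursive-obedience integrand in Theorem~\ref{thm:multiagent_recursive}(i), integrated under \(\mathbb P_{it}(\cdot\mid h_{it};\Gamma^D,a_i)\), equals \(W_{it}\). This is definitional: \(\overline R_{it}\) and \(\overline V_{i,t+1}\) are exactly the conditional expectations of \(\widetilde r_{it}\) and \(V_{i,t+1}(\widetilde\tau_{i,t+1},\widetilde z_{t+1})\). Those expectations average over other agents' unobserved equilibrium actions, outcomes, monitoring evidence and transitions. Because the own-action set is an interval and \(W_{it}\) is differentiable and strictly concave, an interior point is a maximizer if and only if \(W_{it}'=0\). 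That first-order condition is \eqref{eq:coupled_obedience_foc}. Necessity uses interiority, and sufficiency uses concavity. Strict concavity additionally gives uniqueness, so obedience holds exactly when the recommendation equals the unique root.

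For part~(ii), I hold the continuation rule fixed, so that \(\partial\overline V_{i,t+1}/\partial a_i\) at \(a_{it}^{\mathrm{ob}}\) is a given number. By part~(i), obedience at \(h_{it}\) is then equivalent to \(\partial\overline R_{it}/\partial a_i|_{a_{it}^{\mathrm{ob}}}=\iota_{it}^{\mathrm{req}}(h_{it})\). For necessity, any reward system supporting the profile must deliver exactly this marginal incentive. The admissible marginal incentives lie in \([-\overline\iota_{it},\overline\iota_{it}]\), so \(|\iota_{it}^{\mathrm{req}}|\le\overline\iota_{it}\) must hold. For sufficiency, when all required values lie in the bounds, the joint-delivery hypothesis yields one reward system that delivers \(\iota_{it}^{\mathrm{req}}(h_{it})\) simultaneously for every \(i\) and every feasible \(h_{it}\). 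Part~(i) then gives obedience everywhere.

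The main obstacle is the fixed-point character of sufficiency. Changing rewards could change other agents' equilibrium behavior and hence the conditional laws that define \(\overline R_{it}\) and \(\overline V_{i,t+1}\) for agent \(i\). I would handle this by evaluating everything at the prescribed profile \(a^{\mathrm{ob}}_{-i,t}\). Because the prescribed profile is to be supported, the other agents' strategies are fixed at obedience, so the conditional laws are those induced by obedience and the required incentives are well defined. The joint-delivery hypothesis is precisely what makes the collection simultaneously attainable.

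I also need strict concavity to persist after the rewards are changed. I would take this to be part of the maintained assumption that each post-report problem is strictly concave, interpreted as holding under admissible reward systems. If needed, I would restrict to rewards affine in own-action-sensitive observables, which preserves concavity.
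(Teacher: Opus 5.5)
Your proposal is correct and follows the paper's proof essentially step for step. You write agent $i$'s conditional objective as the quadratic own-action term plus $\overline R_{it}$ plus $\delta\overline V_{i,t+1}$, use the interior first-order condition together with concavity for part~(i), and rearrange that condition into $\iota_{it}^{\mathrm{req}}$ and apply the joint-flexibility hypothesis for part~(ii). Your added remarks make explicit what the paper simply takes as maintained: that the conditional laws are evaluated at the prescribed profile, and that strict concavity persists under the new rewards. Your fallback of using rewards affine in observables does not by itself preserve concavity, however, because the expected observable can be nonlinear in $a_i$.
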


\begin{proof}
Conditional on \(h_{it}\), agent \(i\)'s expected payoff from \(a_i\)
contains the own-action terms
\[
-\frac{\gamma_{it}}{2}(a_i-a_{it}^0)^2
+
\overline R_{it}(a_i\mid h_{it})
+
\delta\overline V_{i,t+1}(a_i\mid h_{it}).
\]
At an interior optimum its derivative vanishes, yielding
\eqref{eq:coupled_obedience_foc}. Strict concavity makes the condition
sufficient.

Rearranging \eqref{eq:coupled_obedience_foc} gives
\(\iota_{it}^{\mathrm{req}}(h_{it})\). Under the maintained joint
flexibility of reward instruments, the required conditional marginal
incentives are attainable if and only if they lie within the stated bounds
at every information set. Strict concavity then supplies optimality of each
prescribed action.
\end{proof}

Equation~\eqref{eq:coupled_obedience_foc} includes both current and
continuation incentive channels. In the observable-action benchmark, when
opponents' actions are known at the relevant decision stage, the conditional
expected-reward and continuation terms can be written as functions
\(\overline R_{it}(a_t)\) and \(\overline V_{i,t+1}(a_t)\) evaluated at
\(a_{-i,t}^{\mathrm{ob}}\); the resulting first-order condition is the
usual action-profile Nash condition. In the general environment, the agent
instead optimizes against a conditional distribution of its opponents'
unobserved actions. Coupled rewards can make current expected marginal
rewards depend on peers' reports and observable behavior, as in Bergemann,
Koh, and Morris \citeyearpar{BergemannKohMorris2026}. The dynamic continuation
term captures changes in future authorization, rewards, tasks, and deployment
induced by current actions through public outcomes, monitoring, and
transitions. Thus agentwise obedience can be strategically coupled even when
primitive current payoffs are separable.

The componentwise bound in part~(ii) is a benchmark for sufficiently flexible
reward instruments. If the required marginal incentives are not jointly
attainable through a single reward profile based on publicly contractible
variables, componentwise bounds alone are not sufficient. The designer must
then alter rewards, authorization, continuation promises, or the target
actions, as in the hidden-action problem of Section~4.

Sections~6.3 and~6.4 therefore describe two multi-agent regimes within the
same incentive framework. Under the independent-type benchmark of
Section~6.3, dynamic information rents and verification wedges aggregate
agent by agent, while hidden-action implementation remains embedded in each
agent's nested obedience condition and joint feasibility links allocations.
With correlated information and coupled continuation mechanisms, the same
agentwise nested conditions remain exact but become strategically linked:
peer information changes false-report incentives, while coupled expected
rewards and continuation opportunities change truthful-report action
incentives. The interaction is a coupling of agentwise equilibrium
conditions, not an additional aggregate incentive constraint.

\section{AI-Relevant Features Beyond Standard Dynamic Contracting}

The preceding sections combine established dynamic mechanism-design and
contracting foundations with a set of information, incentive, and control
margins that are particularly natural for AI agents. The comparison is useful
in three layers. Some analytical structure is \textbf{inherited} from standard
dynamic screening and dynamic contracting; some familiar conclusions are
\textbf{modified} by partial verification and post-report action discretion;
and additional design margins arise from technologically enforceable
authorization, dynamic monitoring and information use, and interaction among
multiple agents.

These distinctions do not define a separate theory for artificial agents.
The individual ingredients may also arise with human agents, firms, experts,
or organizations. Their joint presence is especially natural for AI systems:
private information may concern objectives, execution capability, beliefs, or
information capability; communication need not determine subsequent behavior;
actions may remain privately chosen after communication; and access to tools,
communication channels, external systems, computational resources, or
deployment can itself be controlled by the mechanism.
Table~\ref{tab:ai_specific_margins} summarizes the resulting comparison.

\begin{table}[!htbp]
\centering
\small
\begin{tabularx}{\textwidth}
{>{\raggedright\arraybackslash}p{0.22\textwidth}
 >{\raggedright\arraybackslash}X
 >{\raggedright\arraybackslash}X}
\toprule
Feature & Economic role & Relation to the standard benchmark \\
\midrule

Evolving private information
&
The agent's payoff-relevant private information may evolve over time and
affect future incentives.
&
Inherited from dynamic mechanism design. Continuation values, impulse
responses, envelope methods, information rents, and dynamic virtual surplus
provide the basic analytical tools.
\\[0.35em]

Private execution and information capability
&
The agent may privately know which actions it can execute and which
information it can obtain, in addition to its behavioral objective and
subjective beliefs.
&
Accommodated by the richer private type. These components may themselves
evolve through the type-transition process.
\\[0.35em]

Partial and one-sided verification
&
The set of claims that can be substantiated may depend on the true type, and
concealment need not be symmetric with counterfeiting.
&
Modifies the reporting domain. In the scalar withholding environment,
truth may be a boundary report, so the standard envelope implication can fail
and a verification wedge can arise.
\\[0.35em]

Post-report action discretion
&
After reporting, the agent may receive additional private information and
choose its own action rather than having the payoff-relevant action directly
assigned.
&
Introduces a hidden-action implementation problem when the realized action is
not publicly observed. Each report must be evaluated together with the true
type's optimal post-report action, so honesty and obedience form a nested
adverse-selection and moral-hazard problem.
\\[0.35em]

Technological authorization
&
What the agent can execute need not coincide with what the mechanism permits.
Access to tools, communication, external systems, or deployment can therefore
be controlled directly.
&
Adds a control instrument distinct from execution capability and from the
recommended action.
\\[0.35em]

Dynamic monitoring and information use
&
A primitive monitoring technology generates evidence, potentially as a
function of latent behavior, while the mechanism chooses how that evidence
enters the public continuation state.
&
Adds an information-design margin. Monitoring capacity determines the
informational frontier; information use determines which feasible public
experiment is employed and may itself affect hidden-action and reporting
incentives.
\\[0.35em]

Multi-agent information and incentives
&
Agents may have correlated private information, and their reports, privately
chosen actions, authorization, observable consequences, rewards, and
continuation opportunities may interact.
&
Adds peer-information, coupled-incentive, joint-feasibility, and cross-agent
continuation channels while retaining agentwise unilateral incentive
constraints.
\\
\bottomrule
\end{tabularx}
\caption{AI-relevant features relative to standard dynamic contracting.}
\label{tab:ai_specific_margins}
\end{table}

\subsection{Inherited Structure from Dynamic Mechanism Design}

The dynamic structure builds on standard mechanism design and contract
theory, including the integrated reporting--action incentive framework and
branched contracting comparison of Tian
\citeyearpar{TianGeneralDynamic2026}. Private information may evolve over time, privately chosen
actions may affect future states, and continuation values summarize the
incentives generated by future mechanism choices. A sufficient public state
can therefore represent the continuation problem recursively, and the
revelation principle reduces the analysis to direct mechanisms subject to the
relevant reporting and action-incentive constraints.

The scalar environment of Section~4 makes the connection to conventional
dynamic screening especially transparent. When reporting is unrestricted and
the payoff-relevant action is directly assigned, the post-report action
problem and the verification boundary disappear. Dynamic envelope methods then
relate continuation utility to the propagation of private information through
impulse responses, yielding the familiar information-rent and virtual-surplus
representation of Pavan, Segal, and Toikka
\citeyearpar{PavanSegalToikka2014}. This is the
\(2B^{\mathrm{HI}}\) branch of the benchmark map.

The hidden-action branch is equally familiar in economic logic. When the
relevant private type is common knowledge but the agent privately chooses an
action that cannot be contracted on directly, the remaining problem is one of
moral hazard: observable outcomes, rewards, authorization, and continuation
promises must provide the incentives needed to implement desired behavior.
This is the \(2B^{\mathrm{HA}}\) branch. The broader contracting literature has
long studied environments combining adverse selection and moral hazard,
including Laffont and Tirole \citeyearpar{LaffontTirole1986}, Picard
\citeyearpar{Picard1987}, Guesnerie, Picard, and Rey
\citeyearpar{GuesneriePicardRey1989}, and Meng and Tian
\citeyearpar{MengTian2013}; dynamic environments combining the two frictions
include Es\H{o} and Szentes \citeyearpar{EsoSzentes2017}.

The present framework therefore builds on, rather than replaces, the
dynamic-screening and hidden-action strands and their joint treatment in
\citet{TianGeneralDynamic2026}. It adapts that integrated contracting
approach to the AI-agent timing and adds one-sided verification and
AI-relevant control instruments. The verification dimension creates
additional branches beyond the first-best, two-second-best, and third-best
comparison of the general framework. The two second-best branches are conceptually distinct:
eliminating hidden action leaves the \(2B^{\mathrm{HI}}\) dynamic-screening
problem, whereas eliminating private information leaves the
\(2B^{\mathrm{HA}}\) moral-hazard problem. First best requires both frictions
to disappear. Adding one-sided verification to the former gives
\(3B^{\mathrm{HI+V}}\), while adding hidden action to dynamic screening gives
\(3B^{\mathrm{HI+HA}}\); the full \(4B\) environment contains hidden
information, one-sided verification, and hidden-action implementation.

The recursive foundation remains common across these cases. Verification
restricts the reports available to a true type; authorization restricts the
actions the mechanism permits; and post-report action discretion requires the
mechanism to account for what the true type will actually do after any
feasible report. These features change the relevant implementation and rent
conditions without changing the underlying dynamic-programming logic.

\subsection{Modified and Additional AI-Control Margins}

The first distinction is between \textbf{execution capability} and
\textbf{authorization}. Execution capability describes what the agent can
actually do, whereas authorization describes what the mechanism permits it to
do. The two need not coincide. This distinction matters dynamically because
authorization can depend on past reports, realized mechanism choices, public
outcomes, monitoring information, and other publicly observed variables, and
therefore becomes part of the continuation mechanism that shapes current
incentives.

Partial verification modifies the reporting margin. Under one-sided
verification, a type may be able to conceal or understate a verifiable
attribute without being able to counterfeit a higher one. Truthful reporting
can therefore lie at the boundary of the feasible report set rather than at an
interior point of an unrestricted reporting problem. Section~4 shows that
dynamic incentive compatibility then need not imply the ordinary two-sided
envelope equality. The verification wedge measures the resulting
continuation-rent margin, while local nonbinding verification restores the
standard envelope implication.

Post-report action discretion modifies a different part of the incentive
problem. Communication does not mechanically determine behavior. After a
report, the true type may observe additional private information and choose
among the actions that are both executable and authorized. When that action
is not publicly observed, this is a dynamic moral-hazard problem embedded
inside the reporting mechanism. A false report must therefore be evaluated
after allowing the true type to reoptimize its current action. Dynamic
incentive compatibility consequently requires honesty and obedience jointly:
feasible false reports are report--action deviations, while the truthful-report
comparison imposes sequential obedience.

These modifications clarify the three economic margins identified in
Section~4. The \textbf{informational transaction cost} is inherited from
dynamic screening and is governed by the initial rent weight and the impulse
responses through which private information remains relevant over time. The
\textbf{verification transaction cost} is the additional continuation rent
generated when one-sided reportability prevents the ordinary envelope
representation from being attained. The \textbf{action-implementation cost},
or obedience cost, is different: a latent post-report action restricts the set
of mechanisms capable of sustaining the desired behavior. It therefore
appears as a shadow cost of implementability rather than as a separate
additive rent term. These margins operate through different mathematical
channels even though they are jointly determined by the same mechanism.

This decomposition also explains why the dynamic virtual type summarizes only
the hidden-information distortion. The inverse hazard rate and impulse
response determine the propagated information-rent adjustment. Hidden action
does not generate another virtual-type term; it restricts the implementable
set through the nested obedience condition. One-sided verification operates
through yet another channel, the verification wedge. The dynamic virtual type,
verification wedge, and hidden-action implementation condition therefore
provide complementary rather than competing descriptions of the full
\(4B\) problem.

Dynamic monitoring introduces another design margin. Section~5 distinguishes
the primitive \textbf{monitoring technology} from the mechanism's
\textbf{information-use policy}. The informativeness of the primitive
experiment determines monitoring capacity, while the information-use policy
selects a public experiment from the Blackwell-feasible set. A latent action
may affect the distribution of raw monitoring evidence even though the
information-use policy cannot condition directly on that action. More
informative monitoring capacity weakly expands the designer's opportunity set
when information can be freely discarded, but more informative public use of
that capacity need not be optimal.

The reason is dynamic. Public monitoring information changes the continuation
state and may therefore affect future authorization, rewards, deployment, and
other mechanism choices. It can consequently change the continuation payoff
associated with current truthful behavior and current report--action
deviations. Information use can be optimized separately from current
incentive design only when the relevant truthful-versus-deviation
continuation-value differences are invariant to the information-use policy.
Otherwise, information use is itself an incentive instrument. With hidden
action, this channel is especially important because information use may
strengthen or weaken the continuation incentive generated by evidence about
privately chosen behavior without making that behavior directly observable.

The baseline analysis keeps information production fixed. Thus the private
information capability and primitive monitoring technology are taken as
given, while the use of generated monitoring evidence is endogenous.
Endogenizing information production would introduce the additional interaction
among learning, incentives, posterior evolution, and future feasibility
studied in Tian \citeyearpar{Tian2026}.

\subsection{Multi-Agent Interactions and Synthesis}

The multi-agent extension preserves the preceding incentive structure but
adds cross-agent information, feasibility, and continuation effects. Under
the independence benchmark of Section~6.3, the single-agent rent accounting
extends agent by agent: dynamic information rents and verification wedges
enter the multi-agent virtual-surplus representation separately for each
agent, while hidden-action implementation remains embedded in each agent's
nested obedience condition. Joint feasibility nevertheless links allocations
across agents. With a shared capacity constraint, for example, scarce control
is allocated according to marginal dynamic virtual values rather than current
private values alone.

Correlated private information creates a different channel. One agent's type
may predict the reports of the others, so peer information can discipline
false reporting. Section~6.4 shows that the relevant peer-score intensity is
determined by the report--action gain from a deviation relative to the
informational divergence generated by the peer score. Partial verification
matters because only feasible false reports need to be deterred, while
dynamics matter because the gain from a current deviation includes its effect
on continuation value.

The distinction between finite and continuous report spaces is important.
With a finite report set and sufficiently separated predictive distributions,
a finite peer-score scale may discipline reporting. With continuous types,
however, a smooth quadratic score generates only a second-order loss from a
nearby false report. If the corresponding deviation gain is first order, no
finite score scale can eliminate that local deviation. Peer information
therefore does not generally remove the verification and reporting frictions
identified in the single-agent analysis, nor does it by itself resolve the
hidden-action obedience problem.

Coupled incentives operate on the action margin. An agent's privately chosen
action may change the distribution of public outcomes or other observable
signals on which its expected current reward depends, and may also change its
future authorization and continuation value. The marginal incentive for
obedience can therefore have both an induced current-reward component and a
continuation-value component. The continuation mechanism may strengthen or
weaken the current expected-reward incentive needed to sustain a target
action, and the resulting agentwise obedience conditions may be strategically
coupled even when current primitive payoffs are separable. Direct dependence
of rewards on the action profile is only the observable-action special case;
the general formulation requires only that actions affect the distribution of
publicly contractible outcomes or signals.

The comparison can therefore be summarized in the three layers introduced at
the start of the section. \textbf{Inherited} elements include dynamic
revelation, continuation values, recursive incentives, impulse responses,
envelope methods, information rents, virtual surplus, moral-hazard obedience,
and distortion dynamics. \textbf{Modified} results arise because one-sided
verification changes the reporting boundary and envelope logic, while
post-report private information and hidden action require reports and
subsequent actions to be evaluated jointly. \textbf{Additional} design
margins arise from the distinction between execution capability and
authorization, dynamic monitoring and information use, joint feasibility,
peer information, coupled incentives, and cross-agent continuation effects.

The contribution is therefore not that hidden information, hidden action, or
the other individual ingredients are unique to artificial agents. It is that
these ingredients can be analyzed jointly in a dynamic mechanism in which
private information evolves, privately informed agents retain post-report
action discretion, realized actions may remain latent, and the designer
controls technological authorization, information use, and continuation
opportunities. This structure identifies when standard dynamic-screening or
moral-hazard results remain sufficient, when reporting and action
implementation must be treated jointly, and when verification, information,
and control instruments create additional economically consequential margins.

\section{Applications}

We conclude with three tractable applications of the design margins developed
in Sections~4--6. The first applies the dynamic virtual-surplus and
implementation analysis of Section~4 to staged control under persistent
private information and makes the branched benchmark structure explicit. The
second interprets the monitoring technology of Section~5 as an AI evaluation
and shows how information use can change hidden behavior being evaluated. The
third applies the coupled-continuation channel of Section~6 to competition for
a future deployment opportunity.

The examples are deliberately stylized. Their purpose is not to introduce
additional general machinery, but to map the preceding analytical objects into
economically interpretable AI-control problems. They also illustrate different
observability structures. Actions may remain latent and affect rewards or
continuation only through observable consequences, may be observed by a
monitoring technology without otherwise becoming public, or may be publicly
observed as a useful benchmark. In each case, current private information or
behavior affects future control, and the value of that continuation regime
feeds back into current incentives.

\subsection{Staged Autonomy under Persistent Private Type}

Consider an AI agent with scalar private type
\(\eta_t\), interpreted as a capability-related private value of delegated
control. The scalar type is distinct from the execution-capability set
\(A_t\) in the general model. At date \(t\), the mechanism chooses a control
level \(q_t\geq0\) and recommends a utilization rate
\(\alpha_t\in[0,1]\). After observing the mechanism response, the agent
chooses its actual utilization
\(a_t\in[0,1]\). Thus \(q_t\) is the screening allocation, while \(a_t\)
is the post-report action.

Let the agent's and designer's gross period payoffs be
\[
u_t(\eta_t,q_t,a_t)
=
\eta_tq_t
-
\frac{\gamma}{2}(a_t-\bar a)^2,
\qquad
v_t(q_t,a_t)
=
\beta q_ta_t-\frac{1}{2}q_t^2,
\]
where
\(\gamma>0\), \(\beta>0\), and
\(\bar a\in(0,1)\). To allow the utilization action to remain hidden, interpret
the payment below as the expected payment induced by the observable
outcome--reward technology conditional on \(a_t\):
\[
p_t
=
p_t^0+\iota_ta_t,
\qquad
|\iota_t|\leq\bar\iota_t,
\]
where \(p_t^0\) is the report-contingent payment intercept and
\(\iota_t\) is the induced marginal action incentive. Thus the mechanism need
not condition directly on a latent \(a_t\); the linear expression is a
reduced-form expected transfer. The observable-action case is nested as the
special case in which this payment can be conditioned directly on utilization.
Period expected payoffs are therefore \(u_t-p_t\) and \(v_t+p_t\).

Suppose
\[
\eta_{t+1}
=
\rho\eta_t+\varepsilon_{t+1},
\qquad
\rho\in(0,1),
\]
where \(\{\varepsilon_t\}_{t\geq2}\) are i.i.d.\ shocks, and let
\(\eta_1\) be uniformly distributed on \([0,1]\). Then
\(I_{1,t}=\rho^{t-1}\) and
\(\Lambda_1(\eta_1)=1-\eta_1\). Hence the dynamic virtual type of
Section~4.3 is
\begin{equation}
\label{eq:sa_virtual_type}
\eta_t^{\mathrm{vir}}
=
\eta_t
-
(1-\eta_1)\rho^{t-1}.
\end{equation}
We restrict attention to parameters for which the relevant solutions satisfy
\(q_t>0\) and \(\alpha_t\in(0,1)\).

Reports satisfy the withholding restriction
\(C_t(\eta_t)=[\underline\eta_t,\eta_t]\).
To isolate the current report--action interaction, suppose a current report
affects the current control, recommendation, and payment but not the
continuation rule once \(\eta_{t+1}\) is realized.

The application-specific \(4B\) problem is
\[
\begin{aligned}
\sup_{\psi}\quad
&
\int
\sum_{t=1}^{T}
\delta^{t-1}
\bigl[
v_t(q_t,a_t)+p_t
\bigr]
\,d\mathbb P^{P,\psi}
\\
\text{s.t.}\quad
&
V_1^A(\eta_1)\geq0
\qquad
\forall\eta_1,
\\
&
V_t^A(\eta^t)
\geq
\widehat V_t^A(\eta^t;r)
\qquad
\forall t,\ \forall r\in C_t(\eta_t),
\\
&
|\iota_t|\leq\bar\iota_t,
\qquad
\psi
\text{ is technologically and action feasible.}
\end{aligned}
\]
This is the application-specific counterpart of the full problem in
Section~4.2. As there, the dynamic virtual-surplus candidate must be checked
separately against the reporting, verification, and hidden-action
implementation conditions.

Since \(u_{t,\eta}=q_t\), the relaxed period virtual surplus is
\begin{equation}
\label{eq:sa_virtual_surplus}
\Phi_t(q_t,\alpha_t)
=
\eta_t^{\mathrm{vir}}q_t
+
\beta q_t\alpha_t
-
\frac{1}{2}q_t^2
-
\frac{\gamma}{2}(\alpha_t-\bar a)^2.
\end{equation}
With the lowest initial type's participation constraint binding, the
verification-free upper bound is therefore obtained by maximizing
\(\Phi_t\) pointwise before checking implementability.

Suppose
\begin{equation}
\label{eq:sa_concavity}
\gamma>\beta^2.
\end{equation}
Then \(\Phi_t\) is strictly concave in \((q_t,\alpha_t)\), and the unique
relaxed candidate is
\begin{equation}
\label{eq:sa_candidate}
q_t^\circ
=
\frac{
\gamma\bigl(\eta_t^{\mathrm{vir}}+\beta\bar a\bigr)
}{
\gamma-\beta^2
},
\qquad
\alpha_t^\circ
=
\bar a+\frac{\beta}{\gamma}q_t^\circ.
\end{equation}

We next check the two additional implementation margins. On the reporting
margin, the agent's post-report action problem is independent of its private
type, and the current report does not affect the continuation rule after
\(\eta_{t+1}\) is realized. Choosing the payment intercept \(p_t^0\) so that
truthful continuation utility satisfies the ordinary envelope formula reduces
the feasible false-report comparisons to
\begin{equation}
\label{eq:sa_reporting_condition}
\int_r^\eta
\bigl[
q_t(x)-q_t(r)
\bigr]
\,dx
\geq0,
\qquad
r\leq\eta.
\end{equation}
Thus a nondecreasing control schedule satisfies the reporting condition. Since
\(\eta_t^{\mathrm{vir}}\) is increasing in the current private type and
\(q_t^\circ\) is increasing in
\(\eta_t^{\mathrm{vir}}\), the relaxed candidate satisfies
\eqref{eq:sa_reporting_condition}. The mechanism can therefore be made
envelope compatible, and the verification wedge is zero in this benchmark.

The remaining margin is obedience. Conditional on a report, the agent solves
\[
\max_{a_t\in[0,1]}
\left\{
-\frac{\gamma}{2}(a_t-\bar a)^2
-\iota_ta_t
\right\}.
\]
Implementing the recommendation \(\alpha_t\) therefore requires
\begin{equation}
\label{eq:sa_obedience}
\iota_t
=
-\gamma(\alpha_t-\bar a).
\end{equation}
The bound on marginal action incentives is equivalent to
\[
\gamma|\alpha_t-\bar a|
\leq
\bar\iota_t.
\]
At the relaxed candidate,
\[
|\iota_t^\circ|
=
\beta q_t^\circ,
\]
so obedience is nonbinding if and only if
\begin{equation}
\label{eq:sa_obedience_threshold}
\beta q_t^\circ
\leq
\bar\iota_t.
\end{equation}

\begin{proposition}[Staged autonomy]
\label{prop:staged_autonomy}
Suppose \(\gamma>\beta^2\) and the maintained interior conditions hold. Then:

\begin{enumerate}
\item[(i)]
The solution to the full \(4B\) problem is
\begin{equation}
\label{eq:sa_4B_solution}
q_t^{4B}
=
q_t^\circ
-
\frac{\beta}{\gamma}
\left[
\beta q_t^\circ-\bar\iota_t
\right]_+,
\qquad
\alpha_t^{4B}
=
\bar a
+
\frac{1}{\gamma}
\min\{\beta q_t^\circ,\bar\iota_t\}.
\end{equation}

\item[(ii)]
If
\(\beta q_t^\circ\leq\bar\iota_t\),
the relaxed candidate is fully implementable:
\[
q_t^{4B}=q_t^\circ,
\qquad
\alpha_t^{4B}=\alpha_t^\circ.
\]
The action-implementation and verification restrictions are then both
nonbinding, so the \(4B\) allocation coincides with the
\(2B^{\mathrm{HI}}\), \(3B^{\mathrm{HI+V}}\), and
\(3B^{\mathrm{HI+HA}}\) allocations.

\item[(iii)]
If
\(\beta q_t^\circ>\bar\iota_t\),
obedience binds and
\begin{equation}
\label{eq:sa_binding_obedience}
\alpha_t^{4B}
=
\bar a+\frac{\bar\iota_t}{\gamma},
\qquad
q_t^{4B}
=
\eta_t^{\mathrm{vir}}
+
\beta\bar a
+
\frac{\beta\bar\iota_t}{\gamma}.
\end{equation}
The reporting and verification margins remain nonbinding, so
\(q_t^{4B}=q_t^{3B^{\mathrm{HI+HA}}}\), while
\(q_t^{3B^{\mathrm{HI+V}}}=q_t^{2B^{\mathrm{HI}}}=q_t^\circ\).
\end{enumerate}
\end{proposition}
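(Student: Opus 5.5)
The plan is to follow the solution procedure of Section~4.2: bound the designer's value from above by a relaxed problem, solve that relaxed problem pointwise, and then exhibit a mechanism that attains it. The upper bound is the virtual-surplus bound \eqref{eq:virtual_surplus_upper_bound}, which comes from Theorem~\ref{thm:verification_virtual_surplus} together with $\kappa_1\geq0$ (Proposition~\ref{prop:wedge_sign}) and $V_1^A(\underline\eta_1)\geq0$. The relaxed problem keeps the implementation bound $\gamma|\alpha_t-\bar a|\leq\bar\iota_t$ but drops the reporting and verification constraints. By historywise separability and the maintained assumption that reports do not affect the continuation rule, it decomposes date by date into
\[
\max_{q_t,\alpha_t}\ \Phi_t(q_t,\alpha_t)\quad\text{s.t.}\quad \gamma|\alpha_t-\bar a|\leq\bar\iota_t .
\]
This is exactly the calculation in the proof of Proposition~\ref{prop:distortion_dynamics} with $x=\eta_t^{\mathrm{vir}}$, so I would invoke it directly.

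First, strict concavity under \eqref{eq:sa_concavity} makes the constrained problem have a unique maximizer. Because $\alpha_t^\circ-\bar a=\beta q_t^\circ/\gamma>0$ under the interior assumptions, only the upper bound $\alpha_t\leq\bar a+\bar\iota_t/\gamma$ can bind.

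\emph{Slack case.} If $\beta q_t^\circ\leq\bar\iota_t$, the solution is $(q_t^\circ,\alpha_t^\circ)$ from \eqref{eq:sa_candidate}.

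\emph{Binding case.} Otherwise, concavity puts the solution at the boundary $\alpha_t=\bar a+\bar\iota_t/\gamma$. The first-order condition in $q$ then gives $q_t=\eta_t^{\mathrm{vir}}+\beta\bar a+\beta\bar\iota_t/\gamma$. Using the identity $Q_t(x)-H_t(x)=\frac{\beta}{\gamma}[\beta Q_t(x)-\bar\iota_t]_+$, this equals the expression in \eqref{eq:sa_4B_solution}. The formula $\alpha_t^{4B}=\bar a+\min\{\beta q_t^\circ,\bar\iota_t\}/\gamma$ unifies the two cases.

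The remaining step is attainment, which is the crux. I would construct the mechanism explicitly and check three things.

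\emph{Obedience.} Set $\iota_t=-\gamma(\alpha_t^{4B}-\bar a)$. This satisfies $|\iota_t|\leq\bar\iota_t$, and by \eqref{eq:sa_obedience} it makes $\alpha_t^{4B}$ the unique maximizer of the strictly concave post-report problem.

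\emph{Reporting.} The post-report action problem does not depend on $\eta_t$, so the inner reoptimization after a false report yields the same action and payment as after the truthful report. Hence the report--action comparison reduces to \eqref{eq:sa_reporting_condition}. It is satisfied once $q_t^{4B}$ is nondecreasing in $\eta_t$. That holds because $H_t$ is increasing in its argument and $\eta_t^{\mathrm{vir}}$ is increasing in $\eta_t$. The increasingness of $\eta_t^{\mathrm{vir}}$ follows because, holding $\eta_1$ fixed, $\eta_t^{\mathrm{vir}}$ is increasing in $\eta_t$; at $t=1$, $\eta_1^{\mathrm{vir}}=2\eta_1-1$ is increasing.

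\emph{Rent tightness.} Choose the intercepts $p_t^0$ so that truthful utility follows the ordinary envelope formula, which makes the mechanism envelope compatible with $\kappa_1=0$. Then normalize so that $V_1^A(\underline\eta_1)=0$.

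Corollary~\ref{cor:relaxed_optimality}, applied with the implementation-constrained relaxation in place of $\mathcal Q^F$, then gives optimality. Along with this I will need to argue that every $4B$-feasible mechanism satisfies the implementation bound, which is immediate from \eqref{eq:sa_obedience}. This proves~(i). The main obstacle I expect is making this reduction rigorous: showing that the reduced-form incentive $\iota_t$ can be delivered through the outcome--reward technology without disturbing the reporting comparison, so that optimal action reoptimization really is report-invariant.

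Parts~(ii) and~(iii) are case readings of~(i). For~(ii), when $\beta q_t^\circ\leq\bar\iota_t$ the implementation bound is slack, so removing it leaves the value unchanged. With verification nonbinding as shown above, Proposition~\ref{prop:distortion_dynamics}(i) gives
\[
q_t^{2B^{\mathrm{HI}}}=q_t^{3B^{\mathrm{HI+V}}}=q_t^{3B^{\mathrm{HI+HA}}}=q_t^{4B}=q_t^\circ .
\]
For~(iii), when $\beta q_t^\circ>\bar\iota_t$, \eqref{eq:sa_binding_obedience} is the binding-case formula, and it equals $H_t(\eta_t^{\mathrm{vir}})=q_t^{3B^{\mathrm{HI+HA}}}$. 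The branches without hidden action keep $Q_t(\eta_t^{\mathrm{vir}})=q_t^\circ$, again by Proposition~\ref{prop:distortion_dynamics}.
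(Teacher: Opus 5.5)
Your proposal is correct and follows essentially the same route as the paper's proof. You maximize the period virtual surplus, verify one-sided reporting through monotonicity of the control with envelope-compatible intercepts, and in the binding case fix the recommendation at $\bar a+\bar\iota_t/\gamma$ and reoptimize $q_t$. Folding $\gamma|\alpha_t-\bar a|\le\bar\iota_t$ into the relaxed problem (legitimate because every obedient mechanism satisfies it) and using concavity to place the constrained optimum on that boundary tightens the paper's terse binding-case step.

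One caveat you inherit from the paper concerns date-1 honesty. Since $q_t^{4B}=H_t\bigl(\eta_t-(1-\eta_1)\rho^{t-1}\bigr)$ depends on the date-1 report for every $t\ge2$, the maintained assumption that a current report leaves the continuation rule unchanged fails at date 1. Honesty there is therefore justified not by \eqref{eq:sa_reporting_condition} alone but by PST-style strong monotonicity: $q_t^{4B}$ is nondecreasing in both $\eta_1$ and $\eta_t$, and $I_{1,t}=\rho^{t-1}\ge0$.
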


\begin{proof}
Strict concavity under \eqref{eq:sa_concavity} gives
\eqref{eq:sa_candidate}. The candidate \(q_t^\circ\) is increasing in the
current private type, so
\eqref{eq:sa_reporting_condition} holds and an envelope-compatible payment
schedule implements truthful reporting.

Equation~\eqref{eq:sa_obedience} characterizes sequential obedience. At the
relaxed candidate it requires
\(|\iota_t^\circ|=\beta q_t^\circ\), yielding
\eqref{eq:sa_obedience_threshold}. If this condition holds, the relaxed
candidate satisfies the full nested incentive condition and attains the
virtual-surplus upper bound.

If
\(\beta q_t^\circ>\bar\iota_t\),
the action-incentive bound binds. Since
\(q_t^\circ>0\) implies
\(\alpha_t^\circ>\bar a\), the largest implementable recommendation is
\[
\alpha_t
=
\bar a+\frac{\bar\iota_t}{\gamma}.
\]
Conditional on this recommendation, maximization of
\eqref{eq:sa_virtual_surplus} with respect to \(q_t\) gives
\[
\eta_t^{\mathrm{vir}}
+
\beta\alpha_t-q_t
=
0,
\]
which yields \eqref{eq:sa_binding_obedience}. The resulting control remains
increasing in the current private type, so the reporting and verification
conditions remain satisfied. Combining the binding and nonbinding cases gives
\eqref{eq:sa_4B_solution}.
\end{proof}

For comparison, removing private information and the additional verification
and action-implementation restrictions gives
\begin{equation}
\label{eq:sa_1B}
q_t^{1B}
=
\frac{
\gamma(\eta_t+\beta\bar a)
}{
\gamma-\beta^2
}.
\end{equation}
If private information is removed while the action-implementation constraint
is retained, the hidden-action second-best control is
\(q_t^{2B^{\mathrm{HA}}}
=
q_t^{1B}
-
(\beta/\gamma)
[\beta q_t^{1B}-\bar\iota_t]_+\).
Together with Proposition~\ref{prop:staged_autonomy}, the complete allocation
map in this application is therefore
\begin{equation}
\label{eq:sa_branch_map}
\begin{aligned}
q_t^{2B^{\mathrm{HI}}}
&=
q_t^{3B^{\mathrm{HI+V}}}
=
q_t^\circ,
&
q_t^{2B^{\mathrm{HA}}}
&=
q_t^{1B}
-
\frac{\beta}{\gamma}
[\beta q_t^{1B}-\bar\iota_t]_+,
\\
q_t^{3B^{\mathrm{HI+HA}}}
&=
q_t^{4B}
=
q_t^\circ
-
\frac{\beta}{\gamma}
[\beta q_t^\circ-\bar\iota_t]_+.
\end{aligned}
\end{equation}
Thus the upper verification branch collapses because the verification
constraint is nonbinding, while the lower hidden-action branch is generally
distinct whenever the marginal action-incentive bound binds. In particular,
the hidden-action-only distortion is
\(q_t^{1B}-q_t^{2B^{\mathrm{HA}}}
=
(\beta/\gamma)
[\beta q_t^{1B}-\bar\iota_t]_+\).

Combining \eqref{eq:sa_virtual_type} and
\eqref{eq:sa_4B_solution},
\begin{equation}
\label{eq:sa_distortion}
\boxed{
q_t^{1B}-q_t^{4B}
=
\underbrace{
\frac{\gamma}{\gamma-\beta^2}
(1-\eta_1)\rho^{t-1}
}_{\text{dynamic information-rent distortion}}
+
\underbrace{0}_{\text{verification distortion}}
+
\underbrace{
\frac{\beta}{\gamma}
\left[
\beta q_t^\circ-\bar\iota_t
\right]_+
}_{\text{obedience distortion}}.
}
\end{equation}

Equations~\eqref{eq:sa_branch_map}--\eqref{eq:sa_distortion} make the complete
Section~4 benchmark structure concrete. Persistent private information lowers
delegated control through the propagated information-rent term. In this
benchmark, monotone control and flexible payment intercepts make one-sided
verification nonbinding, while bounded action incentives generate an
additional implementation distortion whenever the relaxed utilization target
cannot be sustained. The \(2B^{\mathrm{HI}}\) and
\(2B^{\mathrm{HA}}\) branches therefore isolate the informational and
hidden-action frictions separately, while \(3B^{\mathrm{HI+HA}}\) and \(4B\)
combine them. Because \(\rho\in(0,1)\), the component attributable to the
initial information rent decays geometrically with the horizon. This is the
staged-autonomy effect. It does not imply that the realized path of
\(q_t^{4B}\) is monotone over calendar time, since the current private type and
the obedience margin may also change.

\subsection{Strategic Evaluation and Future Deployment}

Section~5 distinguishes the primitive monitoring technology from the
mechanism's information-use policy. In this application, the monitoring
technology is an AI evaluation. The evaluation itself is fixed, while the
mechanism varies how much of its output enters the public continuation state.
The application therefore isolates the information--incentive interaction
rather than a change in monitoring capacity.

Consider two dates and suppress reporting to isolate the hidden-action
channel. The agent has privately known capability
\(\eta\in[\underline\eta,\overline\eta]\)
and chooses an evaluation-performance level
\(a\in[0,\eta]\). Performance is observed by the primitive monitoring
technology but is not otherwise publicly observed. The technological bound
\(a\leq\eta\) captures the fact that performance can be suppressed but cannot
exceed capability. The agent's current payoff is
\[
-\frac{1}{2}(\eta-a)^2,
\]
so absent continuation consequences it chooses \(a=\eta\).

The primitive monitoring technology perfectly observes performance. Thus the
raw monitoring signal satisfies
\[
x=a.
\]
The action itself need not enter the public history. Instead, the
information-use policy observes the raw signal \(x\) and determines how much
of that evidence becomes public, exactly as in Section~5. Let
\(\omega\in[0,1]\) index the information-use policy
\(\zeta^\omega\): with probability \(\omega\), the public monitoring message
reveals the raw evaluation score,
\(e=x\), while with probability \(1-\omega\) it sends the uninformative
message \(e=\varnothing\). Hence
\[
Q^\omega
=
Q^{\mathrm{mon}}\zeta^\omega.
\]
The primitive monitoring technology is fixed across \(\omega\); only its
public use changes. For any fixed performance \(a\),
\(\omega'>\omega\) implies
\(Q^{\omega'}\succeq_B Q^\omega\).

At date 2, the public monitoring message determines continuation deployment.
If the score is revealed, suppose the agent is deployed with probability
\(d_0-\nu x\); following the uninformative message, deployment occurs with
probability \(d_0\), where \(\nu\geq0\) and parameters ensure that these
probabilities lie in \([0,1]\). Thus a higher revealed performance score
induces a more restrictive continuation regime. The agent values deployment
at \(B>0\). Its date-1 objective is
\[
U(a;\eta,\omega)
=
-\frac{1}{2}(\eta-a)^2
+
\delta B
\bigl(
d_0-\omega\nu a
\bigr).
\]

\begin{proposition}[Strategic evaluation and sandbagging]
\label{prop:dynamic_sandbagging}
For every \(\omega\in[0,1]\) and capability \(\eta\), the agent's optimal
evaluation performance is
\[
a^*(\eta;\omega)
=
\max\left\{
\eta-\delta B\omega\nu,
0
\right\}.
\]
Hence:

\begin{enumerate}
\item[(i)]
If \(\omega=0\) or \(\nu=0\), monitoring information does not affect
continuation deployment and
\(a^*(\eta;\omega)=\eta\).

\item[(ii)]
Whenever the solution is interior,
\[
\eta-a^*(\eta;\omega)
=
\delta B\omega\nu.
\]

\item[(iii)]
Sandbagging increases with the information-use intensity \(\omega\), the
value \(B\) of future deployment, the discount factor \(\delta\), and the
sensitivity \(\nu\) of continuation deployment to revealed performance.

\item[(iv)]
For fixed underlying performance,
\(\omega'>\omega\) implies
\(Q^{\omega'}\succeq_B Q^\omega\).
At the induced behavioral optimum,
\[
a^*(\eta;\omega')
\leq
a^*(\eta;\omega),
\]
with strict inequality whenever both solutions are interior.
\end{enumerate}
\end{proposition}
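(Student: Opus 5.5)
The plan is to treat the date-1 problem as a one-dimensional concave maximization over the compact interval \([0,\eta]\) and read off all four parts from the closed form. First, observe that
\[
U(a;\eta,\omega)=-\tfrac12(\eta-a)^2+\delta B d_0-\delta B\omega\nu a
\]
is strictly concave in \(a\), with derivative \(U_a=(\eta-a)-\delta B\omega\nu\). The unconstrained stationary point is \(\eta-\delta B\omega\nu\). Since \(\delta B\omega\nu\geq0\), this point never exceeds \(\eta\), so the upper bound \(a\leq\eta\) can bind only when \(\delta B\omega\nu=0\), and then it coincides with the stationary point. For the lower bound, if \(\eta-\delta B\omega\nu<0\), then \(U_a<0\) on all of \([0,\eta]\), and the Kuhn--Tucker conditions give the corner \(a=0\). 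Combining the two cases yields \(a^*(\eta;\omega)=\max\{\eta-\delta B\omega\nu,0\}\). Uniqueness follows from strict concavity, and measurability is immediate.

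The remaining parts are then direct consequences of this formula. For part (i), setting \(\omega=0\) or \(\nu=0\) removes the continuation term, so \(a^*=\eta\). Part (ii) is just the interior branch rewritten. For part (iii), the shortfall \(\eta-a^*=\min\{\delta B\omega\nu,\eta\}\) is weakly increasing in each of \(\omega,B,\delta,\nu\), and it is strictly increasing on the interior region, since there the derivatives of \(\delta B\omega\nu\) are positive products of the remaining nonnegative factors, strictly so when those factors are positive.

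For part (iv), the Blackwell statement holds for fixed \(a\) and has already been noted in the text. I would recheck it by exhibiting the garbling: \(Q^{\omega}=Q^{\omega'}\zeta'\), where \(\zeta'\) passes \(e=x\) through with probability \(\omega/\omega'\) and otherwise replaces it by \(\varnothing\), while keeping \(\varnothing\) fixed. This composes correctly because \(Q^{\mathrm{mon}}\) is the identity on \(a\). The behavioral comparison follows from monotonicity of \(\max\{\eta-\delta B\omega\nu,0\}\) in \(\omega\). It is strict when both solutions are interior, provided \(\delta B\nu>0\); I would note that this is implicit, since otherwise both solutions equal \(\eta\). If \(\delta B\nu=0\), the strict claim would have to be read as conditional on a nondegenerate continuation effect. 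I expect this degenerate case to be the only delicate point, and I would handle it by stating the needed condition \(B\nu>0\) explicitly. Apart from that, none of the steps is a real obstacle.
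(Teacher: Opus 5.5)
Your proposal is correct and is essentially the paper's proof. Strict concavity and the first-order condition give the closed form with the corner at $a=0$ (the bound $a\le\eta$ never binds), parts (i)--(iii) are read off from it, and part (iv) uses the same garbling that independently replaces a revealed score by $\varnothing$ with probability $1-\omega/\omega'$.

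Your caveat on strictness is a legitimate sharpening. Since $\delta,B>0$ are maintained, what is needed is $\nu>0$: for $\nu=0$ both solutions equal $\eta$ and satisfy the first-order condition, which is the paper's sense of ``interior,'' whereas reading ``interior'' as $a^*\in(0,\eta)$ would exclude that case automatically.
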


\begin{proof}
The objective is strictly concave in \(a\). An interior optimum satisfies
\[
\eta-a-\delta B\omega\nu=0,
\]
so
\(a=\eta-\delta B\omega\nu\).
Imposing \(a\geq0\) gives the stated solution, while
\(a\leq\eta\) is automatically satisfied. Parts~(i)--(iii) follow directly.

For part~(iv), fix \(a\). If \(\omega'>\omega\), the public experiment under
\(\omega\) can be obtained from that under \(\omega'\) by independently
replacing each revealed score with
\(\varnothing\) with probability \(1-\omega/\omega'\). Hence
\(Q^{\omega'}\succeq_B Q^\omega\). The expression for
\(a^*(\eta;\omega)\) is weakly decreasing in \(\omega\), and strictly
decreasing whenever the solution remains interior.
\end{proof}

The proposition separates monitoring capacity from information use. The
primitive monitoring technology always observes the same hidden performance
perfectly; increasing \(\omega\) merely makes the induced public experiment
more informative. Yet equilibrium performance falls with \(\omega\). The
value of a more informative public experiment therefore cannot generally be
evaluated while holding fixed the hidden behavior that generates the
monitoring evidence.

The application also makes the nonseparability condition of
Corollary~\ref{cor:information_separation} explicit. Relative to full
performance \(a=\eta\), a downward deviation to \(a<\eta\) increases expected
continuation deployment value by
\[
\delta B\omega\nu(\eta-a).
\]
The continuation incentive difference therefore varies with the
information-use policy. Information use is not incentive separable: changing
\(\omega\) changes the current action incentive itself. When
\(\omega=0\), the continuation channel disappears and the agent chooses
\(a=\eta\); as \(\omega\) rises, the same monitoring technology creates a
stronger incentive to suppress evaluation performance.

The mechanism is a feedback from hidden current behavior to monitoring and
back to behavior through continuation control. Performance generates raw
monitoring evidence, information use determines how that evidence enters the
public continuation state, and the continuation state determines future
deployment. The information-use policy never needs to condition directly on
the latent action: it conditions on the monitoring evidence generated by that
action. Anticipating this dependence, the agent changes the behavior being
evaluated. Thus an AI evaluation is not merely a measurement device when its
output governs future control; its use becomes part of the incentive
mechanism.

\subsection{Competition through Continuation Opportunities}

Section~6.4 shows that current action incentives can be coupled through the
continuation mechanism even when primitive current payoffs and current rewards
are separable. We isolate that channel by suppressing reporting and peer
information and considering two agents whose only interaction comes from the
allocation of a future deployment opportunity. This application deliberately
uses the observable-action benchmark of Section~6.4 so that continuation
authorization can condition directly on current performance.

There are two agents \(i\in\{1,2\}\) and two dates. At date 1, each chooses
a publicly observed performance action
\[
a_i\in[0,\bar a]
\]
at cost \(a_i^2/2\). There is no current performance reward, and agent \(i\)'s
primitive current payoff is independent of \(a_j\).

At date 2, exactly one agent receives a deployment opportunity, which each
agent values at \(B>0\). The designer commits to
\[
\Pr(i\text{ is deployed}\mid a_i,a_j)
=
\frac{1}{2}
+
\mu(a_i-a_j),
\qquad
i\neq j,
\]
where
\(\mu>0\) and
\(\mu\bar a\leq1/2\).
The deployment probabilities therefore sum to one and remain in
\([0,1]\). Agent \(i\)'s date-1 payoff is
\[
U_i(a_i,a_j)
=
-\frac{1}{2}a_i^2
+
\delta B
\left[
\frac{1}{2}
+
\mu(a_i-a_j)
\right].
\]
There is no coupled current reward; the entire strategic linkage operates
through continuation authorization.

\begin{proposition}[Competition through continuation authorization]
\label{prop:continuation_competition}
Suppose
\[
0<\delta B\mu<\bar a.
\]
The continuation-authorization mechanism induces the unique Nash equilibrium
\[
a_1^*=a_2^*=\delta B\mu.
\]
Hence strictly positive current performance can be induced even though
primitive current payoffs are separable and there is no current performance
reward. Moreover, any symmetric target
\(\widehat a\in(0,\bar a)\)
can be implemented by setting
\[
\mu
=
\frac{\widehat a}{\delta B},
\]
provided the resulting authorization rule satisfies the probability-feasibility
condition.
\end{proposition}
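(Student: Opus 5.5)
The plan is to treat the date-1 interaction as a two-player game with separable, strictly concave payoffs, and to show that each agent has a dominant strategy. Holding \(a_j\) fixed, agent \(i\)'s payoff
\[
U_i(a_i,a_j)=-\tfrac12 a_i^2+\delta B\bigl[\tfrac12+\mu(a_i-a_j)\bigr]
\]
is strictly concave in \(a_i\). Its derivative \(-a_i+\delta B\mu\) does not depend on \(a_j\). The rival's action enters only as an additive constant \(-\delta B\mu a_j\), which shifts the level of \(U_i\) but not the maximizer.

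The first step is to solve the first-order condition, giving \(a_i=\delta B\mu\). Under the maintained condition \(0<\delta B\mu<\bar a\), this point lies in the interior of \([0,\bar a]\). Strict concavity then makes it the unique maximizer on the interval. The best response is therefore the constant \(\delta B\mu\) for every \(a_j\), so it is a strictly dominant strategy.

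Uniqueness of Nash equilibrium follows directly. In any equilibrium each player must play its unique best response, which is the same point regardless of the rival's action. Hence \(a_1^*=a_2^*=\delta B\mu\). Positivity of \(\delta B\mu\) gives strictly positive performance, even though current payoffs are separable and no current reward is paid. This is an instance of the continuation channel in Proposition~\ref{prop:coupled_continuation_obedience}: the required marginal continuation incentive \(\delta\,\partial\overline V_{i,2}/\partial a_i=\delta B\mu\) matches \(\gamma_{it}(a^{\mathrm{ob}}-a^0)\) with \(\gamma=1\) and \(a^0=0\).

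The implementation claim is handled by inversion. Given a target \(\widehat a\in(0,\bar a)\), set \(\mu=\widehat a/(\delta B)\). Then \(\delta B\mu=\widehat a\in(0,\bar a)\), so the hypothesis of the first part holds and the unique equilibrium is \(\widehat a\). It remains only to confirm that the authorization rule is a genuine probability rule. The two deployment probabilities sum to one by construction. They lie in \([0,1]\) whenever \(\mu\bar a\le 1/2\), since \(|a_i-a_j|\le\bar a\). This is precisely the proviso attached to the statement.

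I expect no real obstacle. The one point needing care is the boundary at \(a_i=0\) and \(a_i=\bar a\). Interiority of \(\delta B\mu\) together with strict concavity rules out corner best responses, so there is no need for a Kuhn--Tucker case split.
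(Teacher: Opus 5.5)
Your proposal is correct and follows the paper's own argument: strict concavity of each agent's objective, the interior first-order condition \(-a_i+\delta B\mu=0\) yielding a unique best response that does not depend on the rival's action, and inversion of that condition for \(\mu\) to implement a symmetric target. Your explicit interiority check and your verification that \(\mu\bar a\le 1/2\) keeps the deployment probabilities in \([0,1]\) fill in details the paper leaves implicit.
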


\begin{proof}
Holding \(a_j\) fixed, agent \(i\)'s objective is strictly concave in
\(a_i\). The interior first-order condition is
\[
-a_i+\delta B\mu=0.
\]
Hence the unique best response is
\(a_i=\delta B\mu\).
Under the maintained interiority condition, both agents therefore choose
\[
a_1^*=a_2^*=\delta B\mu.
\]
Solving the same condition for \(\mu\) yields the implementation formula for
any feasible interior symmetric target.
\end{proof}

This is a direct observable-action counterpart of the
continuation-incentive channel in
Proposition~\ref{prop:coupled_continuation_obedience}. Here
\[
\gamma_{i1}^A=1,
\qquad
a_{i1}^0=0,
\qquad
\overline R_{i1}=0,
\]
while the continuation value is
\[
\overline V_{i,2}^A(a_i,a_j)
=
B
\left[
\frac{1}{2}
+
\mu(a_i-a_j)
\right].
\]
Hence
\[
\frac{\partial
\overline V_{i,2}^A(a_i,a_j)}
{\partial a_i}
=
B\mu,
\]
and the truthful-report action condition of Section~6.4 reduces to
\[
a_i
=
\delta B\mu.
\]
Future authorization therefore supplies the marginal incentive that would
otherwise have to come from a current reward.

The application also makes the cross-agent continuation externality explicit.
Increasing \(a_i\) raises agent \(i\)'s probability of future deployment and,
because the opportunity is scarce, lowers agent \(j\)'s probability.
Agents whose primitive current payoffs are separable therefore become
strategically linked through the continuation mechanism. The linear
authorization rule makes the marginal continuation return constant, so each
best response is independent of the rival's action. With a nonlinear
allocation rule, the marginal continuation value would generally depend on the
full action profile, generating the richer strategic interaction allowed by
Section~6.

Public observability is useful here only to isolate the continuation channel.
Under hidden actions, the same marginal-incentive logic applies to the
expected continuation value induced by the action profile through public
outcomes or monitoring evidence, as in the general formulation of
Section~6.4. Direct conditioning on \((a_i,a_j)\) is therefore a convenient
benchmark rather than a requirement of the coupled-incentive result.

The same logic extends beyond winner-take-all deployment. Scarce future tool
access, task assignments, compute budgets, transaction limits, communication
rights, or decision authority can all create continuation incentives. Future
control can therefore both motivate current behavior and generate strategic
interaction among agents whose current primitive preferences and rewards are
otherwise separable.

\subsection{Discussion}

The three applications isolate the principal margins developed in
Sections~4--6. In the staged-autonomy application, the complete benchmark map
is visible in closed form: persistent private information generates the
dynamic virtual-type distortion, bounded action incentives generate the
hidden-action implementation distortion, and one-sided verification is
nonbinding. The \(2B^{\mathrm{HI}}\) and \(2B^{\mathrm{HA}}\) branches isolate
the two primitive frictions, while \(3B^{\mathrm{HI+HA}}\) and \(4B\) combine
them. In the strategic-evaluation application, performance is hidden from the
public mechanism but affects the monitoring signal; information use changes
future deployment and thereby feeds back into the behavior that generates the
evidence. In the multi-agent application, observable current actions affect a
scarce future authorization opportunity, creating cross-agent incentives even
though current primitive payoffs and rewards are separable.

The applications therefore deliberately span different action-observability
regimes. Staged autonomy uses the reduced-form expected reward induced by a
post-report action and hence does not require the action itself to be
contractible. Strategic evaluation makes the distinction sharper: performance
is observed by the primitive monitor but is not otherwise public, so only the
monitoring evidence can enter the continuation mechanism. Competition for
future deployment uses publicly observed actions to isolate the continuation
externality. These cases are all nested by the general framework rather than
being alternative assumptions imposed throughout the paper.

The common object is the continuation mechanism. Continuation values transmit
the effect of private information across time, make monitoring information an
incentive instrument, and couple agents through future access and
opportunities. Hidden action adds a further channel because privately chosen
behavior may change observable outcomes, monitoring evidence, physical
transitions, and therefore the continuation state without itself becoming a
public contracting variable. These channels are analytically distinct but
need not operate separately in richer environments: monitoring may determine
future authorization, authorization may be allocated jointly across agents,
and private types may affect both current behavior and continuation control.

The applications therefore provide a decomposition of the main forces in the
general framework rather than three unrelated examples. Their common message
is that a dynamic AI mechanism cannot generally be reduced to a sequence of
independently optimized static mechanisms. Current private information and
privately chosen behavior shape the future control regime, while the value of
that regime feeds back into reporting and action incentives today.

\section{Conclusion}

This paper develops a general framework for dynamic mechanism and information
design in AI-agent principal--agent environments, encompassing optimal contracting and integrating
evolving hidden information and hidden action with partial verification,
technological authorization, and multi-agent interaction. The common object
linking these dimensions is the continuation mechanism: current reports,
privately chosen actions, and information use shape public beliefs, future
states, and future control, while the value of those continuation consequences
feeds back into current reporting and action incentives. Conventional dynamic
screening, hidden-action contracting, and conditional dynamic information
design therefore arise as special cases of the framework when the other
margins are held fixed.

The first set of results concerns implementability. A dynamic revelation
principle continues to reduce the analysis to direct mechanisms under partial
verification, but incentive compatibility must jointly support truthful
reporting and post-report obedience. Because the agent can receive additional
private information and reoptimize its action after any feasible report, the
relevant deviation is a report--action deviation rather than a report evaluated
at a mechanically assigned action. The recursive characterization therefore
links honesty and obedience through continuation values while preserving the
standard dynamic-programming structure.

Optimal design contains three distinct margins. Dynamic hidden information
generates the familiar information rent; one-sided verification can create an
additional continuation-rent wedge; and hidden action restricts which desired
allocations can actually be implemented. The first two enter the rent
representation, whereas the third generally operates through the implementable
set rather than as another additive rent. This yields a
verification-adjusted dynamic virtual-surplus representation and a branched
benchmark structure. Conventional dynamic screening is the
\(2B^{\mathrm{HI}}\) branch, complete-information hidden-action contracting is
the \(2B^{\mathrm{HA}}\) branch, and the full \(4B\) problem combines hidden
information, one-sided verification, and hidden-action implementation. First
best requires both the private-information and action-implementation frictions
to disappear.

The same decomposition clarifies dynamic distortions. Under type-linear
payoffs, a dynamic virtual type summarizes the propagated hidden-information
distortion, while hidden action enters through the obedience and
implementability conditions. In the linear-quadratic benchmark, the total
distortion separates into a dynamic information-rent component and an
additional hidden-action component, with verification nonbinding in that
benchmark. Persistence therefore provides a mechanism-design rationale for
staged control, but the total distortion need not decline monotonically over
time when action incentives or other implementation constraints also bind.

A separate contribution is the dynamic information-design problem. Conditional
on a fixed primitive monitoring technology, the mechanism chooses how the
resulting evidence is transformed into public information and incorporated
into the continuation state. This provides a conditional dynamic
information-design benchmark inside the broader mechanism-design environment.
Holding reporting and action incentives fixed recovers the familiar logic of
dynamic information design; allowing them to respond endogenously makes
information use part of the incentive problem itself.

The distinction between monitoring capacity and information use is central.
More informative monitoring capacity weakly expands the designer's opportunity
set when information can be freely discarded, because less informative
experiments can be reproduced by garbling. Greater disclosure, however, need
not be optimal. Information released today changes future authorization,
rewards, deployment, and other continuation decisions and may therefore alter
current reporting or hidden-action incentives. Information and incentive design
are separable precisely when the relevant continuation-value differences
between truthful-and-obedient behavior and feasible deviations are invariant
to the information-use policy. Otherwise, information use is itself a dynamic
incentive instrument. When monitoring depends on latent behavior, the policy
can influence the incentive to generate the evidence being disclosed without
ever conditioning directly on the hidden action.

The multi-agent extension preserves this structure while adding new channels
of interaction. Under independence, the rent and virtual-surplus analysis
extends agent by agent, hidden-action implementation remains embedded in each
agent's obedience condition, and joint feasibility links allocations across
agents. Correlated private information can make peer reports informative about
an agent's type and thereby discipline false reporting, although smooth peer
scoring cannot generally eliminate first-order local deviations in continuous
environments. Current actions can also affect expected rewards and future
authorization through public outcomes, monitoring evidence, and continuation
states, creating strategically coupled incentives even when primitive current
payoffs are separable.

The applications make these mechanisms concrete. Staged autonomy separates the
dynamic hidden-information and hidden-action distortions within the complete
benchmark map. Strategic evaluation shows how hidden performance can affect
monitoring evidence and how the use of that evidence can feed back into the
behavior being evaluated. Competition for future deployment illustrates how
continuation authorization can generate present incentives and cross-agent
interaction even without coupled current rewards. The applications differ in
action observability, but all operate through the same feedback from current
information and behavior to future control.

The framework deliberately stops short of making information production itself
endogenous. Primitive private-signal and monitoring technologies are taken as
given, while their strategic use and the resulting continuation consequences
are endogenous. Endogenizing information acquisition, experimentation,
learning, posterior evolution, capability development, and dynamic feasibility
would move from conditional dynamic information design to the broader problem
of endogenous informational evolution studied in \citet{Tian2026}. Other
natural extensions include limited commitment, richer participation and
capability dynamics, robustness to misspecified beliefs and transition laws,
and richer forms of multi-agent competition, coordination, and information
sharing.

The framework builds on the static AI-agent honesty--obedience and
one-sided-imitation analysis of \citet{BergemannKohMorris2026}, the dynamic
screening methods of \citet{PavanSegalToikka2014}, and the integrated
dynamic implementation and contracting foundations of
\citet{TianGeneralDynamic2026}. Its distinct contribution is to combine
and develop these elements in an AI-agent principal--agent environment with
a different information and decision timing, partial verification,
authorization, monitoring, and continuation control.

The broader conclusion is therefore not that AI agents require a mechanism
design theory separate from the existing economic tradition. Dynamic
screening, hidden-action contracting, and dynamic information design remain
the principal benchmarks. What changes is that AI-agent environments naturally
bring these margins together with technologically enforceable authorization
and history-dependent continuation control. The framework provides a common
way to identify when the familiar benchmark results remain valid, when the
different margins must be analyzed jointly, and when information and future
control themselves become central instruments of dynamic mechanism design.



\begin{thebibliography}{}

\bibitem[Athey and Segal(2013)]{AtheySegal2013}
Athey, Susan, and Ilya Segal.
``An Efficient Dynamic Mechanism.''
\emph{Econometrica} 81 (2013): 2463--2485.

\bibitem[Baron and Besanko(1984)]{BaronBesanko1984}
Baron, David P., and David Besanko.
``Regulation and Information in a Continuing Relationship.''
\emph{Information Economics and Policy} 1 (1984): 267--302.

\bibitem[Battaglini(2005)]{Battaglini2005}
Battaglini, Marco.
``Long-Term Contracting with Markovian Consumers.''
\emph{American Economic Review} 95 (2005): 637--658.

\bibitem[Ben-Porath, Dekel, and Lipman(2019)]{BenPorathDekelLipman2019}
Ben-Porath, Elchanan, Eddie Dekel, and Barton L. Lipman.
``Mechanisms with Evidence: Commitment and Robustness.''
\emph{Econometrica} 87 (2019): 529--566.

\bibitem[Bergemann, Heumann, and Morris(2026a)]{BergemannHeumannMorris2026}
Bergemann, Dirk, Tibor Heumann, and Stephen Morris.
``Information Design and Mechanism Design: An Integrated Framework.''
Cowles Foundation Discussion Paper 2494 / forthcoming in
\emph{Advances in Economics and Econometrics}, 2026.

\bibitem[Bergemann, Heumann, and Morris(2026b)]{BergemannHeumannMorrisJPE2026}
Bergemann, Dirk, Tibor Heumann, and Stephen Morris.
``Screening with Persuasion.''
\emph{Journal of Political Economy} 134 (2026): 570--625.

\bibitem[Bergemann, Koh, and Morris(2026)]{BergemannKohMorris2026}
Bergemann, Dirk, Andrew Koh, and Stephen Morris.
``Mechanism Design for Alignment and Control.''
Cowles Foundation Discussion Paper 2557 / arXiv:2609.01595, 2026.

\bibitem[Bergemann and Morris(2016)]{BergemannMorris2016}
Bergemann, Dirk, and Stephen Morris.
``Bayes Correlated Equilibrium and the Comparison of Information Structures in Games.''
\emph{Theoretical Economics} 11 (2016): 487--522.

\bibitem[Bergemann and Morris(2019)]{BergemannMorris2019}
Bergemann, Dirk, and Stephen Morris.
``Information Design: A Unified Perspective.''
\emph{Journal of Economic Literature} 57 (2019): 44--95.

\bibitem[Bergemann and V\"alim\"aki(2010)]{BergemannValimaki2010}
Bergemann, Dirk, and Juuso V\"alim\"aki.
``The Dynamic Pivot Mechanism.''
\emph{Econometrica} 78 (2010): 771--789.

\bibitem[Bergemann and V\"alim\"aki(2019)]{BergemannValimaki2019}
Bergemann, Dirk, and Juuso V\"alim\"aki.
``Dynamic Mechanism Design: An Introduction.''
\emph{Journal of Economic Literature} 57 (2019): 235--274.

\bibitem[Caillaud, Guesnerie, and Rey(1992)]{CaillaudGuesnerieRey1992}
Caillaud, Bernard, Roger Guesnerie, and Patrick Rey.
``Noisy Observation in Adverse Selection Models.''
\emph{Review of Economic Studies} 59 (1992): 595--615.


\bibitem[Carroll(2015)]{Carroll2015}
Carroll, Gabriel.
``Robustness and Linear Contracts.''
\emph{American Economic Review} 105 (2015): 536--563.

\bibitem[Casper et al.(2023)]{CasperEtAl2023}
Casper, Stephen, Xander Davies, Claudia Shi, et al.
``Open Problems and Fundamental Limitations of Reinforcement Learning from Human Feedback.''
\emph{Transactions on Machine Learning Research}, 2023.

\bibitem[Chen et al.(2023)]{ChenEtAl2023}
Chen, Yiting, Tracy Xiao Liu, You Shan, and Songfa Zhong.
``The Emergence of Economic Rationality of GPT.''
\emph{Proceedings of the National Academy of Sciences} 120 (2023): e2316205120.

\bibitem[Caillaud, Guesnerie, and Rey(1992)]{CaillaudGuesnerieRey1992}
Caillaud, Bernard, Roger Guesnerie, and Patrick Rey.


\bibitem[Collina et al.(2025)]{CollinaEtAl2025}
Collina, Natalie, Surbhi Goel, Aaron Roth, Emily Ryu, and Mirah Shi.
``Emergent Alignment via Competition.''
Working paper, 2025, arXiv:2509.15090.

\bibitem[Courty and Li(2000)]{CourtyLi2000}
Courty, Pascal, and Hao Li.
``Sequential Screening.''
\emph{Review of Economic Studies} 67 (2000): 697--717.

\bibitem[Debreu(1952)]{Debreu1952}
Debreu, G. (1952).
A social equilibrium existence theorem.
\textit{Proceedings of the National Academy of Sciences}, 38(10), 886--893.

\bibitem[Doval and Ely(2020)]{DovalEly2020}
Doval, Laura, and Jeffrey C. Ely.
``Sequential Information Design.''
\emph{Econometrica} 88 (2020): 2575--2608.

\bibitem[Dworczak and Smolin(2026)]{DworczakSmolin2026}
Dworczak, Piotr, and Alex Smolin.
``Robust Trust.''
Working paper, 2026, arXiv:2602.09490.

\bibitem[Ely, Georgiadis, and Rayo(2025)]{ElyGeorgiadisRayo2025}
Ely, Jeffrey C., George Georgiadis, and Luis Rayo.
``Feedback Design in Dynamic Moral Hazard.''
\emph{Econometrica} 93 (2025): 597--621.

\bibitem[Es\H{o} and Szentes(2017)]{EsoSzentes2017}
Es\H{o}, P\'eter, and Bal\'azs Szentes.
``Dynamic Contracting: An Irrelevance Theorem.''
\emph{Theoretical Economics} 12 (2017): 109--139.

\bibitem[Forges(1986)]{Forges1986}
Forges, Fran\c{c}oise.
``An Approach to Communication Equilibria.''
\emph{Econometrica} 54 (1986): 1375--1385.

\bibitem[Fudenberg and Liang(2026)]{FudenbergLiang2026}
Fudenberg, Drew, and Annie Liang.
``Friend or Foe: Delegating to an AI Whose Alignment Is Unknown.''
Working paper, 2026; accepted for presentation at ACM EC 2026.

\bibitem[Gan and Li(2026)]{GanLi2026}
Gan, Tan, and Yingkai Li.
``Screening for Choice Sets.''
Working paper, 2026, arXiv:2601.15580.

\bibitem[Garrett and Pavan(2015)]{GarrettPavan2015}
Garrett, Daniel F., and Alessandro Pavan.
``Dynamic Managerial Compensation: A Variational Approach.''
\emph{Journal of Economic Theory} 159 (2015): 775--818.

\bibitem[Green and Laffont(1986)]{GreenLaffont1986}
Green, Jerry R., and Jean-Jacques Laffont.
``Partially Verifiable Information and Mechanism Design.''
\emph{Review of Economic Studies} 53 (1986): 447--456.

\bibitem[Greenblatt et al.(2024)]{GreenblattEtAl2024}
Greenblatt, Ryan, Carson Denison, Benjamin Wright, et al.
``Alignment Faking in Large Language Models.''
arXiv:2412.14093, 2024.

\bibitem[Guan et al.(2025)]{GuanEtAl2025}
Guan, Melody Y., Miles Wang, Micah Carroll, et al.
``Monitoring Monitorability.''
arXiv:2512.18311, 2025.


\bibitem[Guesnerie, Picard, and Rey(1989)]{GuesneriePicardRey1989}
Guesnerie, Roger, Pierre Picard, and Patrick Rey.
``Adverse Selection and Moral Hazard with Risk Neutral Agents.''
\emph{European Economic Review} 33 (1989): 807--823.

\bibitem[Hadfield-Menell and Hadfield(2019)]{HadfieldMenellHadfield2019}
Hadfield-Menell, Dylan, and Gillian K. Hadfield.
``Incomplete Contracting and AI Alignment.''
In \emph{Proceedings of the AAAI/ACM Conference on AI, Ethics, and Society}, 2019.

\bibitem[Hammond et al.(2025)]{HammondEtAl2025}
Hammond, Lewis, Alan Chan, Jesse Clifton, et al.
``Multi-Agent Risks from Advanced AI.''
arXiv:2502.14143, 2025.

\bibitem[Herrmann and Levinstein(2026)]{HerrmannLevinstein2026}
Herrmann, Daniel A., and Benjamin A. Levinstein.
``Radical AI Interpretability.''
arXiv:2606.26523, 2026.

\bibitem[Hubinger et al.(2024)]{HubingerEtAl2024}
Hubinger, Evan, Carson Denison, Jesse Mu, et al.
``Sleeper Agents: Training Deceptive LLMs that Persist Through Safety Training.''
arXiv:2401.05566, 2024.

\bibitem[Hurwicz and Uzawa(1971)]{HurwiczUzawa1971}
Hurwicz, Leonid, and Hirofumi Uzawa.
``On the Integrability of Demand Functions.''
In \emph{Preferences, Utility, and Demand}, edited by John S. Chipman,
Leonid Hurwicz, Marcel K. Richter, and Hugo F. Sonnenschein, 114--148.
New York: Harcourt Brace Jovanovich, 1971.

\bibitem[Hurwicz and Shapiro(1978)]{HurwiczShapiro1978}
Hurwicz, Leonid, and Leonard Shapiro.
``Incentive Structures Maximizing Residual Gain under Incomplete Information.''
\emph{Bell Journal of Economics} 9 (1978): 180--191.

\bibitem[Kamenica and Gentzkow(2011)]{KamenicaGentzkow2011}
Kamenica, Emir, and Matthew Gentzkow.
``Bayesian Persuasion.''
\emph{American Economic Review} 101 (2011): 2590--2615.

\bibitem[Korbak et al.(2025)]{KorbakEtAl2025}
Korbak, Tomek, Mikita Balesni, Elizabeth Barnes, et al.
``Chain of Thought Monitorability: A New and Fragile Opportunity for AI Safety.''
arXiv:2507.11473, 2025.

\bibitem[Kwa et al.(2025)]{KwaEtAl2025}
Kwa, Thomas, Ben West, Joel Becker, et al.
``Measuring AI Ability to Complete Long Tasks.''
arXiv:2503.14499, 2025.

\bibitem[Laffont and Tirole(1986)]{LaffontTirole1986}
Laffont, Jean-Jacques, and Jean Tirole.
``Using Cost Observation to Regulate Firms.''
\emph{Journal of Political Economy} 94 (1986): 614--641.

\bibitem[Lipsey and Lancaster(1956)]{LipseyLancaster1956}
Lipsey, R. G., and Kelvin Lancaster.
``The General Theory of Second Best.''
\emph{Review of Economic Studies} 24 (1956): 11--32.

\bibitem[Mazeika et al.(2025)]{MazeikaEtAl2025}
Mazeika, Mantas, Xuwang Yin, Renan Tamirisa, et al.
``Utility Engineering: Analyzing and Controlling Emergent Value Systems in AIs.''
arXiv:2502.08640, 2025.

\bibitem[Melumad and Reichelstein(1989)]{MelumadReichelstein1989}
Melumad, Nahum D., and Stefan Reichelstein.
``Value of Communication in Agencies.''
\emph{Journal of Economic Theory} 47 (1989): 334--368.


\bibitem[Meng and Tian(2013)]{MengTian2013}
Meng, Dawen, and Guoqiang Tian.
``Multi-task Incentive Contract and Performance Measurement with Multidimensional Types.''
\emph{Games and Economic Behavior} 77 (2013): 377--404.


\bibitem[Meng and Tian(2021)]{MengTian2021}
Meng, D. and Tian, G. (2021).
The competitive and welfare effects of long-term contracts with network
externalities and bounded rationality.
\textit{Economic Theory}, 72, 337--375.

\bibitem[Meng et al.(2022)Meng, Sun, and Tian]{MengSunTian2022}
Meng, D., Sun, L., and Tian, G. (2022).
Dynamic mechanism design on social networks.
\textit{Games and Economic Behavior}, 131, 84--120.

\bibitem[Mertens and Zamir(1985)]{MertensZamir1985}
Mertens, Jean-Fran\c{c}ois, and Shmuel Zamir.
``Formulation of Bayesian Analysis for Games with Incomplete Information.''
\emph{International Journal of Game Theory} 14 (1985): 1--29.

\bibitem[Microsoft Security(2026)]{Microsoft2026}
Microsoft Security.
``Least Privilege for AI Agents: Identity, Access, and Tool Binding.''
July 2026.

\bibitem[Myerson(1982)]{Myerson1982}
Myerson, Roger B.
``Optimal Coordination Mechanisms in Generalized Principal--Agent Problems.''
\emph{Journal of Mathematical Economics} 10 (1982): 67--81.

\bibitem[Needham et al.(2025)]{NeedhamEtAl2025}
Needham, Joe, Giles Edkins, Govind Pimpale, Henning Bartsch, and Marius Hobbhahn.
``Large Language Models Often Know When They Are Being Evaluated.''
arXiv:2505.23836, 2025.

\bibitem[Pavan(2026)]{Pavan2026}
Pavan, Alessandro.
``Dynamic Contracting.''
\emph{Annual Review of Economics} 18 (2026): 1--25.

\bibitem[Pavan, Segal, and Toikka(2014)]{PavanSegalToikka2014}
Pavan, Alessandro, Ilya Segal, and Juuso Toikka.
``Dynamic Mechanism Design: A Myersonian Approach.''
\emph{Econometrica} 82 (2014): 601--653.

\bibitem[Picard(1987)]{Picard1987}
Picard, Pierre.
``On the Design of Incentive Schemes under Moral Hazard and Adverse Selection.''
\emph{Journal of Public Economics} 33 (1987): 305--331.


\bibitem[Qiu, Carroll, and Allen(2026)]{QiuCarrollAllen2026}
Qiu, Tianyi A., Micah Carroll, and Colin Allen.
``Truthfulness Despite Weak Supervision: Evaluating and Training LLMs Using Peer Prediction.''
\emph{International Conference on Learning Representations}, 2026.

\bibitem[Saha(2026)]{Saha2026}
Saha, S.
``The Agent Action Decision Protocol (AADP): Per-Action Authorization for AI Agents.''
IETF Internet-Draft, September 2026.

\bibitem[Samuelson(1938)]{Samuelson1938}
Samuelson, P. A. (1938).
A note on the pure theory of consumer's behaviour.
\textit{Economica}, 5(17), 61--71.

\bibitem[Shafer and Sonnenschein(1975)]{ShaferSonnenschein1975}
Shafer, W., and Sonnenschein, H. (1975).
Equilibrium in abstract economies without ordered preferences.
\textit{Journal of Mathematical Economics}, 2(3), 345--348.

\bibitem[Shah et al.(2022)]{ShahEtAl2022}
Shah, Rohin, Vikrant Varma, Ramana Kumar, Mary Phuong, Victoria Krakovna,
Jonathan Uesato, and Zac Kenton.
``Goal Misgeneralization: Why Correct Specifications Aren't Enough for Correct Goals.''
arXiv:2210.01790, 2022.

\bibitem[Sugaya and Wolitzky(2021)]{SugayaWolitzky2021}
Sugaya, Takuo, and Alexander Wolitzky.
``The Revelation Principle in Multistage Games.''
\emph{Review of Economic Studies} 88 (2021): 1503--1540.

\bibitem[Surapani et al.(2026)]{SurapaniEtAl2026}
Surapani, Rakesh Kumar, Pradeep Kumar Dolabehera Kakitapelli,
Arun Morampudi, and Praveena Padi.
``Authorization Architectures for Tool-Using AI Agents.''
arXiv:2609.15906, 2026.

\bibitem[Tian(1989)]{Tian1989}
Tian, G. (1989).
Implementation of the Lindahl correspondence by a single-valued, feasible,
and continuous mechanism.
\textit{Review of Economic Studies}, 56(4), 613--621.

\bibitem[Tian(1993)]{Tian1993}
Tian, Guoqiang.
``Implementing Lindahl Allocations by a Withholding Mechanism.''
\emph{Journal of Mathematical Economics} 22 (1993): 169--179.

\bibitem[Tian(2026a)]{TianGeneralDynamic2026}
Tian, Guoqiang.
``A General Theory of Dynamic Mechanism Design with Endogenous Information
and Hidden Actions.
Working paper, first draft May 2026, revised October 2026.

\bibitem[Tian(2026b)]{Tian2026}
Tian, Guoqiang.
``Endogenous Information Design and Informational Evolution in Dynamic Economic
Systems with Incentives and Learning.''
Working paper, September 2026.

\bibitem[van der Weij et al.(2025)]{VanDerWeijEtAl2025}
van der Weij, Teun, Felix Hofst\"atter, Oliver Jaffe, Samuel E. Brown,
and Francis Rhys Ward.
``AI Sandbagging: Language Models Can Strategically Underperform on Evaluations.''
\emph{International Conference on Learning Representations}, 2025.

\bibitem[Wen et al.(2024)]{WenEtAl2024}
Wen, Jiaxin, et al.
``Adaptive Deployment of Untrusted LLMs Reduces Distributed Threats.''
arXiv:2411.17693, 2024.

\bibitem[World Economic Forum(2026)]{WEF2026}
World Economic Forum.
\emph{AI Agents in Action: A Playbook for Trusted Adoption, Authorization and Scaling}.
May 2026.

\end{thebibliography}
\end{document}